\numberwithin{equation}{section}
\DeclareMathOperator*{\argmin}{\mathrm{argmin}}
\newtheorem{theorem}{Theorem}[section]
\newtheorem{teorema}{Theorem}
\newtheorem{lemma}[theorem]{Lemma}
\newtheorem{proposition}[theorem]{Proposition}
\newtheorem{rem}[theorem]{Remark}
\newcommand{\lint}{\llbracket}
\newcommand{\rint}{\rrbracket}
\DeclareMathOperator{\sign}{\mathrm{sign}}
\newcommand{\dd}{\mathrm{d}}
\newcommand{\ind}{\mathbf{1}}
\renewcommand{\tilde}{\widetilde}
\renewcommand{\hat}{\widehat}
\newcommand{\cc}{\complement }
\newcommand{\cB}{{\ensuremath{\mathcal B}} }
\newcommand{\cA}{{\ensuremath{\mathcal A}} }
\newcommand{\cF}{{\ensuremath{\mathcal F}} }
\newcommand{\cE}{{\ensuremath{\mathcal E}} }
\newcommand{\cH}{{\ensuremath{\mathcal H}} }
\newcommand{\cC}{{\ensuremath{\mathcal C}} }
\newcommand{\cN}{{\ensuremath{\mathcal N}} }
\newcommand{\cD}{{\ensuremath{\mathcal D}} }
\newcommand{\cM}{{\ensuremath{\mathcal M}} }
\newcommand{\ubgb}{\overline{\gb}}
\newcommand{\bP}{{\ensuremath{\mathbf P}} }
\newcommand{\bE}{{\ensuremath{\mathbf E}} }
\newcommand{\grad}{{\ensuremath{\nabla}} }
\DeclareMathSymbol{\leqslant}{\mathalpha}{AMSa}{"36} 
\DeclareMathSymbol{\geqslant}{\mathalpha}{AMSa}{"3E} 
\DeclareMathSymbol{\eset}{\mathalpha}{AMSb}{"3F}     
\newcommand{\Var}{\mathrm{Var}}        
\newcommand{\sumtwo}[2]{\sum_{\substack{#1 \\ #2}}} 
\newcommand{\bbE}{{\ensuremath{\mathbb E}} }
\newcommand{\bbN}{{\ensuremath{\mathbb N}} }
\newcommand{\bbP}{{\ensuremath{\mathbb P}} }
\newcommand{\bbR}{{\ensuremath{\mathbb R}} }
\newcommand{\bbZ}{{\ensuremath{\mathbb Z}} }
\newcommand{\gb}{\beta}
\newcommand{\gd}{\delta}
\newcommand{\gep}{\varepsilon}       
\newcommand{\gvr}{\varrho}
\newcommand{\gG}{\Gamma}
\newcommand{\gD}{\Delta}
\newcommand{\go}{\omega}
\newcommand{\gO}{\Omega}
\newcommand{\gl}{\lambda}
\newcommand{\gL}{\Lambda}
\def\captionfont@{\footnotesize}
\def\captionheadfont@{\scshape}
\long\def\@makecaption#1#2{%
  \vspace{2mm}
  \setbox\@tempboxa\vbox{\color@setgroup
    \advance\hsize-6pc\noindent
    \captionfont@\captionheadfont@#1\@xp\@ifnotempty\@xp
        {\@cdr#2\@nil}{.\captionfont@\upshape\enspace#2}%
    \unskip\kern-6pc\par
    \global\setbox\@ne\lastbox\color@endgroup}%
  \ifhbox\@ne 
    \setbox\@ne\hbox{\unhbox\@ne\unskip\unskip\unpenalty\unkern}%
  \fi
  \ifdim\wd\@tempboxa=\z@ 
    \setbox\@ne\hbox to\columnwidth{\hss\kern-6pc\box\@ne\hss}%
  \else 
    \setbox\@ne\vbox{\unvbox\@tempboxa\parskip\z@skip
        \noindent\unhbox\@ne\advance\hsize-6pc\par}%
\fi
  \ifnum\@tempcnta<64 
    \addvspace\abovecaptionskip
    \moveright 3pc\box\@ne
  \else 
    \moveright 3pc\box\@ne
    \nobreak
    \vskip\belowcaptionskip
  \fi
\relax
}
\def\writefig#1 #2 #3 {\rlap{\kern #1 truecm
\raise #2 truecm \hbox{#3}}}
\newcommand{\tf}{\textsc{f}}
\newcommand{\be}{{\mathbf e}}
\begin{document}

\title[Pinning for lattice free field]{Pinning and disorder relevance for the lattice Gaussian Free Field II: 
the two dimensional case}

\address{IMPA - Instituto Nacional de Matem\'atica Pura e Aplicada,
Estrada Dona Castorina 110,
Rio de Janeiro / Brasil 22460-320}
\email{lacoin@impa.br}
\author{Hubert Lacoin}

\begin{abstract}
This paper continues a study initiated in \cite{cf:GL}, on the localization transition of a lattice free field on $\bbZ^d$
interacting with a quenched disordered substrate that acts on the interface when its height is close to zero.
The substrate has the tendency to localize or repel the interface at different sites.
A transition takes place when the average pinning potential $h$ goes past a threshold $h_c$: from a delocalized phase $h<h_c$, 
where the field is macroscopically repelled by the substrate 
to a localized one $h>h_c$ where the field sticks to the substrate.
Our goal is to investigate the effect of the presence of disorder on this phase transition.
We focus on the two dimensional case $(d=2)$ for which we had obtained so far only limited results.
We prove that the value of $h_c(\gb)$ is the same as for the annealed model, for all values of $\gb$ and that in a neighborhood of $h_c$.
Moreover we prove that, in contrast with the case $d\ge 3$ where the free energy has a quadratic behavior near the critical point,  
the phase transition is of infinite order
$$\lim_{u\to 0+} \frac{ \log \tf(\gb,h_c(\gb)+u)}{(\log u)}= \infty.$$
\\[10pt]
  2010 \textit{Mathematics Subject Classification: 60K35, 60K37, 82B27, 82B44}
  \\[10pt]
  \textit{Keywords:  Lattice Gaussian Free Field,  Disordered Pinning Model, Localization Transition, Critical Behavior, Disorder Relevance, Co-membrane Model}
\end{abstract}

\maketitle

\tableofcontents

\newpage

\section{Introduction}

The aim of statistical mechanics is  to obtain a qualitative understanding of natural phenomena of phase transitions by the study of simplified models, 
often built on a lattices. 
In general the Hamiltonian of a model of statistical mechanics is left invariant by the lattice symmetries: 
a prototypical example being the Ising model describing a ferromagnet. 

\medskip

However, one might argue that materials which are found in nature are usually not completely homogeneous and for this reason, physicists where led to considering 
systems in which the interaction terms, for example the potentials between nearest neighbor spins, are chosen by sampling a random
field -- which we call {\sl disorder} -- with good ergodic properties, often even a field of independent identically distributed random variables.
An important question which arises is thus whether the results concerning the phase transition 
obtained for a model with homogeneous interactions referred to as \textit{the pure system} 
(e.g.\ the Onsager solution of the two dimensional Ising Model) remain valid when a system where randomness of a very small amplitude is introduced.

\medskip

In  \cite{cf:Hcrit} A.~B.~Harris, gave a strikingly simple heuristical argument, based on renormalization theory consideration, 
to predict the effect of the introduction of a small amount of the system: in substance Harris' criterion predict that if the phase transition of the pure system 
is sufficiently smooth, it will not be affected by small perturbation (disorder is then said to be \textit{irrelevant}), 
while in the other cases the behavior of the system 
is affected by an arbitrary small addition of randomness (disorder is \textit{relevant}).
To be complete, let us mention also the existence of a boundary case for which the criterion yields no prediction (the \textit{marginal disorder} case).
The criterion however does not give a precise prediction concerning the nature of the phase transition when the disorder is relevant. 

\medskip

The mathematical verification of the Harris criterion is a very challenging task in general. In the first place, it can only be considered for the few special models of 
statistical mechanics for which we have a rigorous understanding of the critical properties of the pure system.
In the 
last twenty years this question has been addressed, first by theoretical physicists (see e.g. \cite{cf:DHV} and references therein) and then by mathematicians
\cite{cf:KZ, cf:AZnew, cf:Ken, cf:QH2, cf:DGLT, cf:GLT, cf:GLT2, cf:GT_cmp,  cf:L, cf:Trep}
(see also \cite{cf:GB,cf:G} for reviews), 
for a simple model of a 1-dimensional 
interface interacting with a substrate: for this model the interface is given by the graph of a random walk which takes random energy 
rewards when it touches a defect line. In this case, the pure system has the remarkable quality of being what physicists call
{\sl exactly solvable}, meaning that there exists an explicit expression for the  free energy \cite{cf:Fisher}.

\medskip

This model under consideration in the present paper can be seen as a high-dimension generalization of the RW pinning model.
The random walk is replaced by a random field $\bbZ^d\to \bbR$, and the random energies are collected when 
the graph of the field is close to the hyper-plane $\bbZ^d \times \{0\}$. 
While the pure  model is not exactly solvable in that case, it has been studied in details
and the nature of the phase transition is well known \cite{cf:BB, cf:BDZ1, cf:BV, cf:CV,  cf:Vel}.

\medskip

On the other hand, the study of the disordered version of the model is much more recent
\cite{cf:CM1, cf:CM2, cf:GL}. 

 \medskip
 
In \cite{cf:GL}, we gave a close to complete description of the free energy diagram of the 
disordered model when $d\ge 3$:
\begin{itemize}
\item We identified the value of the disordered critical point, which is shown to coincide with that of the associated annealed model, regardless of the amplitude of disorder.
\item We proved that for Gaussian disorder, the behavior of the free energy close to $h_c$ is quadratic, in contrasts with the annealed model for which the transitition is of first order.
\item In case of general disorder, we proved that the quadratic upper-bound still holds, and found a polynomial lower bound with a different exponent.
\end{itemize}
Let us stress that the heuristic of our proof strongly suggests that the behavior of the free energy 
should be quadratic for a suitable large class of environments (those who satisfy a second moment assumption similar to \eqref{eq:assume-gl}).

\medskip

In the present paper, we choose to attack the case $d=2$, for which only limited results were obtained so far. We have seen in the proof of the main result \cite{cf:GL} that the critical behavior of 
the model is very much related to the extremal process of the field. The quadratic behavior of the free-energy in \cite[Theorem 2.2]{cf:GL} comes 
from the fact that high level 
sets of the Gaussian free field for $d\ge 3$ look like a uniformly random set with a fixed density (see \cite{cf:CCH}).
In dimension $2$ however, the behavior of the extremal process is much more intricate, with a phenomenon of clustering in the level sets 
(see \cite{cf:BL, cf:DZ2, cf:Dav} or also
\cite{cf:ABK} for a similar phenomenon for branching Brownian Motion). 
This yields results of a very different nature.

\section{Model and results}

Given $\Lambda$ be a finite subset of $\bbZ^d$, we
 let $\partial \gL$ denote the internal boundary of $\Lambda$, 
$\mathring{\gL}$ the set of interior points of $\gL$, and $\partial^-\gL$ the set of point which are adjacent to the boundary,
\begin{equation}\label{boundary}\begin{split}
\partial \gL&:=\{x \in \gL : \,  \exists y\notin \gL, \ x\sim y  \},\\
\mathring{\gL}&:=\gL \setminus \partial \gL,\\
\partial^- \gL&:=\{x \in \mathring{\gL} : \,  \exists y\in \partial \gL, \ x\sim y  \}.
\end{split}\end{equation}
In general some of these sets could be empty, but throughout this work $\gL$ is going to be a large 
square. Given $\hat \phi: \bbZ^d \to \bbR$, we define
$\bP^{\hat \phi}_{\gL}$ to be the law of the lattice Gaussian free field $\phi=(\phi_x)_{x\in \Lambda}$ with boundary condition $\hat \phi$
 on $\partial \gL$. The field $\phi$ is a random function from $\gL$ to $\bbR$. 
It is satisfies 
\begin{equation}
 \phi_x\,:=\, \hat \phi_x \quad  \text{ for every } x \in  \partial \gL,
\end{equation}
and the distribution of $(\phi_x)_{x\in \mathring \gL}$ is given by
\begin{equation}
\label{density}
\bP^{\hat \phi}_\gL(\dd \phi)=\frac{1}{\mathcal Z^{\hat \phi}_{\gL}}
  \exp\left(-\frac 1 2 \sumtwo{(x,y)\in (\gL)^2 \setminus (\partial \gL)^2 }{x\sim y}\frac{ (\phi_x-\phi_y)^2 }{2} \right)
\prod_{x\in \mathring{\gL}} \dd \phi_x \, ,
\end{equation} 
where $\prod_{x\in \mathring{\gL}} \dd \phi_x$ denotes the Lebesgue measure on $\bbR^{\mathring{\gL}}$ and
\begin{equation}\label{eq:defcalz}
\mathcal Z^{\hat \phi}_{\gL}:= \int_{\bbR^{\mathring{\gL}}} 
\exp\left(-
 \frac 1 2 \sumtwo{(x,y)\in (\gL)^2 \setminus (\partial \gL)^2 }{x\sim y}
\frac{ (\phi_x-\phi_y)^2 }{2} \right) \prod_{x\in\mathring{\gL}} \dd \phi_x \, ,
\end{equation}
(one of the two $(1/2)$ factors is present to compensate that the edges are counted twice in the sum, the other one being the one usually present for Gaussian densities).
In what follows we consider the case
$$\Lambda=\Lambda_N:=\{0,\dots,N\}^d,$$ for some $N\in \bbN$.
Note that we have  $$\mathring{\gL}_N:= \{1,\dots, N-1\}^d.$$
We also introduce the notation $\tilde \gL_N:= \{1,\dots, N\}^d$, and 
we simply write $\bP^{\hat \phi}_N$ for $\bP^{\hat \phi}_{\gL_N}$.
We drop $\hat \phi$ from our notation in the case where we consider zero boundary condition $\hat \phi \equiv 0$.

\medskip

We let $\go=\{\go_x\}_{x \in \bbZ^d}$ be the realization of a family of  IID square integrable centered random variables (of law $\bbP$). We
assume that they have finite exponential moments, or more precisely, that there exist constants $\gb_0, \ubgb\in (0, \infty]$ such that
\begin{equation}
\label{eq:assume-gl}
 \gl(\gb)\, :=\, \log \bbE[e^{\gb \go_x}]\,  < \,  \infty\ \text{ for  every } \gb \in  (-\gb_0\, 2\ubgb]\, .
 \end{equation}
For $x\in \gL_N$  set  $\delta_x:= \ind_{[-1,1]}(\phi_x)$.
For $\gb>0$ and $h\in \bbR$, we define  a modified measure $\bP_{N, h}^{\gb,\go,\hat \phi}$ via the density
\begin{equation}
\label{eq:modmeas}
\frac{\dd \bP^{\gb,\go, \hat \phi}_{N,h}}{\dd \bP^{\hat \phi}_N}(\phi)=\frac{1}{Z^{\gb,\go,\hat \phi}_{N,h}}\exp\left( \sum_{x\in  \tilde \gL_N} 
(\gb \go_x-\gl(\gb)+h)\delta_x\right)\, ,
\end{equation}
 where
\begin{equation}
\label{eq:modZ}
Z^{\gb,\go,\hat \phi}_{N,h}:=\bE_N\left[ \exp\left( \sum_{x\in  \tilde \gL_N} (\gb \go_x-\gl(\gb)+h)\delta_x\right)\right].
\end{equation}
Note that in the definition of $\bP^{\gb,\go,\hat \phi}_{N,h}$, the sum $\left(\sum_{x\in  \tilde \gL_N}\right)$ can be replaced by either 
$\left(\sum_{x\in  \gL_N}\right)$ or $\left(\sum_{x\in  \mathring{\gL}_N}\right)$
as these changes affect only the partition function. In the case where 
$\hat \phi \equiv 0$, we drop the corresponding superscript it from the notation.
In the special case where $\gb=0$, we simply write $\bP^{\hat \phi}_{N,h}$ and $Z^{\hat \phi}_{N,h}$ 
for the pinning measure and partition function (as they do not depend on $\go$) respectively.
This case is referred to as the \textsl{pure} (or homogeneous) model. When $\gb>0$, \eqref{eq:modmeas} 
defines the pinning model with \textsl{quenched} disorder.

\subsection{The free energy}

The important properties of the system are given by the asymptotic behavior of the partition function, or more precisely 
by the free energy.
The existence of quenched free energy for the disordered model has been proved in \cite[Theorem 2.1]{cf:CM1}. 
We recall this result here together with some basic properties

\medskip

\begin{proposition}\label{freen}
The free energy 
\begin{equation}
\label{eq:freen}
\tf(\gb,h):=\lim_{N\to \infty} \frac{1}{N^d}\bbE \left[\log Z^{\gb,\go}_{N,h}\right]
\stackrel{\bbP(\dd \go)-a.s.}{=}  \lim_{N\to \infty} \frac{1}{N^d}\log Z^{\gb,\go}_{N,h} \, ,
\end{equation}
exists (and is self-averaging).  It is a convex, nonnegative, nondecreasing function of $h$.
Moreover there exists a $h_c(\gb)\in (0,\infty)$ which is such that 
\begin{equation}
 \tf(\gb,h)\begin{cases} =0 \text{ for } h\le h_c(\gb),\\ 
            >0 \text{ for } h> h_c(\gb).
           \end{cases}
\end{equation}
\end{proposition}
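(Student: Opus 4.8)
The existence of the limit defining $\tf(\gb,h)$ and its $\bbP$-a.s.\ self-averaging property are established in \cite[Theorem 2.1]{cf:CM1}; here I would recall the argument only in outline. The standard route is superadditivity: one shows that $\bbE[\log Z^{\gb,\go}_{N,h}]$ is (after a harmless correction for boundary effects) superadditive under partitioning $\gL_N$ into sub-squares, using that the GFF with zero boundary condition on a larger box stochastically dominates the field obtained by imposing zero boundary values on the internal faces of the sub-boxes --- and that the pinning reward $\sum_x (\gb\go_x - \gl(\gb)+h)\delta_x$ is a local, bounded-below functional. Then Fekete's lemma gives the limit of the averaged quantity, and concentration of $\log Z$ around its mean (e.g.\ by the bounded-difference / Azuma--Hoeffding inequality in the $\go_x$'s, since changing one $\go_x$ shifts $\log Z$ by at most a constant depending on $\gb$) upgrades this to the $\bbP$-a.s.\ statement. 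I would cite \cite{cf:CM1} for the details and move on.

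For the qualitative properties: convexity of $h\mapsto \frac{1}{N^d}\log Z^{\gb,\go}_{N,h}$ is immediate because it is the log of a Laplace transform in $h$ (the function $h\mapsto \log\bE_N[e^{hX}]$ with $X=\sum_x \delta_x$ is convex), and convexity passes to the limit. Monotonicity in $h$ is clear since $\delta_x\ge 0$, so $Z^{\gb,\go}_{N,h}$ is nondecreasing in $h$. Nonnegativity follows from the lower bound $Z^{\gb,\go}_{N,h}\ge \bE_N[e^{\sum_x(\gb\go_x-\gl(\gb)+h)\delta_x}\ind_{A_N}]$ where $A_N$ is the event that $\phi_x\notin[-1,1]$ for all $x\in\tilde\gL_N$ (so all $\delta_x=0$, the reward is $1$); one checks $\bP_N(A_N)$ decays only subexponentially in $N^d$ --- in fact, by the entropic-repulsion / FKG-type estimates available for the GFF, $\log\bP_N(A_N) = o(N^d)$ --- which forces $\tf(\gb,h)\ge 0$.

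The substantive part is the existence of a threshold $h_c(\gb)\in(0,\infty)$ with $\tf=0$ below and $\tf>0$ above. Since $\tf(\gb,\cdot)$ is nondecreasing and, by the previous paragraph, $\ge 0$ everywhere, the set $\{h:\tf(\gb,h)=0\}$ is a (possibly empty, possibly all of $\bbR$) half-line $(-\infty,h_c(\gb)]$ once we know $\tf$ is continuous, which follows from convexity; so it suffices to produce one value of $h$ with $\tf=0$ and one with $\tf>0$. For $h$ large positive: take all $\delta_x=1$ on the event $B_N=\{|\phi_x|\le 1\ \forall x\}$; then the reward is $\exp(\sum_x(\gb\go_x-\gl(\gb)+h))$, and $\bbE[\log Z]\ge N^d(h-\gl(\gb)) + \bbE[\sum_x \gb\go_x] + \bbE[\log\bP_N(B_N)]$. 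One needs $\log\bP_N(B_N)\ge -c N^d$ for some finite $c$ --- this is the statement that the lattice GFF spends a positive-density fraction... more precisely that confining the whole field to a unit-width tube costs only exponentially-in-volume probability, which is a classical GFF estimate (and is where $d=2$ versus $d\ge 3$ will later matter for finer results, but not here). Hence $\tf(\gb,h)\ge h - c'$ for large $h$, giving $h_c(\gb)<\infty$. For $h$ small (indeed $h\le 0$, and a bit beyond): here I would compare with the annealed bound. By Jensen, $\bbE[\log Z^{\gb,\go}_{N,h}]\le \log\bbE[Z^{\gb,\go}_{N,h}] = \log \bE_N[e^{h\sum_x\delta_x}] = \log Z_{N,h}^{\mathrm{pure}}$ since $\bbE[e^{\gb\go_x-\gl(\gb)}]=1$. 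So the quenched free energy is bounded above by the pure free energy at the same $h$, and it is a known fact for the pure model (see \cite{cf:BDZ1, cf:CV} and references in the excerpt) that $h_c(0)>0$, i.e.\ the pure free energy vanishes for $h$ below a strictly positive constant; combined with nonnegativity this gives $\tf(\gb,h)=0$ for such $h$, hence $h_c(\gb)\ge h_c(0)>0$. The main obstacle is thus not conceptual but a matter of citing (or re-deriving) two GFF inputs: the sub-volume-exponential cost of tube confinement (for $h_c<\infty$) and the strict positivity of the pure critical point (for $h_c>0$); everything else is soft convexity and domination.
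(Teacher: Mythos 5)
Your outline for the existence/self-averaging of the limit, convexity, monotonicity, and nonnegativity is consistent with what one extracts from \cite{cf:CM1}, and indeed the paper itself does nothing more here than cite \cite[Theorem 2.1]{cf:CM1}. The tube-confinement step you invoke for finiteness of $h_c$ is precisely Lemma~\ref{controlgrid} applied with $\gG=\mathring\gL_N$: it yields $\bP_N\left[\,|\phi_x|\le 1\ \forall x\,\right]\ge e^{-CN^d}$, hence $\tf(\gb,h)\ge h-\gl(\gb)-C$ and $h_c(\gb)<\infty$.

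The genuine gap is in your argument for strict positivity of $h_c(\gb)$: you assert that $h_c(0)>0$, citing \cite{cf:BDZ1,cf:CV}. This is false, and it contradicts Proposition~\ref{propure} of the paper itself, which states $h_c(0)=0$ for all $d\ge 1$ and gives the sharp asymptotics $\tf(h)\sim \sqrt{2}h/\sqrt{|\log h|}$ as $h\to 0^+$ in $d=2$. The references you name treat the \emph{wetting} problem (field conditioned to stay nonnegative), which is a different model; the pinning model here has no sign constraint, and any positive reward localizes the pure field. Consequently the annealed bound $\tf(\gb,h)\le\tf(h)$ only gives $\tf(\gb,h)=0$ for $h\le 0$, i.e.\ $h_c(\gb)\ge 0$, and not $h_c(\gb)>0$. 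In fact strict positivity is not merely unproved by your argument — it is false: Theorem~\ref{mainres}(i) of this paper (and Theorem~A for $d\ge 3$, from \cite{cf:GL}) shows $h_c(\gb)=0$ for every $\gb$ in the admissible range. The ``$h_c(\gb)\in(0,\infty)$'' in the statement should be read as $h_c(\gb)\in[0,\infty)$, which your argument (minus the erroneous appeal to $h_c(0)>0$) does correctly establish.
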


Let us briefly explain why $h_c(\gb)$ marks a transition on the large scale behavior of $\phi$ under $\bP^{\gb,\go}_{N,h}$.
A simple computation gives
\begin{equation}
\partial_h\left(  \frac{1}{N^d}  \log Z^{\gb,\go}_{N,h}\right)= \frac{1}{N^d}\sum_{x\in \tilde \gL_N} \bE^{\gb,\go}_{N,h} \left[ \delta_x \right].
\end{equation}
Hence by convexity, we have 
\begin{equation}
 \partial_h \tf(\gb,h)=\lim_{N\to \infty}\frac{1}{N^d}\sum_{x\in \tilde \gL_N} \bE^{\gb,\go}_{N,h} \left[ \delta_x \right],
\end{equation}
for the $h$ for which $\tf(\gb,h)$ differentiable (for the hypothetical countable set where  $\partial_h \tf(\gb,h)$ may not exist, we can replace $\lim$ by $\liminf$ resp. $\limsup$, $=$ by $\le$ resp. $\ge$ and consider the
left- resp. right-derivative in the above equation).

\medskip

For $h>h_c(\gb)$, we have $\partial_h \tf(\gb,h)>0$ by convexity and thus the expected number of point in contact with the substrate is asymptotically of order 
$N^d$. On the contrary when $h<h_c(\gb)$, the asymptotic expected contact fraction vanishes when $N$ tends to infinity.

\medskip

Note that the whole model is perfectly defined for all $d\ge 1$. However, the case $d=1$, which is a variant of the random walk pinning model which as mentioned in the introduction
was the object of numerous studies in the literature. However, 
the effect of disorder in dimension $1$ being quite different, in the remainder of the paper, 
we prove results for the case $d=2$ and discuss how they compare with those obtained in the more  related case $d\ge 3$ \cite{cf:GL}.

\subsection{The pure model}

In the case $\gb=0$, we simply write $\tf(h)$ for $\tf(0,h)$. 
In that case 
the behavior of the free energy is known in details (see \cite[Fact 2.4]{cf:CM1} and also \cite[Section 2.3 and Remark 7.10]{cf:GL} for a full proof for $d\ge 3$).
We summarize it below.

\begin{proposition}\label{propure}
For all $d\ge 1$, we have $h_c(0)=0$ and moreover 
\begin{itemize}
 \item [(i)] For $d=2$
 \begin{equation}\label{pure}
\tf(h)\stackrel{h\to 0+}{\sim} \frac{\sqrt{2} h}{\sqrt{|\log h|}},
\end{equation}
 \item [(ii)] For $d\ge 3$
 \begin{equation}\label{pure3}
\tf(h)\stackrel{h\to 0+}{\sim} c_d h,
\end{equation}
where $c_d:= \bP[\sigma_d \cN \in [-1,1]]$ and $\sigma_d$ is the standard deviation for the infinite volume free field in $\bbZ^d$.
\end{itemize}
\end{proposition}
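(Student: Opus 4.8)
Throughout write $S_N:=\sum_{x\in\tilde\gL_N}\delta_x$, so that $Z_{N,h}=\bE_N[e^{hS_N}]$.

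\emph{Case $d\ge3$ (part (ii)).} For the lower bound I would use Jensen's inequality: $N^{-d}\log Z_{N,h}\ge h\,N^{-d}\bE_N[S_N]=h\,N^{-d}\sum_{x\in\tilde\gL_N}\bP_N(\phi_x\in[-1,1])$, and since for $d\ge3$ the measure $\bP_N$ converges locally to the infinite volume lattice GFF while every interior variance is at most $\sigma_d^2$, the quantity $N^{-d}\bE_N[S_N]$ has $\liminf_N\ge c_d$; hence $\tf(h)\ge c_d h$. For the matching upper bound one shows $N^{-d}\bE_N[S_N]\to c_d$ together with an exponential concentration estimate $\bP_N(S_N>(c_d+\gep)N^d)\le e^{-c(\gep)N^d}$, so that splitting $Z_{N,h}$ over $\{S_N\le(c_d+\gep)N^d\}$ and its complement gives $\tf(h)\le(c_d+\gep)h$ for every $\gep>0$; this is essentially \cite[Section 2.3 and Remark 7.10]{cf:GL}, which I would only adapt. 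Since $\tf(0)=0$, $\tf\ge0$ is nondecreasing, and $\tf(h)\ge c_dh>0$ for $h>0$, this also gives $h_c(0)=0$.

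\emph{Lower bound for $d=2$ (part (i)).} The essential new point is that the Jensen bound above is now empty, since $\Var_{\bP_N}(\phi_x)\sim\sigma^2\log N$ in the bulk --- $\sigma^2$ being the constant in the asymptotics of the Green function of the simple random walk killed on $\partial\gL_N$ --- whence $N^{-d}\bE_N[S_N]\to0$. Instead I would work at an intermediate scale $L=L(h)$, polynomial in $1/h$ (with the present normalisation, of order $h^{-2}$): restrict $Z_{N,h}$ to the event $\{|\phi_x|\le1\ \forall x\in D\}$, where $D$ is an appropriate set at scale $L$ (e.g.\ the union of the boundaries of a tiling of $\mathring{\gL}_N$ into cells of side $L$), realised through a multiscale construction so that its entropic cost is negligible per unit volume; then use the domain Markov property and Jensen applied conditionally on $\phi|_D$. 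Inside each $L$-cell the field then has conditional mean in $[-1,1]$ and conditional variance at most $\sigma^2\log L+O(1)$, so the conditional contact probability at a bulk site is $\ge(1-o(1))\,\bP\big(|\cN|\le(\sigma^2\log L)^{-1/2}\big)$. Dividing $\log Z_{N,h}$ by $N^2$, letting $N\to\infty$, and optimising the scale $L(h)$ against the confinement cost yields
\[
\tf(h)\ \ge\ (1-o(1))\,h\sqrt{\tfrac{2}{\pi\sigma^2\log L(h)}}\ \sim\ \frac{\sqrt2\,h}{\sqrt{|\log h|}}\,,
\]
where the constant $\sqrt2$ results from the value of $\sigma^2$ combined with the Gaussian small-ball asymptotics $\bP(|\cN|\le\epsilon)\sim\sqrt{2/\pi}\,\epsilon$. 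In particular $\tf(h)>0$ for $h>0$, so $h_c(0)=0$ for $d=2$ as well, and \eqref{pure} holds as a lower bound.

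\emph{Upper bound for $d=2$, and the main obstacle.} From $e^{h\delta_x}=1+(e^h-1)\delta_x$ one gets the polymer identity
\[
Z_{N,h}=\sum_{A\subseteq\tilde\gL_N}(e^h-1)^{|A|}\,\bP_N\big(\phi_x\in[-1,1]\ \forall x\in A\big),
\]
and $\tf(h)\le(1+o(1))\sqrt2\,h/\sqrt{|\log h|}$ is equivalent to a sharp control of this sum, i.e.\ to a sharp upper bound on $\bP_N(S_N\ge s)$ uniformly in $s$. By a standard finite-volume/subadditivity argument (as in \cite{cf:CM1}) it suffices to estimate the partition function in a single box of side $L(h)$, hence the upper large deviations of the number of near-zero sites there. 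Two features make this subtle: the Gaussian correlation inequality only bounds $A\mapsto\bP_N(\phi|_A\in[-1,1]^A)$ from \emph{below} by the product over $A$ --- the wrong direction for an upper bound --- and a crude second-moment bound on $S_L$ is too weak, since the upper tail of $S_L$ is far thinner than its variance suggests. What one really needs is that an anomalous excess of near-zero sites can only be produced through clusters, the cost of a cluster being essentially that of confining its coarse average --- that is, one must understand the clustering of the $O(1)$-level set of the two-dimensional GFF, the feature absent when $d\ge3$. Concretely I would coarse-grain $\gL_N$ at scale $L(h)$, decompose $\phi$ inside the cells into a scale-$L(h)$ (log-correlated) field plus independent fine fields, bound each cell's contribution by a conditional partition function, and feed in hierarchical small-ball and upper-deviation estimates for the coarse field in the spirit of \cite{cf:BL,cf:Dav}. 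Getting these estimates sharp enough to recover the precise constant $\sqrt2$, and not merely the correct order $h/\sqrt{|\log h|}$, is the step I expect to be the main obstacle; it is the content of \cite[Fact 2.4]{cf:CM1}.
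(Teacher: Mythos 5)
Your $d\ge3$ outline agrees with the paper, which defers to \cite[Section 2.3 and Remark 7.10]{cf:GL}, and the Jensen lower bound via $G^{*}_N(x,x)\le\sigma_d^2$ does give $\tf(h)\ge c_dh$. For the $d=2$ lower bound you take a genuinely different route than the paper. The paper confines the field by \emph{adding mass} (Proposition \ref{massivecompa}): $\tf(h)\ge\lim_N N^{-2}\log\bE^m_N\big[e^{h\sum_x\delta_x}\big]-f(m)$, Jensen on the first term gives $h\,\bP[\cN(\sigma_m)\in[-1,1]]$ with $\sigma_m^2=G^m(x,x)\approx\frac{1}{2\pi}|\log m|$, and one takes $m$ small enough that $f(m)\approx\frac{1}{4\pi}m^2|\log m|$ is negligible. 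Your grid-confinement construction is in the same spirit, but note that the per-volume cost of pinning a grid of mesh $L$ to $[-1,1]$ only decays like $L^{-1}$ (cf.\ Lemma \ref{controlgrid}), whereas the mass penalty decays like $m^2|\log m|\approx L^{-2}\log L$ if $L=m^{-1}$. Since the contact probability per site is \emph{decreasing} in the scale, the quadratic decay of the mass cost is exactly what lets the paper push the scale all the way down to $\sim h^{-1/2}$ at negligible cost, while your grid construction (you posit $L\sim h^{-2}$) sits at a coarser scale and so delivers a strictly smaller lower bound. Whether an optimization over $L$ within the grid framework can reproduce the sharp constant, as opposed to just the order $h/\sqrt{|\log h|}$, is not obvious and needs a separate check; the mass trick circumvents this.

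The genuine gap is in your $d=2$ upper bound, where you overcomplicate things to the point of missing the actual argument. You propose a polymer expansion, sharp upper deviations for $S_N$, and an analysis of clustering of $O(1)$-level sets, and declare the last of these the main obstacle. For the \emph{pure} model none of this is necessary. The paper's upper bound is a few lines: $\sup_{\hat\phi}Z^{\hat\phi}_{N,h}$ is submultiplicative, so $\tf(h)\le\sup_{\hat\phi}N^{-2}\log Z^{\hat\phi}_{N,h}$ holds for \emph{every} finite $N$; one now simply chooses $N$ so small that $hN^2\to0$ (the paper takes $N=h^{-1/2}|\log h|^{-1}$). Then $h\sum_x\delta_x\le hN^2$ is deterministically tiny, so $e^{h\sum\delta_x}\le 1+he^{hN^2}\sum\delta_x$, hence $\log Z^{\hat\phi}_{N,h}\le he^{hN^2}\bE^{\hat\phi}_N[\sum\delta_x]\le he^{hN^2}\bE_N[\sum\delta_x]$, the last step because the contact probability of a Gaussian is maximal at mean zero. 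What remains is the elementary first-moment estimate $N^{-2}\bE_N[\sum_x\delta_x]\sim 2/\sqrt{\log N}$ from \eqref{eq:stimagreen}. There is no upper tail of $S_N$ to control, no correlation inequality going in the wrong direction, and no clustering analysis. The idea you are missing — that a finite-volume criterion at a wisely-chosen small scale linearizes the exponential and collapses the problem to a first moment — is precisely what makes the pure two-dimensional case elementary; the clustering of level sets that you invoke is indeed central, but for the \emph{disordered} upper bound of Theorem \ref{mainres}, not for Proposition \ref{propure}.
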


To be more precise $\sigma_d:=\sqrt{G^0(x,x)}$ where $G^0$ is the Green function defined in \eqref{greenff}.
The result in dimension $2$ is well known folklore to people in the fields, but as 
to our knowledge, no proof of it is available in the literature. For this reason we present a short one in Appendix \ref{secpropure}.

\subsection{The quenched/annealed free energy comparison}

Using Jensen's inequality, we can for every $\gb\ge 0$, compare the free energy to that of the annealed system, which is the one associated to the 
averaged partition function $\bbE \left[ Z^{\gb,\go}_{N,h}\right]$,
\begin{equation}\label{anealed}
\tf(\gb,h)=\lim_{N\to \infty} \frac{1}{N^d}\bbE \left[ \log Z^{\gb,\go}_{N,h}\right] \le \lim_{N\to \infty} \frac{1}{N^d}\log \bbE \left[ Z^{\gb,\go}_{N,h}\right].
\end{equation}
Our choice of parametrization implies 
\begin{equation}
\bbE \left[ Z^{\gb,\go}_{N,h}\right]=\bE_N\left[ \bbE\left[e^{\sum_{x\in \tilde \gL_N}( \go_x-\gl(\gb)+h)\delta_x}\right] \right]=
\bE_N\left[ e^{\sum_{x\in \tilde \gL_N}h\delta_x}\right]=
Z_{N,h},
\end{equation}
and thus for this reason we have
\begin{equation} \label{annehilde}
\tf(\gb,h)\le \tf(h) \quad \text{and} \quad
 h_c(\gb)\, \ge\, 0.
\end{equation}
It is known that the inequality \eqref{anealed} is strict: for $h>0$, we have $\tf(\gb,h)< \tf(h)$ in all dimensions (cf.\ \cite{cf:CM1}).
However we can ask ourselves if the behavior of the model with quenched disorder is similar to that of the annealed one in several other ways
\begin{itemize}
 \item [(a)] Is the critical point of the quenched model equal to that of the annealed model (i.e.\ is $h_c(\gb)=0$)?
 \item [(b)] Do we have a critical exponent for the free energy transition: do we have $$\tf(\gb,h_c(\gb)+u)\stackrel{u\to 0+} \sim u^{\nu+o(1)},$$ 
 and is $\nu$ equal to one, like for the annealed model (cf. Proposition \ref{propure})?
\end{itemize}
This question has been almost fully solved in the case $d\ge 3$. Let us display the result here 
\begin{teorema}[{\cite[Theorem 2.2]{cf:GL}}]
  For $d\ge 3$, for every $\gb\in [0,\bar\beta]$ we have 
 \begin{itemize}
  \item [(i)] $h_c(\gb)=0$ for all values of $\gb>0$.
  \item [(ii)] If $\go$ is Gaussian, there exist positive constants  $c_1(\gb)<c_2(\gb)$ such that for all $h\in (0,1)$.
  \begin{equation}
   c_1(\gb) h^2 \le \tf(\gb,h)\le c_2(\gb) h^2.
  \end{equation}
 \item[(iii)] In the case of general $\go$, for all there exist positive constants  $c_1(\gb)<c_2(\gb)$ such that for all $h\in(0,1)$
   \begin{equation}
   c_1(\gb) h^{66d}\le \tf(\gb,h)\le c_2(\gb) h^2.
  \end{equation}
 \end{itemize}
\end{teorema}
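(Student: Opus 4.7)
Part (i) would follow from (ii) or (iii) combined with the annealed comparison \eqref{annehilde}: the lower bounds there give $\tf(\gb,h)>0$ for every $h>0$, hence $h_c(\gb)\le 0$, while \eqref{annehilde} forces $h_c(\gb)\ge 0$. So the heart of the theorem is the pair of bounds in (ii) and (iii).

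For the \emph{upper bound} $\tf(\gb,h)\le c_2(\gb)h^2$, valid for both (ii) and (iii), I would use the fractional moment method combined with coarse graining and a change of measure on the disorder, in the spirit of the techniques developed for one-dimensional disordered pinning. Fix $s\in(0,1)$. Partition $\gL_N$ into blocks $B_j$ of side $\ell=\ell(h)$ (to be chosen), use the Markov property of the GFF to approximately factorize $Z_{N,h}^{\gb,\go}$ across blocks, and apply $(\sum_i a_i)^s\le \sum_i a_i^s$ to bring the $s$-th moment inside each block. On each block, tilt the restricted disorder $\go|_{B_j}$ (a uniform shift if $\go$ is Gaussian, an exponential-family density in general) so as to effectively cancel the reward $h$ within the block. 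The Radon--Nikodym cost per block is of order $\exp(Ch^2\ell^d)$, while the tilted partition function on that block is essentially the unpinned one. Optimizing $s$ and $\ell$ yields the quadratic upper bound.

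For the \emph{lower bound} in the Gaussian case (ii), I would apply the second moment method. The Gaussianity yields the explicit identity
\begin{equation*}
\bbE\bigl[(Z^{\gb,\go}_{\ell,h})^2\bigr] = \bE_\ell^{\otimes 2}\!\left[\exp\!\Bigl(h\sum_{x\in \tilde\gL_\ell}(\delta_x+\delta'_x)+\gb^2\sum_{x\in \tilde\gL_\ell}\delta_x \delta'_x\Bigr)\right],
\end{equation*}
where $(\phi,\phi')$ are independent copies of the GFF on $\gL_\ell$ and $\delta'_x:=\ind_{[-1,1]}(\phi'_x)$. The subtle quantity is the overlap $\sum_x \delta_x\delta'_x$. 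Using the fact from \cite{cf:CCH} that in $d\ge 3$ the contact set $\{x:\phi_x\in[-1,1]\}$ looks like a uniformly random subset of density $\approx c_d$, so that the joint contact set of two independent copies has density $\approx c_d^2$, one would establish $\bbE[Z_\ell^2]\le C(\gb)\bbE[Z_\ell]^2$ at a suitable scale $\ell=\ell(h)$. Paley--Zygmund then gives $Z_\ell\ge c\,\bbE[Z_\ell]$ on an event of positive probability, and patching blocks via the GFF Markov property yields the quadratic lower bound. For (iii) with general $\go$, I would replace the Gaussian computation by a more robust but looser estimate (truncating $\go$ before applying a similar argument, or a direct rare-event construction), producing the polynomial lower bound with a much larger exponent.

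The \emph{main obstacle} is the lower bound. A naive second moment only recovers the linear annealed rate $\tf\lesssim h$, so the exponent $2$ emerges only by carefully matching the block scale $\ell$ to the scale at which disorder fluctuations and the deterministic reward $h c_d\ell^d$ compete; in particular the second-moment ratio ceases to be bounded above $\ell\sim 1/h$. This is precisely where the $d\ge 3$ geometry of level sets of the GFF from \cite{cf:CCH} enters in an essential way, and it is the part of the argument that the $d=2$ case of the present paper must replace by a genuinely different analysis, since in two dimensions the level sets exhibit clustering rather than approximate independence.
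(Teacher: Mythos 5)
The present paper does not prove this theorem: it is quoted verbatim from \cite[Theorem 2.2]{cf:GL} and used as a point of comparison for the $d=2$ results. Nevertheless the paper's discussion (and its own $d=2$ analysis in Sections~\ref{finicrit}--\ref{intelinside}, presented as a degenerate analogue) makes the structure of the $d\ge 3$ argument clear enough to assess your proposal. Your outline is directionally correct for the upper bound (fractional moment / change of measure with a block tilt of the disorder, at block side $\ell\sim h^{-2/d}$, which is indeed the method of \cite{cf:DGLT, cf:GLT, cf:GLT2} that \cite{cf:GL} adapts), and for part~(i) the reduction to the lower bound plus \eqref{annehilde} is fine.

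There is, however, a concrete gap in the lower-bound sketch. You claim that at a suitable scale $\ell(h)$ one can establish $\bbE[Z_\ell^2]\le C(\gb)\bbE[Z_\ell]^2$ for the model pinned at height $0$ and then conclude by Paley--Zygmund. This cannot hold as stated: using your own identity, the ratio equals a tilted expectation of $\exp\bigl(\gb^2\sum_x\delta_x\delta'_x\bigr)$, and under two independent copies of the field the overlap $\sum_x\delta_x\delta'_x$ has mean of order $c_d^2\,\ell^d$. By Jensen the ratio is therefore at least $\exp(c\,\gb^2 c_d^2\,\ell^d)$, which diverges as $\ell\to\infty$ for any fixed $\gb>0$, so the second moment method fails outright at height $0$. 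The device that rescues the argument --- both in \cite{cf:GL} and in the present paper's Sections~\ref{nonoptfinit}--\ref{optfinit} --- is to pin at a shifted level $u=u(h)\sim\sqrt{2\sigma_d^2\log(1/h)}$ (replacing $\delta_x$ by $\delta^u_x:=\ind_{[u-1,u+1]}(\phi_x)$; the free energy is unchanged by \cite[Proposition 4.1]{cf:GL}), which makes the one-point contact probability of order $h$ rather than order $1$ and brings the expected overlap down to $O(1)$ at scale $\ell\sim h^{-2/d}$. Your proposal also uses the scale $\ell\sim 1/h$, whereas the balance between reward $h\,(\text{contacts})$ and disorder fluctuation $\gb\sqrt{\text{contacts}}$ in a box where one wants $\bbE\log Z_\ell\asymp 1$ leads to $\ell\asymp h^{-2/d}$, consistent with the target $\tf\asymp\ell^{-d}\asymp h^2$. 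Finally, even with the height shift one does not apply a plain second moment bound but a conditioned/truncated version (restricting to a typical event as in the paper's $\cB_N$), and one must also replace the $0$ boundary condition by a ``stationary'' one (the massive field) to get a sub-additive finite-volume criterion (Proposition~\ref{th:finitevol}); none of this is present in your sketch, and without the height shift in particular the claimed intermediate estimate is false.
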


\begin{rem}
We strongly believe that the quadratic behavior holds for every $\go$ as soon as $\gl(2\gb)<\infty$, and the Gaussian assumption is mostly technical.
However, if $\gl(2\gb)=\infty$, we believe that the model is in a different universality class and the critical exponent depends on the tail of the distribution of the variable
$\xi:=e^{\gb \go_0}$.
\end{rem}

The aim of the paper is to provide answers in the case of dimension $2$. 

\subsection{The main result}

We present now the main achievement of this paper.
We prove that similarly to the $d\ge 3$ case, the critical point $h_c(\gb)$ coincides with the annealed one for every value of $\gb$
(which is in contrast with the case $d=1$ where the critical points differs for every $\gb>0$ \cite{cf:GLT}).
However, we are able to prove also that the critical behavior of the free energy is not quadratic, $\tf(\gb,h)$ is becomes smaller than 
any power of $h$ in a (positive) neighborhood of $h=0$. This indicates that the phase transition is of infinite order. 

\begin{theorem}\label{mainres}
When $d=2$, for every $\gb \in[0,\bar \gb]$ the following holds
\begin{itemize}
 \item [(i)]  We have $h_c(\gb)=0$. 
 
 \item[(ii)]   We have 
 \begin{equation}
 \lim_{h\to 0+} \frac{\log \tf(\gb,h)}{\log h}=\infty.
 \end{equation}
 More precisely, there exists $h_0(\gb)$  such that for all $h\in (0,h_0(\gb))$
 \begin{equation}\label{breaks}
\exp\left(- h^{-20} \right)\le   \tf(\gb,h)  \le \exp\left( - | \log h|^{3/2}  \right).
 \end{equation}
\end{itemize}
\end{theorem}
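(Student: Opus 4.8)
\section*{Proof proposal}

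The plan is to prove the two inequalities of \eqref{breaks} separately, and to derive (i) as a by‑product of the lower bound (any $h>0$ giving strictly positive free energy forces $h_c(\gb)\le 0$, and $h_c(\gb)\ge 0$ is already in \eqref{annehilde}).

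\textbf{Upper bound.} Since $\tf(\gb,h)\le\tf(h)$ by \eqref{annehilde}, it is tempting to hope the pure estimate \eqref{pure} already suffices; it does not, because $\tf(h)$ is only polynomially small (indeed $\sim\sqrt2\,h/\sqrt{|\log h|}$). So the gain here must come from disorder, not from the annealed comparison. The idea is a coarse‑graining/second‑moment argument in the spirit of \cite{cf:GL}, but adapted to $d=2$. I would fix a length scale $\ell=\ell(h)$ and tile $\gL_N$ into boxes of side $\ell$. On each box the field, conditioned on suitable boundary values, has the logarithmic field fluctuations characteristic of the $2d$ GFF; crucially, the typical height of the field inside a box of side $\ell$ is of order $\sqrt{\log \ell}$, so to have a macroscopic contact fraction one must pay an entropic cost that grows with $\ell$. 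The quenched free energy per box is then estimated by a fractional‑moment bound: one shows $\bbE[(Z^{\gb,\go}_{\ell})^{\theta}]$ stays bounded (for a well‑chosen $\theta\in(0,1)$ depending on $h$) as long as $\ell$ is not too large relative to $h$. Optimizing the scale $\ell$ against $h$ — the balance being roughly $h\cdot\ell^2$ (energy) against $\ell^2\exp(-c(\log\ell)?)$-type costs — will produce $\ell\sim\exp(c|\log h|^{?})$ and hence $\tf(\gb,h)\le \exp(-|\log h|^{3/2})$ after bookkeeping. The role of the assumption $\gl(2\gb)<\infty$ (implicit in $\gb\le\bar\gb$) is to make the second/fractional moment finite.

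\textbf{Lower bound.} Here disorder is a nuisance and the target $\exp(-h^{-20})$ is very small, so a crude strategy should work. I would use the standard change‑of‑measure / restriction lower bound on the free energy: restrict the partition function to the event that the field stays in a narrow tube, say $|\phi_x|\le 1$ for all $x$ in a box $\gL_L$ of a well‑chosen size $L=L(h)$, with zero (or slightly tilted) boundary conditions. On that event every $\delta_x=1$, so the energy contributes $\sum_{x}(\gb\go_x-\gl(\gb)+h)\delta_x$, whose quenched/annealed average is $hL^2$, while the entropic price of confining the $2d$ GFF on a box of side $L$ to an $O(1)$ tube is of order $L^2\log L$ (each site pays roughly $\log(\text{typical height})\asymp \log\sqrt{\log L}$, but the pinning interaction plus the tube constraint together cost order $L^2$ up to logs; more carefully one localizes the field level by level). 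Choosing $\log L\sim h^{-1}$ makes the energy gain $hL^2$ beat the $L^2\log L\sim L^2\log(1/h)$ cost, and a subadditivity/FKG argument upgrades the single‑box estimate to $\tf(\gb,h)\ge (1/L^2)\times(\text{const})\ge \exp(-cL^{?})$; tuning constants yields the stated $\exp(-h^{-20})$, the exponent $20$ being generous slack. One must also handle the disorder fluctuations $\sum_x\gb\go_x$ on the box, but these are $O(L)$ with high probability and hence negligible against $hL^2$; alternatively one works with $\bbE\log Z$ directly and they vanish in expectation.

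\textbf{Main obstacle.} The delicate part is the upper bound: one needs a quantitatively sharp fractional‑moment/coarse‑graining scheme that correctly captures the $2d$ phenomenon that high level sets of the GFF \emph{cluster} (cited via \cite{cf:BL, cf:DZ2, cf:Dav}), so that the effective number of ``independent'' pinning sites in a box of side $\ell$ is much smaller than $\ell^2$ — this is precisely what turns a polynomial pure‑model bound into a stretched‑exponential quenched bound and produces the exponent $3/2$ on $|\log h|$. Getting the log‑power right (rather than merely ``smaller than any polynomial'') will require careful control of the partition function of the GFF restricted to the tube via the extremal‑process/ballot‑type estimates for the $2d$ field, and matching that against the energy term. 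The lower bound, by contrast, I expect to be essentially routine once the correct scale $L=e^{\Theta(1/h)}$ is identified.
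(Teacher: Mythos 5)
Your upper-bound sketch identifies the right driving phenomenon (clustering of high level sets of the $2d$ GFF, penalized via a coarse-grained change of measure), and in that respect it is broadly aligned with the paper. The mechanics differ somewhat: the paper does not use a fractional moment $\bbE[(Z_\ell)^\theta]$ with tunable $\theta$; it applies Cauchy--Schwarz to $\bbE[\sqrt{Z}]$ together with an explicit penalty function $f(\go)$ that punishes cells in which a small region (diameter $(\log N_1)^2$) carries an atypically high empirical disorder mean, and the cell side is $N_1=h^{-1/4}$, i.e.\ polynomial in $1/h$, not $\ell\sim\exp(c|\log h|^{?})$ as you guess. These are details one could reasonably discover while carrying out your plan, so the upper bound is defensible as a sketch even though the scales and the moment exponent are not right as written.

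The lower bound proposal, however, is wrong in its main step, not merely vague. Confining the $2d$ GFF to a tube $|\phi_x|\le 1$ over all of a box of side $L$ has probability $\exp(-\Theta(L^2))$: the per-site entropic cost is $\Theta(1)$, not $\Theta(\log L)$ (compare the paper's Lemma~\ref{controlgrid}, which gives $\exp(-C|\gG|)$ for a set $\gG$, i.e.\ linear in the number of constrained sites). On the tube event the energy gain is $hL^2$, so the per-site balance is $h-c$, which is \emph{negative} for all small $h$; no choice of $L$ rescues this. Your arithmetic ``choosing $\log L\sim h^{-1}$ makes $hL^2$ beat $L^2\log L$'' also does not close: with $\log L\sim h^{-1}$ the cost term is $L^2 h^{-1}\gg hL^2$. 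What the paper actually does is structurally different and much harder: using Proposition~\ref{massivecompa} one may pin at \emph{any} height $u$, and the proof pins at $u_h\sim\sqrt{2/\pi}\,\log N_h$, i.e.\ near the maximum of the field on a box of side $N_h=\exp(h^{-20})$. At that height contacts are extremely sparse (order $(\log N)^{\alpha}$ in the whole box), which makes a second-moment argument feasible, but only after one further restricts to the barrier/tent event $\cA_N$ of Proposition~\ref{th:condfirstmom} (the ``ballot'' restriction borrowed from branching-random-walk and $2d$-GFF extremal-process theory); the finite-volume criterion of Proposition~\ref{th:finitevol}, with a \emph{massive} field and infinite-volume random boundary data, supplies the sub-additivity needed to pass from a single box to the free energy. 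Your characterization of the lower bound as ``essentially routine'' is therefore backwards: in this paper the lower bound is the longer and more delicate half of the argument, and the tube-confinement idea you propose does not produce a positive free energy at all.
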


\begin{rem}
We do not believe that either bound in \eqref{breaks} is sharp. However it seems to us that the strategy used for the 
lower-bound is closer to capture the behavior of the field.
We believe that the true behavior of the free energy might be given by 
$$\tf(\gb,h) \approx \exp( h^{-1+o(1)}).$$
While a lower bound  of this type might be achieved by optimizing the proof presented in the present paper (but this would require some significant technical work), 
we do not know how to obtain a significant improvement on the upper-bound.
\end{rem}

\subsection{Co-membrame models in two dimension}

Like in \cite{cf:GL}, it worthwhile to notice that the proof of the results of the present paper can be adapted to 
a model for with a different localization mechanism.
It is the analog of the model of a copolymer in the proximity of the interface between selective solvents, see \cite{cf:Bcoprev,cf:coprev} and references therein. 
For this model given  a realization of $\go$ and two fixed parameters $\gvr, h>0$, the measure is defined via the following density
\begin{equation}
\label{eq:modmeascop}
\frac{\dd \check\bP^{\go,\gvr}_{N,h}}{\dd \bP_N}\, \propto\, \exp\left( \gvr \sum_{x\in  \tilde \gL_N} ( \go_x+h)\sign \left(\phi_x \right)\right)\, ,
\end{equation}
where we assume  $\sign(0)=+1$.
A natural interpretation of the model is that the graph of $(\phi_x)_{x\in \gL_N}$ models a membrane lying between two solvents $A$ and $B$ which fill
the upper and lower half-space respectively:
for each point of the graph, the quantity $\go_x+h$ describes the energetic preference 
for one solvent of the corresponding portion of the membrane (A if $\go_x+h>0$ and B if $\go_x+h<0$).
As $h$ is positive and $\go_x$ is centered, there is, on average, a preference for solvent A (by symetry this causes no loss of generality).

\medskip

If $\bbP[(\go_x<-h)>0]$,
there is a non-trivial competition between energy and entropy: the interaction with the solvent gives an incentive for the field $\phi$ to stay close to the interface so that its
sign can match as much as possible that of $\go+h$, but such a strategy might be valid only if the energetic rewards it brings is superior to the entropic cost of the 
localization.

\medskip

A more evident analogy with the pinning measure 
\eqref{eq:modmeas} can be made by observing that we can write
\begin{equation}
\label{eq:modcop}
\frac{\dd \check\bP^{\go,\gvr}_{N,h}}{\dd \bP_N}\, =\, \frac1{\check Z^{\go,\gvr}_{N,h}} \exp\left( -2\gvr \sum_{x\in  \tilde \gL_N} ( \go_x+h)\gD_x
\right)\, ,
\end{equation}
where $\gD_x:= (1- \sign(\phi_x))/2$, that is $\gD_x$ is the indicator function that $\phi_x$ is in the lower half plane.
It is probably worth stressing that from \eqref{eq:modmeascop} to \eqref{eq:modcop} there is a non-trivial (but rather simple) change in energy.
 And in the form \eqref{eq:modcop}.
In particular, the strict analog of Proposition~\ref{freen} holds -- the free 
energy in this case is denoted by $\check\tf(\gvr, h)$ -- and, precisely like for the pinning case, one sees that
$\check\tf(\gvr, h)\ge 0$. We then set $\check h_c(\gvr):= \inf\{h>0:\, \check\tf(\gvr, h)=0\}$.
Adapting the proof for the lower-bound in \eqref{breaks} we can identify the value of  $\check h_c(\gvr)$.


\medskip

\begin{theorem}
\label{th:cop}
For $d= 2$, for any $\gvr\in(0, \bar \gb/2)$ we have
\begin{equation}
\label{eq:cop}
\check h_c(\gvr)\, =\, \frac1{2\gvr}  \gl (-2\gvr)\, .
\end{equation}
Moreover \eqref{breaks}, with $\tf(\gb,h)$ replaced by 
$\check\tf (\gvr, h_c(\gvr)-h)$, holds true.
\end{theorem}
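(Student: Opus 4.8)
The plan is to reduce Theorem~\ref{th:cop} to a statement about the pinning model that is effectively already proved in the course of establishing Theorem~\ref{mainres}. The key observation is that the co-membrane partition function $\check Z^{\go,\gvr}_{N,h}$ and the pinning partition function are controlled by essentially the same estimates once one passes through the form \eqref{eq:modcop}: the ``selective solvent'' potential $-2\gvr(\go_x+h)\gD_x$ plays, after conditioning on which half-plane the field lies in, the role of the pinning potential acting on the set of sites where $\phi_x<0$. First I would write down the annealed computation: averaging $\exp(-2\gvr\sum_x(\go_x+h)\gD_x)$ over $\go$ produces $\exp(\sum_x(\gl(-2\gvr)-2\gvr h)\gD_x)$, so the annealed co-membrane model is a homogeneous model with effective parameter $\check h^{\mathrm{ann}} := 2\gvr h - \gl(-2\gvr)$ multiplying $\gD_x$. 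Since $\gD_x \in\{0,1\}$ exactly as $\delta_x$ does in the pinning model, and since the pure model has critical point $0$ by Proposition~\ref{propure}, the annealed co-membrane critical point is precisely $h$ such that $2\gvr h-\gl(-2\gvr)=0$, i.e. $h = \frac{1}{2\gvr}\gl(-2\gvr)$. This gives the upper bound $\check h_c(\gvr)\le \frac1{2\gvr}\gl(-2\gvr)$ via the annealed bound (the analogue of \eqref{anealed}).

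The substance is the matching lower bound $\check h_c(\gvr)\ge \frac1{2\gvr}\gl(-2\gvr)$, equivalently: for $h$ slightly below the annealed critical value, the quenched free energy is still strictly positive, with the quantitative lower bound $\exp(-|h_c-h|^{-20})$. Here I would transcribe the coarse-graining/rare-stretch strategy used for the lower bound in \eqref{breaks}. The idea for the pinning lower bound is to find, with positive density, mesoscopic boxes in which the disorder $\go$ is atypically favorable, and in which the free field can be forced (at controlled entropic cost, using the logarithmic fluctuations of the 2D GFF and an appropriate multi-scale/FKG argument) to sit in the pinning strip on a macroscopic fraction of sites; concatenating such boxes yields a positive free energy. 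For the co-membrane model the analogous move is to force $\phi$ into the \emph{lower} half-plane on favorable boxes — which is entropically much cheaper than forcing it into the strip $[-1,1]$, since it only costs a sign constraint — so the same coarse-graining produces a positive $\check\tf$ whenever the per-site annealed gain $2\gvr h-\gl(-2\gvr)$ would be positive, up to the same sub-polynomial corrections. Concretely I would: (1) restrict the field to be negative on a mesoscopic box via a change of measure, estimating $\bP_N(\phi_x<0 \ \forall x\in B)$ from below using Gaussian comparison and the entropic repulsion bounds for the 2D GFF; (2) on such a box, show the empirical measure of $\go_x$ is typical so $\sum_x(\go_x+h)\gD_x \approx (\mathbb{E}[\go_0]+h)|B| = h|B|$ up to fluctuations controlled by a large-deviation/Gaussian-concentration estimate; (3) tune the box scale as a function of $|h_c-h|$ exactly as in the proof of \eqref{breaks} to optimize the entropy–energy trade-off and extract the stretched-exponential lower bound; (4) invoke super-additivity / the existence of $\check\tf$ from the analogue of Proposition~\ref{freen} to conclude.

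For the upper bound on $\check\tf(\gvr,h_c(\gvr)-h)$ in \eqref{breaks}, I would mirror the proof of the upper bound in Theorem~\ref{mainres}(ii): the $|\log h|^{3/2}$ bound for pinning comes from a fractional-moment / change-of-measure argument exploiting the clustering structure of the 2D GFF extremal process (the ``intricate'' behavior alluded to after \cite{cf:Dav}), showing that the quenched partition function cannot grow exponentially unless $h$ is bounded away from $0$ by more than the indicated amount. The same fractional-moment computation applies verbatim with $\delta_x$ replaced by $\gD_x$ and the pinning parameter replaced by the shifted parameter $2\gvr h-\gl(-2\gvr)$, because both are bounded indicator-type functionals of the field and the estimate only uses the law of $\phi$ under $\bP_N$, not the specific form of the potential. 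The main obstacle is the lower bound step (1)–(3): one must make sure the entropic cost of the \emph{sign} constraint for the 2D GFF (rather than the strip constraint) is handled with the correct $|\log h|$-dependence so that the optimization in step (3) still yields the claimed exponent $20$; I expect that the sign constraint is if anything easier than the strip constraint used in \eqref{breaks}, so the same bound should go through, but verifying that the coarse-graining scales translate correctly — and that the annealed shift $\gl(-2\gvr)$ replaces $h_c=0$ cleanly throughout — is where the real care is needed. A secondary subtlety is the requirement $\gvr<\bar\gb/2$, which is exactly what makes $\gl(-2\gvr)$ finite via \eqref{eq:assume-gl}, and which must be checked to suffice for all the exponential-moment estimates in steps (1)–(3).
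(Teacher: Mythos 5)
The paper does not actually give a proof of this theorem: after stating it, it remarks only that ``The proof of Theorem \ref{th:cop} is not given in the paper but it can be obtained with straightforward modification, from that of Theorem \ref{mainres}.'' Your high-level plan --- annealed bound gives $\check h_c(\gvr)\le \frac1{2\gvr}\gl(-2\gvr)$, the lower bound of \eqref{breaks} then pins down $\check h_c$ exactly and gives the quantitative lower bound on $\check\tf$, and the upper bound of \eqref{breaks} transfers --- is the same plan the paper is implicitly invoking, and the reparametrization $\gb\mapsto-2\gvr$, $h\mapsto 2\gvr h$, $\delta_x\mapsto\gD_x$ that you identify is indeed the right dictionary.

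However, there is a genuine gap in the way you describe the lower bound of \eqref{breaks}, which is the piece that has to be transcribed. You call it a ``coarse-graining/rare-stretch strategy'' in which one finds boxes with atypically favorable $\go$ and forces the field into the target set there. That is not what the paper does for the lower bound: the proof in Sections 5--8 is a \emph{conditioned second-moment method} on a single large box of side $N_h=\exp(h^{-20})$, with a massive-field finite-volume criterion (Proposition \ref{th:finitevol}), a restriction event $\cB_N$ keeping the number of contacts small, and first/second-moment estimates for the contact count obtained from a branching-random-walk-type decomposition of the field (Section \ref{liminouze}). Coarse-graining over mesoscopic cells appears in the paper only in the \emph{upper} bound (Section \ref{seclower}). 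Moreover your steps (1)--(4) are internally inconsistent and would not close: step (1) asks for boxes with atypically favorable $\go$, while step (2) asserts the empirical mean of $\go$ is typical and concludes $\sum_x(\go_x+h)\gD_x\approx h|B|$. With the coefficient $-2\gvr$ in front, that is an energetic \emph{penalty} $-2\gvr h|B|<0$ for the quenched model, not a reward; what is positive is the \emph{annealed} per-site gain $\gl(-2\gvr)-2\gvr h$, which is not available site-by-site for a fixed $\go$. So as sketched, the energy-entropy tradeoff does not produce a positive free energy, and the rare-stretch argument is simply not the mechanism of the proof. Two lesser points: Proposition \ref{propure} concerns $\delta_x=\ind_{[-1,1]}(\phi_x)$, not $\gD_x$, so it does not directly identify the pure $\gD$-model's critical point (the $\phi\to-\phi$ symmetry is what one should invoke); and ``applies verbatim'' for the upper bound is too strong, since $\gD_x$ has density $\approx 1/2$ at zero boundary shift whereas the analysis of Propositions \ref{onecell}--\ref{boundforcluster} is organized around how $\bP(\phi_x\in[-1,1])$ and the cluster probability both decay in the boundary shift $u$; the structure survives but the estimates do need to be redone for the half-line observable.
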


Note that while pure co-membrane model (i.e.\ with no disorder) displays a first order phase transition in $h$, the above result underlines that
the transition becomes of infinite order in the presence of an arbitrary small quantity of disorder.
Note that this result differs both from the one obtained in dimension $d\ge 3$ (for which the transition is shown to be quadratic at least for Gaussian environment
\cite[Theorem 2.5]{cf:GL}), and that in dimension $1$: for the copolymer model based on renewals presented in \cite{cf:Bcoprev}, 
$\frac1{2\gvr}  \gl (-2\gvr)$ is in most cases a strict upper-bound on 
$\check h_c(\gvr)$ 
(see e.g.\ the results in \cite{cf:Tcopo}).

\medskip

The proof of Theorem \ref{th:cop} is not given in the paper but it can be obtained with straightforward modification, from that of Theorem \ref{mainres}.

\subsection{Organization of the paper}

The proof of the upper-bound and of the lower-bound on the free energy presented in Equation \eqref{breaks}  are largely independent.
However some general technical results concerning the covariance structure of the free field are useful in both proofs, 
and we present these in Section \ref{toolbox}. Most of the proofs for results presented in this section are in Appendix \ref{appendix}.

The proof of the upper-bound is developed in Section \ref{seclower}.
The proof of the lower-bound is  spreads from Section \ref{finicrit} to \ref{intelinside}.
In Section \ref{finicrit} we present an estimate on the free energy in terms of 
a finite system with ``stationary'' boundary condition. In Section \ref{decompo}, we give a detailed sketch of the proof of the lower-bound
based on this finite volume criterion, divided into several steps. The details of these steps are covered in Section \ref{liminouze} and \ref{intelinside}.

\medskip

For the proof of both the upper and the  lower-bound, we need fine results on the structure of the free field. 
Although these results or their proof cannot directly be extracted from the existing literature, our proof (especially the techniques developed in Section \ref{intelinside})
is largely based on tools that were developed in the numerous
study on extrema and extremal processes  of the two dimensional free field \cite{cf:BDG, cf:BDZ, cf:Dav, cf:DZ2} and other $\log$-correlated Gaussian processes
\cite{cf:A, cf:AS, cf:ABK,  cf:brams, cf:Mad} (the list of references being far from being complete). In particular for the lower bound,  we present 
an \textit{ad-hoc} decomposition of the field
in Section \ref{decompo} and then exploit decomposition to apply a conditioned second moment technique, similarly to what is done e.g.\ 
in \cite{cf:AS}.

\medskip

For the upper-bound, we also make use a change of measure machinery inspired by a similar techniques developed in the study of disordered pinning model
\cite{cf:BL, cf:DGLT, cf:GLT, cf:GLT2} and adapted successfully  to the study of other models \cite{ cf:BT, cf:BS1,  cf:BS2, cf:L2, cf:L3, cf:YZ}.

\medskip

\section{A toolbox}\label{toolbox}

\subsection{Notation and convention}

Throughout the paper, to avoid a painful enumeration, 
we use $C$ to denote an arbitrary constant which is not allowed to depend on the value of $h$ or $N$ nor on the realization of $\go$.
Its value may change from one equation to another. For the sake of clarity, we try to write $C(\gb)$ when the constant may depend on $\gb$.
When a constant has to be chosen small enough rather than large enough, we may use $c$ instead of $C$.

\medskip

For $x=(x_1,x_2)\in \bbZ^2$ we let $|x|$ denote its $l_1$ norm.
\begin{equation}
 |x|:= |x_1|+|x_2|.
\end{equation}
The notation $|\cdot|$ is also used to denote the cardinal of a finite set as this should yield no confusion.

\medskip

\noindent If $A\subset \bbZ^2$ and $x\in \bbZ^2$ we set 
\begin{equation}\label{distA}
 d(x,A):=\min_{y\in A} |x-y|.
\end{equation}
We use double brackets to denote interval of integers,
that for $i<j$ in $\bbZ$
\begin{equation}
 \lint i,j\rint := [i,j]\cap \bbZ=\{i,i+1,\dots,j\}.
\end{equation}
If $(A_i)_{i=1}^k$ is a finite family of events, we refer to the following inequality as 
\textit{the union bound}.
\begin{equation}
 \bbP(\cup_{i=1}^k A_i)\le \sum_{i=1}^k \bbP(A_i).
\end{equation}
We let $(X_t)_{t\ge 0}$ denote continuous time simple random walk on $\bbZ^d$ whose generator $\gD$ is the lattice Laplacian defined by 
\begin{equation}\label{laplace}
 \gD f(x):= \sum_{y\sim x} \big( f(y)-f(x) \big)
\end{equation}
and we let $P^x$ denote its law starting from $x\in \bbZ^d$.
We let $P_t$ denote the associated heat-kernel
\begin{equation}\label{heat}
 P_t(x,y)=P^x(X_t=y).
\end{equation}
If $\mu$ denote a probability measure on a space $\gO$, and $f$ a measurable function on $\gO$ we denote the expectation of $f$ by 
\begin{equation}
 \mu(f) = \int_{\gO}  f(\go) \mu( \dd \go),
\end{equation}
with an exception where the probability measure is denoted by the letter $P$, in that case we use $E$ for the expectation.

\medskip

\noindent If $\cN(\sigma)$ is a Gaussian of standard deviation $\sigma$, it is well known that we have 
\begin{equation}\label{gtail}
 P\left[ \cN(\sigma) \ge u \right] \le \frac{\sigma}{ \sqrt{2\pi}u}e^{-\frac{u^2}{2\sigma^2}}.
\end{equation}
We refer to the Gaussian tail bound when we use this inequality.

\subsection{The massive free field}\label{secmass}

In this section we quickly recall the the definition and some basic properties of the massive free field.
Given $m>0$,  and a  set $\gL\subset \bbZ^d$ and a function $\hat \phi$, we define the law $\bP^{m,\hat \phi}_{\gL}$ of the massive free field on $\gL$ with boundary condition $\hat \phi$ and mass $m$ 
as follows: it is absolutely continuous w.r.t $\bP^{\hat \phi}_{\gL}$ and
 \begin{equation}\label{massivedensity}
\frac{\dd \bP^{m,\hat \phi}_{\gL} }{\dd \bP^{\hat \phi}_{\gL}}(\phi):= \frac{1}{\bE^{\hat \phi}_{\gL}
\left[  \exp\left(-m^2 \sum_{x\in \mathring \gL} \phi_x^2\right) \right] } \exp\left(-m^2 \sum_{x\in \mathring \gL} \phi_x^2\right).
\end{equation}
We let $\bP_N^{m,\hat \phi}$ denote the law of the massive field on $\gL_N$.
(in the special case $\hat \phi_x\equiv 0$,  $\hat \phi$ is omitted in the notation).

\medskip

We let  $\bP^m$ denote the law of the centered infinite volume massive free field $\bbZ^d$, which is the limit of $\bP^{m}_{\gL}$ when $\gL \to \bbZ^d$ (see 
Section \ref{hkernel} for a proper definition with the covariance function).
We will in some cases have to choose the boundary condition $\hat \phi$ itself to be random 
and distributed like  an infinite volume centered massive free field (independent $\phi$), in which case we denote its law by $\hat \bP^m$ instead of $\bP^m$.

\medskip

Note that the free field and its massive version satisfy a Markov spatial property. In particular the law of 
$(\phi)_{x\in \gL_N}$ under $\hat \bP^{m} \times\bP^{m,\hat \phi}_N$
is the same as under the infinite volume measure  $\bP^{m}$.

\subsection{Getting rid of the boundary condition}\label{grbc}

Even if the definition of the free energy given in Proposition \ref{freen} 
is made in terms of the partition function with $\hat \phi\equiv 0$ it turns out that our methods to obtain 
upper and lower bounds involve considering non-trivial boundary conditions (cf. Proposition \ref{scorpiorizing} and Proposition \ref{th:finitevol}).

\medskip

However, it turns out to be more practical to work with a fixed law for the field and not one that depends on $\hat \phi$. 
Fortunately, given a boundary condition $\hat \phi$ the law of
$\bP^{m,\hat \phi}_N$ can simply be obtained by translating the field with $0$ boundary condition by a function that depends only on $\hat \phi$.
This is a classical property of the free field but let us state it in details.
As the covariance function of $\phi$ under 
$\bP^{m,\hat \phi}_N$ and $\bP^{m}_N$ are the same, we have 
we have 
\begin{equation}\label{transkix}
\bP^{m,\hat \phi}_N[\phi \in \cdot \ ]=\bP^{m}_N[\phi + H^{m,\hat \phi}_N \in \cdot \ ],
\end{equation}
where
\begin{equation}
 H^{m,\hat \phi}_N(x):=\bE^{m,\hat \phi}_N[\phi(x)].
\end{equation}
It is not difficult to check that  $H^{m,\hat \phi}_N$ must be a solution of the system (recall \eqref{laplace})
\begin{equation}\label{defH}\begin{cases}
H(x):= \hat \phi(x),  & x \in \partial \gL_N,\\
\gD H(x) =m^2  H^{m,\hat \phi}_N,  \quad & x \in \mathring \gL_N.
 \end{cases}
\end{equation}
We simply write $H^{\hat \phi}_N(x)$ when $m=0$.
The solution of \eqref{defH} is unique and $H^{m,\hat \phi}_N$ has the following representation:
consider $X_t$  the simple random walk on $\bbZ^d$ and for $A\subset \bbZ^d$ let $\tau_A$ denotes the first 
hitting of $A$. We have 
\begin{equation}\label{RWrepresent}
H^{\hat \phi}_N(x):= E_x\left[ e^{-m^2 \tau_{\partial \gL_N}}\hat \phi\left(X_{\tau_{\partial \gL_N}}\right)\right].
\end{equation}
Given $\hat \phi$ and $x\in \tilde \gL_N$, we introduce the notation
\begin{equation}\label{deltaf}
\delta^{\hat \phi}_x:= \ind_{[-1,1]}(\phi_x+ H^{\hat \phi}_N(x)).
\end{equation}
In view of \eqref{transkix} an alternative way of writing the partition function is 
\begin{equation}\label{migrate}
Z^{\gb,\go,\hat \phi}_{N,h}= \bE_N \left[ e^{\sum_{x\in \tilde \gL_N} (\gb\go_x-\gl(\gb)+h) \delta^{\hat \phi}_x}  \right].
\end{equation}
In some situation the above expression turns our to be handier than the definition \eqref{eq:modZ}.

\subsection{Some estimates on Green functions and heat Kernels}\label{hkernel}

In this section we present some estimates on the covariance function of the free field and massive free field in dimension $2$, which will be useful in the course of the proof.
These are not new results,  but rather variants of existing estimates in the literature (see e.g \cite[Lemma 2.1]{cf:Dav}). 

\medskip

The covariance kernel of the infinite volume free field with mass $m>0$ in $\bbZ^2$ or $m\ge 0$  in $\gL_N$ is given by the Green function $G^m$ which is the inverse of $\gD-m^2$
 (this can in fact be taken as the definition of the infinite volume free field, requiring in addition that it is centered).
 The covariance function of the field under the measure $\bP_N$ is $G^{m,*}$ which is the inverse of $\gD-m^2$  
 with Dirichlet boundary condition on $\partial \gL_N$. 
Both of these functions can be represented as integral of the heat kernel \eqref{heat},
we have  
\begin{equation}\label{greenff}
\begin{split}
 \bE^m[\phi(x)\phi(y)]&=\int_{0}^{\infty}e^{-m^2t}P_t(x,y)\dd t=:G^m(x,y),\\
 \bE^m_N[\phi(x)\phi(y)]&=\int_{0}^{\infty}e^{-m^2t}P^*_t(x,y)\dd t=:G^{m,*}(x,y),
\end{split}
 \end{equation}
 where $P^*_t$ is the heat kernel on $\gL_N$ with Dirichlet boundary condition on $\partial \gL_N$,
 \begin{equation}
  P^*_t(x,y):= P_x\left[ X_t=y \ ; \ \tau_{\partial \gL_N}<t\right].
 \end{equation}
We simply write $G^*$ in the case $m=0$.

\medskip

Note that, because of the spatial Markov property (Section \ref{secmass}) and of \eqref{transkix},
when $\hat \phi$ has law $\hat \bP^m$ and $\phi$ has law $\bP_N$, 
$(H^{m,\hat\phi}_N(x)+ \phi_x)_{x\in \gL_N}$ has the same law as the (marginal in $\gL_N$ of the) infinite volume field.
Hence as a consequence
\begin{equation}\label{covH}
\hat\bE^m[H^{\hat\phi}_N(x)H^{\hat\phi}_N(y)]=G^m(x,y)-G^{m,*}(x,y)=\int_{0}^{\infty}e^{-m^2t}(P_t(x,y)-P^*_t(x,y))\dd t.
\end{equation}

Before giving more involved estimates, let us mention first  a quantitative version of the Local Central Limit Theorem \cite[Theorem 2.1.1]{cf:LL} 
for the heat kernel which we use as an essential building brick to obtain them. There exists a constant $C$ such that for all $t\ge 1$,
\begin{equation}\label{lclt}
 \left|P_t(x,x)-\frac{1}{4\pi t}\right|\le \frac{C}{t^{3/2}},
\end{equation}
Let us recall the notation \eqref{distA} for the distance between a set and a point.
The following two lemmas are proved in Appendix \ref{appendix}.
 
\begin{lemma}\label{Greenesteem}
There exists a constant such that $C$  
\begin{itemize}
 \item [(i)] For all $m\le 1$, for any $x\in \bbZ^2$
\begin{equation}\label{eq:variance}
\left| G^m(x,x)+\frac{1}{2\pi} \log m\right|\le C
\end{equation}
\item[(ii)] For all $m\le 1$, for any $x\in \gL_N$
\begin{equation}\label{eq:stimagreen}
\left| G^{m,*}_N(x,x)-\frac{1}{2\pi} \log \min(m^{-1},d(x,\partial \gL_N))\right|\le C.
\end{equation}
\end{itemize}
\end{lemma}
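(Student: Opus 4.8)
The plan is to reduce everything to the Local Central Limit Theorem estimate \eqref{lclt} by splitting the relevant time integrals at a suitable scale. For part (i), I would write $G^m(x,x) = \int_0^\infty e^{-m^2 t} P_t(x,x)\,\dd t$ and split the integral at $t=1$. On $[0,1]$ the integrand is bounded (indeed $P_t(x,x)\le 1$ and the exponential factor is at most $1$), so this piece contributes an $O(1)$ term uniformly in $m\le 1$. On $[1,\infty)$ I would substitute the LCLT bound $P_t(x,x) = \frac{1}{4\pi t} + O(t^{-3/2})$: the error term integrates to $\int_1^\infty e^{-m^2 t} O(t^{-3/2})\,\dd t = O(1)$ uniformly in $m$, while the main term gives
\begin{equation}
\int_1^\infty e^{-m^2 t}\,\frac{1}{4\pi t}\,\dd t = \frac{1}{4\pi}\int_{m^2}^\infty \frac{e^{-s}}{s}\,\dd s = \frac{1}{4\pi}\,E_1(m^2),
\end{equation}
where $E_1$ is the exponential integral. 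The classical asymptotics $E_1(z) = -\log z - \gamma_{\mathrm{E}} + O(z)$ as $z\to 0+$ then yield $\frac{1}{4\pi}E_1(m^2) = -\frac{1}{2\pi}\log m - \frac{\gamma_{\mathrm{E}}}{4\pi} + O(m^2)$, which gives \eqref{eq:variance} after absorbing the bounded remainder into $C$. (If one prefers to avoid citing $E_1$ asymptotics, one can split $[1,\infty)$ further at $t = m^{-2}$: on $[1, m^{-2}]$ the exponential is $\asymp 1$ and $\int_1^{m^{-2}} \frac{\dd t}{4\pi t} = \frac{1}{4\pi}\log(m^{-2}) = -\frac{1}{2\pi}\log m$, while on $[m^{-2},\infty)$ the integral is $O(1)$; the price is a slightly cruder error but it is still $O(1)$, which suffices.)

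For part (ii), the new ingredient is the Dirichlet boundary at $\partial\gL_N$. Set $d := d(x,\partial\gL_N)$. I would compare $P^*_t(x,x)$ with $P_t(x,x)$ via the identity $P_t(x,x) - P^*_t(x,x) = P_x[X_t = x,\ \tau_{\partial\gL_N}\le t]$, which by the strong Markov property at time $\tau_{\partial\gL_N}$ and the on-diagonal bound $P_s(y,y)\le P_s(x,x)$ for the walk (or more simply $\sup_y P_s(x,y)\le C s^{-1}$ for $s\ge 1$) is controlled by $P_x[\tau_{\partial\gL_N}\le t]\cdot \sup_{s\le t}\|P_s\|_\infty$. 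For $t \ll d^2$ the hitting probability is super-exponentially small by a standard random walk deviation estimate, so on the range $[1, \min(m^{-2}, d^2)]$ the massive Dirichlet kernel agrees with $P_t(x,x)$ up to an $O(1)$ total error, and the computation of part (i) gives $\frac{1}{2\pi}\log\min(m^{-1}, d)$. For $t \gg \min(m^{-2},d^2)$ the contribution of both $e^{-m^2 t} P_t(x,x)$ and $e^{-m^2 t} P^*_t(x,x)$ to the integral is $O(1)$: if $m^{-2}$ is the smaller scale this is the exponential cutoff as before, and if $d^2$ is the smaller scale one uses that $\int_{d^2}^\infty P^*_t(x,x)\,\dd t = G^{0,*}_N(x,x) - \int_0^{d^2}P^*_t(x,x)\,\dd t$ together with the known bound $G^{0,*}_N(x,x) \le \frac{1}{2\pi}\log d + C$ for the two-dimensional Dirichlet Green function (itself provable by comparison with a full ball of radius $d$ centered at $x$, on which the walk has not yet exited). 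Assembling the pieces gives \eqref{eq:stimagreen}.

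The main obstacle is the second bullet: getting the boundary-distance truncation uniform in the geometry of $\gL_N$ and in the position of $x$. Concretely, one needs (a) a clean deviation bound showing $P_x[\tau_{\partial\gL_N} \le t]$ is negligible for $t$ below $c\,d^2$ — this is routine for the simple random walk but must be stated with constants independent of $N$ and $x$ — and (b) the matching upper bound on the Dirichlet Green function $G^{0,*}_N(x,x)\le \frac{1}{2\pi}\log d + C$, for which the cleanest route is monotonicity of Green functions in the domain (enlarging $\gL_N$ only increases $G^{*}$, and shrinking to the ball $B(x,d)\subset\gL_N$ only decreases it, so $G^{0,*}_{B(x,d)}(x,x)\le G^{0,*}_N(x,x)$ and separately one bounds from above by $G^m$ with $m\asymp d^{-1}$ or by the ball Green function from the outside). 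These are all standard facts about the two-dimensional lattice Laplacian, e.g.\ in the spirit of \cite[Lemma 2.1]{cf:Dav} and \cite{cf:LL}, but stitching them into a single uniform $O(1)$ error bound is where the real care is needed.
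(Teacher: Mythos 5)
Your heat–kernel integral-splitting strategy is a genuine alternative to the paper's argument (the paper instead uses the potential-kernel representation $G^{m,*}(x,y)=E^x[a(X_{\tau_{\partial\gL_N}\wedge T_m},y)]-a(x,y)$ adapted from \cite[Prop.~4.6.2(b)]{cf:LL}, and then half-plane reflection). Your part (i) is correct as written: splitting at $t=1$, bounding $[0,1]$ trivially, and inserting \eqref{lclt} on $[1,\infty)$ gives $\frac{1}{4\pi}E_1(m^2)+O(1)=-\frac1{2\pi}\log m+O(1)$.

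There is, however, a genuine gap in part (ii), precisely at the spot you flagged. In the regime $d:=d(x,\partial\gL_N)\le m^{-1}$, everything reduces to showing $\int_{d^2}^\infty P^*_t(x,x)\,\dd t = O(1)$. You propose to deduce this from the identity $\int_{d^2}^\infty P^*_t\,\dd t=G^{0,*}_N(x,x)-\int_0^{d^2}P^*_t\,\dd t$ together with the a priori bound $G^{0,*}_N(x,x)\le\frac1{2\pi}\log d+C$. But that a priori bound \emph{is} the $m=0$ case of \eqref{eq:stimagreen} itself (since $\min(m^{-1},d)=d$ there), so the argument is circular unless the upper bound is proved by other means. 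And the route you sketch to prove it does not close: domain monotonicity gives $G^{0,*}_{B(x,d)}(x,x)\le G^{0,*}_N(x,x)$, i.e.\ a \emph{lower} bound on $G^{0,*}_N$, not an upper one (any domain contained in $\gL_N$ only makes the Green function smaller); a domain \emph{containing} $\gL_N$ would have boundary farther from $x$, giving a $\log N$ rather than a $\log d$ upper bound; and $G^{0,*}_N(x,x)\le G^m(x,x)$ with $m\asymp d^{-1}$ is not a theorem — mass and Dirichlet boundary both kill the walk, so $G^{0,*}_N$ and $G^m$ are not comparable in a single direction.

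The correct fix, in keeping with your heat-kernel framework, is the one the paper uses in disguise: compare to a \emph{half-plane} $H$ tangent to $\partial\gL_N$ on the side nearest to $x$ (so $\gL_N\subset H$ and $d(x,\partial H)=d$). Domain monotonicity of the Dirichlet heat kernel gives $P^*_t(x,x)\le P^{*,H}_t(x,x)$, and reflection gives $P^{*,H}_t(x,x)=P_t(0,0)-P_t(0,2d\,\be_1)$. The gradient estimate \eqref{gradientas} (or \cite[Thm.~2.3.6]{cf:LL}) then gives $P^{*,H}_t(x,x)\le Cd^2/t^2$ for $t\ge 1$, hence $\int_{d^2}^\infty P^*_t(x,x)\,\dd t\le C$ directly, with no appeal to the Green-function bound you were trying to establish. (Equivalently, one can invoke \eqref{ltest} to get $P^*_t(x,x)\le Cd^2 t^{-1}\cdot P_t(x,x)\le Cd^2/t^2$.) With that substitution your part (ii) closes; it is then a legitimate, somewhat more hands-on alternative to the paper's potential-theory route, trading the potential kernel $a$ and its asymptotics \eqref{potentas} for direct heat-kernel estimates.
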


 \begin{lemma}\label{lem:kerestimate}
The following assertions hold
\begin{itemize}
 \item  [(i)] There exists a constant $C$ such that for all  $t\ge 1$, $|x-y| \le \sqrt{t}$, we have 
 \begin{equation}\label{gradientas}\begin{split}
 \left( P_t(x,x)- P_t(x,y)\right)&\le \frac{ C |x-y|^2 }{t^2},\\
 \left( P^*_t(x,x)+P^*_t(y,y)-2P^*_t(x,y)\right)&\le \frac{ C |x-y|^2 }{t^2}.
 \end{split}\end{equation}
 \item [(ii)] There exist a constant $C$ such that for all $t\ge 1$ and $x,y$ satifying $ |x-y| \le t$ 
 we have 
 \begin{equation}\label{croco}
  P_t(x,y)\le \frac{C}{t}e^{-\frac{|x-y|^2}{Ct}}.
 \end{equation}
 and as a consequence 
\begin{equation}\label{greensum}
\sum_{y\in \bbZ^2} G^{m}(x,y)\le Cm^{-2}.
\end{equation}
\item[(iii)] We have for all $x$
\begin{equation}\label{ltest}
 \frac{P^*_t(x,x)}{P_t(x,x)}\le C  \frac{\left[d(x,\partial \gL_N)\right]^2}{t}.
\end{equation}

\item[(iv)] We have for all $x$ 
\begin{equation}
\begin{cases}\label{kilcompare}
 P_t(x,x)-P^*_t(x,x)\le \frac{C}{t}e^{-\frac{d(x,\partial \gL_N)^2}{C t}}, \quad  &\text{ for } t\ge d(x,\partial \gL_N),\\
  P_t(x,x)-P^*_t(x,x)\le \frac{C}{t}e^{-\frac{1}{C}d(x,\partial \gL_N) \log \left(\frac{ d(x,\partial \gL_N)}{t} \right) }, \quad  &\text{ for } t\le d(x,\partial \gL_N).
\end{cases}
 \end{equation}
\end{itemize}
 \end{lemma}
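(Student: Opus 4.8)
The plan is to reduce everything to the Local CLT estimate \eqref{lclt} together with standard Gaussian-type bounds on the discrete heat kernel and elementary reflection/monotonicity arguments for the killed walk.

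\emph{Part (i).} For the first inequality I would use the semigroup identity $P_{t}(x,x)-P_{t}(x,y)=\tfrac12 E_x\bigl[P_{t/2}(X_{t/2},X_{t/2})-2P_{t/2}(X_{t/2}, \cdot)+\cdots\bigr]$ — more efficiently, write $P_t(x,x)-P_t(x,y)=\sum_z P_{t/2}(x,z)\bigl(P_{t/2}(z,x)-P_{t/2}(z,y)\bigr)$ and estimate $P_{t/2}(z,x)-P_{t/2}(z,y)$ via a two-term Taylor expansion of the heat kernel in the spatial variable, using the gradient and Hessian bounds $|\nabla P_s(\cdot,\cdot)|\le C s^{-3/2}$, $|\nabla^2 P_s|\le C s^{-2}$, which themselves follow from \eqref{lclt} and the Gaussian bound \eqref{croco}. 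Since $|x-y|\le\sqrt t$ the linear term is controlled after summing against the symmetric kernel $P_{t/2}(x,z)$, leaving the quadratic term of order $|x-y|^2/t^2$. For the killed-walk version, I would use the decomposition $P^*_t(x,\cdot)=P_t(x,\cdot)-E_x[\ind_{\tau<t}P_{t-\tau}(X_\tau,\cdot)]$ and apply the same Taylor argument, noting that the correction terms combine into the symmetric second difference $P^*_t(x,x)+P^*_t(y,y)-2P^*_t(x,y)$, which is nonnegative and bounded by the free-field second difference plus boundary terms of the same order.

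\emph{Parts (ii) and (iii).} The bound \eqref{croco} is the standard sub-Gaussian heat-kernel estimate for continuous-time SRW on $\bbZ^d$; I would cite or quickly reprove it (e.g. via the exponential Chebyshev/Bennett bound on $|X_t-x|$, or directly from \cite{cf:LL}), valid in the stated range $|x-y|\le t$. Summing \eqref{croco} in $y$ gives $\sum_y P_t(x,y)\le C$ uniformly, and then $\sum_y G^m(x,y)=\int_0^\infty e^{-m^2 t}\sum_y P_t(x,y)\,\dd t\le C\int_0^\infty e^{-m^2 t}\,\dd t=Cm^{-2}$, which is \eqref{greensum} (the short-time part $t\le1$ is handled trivially since $\sum_y P_t(x,y)=1$). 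For \eqref{ltest}, write $P^*_t(x,x)=P_x[X_t=x,\ \tau_{\partial\gL_N}<t]$; by the strong Markov property at $\tau:=\tau_{\partial\gL_N}$ this is $E_x[\ind_{\tau<t}P_{t-\tau}(X_\tau,x)]$, and $P_{t-s}(X_\tau,x)\le C/(t-s)$ is too weak near $s=t$, so instead I would bound $P^*_t(x,x)\le P_x[\tau_{\partial\gL_N}<t]\cdot \sup_{z}P_{\cdot}(z,x)$ more carefully — actually the cleanest route is $P^*_t(x,x)\le P_x[\tau_{\partial\gL_N}\le t]\,P_t(x,x)$ is false in general, so I would instead use $P^*_t(x,x)=\sum_{z\in\partial\gL_N}\int_0^t P^*_s(x,z^-)\,(\text{exit rate})\,P_{t-s}(z,x)\,\dd s$ and plug in the diffusive bound $P_x[\tau_{\partial\gL_N}<t]\le C\,d(x,\partial\gL_N)^2/t$ (a consequence of $E_x[\tau_{\partial\gL_N}]\asymp d(x,\partial\gL_N)^2$ in $d=2$, or of the reflection principle in one coordinate), together with $P_{t-s}(z,x)\le P_{t/2}$-type control on the bulk of the integral; the endpoint $s$ near $t$ contributes a lower-order term because the walk must then travel distance $\ge d(x,\partial\gL_N)$ in the remaining short time, which is exponentially suppressed.

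\emph{Part (iv).} Here $P_t(x,x)-P^*_t(x,x)=E_x[\ind_{\tau_{\partial\gL_N}<t}\,P_{t-\tau_{\partial\gL_N}}(X_{\tau_{\partial\gL_N}},x)]$ by the last-exit / strong Markov decomposition. In the regime $t\ge d(x,\partial\gL_N)$ I would bound $P_{t-\tau}(X_\tau,x)\le C/t$ on the event $\{\tau\le t/2\}$ (times $P_x[\tau<t]$, which by the same diffusive estimate and a sub-Gaussian tail is $\le C e^{-d(x,\partial\gL_N)^2/(Ct)}$ when $d(x,\partial\gL_N)\gtrsim\sqrt t$, and $\le1$ otherwise, matching the claimed form after adjusting $C$), and separately handle $\{\tau\in(t/2,t)\}$ using the sub-Gaussian bound on the probability of reaching $\partial\gL_N$ by time $t$. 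In the regime $t\le d(x,\partial\gL_N)$ the walk must cover distance $\ge d:=d(x,\partial\gL_N)$ in time $\le t\ll d$, a large-deviation event for SRW whose probability is $\exp(-c\,d\log(d/t))$ by the standard moderate/large-deviation estimate for the range of continuous-time SRW (the number of steps in time $t$ is Poisson$(2dt)$-ish, and reaching distance $d$ in $\le t$ steps costs $e^{-c d\log(d/t)}$); multiplying by the at-most-$C/t$ return factor gives \eqref{kilcompare}.

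\emph{Main obstacle.} The routine parts are the Taylor expansions in (i) and the heat-kernel Gaussian bound in (ii). The delicate points are the sharp constants and ranges in \eqref{ltest} and especially the two-regime estimate \eqref{kilcompare}: one must carefully split the exit-time integral and show that the "late exit" contribution is genuinely lower order, and in the regime $t\le d(x,\partial\gL_N)$ one needs the correct $d\log(d/t)$ exponent rather than the naive $d^2/t$, which requires the moderate-deviation (not the Gaussian) tail for the displacement of continuous-time SRW. Getting this exponent right — essentially an Azuma/Cramér estimate for a Poisson number of $\pm1$ increments — is where the real work lies; the rest is bookkeeping against \eqref{lclt} and \eqref{croco}.
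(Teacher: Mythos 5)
Your parts (ii) and (iv) are essentially workable: for (ii) the LCLT-plus-large-deviations bound is exactly what the paper does, and for (iv) your exit-time decomposition with the Poissonian moderate-deviation exponent $d\log(d/t)$ for $t\le d$ can be made to work, though the paper reaches the conclusion faster by the reflection bound $P_t(x,x)-P^*_t(x,x)\le P\bigl[X_t=0;\max_{s\le t}|X_s|\ge d\bigr]\le 4\,P_t(0,2d\be_1)$, after which one just invokes [cf:LL, Thm.\ 2.3.6] for large $t$ and a large-deviation estimate for small $t$. However, parts (i) and (iii) each have a genuine gap.

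For the Dirichlet inequality in (i), writing $P^*_t(x,\cdot)=P_t(x,\cdot)-E_x[\ind_{\tau<t}P_{t-\tau}(X_\tau,\cdot)]$ and asserting that the boundary corrections combine into terms ``of the same order'' $|x-y|^2/t^2$ does not hold as stated: expanding the second difference, the correction is of the form $E_x[\ind_{\tau<t}(P_{t-\tau}(X_\tau,x)-P_{t-\tau}(X_\tau,y))]$ plus the analogous term started from $y$, and each of these is a \emph{first} difference of the free kernel, a priori only $O(|x-y|/t^{3/2})$. Recovering the extra factor $|x-y|/\sqrt t$ requires a further cancellation between the $x$- and $y$-started corrections that your sketch does not supply. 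The paper bypasses this entirely by exploiting the product structure of the box: the two-dimensional Dirichlet kernel factorizes over coordinates, reducing the claim to $p^*_t(u,u)+p^*_t(v,v)-2p^*_t(u,v)\le C|u-v|^2/t^{3/2}$ on $\lint 0,N\rint$ in one dimension, which is then an explicit sine-eigenbasis computation using $|\sin a - \sin b|\le |a-b|$.

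For (iii) you are bounding the wrong quantity. The formula $P^*_t(x,x)=P_x[X_t=x;\tau_{\partial\gL_N}<t]$ you write (copying a typo in the paper's Section~3.4) is not the Dirichlet kernel: since $G^{m,*}=\int_0^\infty e^{-m^2t}P^*_t\,\dd t$ inverts $\gD-m^2$ with Dirichlet conditions, $P^*_t$ must be the \emph{survival} probability $P_x[X_t=x;\tau>t]$. Consequently $E_x[\ind_{\tau<t}P_{t-\tau}(X_\tau,x)]$ is the first-passage representation of $P_t(x,x)-P^*_t(x,x)$ (the quantity of part (iv)), not of $P^*_t(x,x)$ itself, and the ``diffusive bound'' you invoke is also reversed: the relevant estimate is $P_x[\tau>t]\le C d^2/t$, which says survival near the boundary is unlikely, whereas $P_x[\tau<t]$ is close to $1$ for small $d$. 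Even with the correct survival bound one cannot simply multiply by $P_t(x,x)$, since the events are not independent. The paper's argument is structurally different and is what you are missing: replace $\gL_N$ by the half-plane through the nearest boundary face (domain monotonicity of the Dirichlet kernel), use the reflection principle on that coordinate to get $P^{*,\text{half-plane}}_t(x_d,x_d)=P_t(0,0)-P_t(0,2d\be_1)$, bound this by $C d^2/t^2$ via the first inequality of (i), and divide by $P_t(x,x)\asymp t^{-1}$ to obtain \eqref{ltest}.
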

 
  \subsection{Cost of positivity constraints for Gaussian random walks}
  
  Finally we conclude this preliminary section with an estimate for the probability to remain above a line for Gaussian random walks.
  The statement is not optimal and the term $(\log k)$ could be replaced by $1$ but as the rougher estimate is sufficient for our purpose we prefer to keep the proof simpler.
We include the proof in the Appendix \ref{appendix} for the sake of completeness.

  \begin{lemma}\label{lem:bridge}
Let $(X_i)_{i=1}^k$ be arandom walk with independent centered Gaussian increments, each of which with variance bounded above by $2$
and such that the total variance satisfies $\Var(X_k)\ge k/2$.
Then we have for all $x\ge 0$
\begin{equation}
1-e^{-\frac{x^2}{k}}\le \bP\left[ \max_{i}X_i\le x \ | \ X_k=0 \right]\le \frac{ C(x+(\log k))^2}{k}.
\end{equation}
 \end{lemma}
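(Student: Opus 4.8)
The plan is to reduce the bridge estimate to a standard reflection-type argument combined with a comparison between the pinned (bridge) walk and a suitable unpinned Gaussian object. For the lower bound, I would argue directly: conditionally on $X_k=0$, the event $\{\max_i X_i \le x\}$ has probability at least $\bP[\max_i X_i \le x \mid X_k = 0] \ge 1 - \bP[\exists i: X_i > x \mid X_k = 0]$. The cleanest route is to use the reflection principle for the Gaussian bridge: for a \emph{Brownian} bridge the hitting probability of level $x$ has an exact exponential form, and the discrete increments here are Gaussian, so the discrete bridge embeds into (or is dominated by) a Brownian bridge of comparable variance. Concretely, since $\Var(X_k)\ge k/2$, interpolating the bridge into a continuous Gaussian process on $[0,k]$ and applying the reflection identity $\bP[\sup_{[0,k]} B^{\mathrm{br}} \ge x] = e^{-2x^2/\sigma^2}$ (with $\sigma^2 = \Var(X_k)$, up to the factor $2$) yields $\bP[\max_i X_i \le x\mid X_k=0]\ge 1 - e^{-x^2/k}$ after absorbing constants. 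The inequality $1-e^{-x^2/k}$ is weak enough that this crude comparison suffices.

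For the upper bound, the strategy is to split the walk at its midpoint: $\{\max_i X_i \le x\}$ forces in particular $X_{\lfloor k/2\rfloor}\le x$, and conditionally on $X_k = 0$ the variable $X_{\lfloor k/2\rfloor}$ is Gaussian with variance of order $k$ (this is where I use both the upper bound $2$ on each increment's variance and the lower bound $\Var(X_k)\ge k/2$, to control the bridge variance at the midpoint from above and below). Then I would iterate a dyadic decomposition: conditionally on the values at dyadic times, each half is itself a Gaussian bridge, and the probability that a bridge of length $\ell$ starting and ending near height $\le x$ stays below $x$ is $O((x + \sqrt{\ell})/\sqrt\ell)$ by the one-dimensional barrier estimate (reflection principle again). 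Rather than a full multiscale induction, the quickest rigorous path is: condition on $X_{\lfloor k/2 \rfloor} =: y$, note $\bP[\max_{i\le k/2} X_i \le x \mid X_0 = 0, X_{\lfloor k/2\rfloor} = y]$ and the analogous quantity on $[\lfloor k/2\rfloor, k]$ are each bounded by the ballot-type estimate $C\frac{(x-y)_+ \cdot (\text{something})}{k}$ for a single Gaussian bridge, and then integrate against the Gaussian density of $y$. The logarithmic slack $(\log k)^2$ in the statement is exactly what lets one be wasteful in this integration (e.g., treat the regime $y \le -C\sqrt{k}\log k$ trivially via a Gaussian tail bound \eqref{gtail}, and on the complementary event bound $(x-y)_+ \le x + C\sqrt k \log k$).

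The main obstacle is the upper bound: one needs a clean \emph{single-bridge} barrier estimate of the form $\bP[\max_{0\le i\le \ell} X_i \le x \mid X_0 = a, X_\ell = b] \le C(x-a)_+(x-b)_+/\ell$ (or a one-sided version) for Gaussian bridges with the stated increment-variance constraints. For Brownian bridge this is the exact reflection formula; transferring it to the discrete Gaussian bridge requires either embedding the discrete walk into a Brownian bridge (Skorokhod/KMT-type, overkill here) or a direct induction on $\ell$. Given the permitted $(\log k)$ slack, I would instead prove the weaker bound $\bP[\max_i X_i \le x \mid X_k = 0]\le C(x + \log k)^2/k$ by a two-step reduction — first to $\bP[X_{k/2}\le x, \max_{i\le k/2}X_i \le x \mid \ldots]$, then bounding $\max_{i\le k/2} X_i \le x$ given the endpoints by comparison with an \emph{unconditioned} random walk reaching height below $x$ at time $k/2$, for which Kolmogorov's maximal inequality or the reflection principle gives a factor $O(x/\sqrt k)$ — and absorb all constants and logarithms at the end. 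Everything else (the Gaussian density computations, the tail bounds) is routine and deferred to the Appendix.
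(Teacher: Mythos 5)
Your lower bound reproduces the paper's argument: for Gaussian increments one couples the conditioned walk exactly with a Brownian bridge $B$ on $[0,V_k]$ by setting $X_i := B_{V_i}$ with $V_i := \Var(X_i)$, notes $\max_i X_i \le \sup_t B_t$, and applies the bridge reflection identity together with $V_k\le 2k$.

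For the upper bound there is a genuine gap. Your midpoint-split strategy requires a ballot-type barrier estimate for the \emph{discrete} Gaussian bridge; you correctly flag this as the obstacle, but such an estimate is a two-endpoint strengthening of the very lemma being proved, so it cannot simply be cited, and exact reflection fails for a discrete walk because of the overshoot at the first passage over the barrier. The workaround you propose --- drop the constraint on the second half and bound $\bP[\max_{i\le k/2}X_i\le x\mid X_k=0]$ by comparison with an unconditioned walk, for a factor $O(x/\sqrt k)$ --- lands on the wrong order of magnitude. The lemma is applied with $x\asymp\log k$ (see Lemmas~\ref{probacont}, \ref{lesperance}, \ref{covariancee}), where the target $(x+\log k)^2/k\asymp(\log k)^2/k$ but your bound is $\asymp \log k/\sqrt k$, larger by a factor $\sqrt k/\log k\to\infty$; the ``$(\log k)$ slack'' does not begin to cover this, and with the weaker estimate the free-energy computations downstream would not close. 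You also dismiss the Brownian bridge embedding as requiring Skorokhod/KMT machinery, which is a misstep: for a walk with Gaussian increments the embedding $X_i=B_{V_i}$ is exact and free, and it is precisely what makes the upper bound tractable. The paper couples with the continuous bridge, bounds $\tilde\bP[\sup_t B_t\le 2x]\le Cx^2/V_k$ by reflection, and controls the discrepancy between the discrete maximum and $\sup_t B_t$ by a union bound over the $k$ interpolating excursions $B^i$ on $[V_{i-1},V_i]$: each has length $\nabla V_i\le 2$ and hence contributes $e^{-cx^2}$, so the total correction $k e^{-cx^2}$ is negligible once $x\gtrsim\log k$. That union bound is exactly where the $\log k$ in the statement comes from, and without some version of this device the argument will not close.
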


 \section{The upper-bound on the free energy}\label{seclower}
 
 Let us briefly discuss the structure of the proof before going into more details.
 The main idea is presented in Section \ref{changeofme}: we introduce a function
 which penalizes some environments $\go$  which are too favorable, and use it to get a bet annealed bound which penalizes 
 the trajectories with clustered contact points in a small region (Proposition \ref{nonrandom}).  
 \medskip

However, to perform the coarse-graining step of the proof, we need some kind of control on $\phi$.
For this reason, in Section \ref{restrictou} we start the proof by showing that restricting the partition function to a set of uniformly bounded trajectory does not 
affect a lot the free energy. 
 
 \subsection{Restricting the partition function}\label{restrictou}
 
 In this section, we show that restricting the partition function by limiting the maximal height of the field $\phi$ does not affect too much the free energy.
 This statement is to be used to control the boundary condition of each cell when performing a coarse-graining argument in Proposition \ref{scorpiorizing}.
 Let us set  
\begin{equation}
   \cA^h_N  := \left\{ \forall x \in  \gL_N, \ |\phi_x|\le |\log h|^2 \right\},
\end{equation}
and  write 
\begin{equation}
 Z^{\gb,\go}_{N,h}(\cA^h_N):=\bE_N \left[ \exp \left(\sum_{x\in \tilde \gL_N}(h+\gb \go_x-\gl(\gb))\delta_x\right)\ind_{\cA^h_N} \right].
\end{equation}

\begin{proposition}\label{boundtheprob}
There exists a constant $c$ such for any  $h\in(0,h_0)$ and $\gb>0$ we have  
\begin{equation}
 \liminf_{N\to \infty} \frac{1}{N^2} \bbE \log \bP^{\gb,\go}_{N,h}\left[ \cA^h_N \right]\ge  -\exp\left(-c |\log h|^2\right).
\end{equation}

As a consequence, we have 
\begin{equation}
 \tf(\gb,h)\le \liminf_{N\to \infty} \frac{1}{N^2} \bbE \log Z^{\gb,\go}_{N,h}(\cA^h_N)+\exp\left(-c |\log h|^2\right).
\end{equation}
\end{proposition}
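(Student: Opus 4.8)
The plan is to control the probability of the unlikely event $(\cA^h_N)^{\complement}$ under the pinning measure $\bP^{\gb,\go}_{N,h}$ and then use the elementary fact that for any event $\cA$ with $\bP^{\gb,\go}_{N,h}(\cA)>0$ one has $\log Z^{\gb,\go}_{N,h}(\cA) = \log Z^{\gb,\go}_{N,h} + \log \bP^{\gb,\go}_{N,h}(\cA)$; dividing by $N^2$, taking $\bbE$ and $\liminf$, and invoking Proposition \ref{freen}, the second statement is an immediate consequence of the first. So the whole task is the lower bound on $\frac1{N^2}\bbE\log\bP^{\gb,\go}_{N,h}[\cA^h_N]$.

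First I would reduce to a statement about the \emph{free} field rather than the pinning field. Since $\frac{\dd\bP^{\gb,\go}_{N,h}}{\dd\bP_N}$ is bounded below by $Z^{-1}\exp(-(\gl(\gb)-h)\sum_x\delta_x^- - \ldots)$... more cleanly: the Radon--Nikodym derivative at $\phi$ of $\bP^{\gb,\go}_{N,h}$ with respect to $\bP_N$ is $\frac{1}{Z^{\gb,\go}_{N,h}}\exp(\sum_{x}(\gb\go_x-\gl(\gb)+h)\delta_x)$, and on the complement event we only need an upper bound on the exponential (which is at most $\exp(\sum_x(\gb\go_x)_+ )$, or simply $\exp(\gb\sum_x(\go_x)_+)$ after dropping the favorable terms, times $e^{hN^2}$). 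Combined with a lower bound $Z^{\gb,\go}_{N,h}\ge \bE_N[\exp(\ldots);\phi\equiv 0\text{-ish}]$ of the usual ``restrict to the contact-everywhere strategy'' type, one gets $\bP^{\gb,\go}_{N,h}[(\cA^h_N)^{\complement}] \le e^{C(\go)N^2}\,\bP_N[(\cA^h_N)^{\complement}]$ where $C(\go)$ has a deterministic $\bbP$-expectation (using $\gl(\gb),\gl(2\gb)<\infty$). The point of this step is that $\log\bP^{\gb,\go}_{N,h}[\cA^h_N]\ge -2\bP^{\gb,\go}_{N,h}[(\cA^h_N)^{\complement}]$ once the latter is $\le 1/2$, so it suffices to show the complement probability is super-exponentially small in the sense that $\frac1{N^2}\log\bP_N[(\cA^h_N)^{\complement}] \to$ something $\le -\exp(-c|\log h|^2)$ plus a term killing the $e^{C N^2}$ factor — this forces me instead to be more careful: I should not bound $\bP^{\gb,\go}_{N,h}[(\cA^h_N)^{\complement}]$ crudely but rather bound $\bP^{\gb,\go}_{N,h}[\cA^h_N]$ directly from below by exhibiting a good event.

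So the cleaner route, which I would actually take: bound from below $\bP^{\gb,\go}_{N,h}[\cA^h_N] \ge \bP^{\gb,\go}_{N,h}[\cB_N]$ where $\cB_N$ is the event that $|\phi_x|\le \tfrac12|\log h|^2$ on a coarse sublattice of mesh $\ell\sim|\log h|$ together with a modulus-of-continuity control at scale $\ell$, and estimate this via a union bound over the $\sim N^2/\ell^2$ boxes. For a single box, conditionally on the (bounded) values on the coarse lattice, the free field inside a box of side $\ell$ has fluctuations of order $\sqrt{\log\ell}\sim\sqrt{\log|\log h|}$ about its harmonic interpolation, so the probability that it exceeds $|\log h|^2$ somewhere in the box is $\le \exp(-c|\log h|^4/\log|\log h|) \le \exp(-c|\log h|^3)$, using the Gaussian tail bound \eqref{gtail} and Lemma \ref{Greenesteem}(ii) (the variance in a box of side $\ell$ is $\le \frac1{2\pi}\log\ell + C$) plus a chaining/maximal-inequality argument over the $\ell^2$ sites. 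Summing the bad probabilities over all boxes and over pairs of adjacent coarse points (to control the coarse field itself, whose increments over distance $\ell$ have variance $O(1)$ by Lemma \ref{lem:kerestimate}(i), so staying below $\tfrac12|\log h|^2$ costs nothing at leading order) gives $\bP^{\gb,\go}_{N,h}[(\cB_N)^{\complement}] \le N^2 e^{-c|\log h|^3}$; but $N^2$ is fatal, so the union bound must be done \emph{inside the expectation of the log}, not before taking logs. The right mechanism is: write $\{\cA^h_N\}$ as the intersection over boxes of local events, and use that under a pinning measure the field in well-separated boxes is not independent but can be dominated; alternatively — and this is what I expect the authors do — use the FKG/conditioning structure to get $\bP^{\gb,\go}_{N,h}[\cA^h_N] \ge \prod_{\text{boxes}}(\text{local prob})$ up to boundary corrections, whence $\frac1{N^2}\log\bP^{\gb,\go}_{N,h}[\cA^h_N] \ge -\frac{1}{\ell^2}\cdot(\text{local bad prob})\cdot C \ge -\exp(-c|\log h|^2)$, the exponent $|\log h|^2$ (rather than $|\log h|^3$) coming from absorbing a factor $\ell^2\sim|\log h|^2$.

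The main obstacle, as the discussion above shows, is getting a bound that survives division by $N^2$: one cannot afford a naive union bound over the $\Theta(N^2)$ sites or boxes before passing to the logarithm. Overcoming this requires either a product-form lower bound on $\bP^{\gb,\go}_{N,h}[\cA^h_N]$ obtained by conditioning on a coarse grid and using the spatial Markov property together with correlation inequalities (FKG), or a direct superadditivity/subadditivity argument on $\frac1{N^2}\log\bP^{\gb,\go}_{N,h}[\cA^h_N]$. The delicate quantitative point is then the single-box estimate: controlling $\bP$ that the free field (pinned down on a grid of mesh $\sim|\log h|$ with bounded boundary data) stays below $|\log h|^2$ in sup-norm over a box, which rests on the logarithmic growth of the variance from Lemma \ref{Greenesteem}(ii) and a Fernique-type maximal inequality, and yields the exponent $c|\log h|^2$ after the $\ell^{-2}$ normalization. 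Everything else — the $\go$-dependence, the passage $\liminf\frac1{N^2}\bbE\log$, and the final ``consequence'' — is routine bookkeeping using $\gl(\gb)<\infty$, convexity, and Proposition \ref{freen}.
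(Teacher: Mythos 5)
The overall route you sketch—remove the pinning via FKG, pass to the pure field $\bP_N$, then coarse-grain by conditioning on a grid and using the spatial Markov property—is indeed the structure of the paper's proof, and your reduction of the second statement to the first (via $\log Z^{\gb,\go}_{N,h}(\cA)=\log Z^{\gb,\go}_{N,h}+\log\bP^{\gb,\go}_{N,h}(\cA)$ and Proposition~\ref{freen}) is correct and routine.

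However, there is a genuine quantitative gap in the central step, and it lies exactly where you flag the difficulty but then wave it away. You take a mesh $\ell\sim|\log h|$ and assert that constraining the coarse field on the $\sim(N/\ell)^2$ grid sites to $[-\tfrac12|\log h|^2,\tfrac12|\log h|^2]$ ``costs nothing at leading order'' because increments over distance $\ell$ have $O(1)$ variance. This is false: the coarse field is still log-correlated, its supremum over the box grows like $\log(N/\ell)\to\infty$, and the best lower bound an induction of the type of Lemma~\ref{controlgrid} gives is $\bP_N[\text{grid constraint}]\ge e^{-C|\gG|}$ with $|\gG|$ the number of constrained sites, \emph{regardless} of the window size (the worst case is conditional mean at the edge of the window). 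With your $\ell\sim|\log h|$ this yields, after $N^{-2}\log$, a cost of order $1/\ell^2\sim1/|\log h|^2$ (or $1/\ell$ for a one-dimensional boundary grid), which is only polynomially small in $|\log h|$ and destroys the target bound $\exp(-c|\log h|^2)$. In the paper the mesh is taken exponentially large, $N_0=\exp(c|\log h|^2)$, so that the one-dimensional grid $\gG_N$ has only $\sim N^2/N_0$ sites, and the grid constraint becomes the \emph{dominant} term $\approx -C/N_0=-C\exp(-c|\log h|^2)$; the within-cell control is then cheap because the conditional variance in a cell of side $N_0$ is $\sim\tfrac{1}{2\pi}\log N_0=\tfrac{c}{2\pi}|\log h|^2\ll|\log h|^4$. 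Your final display $\frac1{N^2}\log\ge-\frac{1}{\ell^2}\cdot(\text{local bad prob})$ quietly drops the grid term and therefore does not follow from any valid factorization.

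A secondary point: FKG in the paper is not used to factorize $\cA^h_N$ over boxes (that factorization comes from the spatial Markov property of $\bP_N$ conditional on $\phi|_{\gG_N}$). The FKG inequality is used beforehand and more delicately: one splits $\cA^h_N=\cA^{h,1}_N\cap\cA^{h,2}_N$ into two monotone events, and compares each to its conditional version given an auxiliary monotone event ($\phi\ge 1$ everywhere, resp.\ $\phi\le-1$ everywhere) under which the pinning term vanishes, so that $\bP^{\gb,\go}_{N,h}(\cdot\,|\,\cB_N)=\bP_N(\cdot\,|\,\cB_N)$. The event $\cA^h_N$ itself is not increasing, so FKG cannot be applied to it directly as your sketch suggests. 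Fixing both points—especially the mesh $N_0\sim e^{c|\log h|^2}$—is necessary to make the argument go through.
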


\begin{proof}
For practical purposes we introduce the two following events 
\begin{equation}\begin{split}\label{evvents}
 \cA^{h,1}_N&:= \left\{\forall x \in \mathring \gL_N,\  \phi_x\le |\log h|^2\right\},\\
 \cA^{h,2}_N&:= \left\{\forall x \in \mathring \gL_N,\  \phi_x\ge -|\log h|^2\right\},\\
 \cB_N&:= \left\{\forall x \in \mathring \gL_N,\  \phi_x\ge 1\right\},\\
 \cC_N&:= \left\{\forall x \in \mathring \gL_N,\  \phi_x\le -1 \right\}.
      \end{split}
\end{equation}
We have $ \cA^{h}_N= \cA^{h,1}_N\cap  \cA^{h,2}_N$.
In order to obtain a bound on the probability of  $\cA^{h}_N$ we need to use the FKG inequality for the Gaussian free field which we present briefly 
(we refer to \cite[Section B.1]{cf:Notes} for more details).
We denote by $\le$ the natural order on the set of functions $\{\phi, \  \gL_N \to \bbZ^d\}$ defined by
\begin{equation}
\left\{  \ \phi\le \phi' \ \right\} \quad \Leftrightarrow \quad \left\{\  \forall x \in \gL_N, \ \phi_x\le \phi'_x \ \right\}.
\end{equation}
An event $A$ is said to be increasing if for $\phi\in A$ we have $$\phi'\ge \phi \Rightarrow \phi'\in A$$
and decreasing if its complement is increasing.
Let us remark that all the events described in \eqref{evvents} are either decreasing or increasing.
A probability measure $\mu$ is said to satisfy the FKG inequality if 
for any pair of increasing events $A, B$ we have $\mu(A \cap B)\le \mu(A)\mu( B)$. 
Note that this yields automatically similar inequalities for any pairs of monotonic events which we also call FKG inequalities.

\medskip

It is well know that $\bP_N$ satisfies the FKG inequality: it is sufficient to check that Holley's criterion \cite{cf:FKG, cf:Holley} is satisfied  by
the Hamiltonian in \eqref{density}. The same argument yields that  $\bP^{\gb,\go}_{N,h}$ as well as the conditionned measures 
$\bP^{\gb,\go}_{N,h}\left( \cdot \ | \ \cA^{h,1}_N\right)$ and $\bP^{\gb,\go}_{N,h}\left( \cdot \ | \ \cB^{h,1}_N\right)$ also satisfy the FKG inequality.
Hence using the FKG inequality for  $\bP^{\gb,\go}_{N,h}$, we have  
\begin{equation}
 \bP^{\gb,\go}_{N,h}(\cA^{h,1}_N) \ge  \bP^{\gb,\go}_{N,h}(\cA^{h,1}_N \  | \ \cB_N ) =\bP_{N}(\cA^{h,1}_N \  | \ \cB_N ).
\end{equation}
Then, using the FKG inequality for  $\bP^{\gb,\go}_{N,h}( \ \cdot \  |  \ \cA^{h,1}_N)$  and we have 
\begin{equation}
  \bP^{\gb,\go}_{N,h}\left( \cA^{h,2}_N \ | \  \cA^{h,1}_N\right)\ge \bP^{\gb,\go}_{N,h}\left(\cA^{h,2}_N \  | \ \cC_N \right)
  \ge \bP_{N}\left(\cA^{h,2}_N \  | \ \cC_N \right)=\bP_{N}(\cA^{h,1}_N \  | \ \cB_N ),
\end{equation}
where we used symmetry to get the last equality.
Then we can conclude that 
\begin{equation}
  \bP^{\gb,\go}_{N,h}(\cA^{h}_N)= \bP^{\gb,\go}_{N,h}(\cA^{h,1}_N \cap \cA^{h,2}_N) 
  \ge \left[\bP_{N}(\cA^{h,1}_N \  | \ \cB_N )\right]^2
  \ge  \left[\bP_{N}(\cA^{h,1}_N \cap \cB_N )\right]^2.
\end{equation}
We are left with estimating the last term.
Note that changing the boundary condition by a constant amount does not affect the leading order of the asymptotic thus to conclude it is sufficient to bound 
asymptotically the probability of the event
\begin{equation}
\cA^{h,3}_N:= \left\{  \max_{x \in \gL_N} |\phi_x| \le \frac{|\log h|^2-1}{2} \right\},
\end{equation}
which is a translated version of $\cA^{h,1}_N \cap \cB_N$. 
More precisely we have for an adequate constant $K_h$
\begin{equation}
   \bP^{\gb,\go}_{N,h}(\cA^{h}_N)\ge \exp(-K_h N) \left[\bP_{N}(\cA^{h,3}_N) \right]^2
\end{equation}
To bound the probability of $\cA^{h,3}_N$ we use the following result, whose proof is postponed to the end of the Section.

\begin{lemma}\label{controlgrid}
 There exists a constant $C$ such that for any $N$, and for any set $\gG\subset \mathring\gL_N$ which is such that $\gG \cup \partial \gL_N$ is connected, we have 
\begin{equation}
   \bP_{N}\left[ \max_{x \in \gG} |\phi_x|\le 1 \right] \ge \exp(-C |\gG|).
\end{equation}
\end{lemma}
We divide $\gL_N$ in cells of side-length $$N_0:=\exp(c|\log h|^2)$$ 
for some small constant $c$.
We set $$\gL(y,N_0):= yN_0+ \gL_{N_0}.$$
We apply Lemma \ref{controlgrid} for the following set 
\begin{equation}
\gG_N=\left(\bigcup_{y\in \bbZ^2} \partial\gL(y,N_0)\right)\cap \mathring\gL_N,
\end{equation}
which is is a grid which splits $\gL_N$ in cells of side-length $N_0$.
We obtain that 
\begin{equation}
 \frac{1}{N^2} \log\bP_{N}\left[ \max_{x \in \gG_N} |\phi_x|\le 1 \right] \ge \frac{2C}{N_0}= 2C \exp(-c|\log h|^2),
\end{equation}
where we used the inequality $|\gG_N|\le 2 N^2/N_0$ valid for all $N$. 
To conclude we need to show that
\begin{equation}\label{fdfe}
 \frac{1}{N^2} \log \bP_{N}\left[  \max_{x \in \gL_N} |\phi_x| \le \frac{|\log h|^2-1}{2} \ \Big| \  \max_{x \in \gG_N} |\phi_x|\le 1 \right] \ge - (N_0)^{-2}.
\end{equation}
To prove \eqref{fdfe} it is sufficient to remark that conditioned to $(\phi_x)_{x\in \gG_N}$,
the variance of the field $(\phi_x)_{x\in\gL(y,N_0)}$ 
is uniformly bounded by  $\frac{1}{2\pi} \log N_0+C$ (cf. \eqref{eq:stimagreen} for $m=0$).
Thus, for any realization of $\phi$ satisfying 
$\max_{x \in \gG_N} |\phi_x|\le 1$, for any $z\in \gL_N \setminus \gG_N$, using the Gaussian tail bound \eqref{gtail}
we have for $h$ sufficiently small 
\begin{equation}
\bP_N \left[  |\phi_z| \ge  \frac{|\log h|^2-1}{2}  \ \Big| \  (\phi_x)_{x\in \gG_N} \right] \le \exp\left( -\frac{ \pi |\log h|^4}{ 4 \log N_0} \right)\le
\exp\left(-\frac{\pi}{4c} |\log h|^2\right).
\end{equation}
Now with this in mind we can apply union bound in $\gL(y,N_0)$ and obtain
\begin{multline}\label{ddsad}
  \bP_{N}\left[  \max_{z \in \gL(y,N_0)} |\phi_z| \le \frac{|\log h|^2-1}{2} \ \Big| \  (\phi_x)_{x\in \gG_N} \right]\\
  \ge  1- (N_0-1)^2\exp\left(-\frac{\pi}{4c} |\log h|^2\right)
  \ge e^{-1/2}.
\end{multline}
where the last inequality is valid provided the constant $c$ is chosen sufficiently small.
As, conditioned to  the realization of $(\phi_x)_{x\in \gG_N}$, the fields  $(\phi_x)_{x\in\gL(y,N_0)}$ are independent for different values of $y$, 
we prove that the inequality \eqref{fdfe} holds by multiplying \eqref{ddsad}
for all distinct $\gL(y,N_0)$ which fit (at least partially) in $\gL_N$ (there are at most $(N/N_0)^2$ full boxes, to which one must add at most $2N/N_0+1$ 
uncompleted  boxes), and taking the expectation with 
respect to $(\phi_x)_{x\in \gG_N}$ conditioned on the event  $\max_{x \in \gG_N} |\phi_x|\le 1$. This ends the proof of
Proposition \ref{boundtheprob}.
\end{proof}

\begin{proof}[Proof of Lemma \ref{controlgrid}]
 We can prove it by induction on the cardinality of $\gG$.
 Assume that the result is valid for $\gG$ and let us prove it for $\gG\cup\{z\}$.
 \begin{equation}
    \bP_{N}\left[  \phi_z\in [-1,1] \ | \ \max_{x\in \gG} |\phi_x|\le 1 \right]\ge \exp(-C).
 \end{equation}
 Note that conditioned to $(\phi_x)_{x\in \gG}$, $\phi_z$ is a Gaussian variable.
 Its variance is given by
 \begin{equation}
  E_x\left[\int^{\tau_{\partial \gL_N \cup \gG}}_0 \ind_{\{X_t=z\}} \dd t \right]\le 1.
 \end{equation}
 The reason being that as by assumption $\partial \gL_N \cup \gG\cup \{z\}$, the walk $X$ is killed with rate one while it lies on $z$.
In addition, if $\max_{x\in \gG} |\phi_x|\le 1$, then necessarily 
 
 \begin{equation}
  \bE_N\big[ \phi_z  \ |  \ (\phi_x)_{x\in \gG} \big]\in[-1,1].
 \end{equation}
 For this reason, the above inequality is valid if one chooses
 \begin{equation}
  C:=-\max_{u\in[-1,1]} \log P\left (\cN \in [-1+u,1+u] \right)= -\log P\left(\cN \in [0,2] \right),
 \end{equation}
where $\cN$ is a standard normal.
\end{proof}

\subsection{Change of measure}\label{changeofme}

To bound the expectation of $\bbE [ \log Z^{\gb,\go}_{N,h} (\cA^h_N)]$ we use a ``change of measure'' argument .
The underlying idea is that the annealed bound obtained by Jensen's inequality \eqref{anealed} is not sharp because some very atypical  $\go$'s 
(a set of $\go$ of small probability) give the most important contribution to the annealed partition function. 
Hence our idea is to identify these bad environments and to introduce a function $f(\go)$ that penalizes them.
This idea originates from \cite{cf:GLT} where it was used to prove the non-coincidence of critical point for a hierarchical variant of the pinning model
and was then improved many times in the context of pinning \cite{ cf:BL, cf:DGLT, cf:GLT2} and found application for other models like random-walk pinning, directed polymers,  random walk in a random environment or
self-avoiding walk in a random environment \cite{cf:BT, cf:BS1,  cf:BS2,  cf:L2, cf:L3, cf:YZ}.
\medskip

In \cite{cf:BL, cf:DGLT, cf:GLT2}, we used the detailed knowledge that we have on the structure of the set of contact points, 
(which is simply a renewal process) in order to find the right penalization function $f(\go)$. 

\medskip

Here we have a much less precise knowledge on the structure $(\delta_x)_{x\in \gL_N}$ under $\bP_N$ (especially because we have to consider possibly very wild boundart condition),
but we know that one typical feature of the two-dimensional free field is that the
level sets tend to have a clustered structure. We want to perform a change of measure that has the consequence of penalizing these clusters of contact points: we do so by looking at 
the empirical mean of $\go$ in some small regions and by giving a penalty when it takes an atypically high value.

\medskip

Let us be more precise about what we mean by penalizing with a function $f(\go)$.
Using Jensen inequality, we remark that 
\begin{equation}
  \bbE\left[ \log  Z^{\gb,\go}_{N,h}(   \cA^h_N )\right]=    2\bbE\left[ \log \sqrt{ Z^{\gb,\go}_{N,h}(   \cA^h_N )}\right]\le  2 \log \bbE \left[ \sqrt{Z^{\gb,\go}_{N,h}(   \cA^h_N )} \right] 
\end{equation}
If we let $f(\go)$ be an arbitrary positive function of $(\go_x)_{x\in \tilde \gL_N}$, we have by Cauchy-Schwartz inequality
\begin{equation}
 \bbE \left[ \sqrt{Z^{\gb,\go}_{N,h}(   \cA^h_N )} \right]^2 \le \bbE[f(\go)^{-1}]  \bbE \left[ f(\go)Z^{\gb,\go}_{N,h}(   \cA^h_N )\right] ,
\end{equation}
and hence
\begin{equation}\label{Cauchyschwartz}
  \frac{1}{N^2}\bbE\left[ \log Z^{\gb,\go}_{N,h} (   \cA^h_N )\right]\le  \frac{1}{N^2}\log \bbE[f(\go)^{-1}]+ 
  \frac{1}{N^2}\log \bbE \left[ f(\go)Z^{\gb,\go}_{N,h}(   \cA^h_N )\right] .
\end{equation}
Let us now present our choice of $f(\go)$. 
Our idea is to perform some kind of coarse-graining argument: we divide $\gL_N$ into cells of fixed side-length $N_1$ 
\begin{equation}\label{cellsize}
N_1(h):=h^{-1/4},
\end{equation}
and perform a change of measure inside of each cell.
We assume that $N_1$ is an even integer (the free energy being monotone this causes no loss of generality), and that
$N=k N_1$ is a sufficiently large multiple of $N_1$.
Given $y\in \bbZ^2$, we let $\tilde \gL_{N_1}(y)$ denote the translation of the box $\tilde \gL_{N_1}$ which is (approximately) centered at $yN_1$ 
(see Figure \ref{fig:structure})
$$\tilde \gL_{N_1}(y):= N_1\left[y-\left(\frac{1}{2},\frac{1}{2}\right)\right]+\tilde \gL_{N_1}.$$
In the case $y=(1,1)$ we simply write $\tilde \gL'_{N_1}$ (note that it is not identical to  $\tilde\gL_{N_1}$).
We define the event
\begin{equation}
  \cE_{N_1}(y):=
  \left\{ \exists x\in \tilde \gL_{N_1}(y), \!\!\!\!\!\!  \sum_{\{ z\in \tilde \gL_{N_1}(y) \ : \ |z-x|\le (\log N_1)^2 \}} \!\!\!\!\!\! \!\!\!\!\!\! \go_z \quad \ge \frac{\gl'(\gb) (\log N_1)^3}{2} \right\}.
\end{equation}
which is simply denoted by $\cE_{N_1}$ in the case when $y=(1,1)$. Here $\gl'(\gb)$ denotes the derivative of $\gl$ defined in \eqref{eq:assume-gl}.
Finally we set
\begin{equation}
f(\go):= \exp\left(- 2\sum_{y\in \lint 1, k-1\rint} \ind_{\cE_{N_1}(y)}\right).
\end{equation} 
The effect of $f(\go)$ is to give a penalty (multiplication by $e^{-2}$) for each cell in which one can find a region of $\go$ with diameter $(\log N_1)^2$ and atypically high empirical mean.

\medskip

\noindent Combining Proposition \ref{boundtheprob} and \eqref{Cauchyschwartz}, we have (provided that the limit exists)
\begin{equation}\label{uppbn}
 \tf(\gb,h)\le e^{-c(\log h)^{2}}+ \lim_{k\to \infty} \frac{1}{N^2}\log \bbE[(f(\go))^{-1}]+ 
 \liminf_{k\to \infty} \frac{1}{N}\log \bbE \left[ f(\go)Z^{\gb,\go}_{N,h}(\cA^h_N)\right].
\end{equation}

We can conclude the proof with
the two following results, which evaluate respectively the cost and the benefit of our change of measure procedure.

\begin{proposition}\label{cost}
There exists positive constants $c(\gb)$ and $h_0(\gb)$ and  such that for all $h\in (0,h_0(\gb))$ sufficiently small, for all $k$
 \begin{equation}
\log \bbE\left[ (f(\go))^{-1}\right]\le (k-1)^2 e^{-c(\gb)(\log h)^2}.
\end{equation}
As a consequence we have
\begin{equation}
 \frac{1}{N^2}\log \bbE[(f(\go))^{-1}]\le  e^{-c(\gb)(\log h)^2}.
\end{equation}
\end{proposition}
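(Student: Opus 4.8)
\textbf{Proof proposal for Proposition \ref{cost}.}
The plan is to bound $\bbE[(f(\go))^{-1}]$ by factorizing over cells and then estimating, for a single cell, the exponential moment of the indicator of the bad event $\cE_{N_1}$. Since $(f(\go))^{-1}=\exp\big(2\sum_{y}\ind_{\cE_{N_1}(y)}\big)$, and the events $\cE_{N_1}(y)$ for different $y$ depend on disjoint blocks of the i.i.d.\ field $\go$, they are independent, so
\begin{equation}
\bbE\left[(f(\go))^{-1}\right]=\prod_{y\in\lint 1,k-1\rint^2}\bbE\left[e^{2\ind_{\cE_{N_1}(y)}}\right]=\left(1+(e^2-1)\bbP[\cE_{N_1}]\right)^{(k-1)^2}\le \exp\left((k-1)^2(e^2-1)\bbP[\cE_{N_1}]\right).
\end{equation}
Thus everything reduces to showing $\bbP[\cE_{N_1}]\le e^{-c(\gb)(\log h)^2}$ for $h$ small (recall $N_1=h^{-1/4}$, so $\log N_1 \asymp |\log h|$), after which the ``as a consequence'' statement follows immediately by dividing by $N^2=k^2N_1^2\ge (k-1)^2$.

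To estimate $\bbP[\cE_{N_1}]$, first apply the union bound over the at most $N_1^2$ choices of the center $x\in\tilde\gL_{N_1}(y)$: it suffices to show that for a fixed $x$,
\begin{equation}
\bbP\left[\sum_{\{z\in\tilde\gL_{N_1}(y):|z-x|\le(\log N_1)^2\}}\go_z\ \ge\ \frac{\gl'(\gb)(\log N_1)^3}{2}\right]\le N_1^{-2}e^{-c(\gb)(\log h)^2}.
\end{equation}
The sum runs over at most $C(\log N_1)^4$ i.i.d.\ centered variables with finite exponential moments. The standard move is the exponential Chebyshev (Cramér) bound at inverse temperature $\gb$: writing $n_x$ for the number of such $z$ (which is $\le C(\log N_1)^4$ but, more importantly, the deviation threshold is $\asymp (\log N_1)^3$, i.e.\ much larger than $n_x$ times any fixed constant), we get
\begin{equation}
\bbP\left[\sum_z\go_z\ge \tfrac{\gl'(\gb)(\log N_1)^3}{2}\right]\le \exp\left(n_x\gl(\gb)-\tfrac{\gb\gl'(\gb)(\log N_1)^3}{2}\right).
\end{equation}
Since $n_x\gl(\gb)\le C(\gb)(\log N_1)^4$ \emph{grows faster} than $(\log N_1)^3$, this naive bound is not good enough, and here lies the one genuinely delicate point: one should instead pick the threshold so that the deviation is large \emph{per variable}. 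Re-examining, the intended reading is presumably that the relevant window has roughly $(\log N_1)^2$ points (an $\ell_1$-ball of radius $(\log N_1)^2$ in $\bbZ^2$ does contain $\asymp(\log N_1)^4$ points, so one may need instead a Euclidean/$\ell_\infty$ ball of radius $\asymp(\log N_1)$, or the definition intends $|z-x|\le \log N_1$); in any case the point is that the mean deviation per summand is of order $(\log N_1)$, which tends to infinity, so the Cramér rate function evaluated there is at least $c(\gb)(\log N_1)^2$ by the superlinear growth of the Legendre transform of $\gl$ (using that $\gl$ is finite, hence smooth with $\gl(0)=\gl'(0)=0$, in a neighborhood of $0$, so its dual $\gl^*(a)\to\infty$ superlinearly). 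Concretely, choosing the tilting parameter $t=\gb$ fixed and using $\gl'(\gb)$ as the per-site target matches the threshold exactly: $\bbP\big[\sum_z(\go_z-\gl'(\gb))\ge 0\big]$ with $t=\gb$ gives a bound $\exp(\#\{z\}\,(\gl(\gb)-\gb\gl'(\gb)))$, and $\gl(\gb)-\gb\gl'(\gb)=-\gl^*(\gl'(\gb))<0$ is a strictly negative constant, yielding $\bbP[\cdots]\le e^{-c(\gb)(\log N_1)^k}$ for the appropriate power $k\in\{2,4\}$ matching the window size.

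Putting the pieces together: the per-center bound is $e^{-c(\gb)(\log N_1)^2}$ (absorbing the polynomial window count, which is subexponential in $(\log N_1)^2$), the union over $\le N_1^2=h^{-1/2}$ centers costs a factor $h^{-1/2}=e^{\frac12|\log h|}$, which is negligible against $e^{-c(\gb)(\log N_1)^2}=e^{-c(\gb)|\log h|^2/16}$ for $h$ small, giving $\bbP[\cE_{N_1}]\le e^{-c(\gb)(\log h)^2}$ after relabelling the constant. Then $\bbE[(f(\go))^{-1}]\le\exp\big((k-1)^2 e^{-c(\gb)(\log h)^2}\big)$, and dividing the logarithm by $N^2\ge(k-1)^2$ gives the stated consequence. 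The main obstacle, as flagged above, is being careful that the exponential-moment (Cramér) estimate for the local empirical sum genuinely produces a rate of order $(\log N_1)^2$ — i.e.\ that the chosen deviation threshold $\gl'(\gb)(\log N_1)^3/2$ is large compared to the number of summands times a constant, so that one is in the large-deviation (not Gaussian) regime and the superlinearity of $\gl^*$ can be exploited; everything else is routine union-bound bookkeeping.
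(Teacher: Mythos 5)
Your overall structure matches the paper's: factorize over cells, bound $\log \bbE[e^{2\ind_{\cE_{N_1}}}]\le (e^2-1)\bbP[\cE_{N_1}]$, and apply a union bound over centers. The gap lies in the single-center deviation estimate, which you flag but then resolve incorrectly.

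You correctly observe that the window $\{z:|z-x|\le(\log N_1)^2\}$ (with $|\cdot|$ the $\ell_1$ norm) contains $n\asymp (\log N_1)^4$ points while the threshold is $a=\gl'(\gb)(\log N_1)^3/2$, so the per-site deviation $a/n\asymp (\log N_1)^{-1}$ tends to \emph{zero}, not infinity. This kills the Cram\'er bound with fixed tilting parameter $t=\gb$, since then $n\gl(\gb)\gtrsim(\log N_1)^4$ overwhelms $\gb a\lesssim(\log N_1)^3$ — exactly as you noticed. But your resolution (speculating the paper means a smaller window, or asserting ``the per-site target matches the threshold exactly'') is not what happens; the definition is as written, and the per-site target does not match $\gl'(\gb)$. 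The right move is to optimize over $t$: since $\gl(0)=\gl'(0)=0$ and $\gl$ is smooth near $0$ with $\gl''(0)=\Var(\go_0)<\infty$, one has $\gl(t)\le Ct^2$ for $|t|\le t_0$; plugging $t^*\asymp a/n\asymp (\log N_1)^{-1}$ (which lies in $(0,t_0)$ for $h$ small) into $\exp(n\gl(t)-ta)$ gives
\begin{equation}
\bbP\left[\sum_{z}\go_z\ge a\right]\le \exp\left(-\frac{a^2}{4Cn}\right)= \exp\left(-c(\gb)(\log N_1)^2\right),
\end{equation}
the moderate-deviations (Bernstein-type) bound. This is the ``easy consequence of the proof of Cram\'er's theorem'' the paper invokes, and it supplies precisely the rate $(\log N_1)^2\asymp|\log h|^2$ needed. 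The union bound over $\le N_1^2=h^{-1/2}$ centers then costs only a polynomial factor in $h^{-1}$, negligible against $e^{-c(\gb)|\log h|^2}$, and the rest of your bookkeeping is fine. So the skeleton of the proof is right, but as written the key probabilistic estimate is not justified; with fixed $t=\gb$ the displayed chain breaks down, and the fix is to let $t$ scale as $a/n$.
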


\begin{proposition}\label{benefit}
 There exists $h_0(\gb)>0$ such that for all $h\in (0,h_0(\gb))$
 \begin{equation}
\limsup_{k\to \infty} \frac{1}{N^2}\log \bbE \left[ f(\go)Z^{\gb,\go}_{N,h}(\cA^h_N)\right]\le  e^{-2|\log h|^{3/2}}.
\end{equation}
\end{proposition}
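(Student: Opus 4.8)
The plan is to bound the tilted annealed quantity $\bbE[f(\go)Z^{\gb,\go}_{N,h}(\cA^h_N)]$ in two stages. First I would integrate out the disorder cell by cell, turning the tilt $f(\go)$ into a purely geometric penalty acting on clusters of contact points (this is the ``better annealed bound'' isolated in the paper as Proposition \ref{nonrandom}). Then I would bound the resulting geometric partition function by a coarse-graining over the $N_1$-cells, the cells being decoupled through the uniform bound $\cA^h_N$ (Proposition \ref{scorpiorizing}).

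\smallskip\noindent\textit{Stage 1 (integrating out $\go$).} Fix $\phi$, hence $(\delta_x)_{x\in\tilde\gL_N}$, and set $S_y:=\{x\in\tilde\gL_{N_1}(y):\delta_x=1\}$. Since the $\go_x$ are i.i.d.\ and both $f$ and the energy split over the disjoint cells $\tilde\gL_{N_1}(y)$, $y\in\lint 1,k-1\rint^2$ (the left-over boundary strip carries $O(NN_1)$ sites, hence only a factor $e^{O(hNN_1)}=e^{o(N^2)}$), one gets
\begin{equation*}
\bbE\Big[f(\go)\,e^{\sum_{x\in\tilde\gL_N}(\gb\go_x-\gl(\gb)+h)\delta_x}\Big]\ \le\ e^{o(N^2)}\prod_{y}\Psi_y(\delta),\qquad \Psi_y(\delta):=\bbE\Big[e^{-2\ind_{\cE_{N_1}(y)}}\!\!\prod_{x\in\tilde\gL_{N_1}(y)}\!\!e^{(\gb\go_x-\gl(\gb)+h)\delta_x}\Big].
\end{equation*}
Setting $W:=\exp(\sum_{x\in S_y}(\gb\go_x-\gl(\gb)))$ one has $\bbE[W]=1$, $W>0$, and under the probability measure $\bbP_W:=W\,\dd\bbP$ the variables $(\go_x)_{x\in S_y}$ become independent of mean $\gl'(\gb)$ and variance $\gl''(\gb)$, the remaining $\go_z$ being unchanged; since $1-e^{-2\ind_{\cE}}=(1-e^{-2})\ind_{\cE}$,
\begin{equation*}
\Psi_y(\delta)=e^{h|S_y|}\big(1-(1-e^{-2})\,\bbP_W[\cE_{N_1}(y)]\big).
\end{equation*}

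\smallskip\noindent\textit{Stage 1, continued (cluster estimate).} The key point is that if $S_y$ contains a \emph{cluster}, i.e.\ a set $B:=\{z:\,|z-x_0|\le(\log N_1)^2\}\cap\tilde\gL_{N_1}(y)$ with $x_0\in\tilde\gL_{N_1}(y)$ and $|S_y\cap B|\ge(\log N_1)^3$, then $\bbP_W[\cE_{N_1}(y)]\ge\tfrac12$ for $h<h_0(\gb)$: indeed $T:=\sum_{z\in B}\go_z$ satisfies $\bbE_W[T]=\gl'(\gb)\,|S_y\cap B|\ge\gl'(\gb)(\log N_1)^3$ and $\Var_W(T)\le\gl''(\gb)|B|\le C(\gb)(\log N_1)^4$, so Chebyshev gives $\bbP_W[T<\tfrac12\gl'(\gb)(\log N_1)^3]\le C(\gb)(\log N_1)^{-2}<\tfrac12$, whereas $\{T\ge\tfrac12\gl'(\gb)(\log N_1)^3\}\subseteq\cE_{N_1}(y)$. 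Hence $\Psi_y(\delta)\le\kappa\,e^{h|S_y|}$ with $\kappa:=\tfrac12(1+e^{-2})\in(0,1)$ whenever $S_y$ has a cluster. A deterministic pigeonhole --- cover $\tilde\gL_{N_1}(y)$ by $O(N_1^2(\log N_1)^{-4})$ $l_1$-balls of radius $\tfrac12(\log N_1)^2$ centred at cell points --- shows a cluster is present as soon as $|S_y|\ge\rho:=C_0N_1^2/\log N_1$. As $N_1=h^{-1/4}$ forces $h|S_y|\le hN_1^2=h^{1/2}$, this yields for every $\delta$ the pointwise bound $\Psi_y(\delta)\le g(|S_y|)$ with $g(n):=e^{hn}\kappa^{\ind_{\{n\ge\rho\}}}$, where moreover $g(n)\le\kappa e^{h^{1/2}}<1$ for $n\ge\rho$ and $h<h_0(\gb)$. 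Together with Proposition \ref{boundtheprob} this is the better annealed bound $\bbE[f(\go)Z^{\gb,\go}_{N,h}(\cA^h_N)]\le e^{o(N^2)}\,\bE_N[\ind_{\cA^h_N}\prod_{y}g(|S_y|)]$.

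\smallskip\noindent\textit{Stage 2 (coarse-graining; the main obstacle).} It remains to show that $\bE_N[\ind_{\cA^h_N}\prod_{y}g(|S_y|)]$ has exponential rate $\le e^{-2|\log h|^{3/2}}$. Only ``light'' cells ($|S_y|<\rho$) can push the product above $1$, and then only by $e^{h|S_y|}\le 1+2h|S_y|$, while every cell with $|S_y|\ge\rho$ contributes a factor $<1$; so everything rests on how cheaply the two-dimensional free field can produce light cells carrying substantial pinning energy once cluster gains are ruled out. The mechanism I would exploit --- and where the real work lies --- is that the level sets of the log-correlated free field tend to cluster, so that realising a contact density $\gtrsim1/\log N_1$ in a cell \emph{without} creating a cluster is exponentially costly in a power of $\log N_1$, hence of $|\log h|$. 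Concretely: on $\cA^h_N$ the field is $\le|\log h|^2$ in absolute value, so conditioning $\phi$ on a grid separating the $N_1$-cells makes the cells independent free fields with boundary data bounded by $|\log h|^2$; I would then bound, uniformly over such boundary conditions $\hat\psi$, the cluster-free contribution to the per-cell quantity $\bE^{\hat\psi}_{N_1}[\ind_{\cA^h}\,g(|S|)]$ (here $\cA^h$ is the per-cell analogue of $\cA^h_N$), using the Green-function and heat-kernel estimates of Lemma \ref{Greenesteem} and Lemma \ref{lem:kerestimate}, and finally sum the resulting estimate over the $(k-1)^2$ cells while optimising the thresholds. Stage 1 should be routine --- one Chebyshev inequality plus bookkeeping --- whereas the quantitative cluster-free estimate for the free field underlying the per-cell bound in Stage 2 is the hard part.
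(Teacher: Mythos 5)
Your Stage 1 is essentially the paper's Proposition~\ref{nonrandom}: integrating out $\go$ cell by cell, tilting by $W$, and using Chebyshev to get $\bbP_W[\cE_{N_1}(y)]\ge 1/2$ on cells whose contact set $S_y$ contains a $(\log N_1)^2$-ball with $\ge(\log N_1)^3$ contacts, hence $\Psi_y(\delta)\le\kappa\,e^{h|S_y|}$ on such cells. That part is sound. The misstep is the pigeonhole and the passage to $g(n)=e^{hn}\kappa^{\ind\{n\ge\rho\}}$: you replace the penalty indicator $\ind_{\cC_{N_1}(y)}$ (``a cluster exists'') by $\ind_{\{|S_y|\ge\rho\}}$ with $\rho\sim N_1^2/\log N_1$. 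Since $\{|S_y|\ge\rho\}\subsetneq\cC_{N_1}(y)$ and is in fact a far rarer event — it requires roughly $N_1^2/\log N_1$ contacts rather than $(\log N_1)^3$ — this discards precisely the information the paper's argument lives on.

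To see why the weakened bound cannot deliver $e^{-2|\log h|^{3/2}}$: the per-cell estimate you'd need, after the analogue of Proposition~\ref{scorpiorizing}, is that $h\,\bE^{\hat\phi}_{2N_1}[|S|]-c\,\bP^{\hat\phi}_{2N_1}[|S|\ge\rho]$ is $\le e^{-c|\log h|^{3/2}}$ uniformly over boundary conditions with $|\hat\phi|_\infty\le|\log h|^2$. Parametrising by the mean height $u=\min_x|H^{\hat\phi}_{2N_1}(x)|$ as in the paper, one has $\bE^{\hat\phi}[|S|]\sim N_1^2 e^{-u^2/2V_{N_1}}/\sqrt{V_{N_1}}$, so $\bE^{\hat\phi}[|S|]$ falls below $\rho$ as soon as $u^2\gtrsim V_{N_1}\log\log N_1\sim\log N_1\log\log N_1$; beyond that threshold $\{|S|\ge\rho\}$ is a large deviation. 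The cheapest way to force $\rho$ contacts at height $u$ is to shift the macroscopic part of the field by $\approx u$ across the whole cell, which costs $e^{-cu^2}$ — a much smaller probability than the cluster bound $\bP^{\hat\phi}[\cC_{N_1}]\ge c(\log N_1)^{-1}e^{-u^2/2V'_{N_1}}$ of Proposition~\ref{boundforcluster}, since $V'_{N_1}\sim\frac{1}{2\pi}\log N_1\gg1$. Your penalty therefore becomes negligible compared to the reward already for $u$ of order a few $\sqrt{\log N_1}$, while the paper's penalty keeps working up to $u^2\sim(\log N_1)^3/\log\log N_1$. In the lost intermediate regime the reward $h\,\bE^{\hat\phi}[|S|]$ is still of order $h^{1/2}/\sqrt{\log N_1}$ per cell, which after dividing by $N_1^2$ gives back the annealed rate $\sim h/\sqrt{|\log h|}$, nowhere near $e^{-|\log h|^{3/2}}$. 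Separately, your Stage~2 sketch is internally inconsistent: you invoke a ``cluster-free'' estimate for light cells, but after passing to $g$ the cluster event is no longer visible for $|S_y|<\rho$, and for $|S_y|\ge\rho$ the cluster is forced deterministically by your own pigeonhole — so ``realising a density $\gtrsim1/\log N_1$ without a cluster'' is impossible, not merely costly. The fix is to \emph{not} do the pigeonhole: carry the event $\cC_{N_1}(y)$ itself through the H\"older/Markov coarse-graining (Proposition~\ref{scorpiorizing}) and into the single-cell estimate (Proposition~\ref{onecell}), which is what the paper does, and where the real content is the lower bound on $\bP^{\hat\phi}[\cC_{N_1}]$ via the rough/fine decomposition $\phi=\phi_1+\phi_2$.
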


As a consequence of \eqref{uppbn} and of the two propositions above, we obtain that for $h\in(0, h_0(\gb)),$ we have 
\begin{equation}
 \tf(\gb,h)\le e^{-|\log h|^{3/2}}.
\end{equation}
The proof of Proposition \ref{cost} is simple and short and is presented below. The proof Proposition \ref{benefit} requires a significant amount of work.
We decompose it in important steps in the next subsection.

\begin{proof}[Proof of Proposition \ref{cost}]
 Because of the product structure, we have 
\begin{equation}\label{cgt}
 \bbE\left[ (f(\go))^{-1}\right]= \big( \bbE \left[ \exp\left( 2\cE_{N_1}\right) \right]\big)^{(k-1)^2}.
\end{equation}
Hence it is sufficient to obtain a bound on
\begin{equation}\label{fds}
\log \bbE \left[ \exp\left( 2\ind_{\cE_{N_1}}\right) \right]\le  (e^2-1) \bbP[\cE_{N_1}(0)].
\end{equation}
As an easy consequence of the proof of Cram\'ers Theorem (see e.g.\ \cite[Chapter 2]{cf:DZ}),
there exists a constant $c(\gb)$ that any $x\in \tilde \gL_{N_1}$
\begin{equation}
\bbP \left[ \sum_{\{ z\in \tilde \gL'_{N_1} \ : \ |z-x|\le (\log N_1)^2 \}}  \!\!\!\!\!\! \!\!\!\!\!\!  \go_z \ge \,  \frac{\gl'(\gb) (\log N_1)^3}{2} \  \right]
\le e^{-c(\gb)(\log N_1)^2},
\end{equation}
and by union bound we obtain that  $\bbP[\cE_{N_1}]\le N^2_0\exp(-c(\log N_1)^2)$,
which in view of \eqref{fds} and \eqref{cgt} is sufficient to conclude
\end{proof}

\begin{figure}[h]
\begin{center}
\leavevmode
\epsfxsize =10.5 cm
\psfragscanon
\psfrag{0}[c][l]{\small $0$}
\psfrag{N1/2}[c][l]{ \small $N_1/2$}
\psfrag{2N1/3}[c][l]{ \small $3N_1/2$}
\psfrag{N}[c][l]{ \small $N=kN_1$}
\psfrag{NN}[c][l]{ \small $N$}
\psfrag{GL14}[c][l]{\small $\tilde \gL_{N_1}(1,4)$}
\psfrag{GGL14}[c][l]{\small $\gL_{2N_1}(1,4)$}
\psfrag{GL42}[c][l]{\small $\tilde\gL_{2N_1}(4,2)$}
\epsfbox{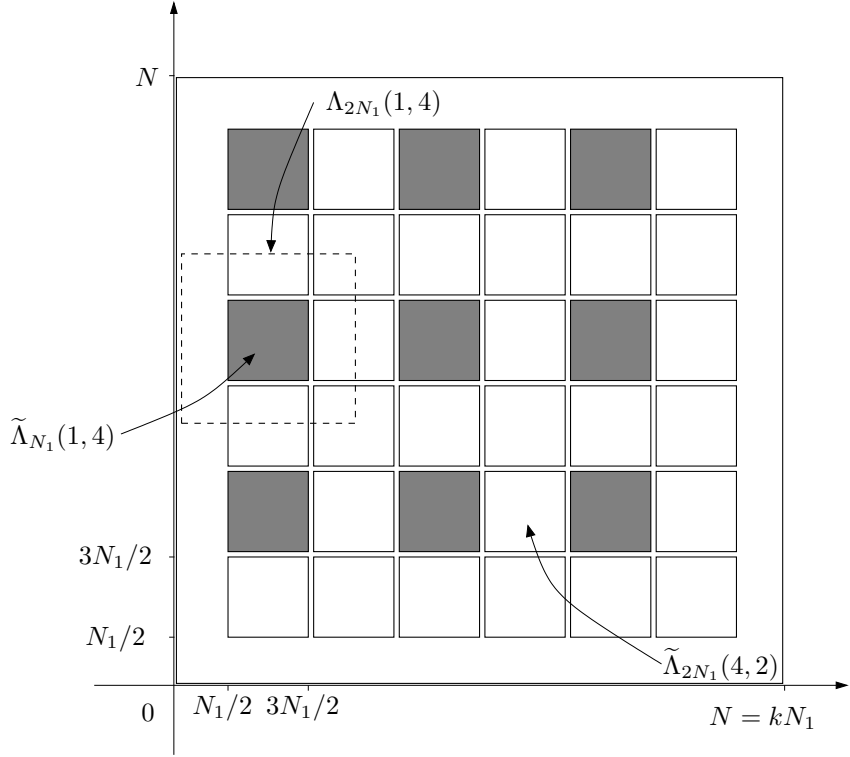}
\end{center}
\label{fig:structure} 
\caption{A schematic representation of our coarse graining procedure. We have chosen $k=7$.
The small squares of side length $N_1$ represent the disjoint boxes $\tilde \gL_{N_1}(y)$, $y\in \lint 1,k-1\rint^2$. 
The dark squares represent the boxes $\tilde \gL_{N_1}(y)$ for which $y\in\Xi(2)$. The dotted square corresponds to the box $\gL_{2N_1}(1,4)$.}
\end{figure}

\subsection{Decomposing the proof of Proposition \ref{benefit}}

The proof is split in three steps, whose details are performed in Section \ref{proufin}, \ref{proufeux} and \ref{proufoix} respectively.
In the first one we show that our averaged partition function 
$\bbE\left[ f(\go) Z^{\gb,\go}_{N,h}(\cA^h_N)\right]$, can be bounded from above by the partition of an homogenous system where 
an extra term is added in the Hamiltonian to penalize the presence of clustered contact in a small region (here a region of diameter $(\log N_1)^2$).
We introduce the event $\cC_{N_1}(y)$ which indicates the presence of such a cluster in $\tilde\gL_{N_1}(y)$,
\begin{equation}
 \cC_{N_1}(y):= \left \{ \exists x\in \tilde \gL_{N_1}(y), \!\!\!\!\!\! \!\! \sum_{\{ z\in \tilde \gL_{N_1}(y) \ : \ |z-x|\le (\log N_1)^2 \}} 
 \!\!\!\!\!\! \!\!\!\!\!\!  \gd_z  \ge \  (\log N_1)^3 \right\}.
\end{equation}
We simply write  $\cC_{N_1}$ for the case $y=(1,1)$.

\begin{proposition}\label{nonrandom}
 We have 
\begin{multline}\label{gteraf}
  \bbE\left[ f(\go) Z^{\gb,\go}_{N,h}(\cA^h_N)\right]\\
  \le 
  \bE_N\left[ \exp\left(  h\sum_{x \in \tilde \gL_{N}} \delta_x - \sum_{y \in \lint 1, k-1 \rint^2} \ind_{\cC_{N_1}(y)} \right)\ind_{\cA^h_N} \right]=:
  \hat Z(N,N_1,h).
\end{multline}
\end{proposition}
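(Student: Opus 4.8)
\textbf{Proof strategy for Proposition \ref{nonrandom}.}
The plan is to first reduce the disorder average to a pointwise (per-cell) estimate, and then use a crude bound that says: whenever a cell contains no clustered contacts, the change-of-measure penalty $\ind_{\cE_{N_1}(y)}$ contributes a factor bounded by $1$, while if a cell does contain a cluster of contacts, we pay for it. First I would use the product structure of $\go$ over the disjoint boxes $\tilde\gL_{N_1}(y)$, $y\in\lint 1,k-1\rint^2$, to write
\begin{equation}
\bbE\left[ f(\go) Z^{\gb,\go}_{N,h}(\cA^h_N)\right]
= \bE_N\left[ \ind_{\cA^h_N}\, \bbE\!\left[ \exp\!\left(\sum_{x\in\tilde\gL_N}(\gb\go_x-\gl(\gb)+h)\delta_x - 2\!\!\sum_{y\in\lint 1,k-1\rint^2}\!\!\ind_{\cE_{N_1}(y)}\right)\right]\right],
\end{equation}
where the inner $\bbE$ now acts only on $\go$ and the $\delta_x$'s are frozen. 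Because the boxes are disjoint and $\go$ is i.i.d., the inner expectation factorizes over $y$ (plus a harmless boundary contribution from any $x$ not covered by the $\tilde\gL_{N_1}(y)$, which is handled separately using only $\gb\go_x-\gl(\gb)$ and the fact that $\bbE[e^{\gb\go_x-\gl(\gb)}]=1$). So it suffices to bound, for a single box, the quantity
\begin{equation}
\bbE\left[ \exp\!\left(\sum_{x\in\tilde\gL_{N_1}(y)}(\gb\go_x-\gl(\gb))\delta_x - 2\ind_{\cE_{N_1}(y)}\right)\right]
\end{equation}
by $\exp(-\ind_{\cC_{N_1}(y)})$, for every fixed configuration $(\delta_x)_{x\in\tilde\gL_{N_1}(y)}$.

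The core per-cell estimate splits on whether $\cC_{N_1}(y)$ holds. If $\cC_{N_1}(y)$ does \emph{not} hold, I drop the penalty term ($-2\ind_{\cE_{N_1}(y)}\le 0$) and use $\bbE[\exp(\sum_x(\gb\go_x-\gl(\gb))\delta_x)]=\prod_x\bbE[e^{(\gb\go_x-\gl(\gb))\delta_x}]=1$ since each factor is either $1$ (if $\delta_x=0$) or $e^{-\gl(\gb)}\bbE[e^{\gb\go_x}]=1$ (if $\delta_x=1$); this gives the bound $1=\exp(-\ind_{\cC_{N_1}(y)})$. If $\cC_{N_1}(y)$ \emph{does} hold, then by definition there is a point $x_0$ with $\sum_{\{z:|z-x_0|\le(\log N_1)^2\}}\delta_z\ge(\log N_1)^3$; I want to show the $\go$-average is at most $e^{-1}$. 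The idea is that on the complement of $\cE_{N_1}(y)$ the local $\go$-sum near $x_0$ is smaller than $\gl'(\gb)(\log N_1)^3/2$, so the exponent $\sum_{x}(\gb\go_x-\gl(\gb))\delta_x$ restricted to the cluster is controlled; decompose
\begin{equation}
\bbE\left[\exp\!\left(\sum_{x}(\gb\go_x-\gl(\gb))\delta_x - 2\ind_{\cE_{N_1}(y)}\right)\right]
\le \bbE\left[\exp\!\left(\sum_x(\gb\go_x-\gl(\gb))\delta_x\right)\ind_{\cE_{N_1}(y)^\cc}\right] + e^{-2}\,\bbE\left[\exp\!\left(\sum_x(\gb\go_x-\gl(\gb))\delta_x\right)\right].
\end{equation}
The second term is $e^{-2}$ by the computation above. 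For the first term, I condition on the $\go_z$ outside the ball of radius $(\log N_1)^2$ around $x_0$ (for which the corresponding product of exponentials has expectation $1$), so it remains to bound $\bbE[\exp(\sum_{z\in B}(\gb\go_z-\gl(\gb))\delta_z)\ind\{\sum_{z\in B}\go_z<\tfrac12\gl'(\gb)(\log N_1)^3\}]$ where $B$ is that ball and $\sum_{z\in B}\delta_z\ge(\log N_1)^3$. On this event the sum $\sum_{z\in B}\gb\go_z\delta_z$ is at most $\gb$ times the sum of the largest $\#\{z\in B:\delta_z=1\}$ values... — more simply, since $\delta_z\in\{0,1\}$ and we are on the event that the $\go$-mass of $B$ is below $\tfrac12\gl'(\gb)(\log N_1)^3$, one gets $\sum_{z\in B}(\gb\go_z-\gl(\gb))\delta_z\le \gb\cdot\tfrac12\gl'(\gb)(\log N_1)^3 + \gb\|\go_B^-\|_1 - \gl(\gb)(\log N_1)^3$, and a standard large-deviation / Cramér estimate for the negative part of the i.i.d.\ sum over the $O((\log N_1)^4)$ sites of $B$ shows this is $\le -(\log N_1)^3$ (say) up to an event of probability $e^{-c(\log N_1)^2}$, using that $\gl$ is strictly convex with $\gl(0)=\gl'(0)=0$ so that $\gb\gl'(\gb)/2-\gl(\gb)<0$ has a definite negative value times $(\log N_1)^3$; this dominates both the $\|\go_B^-\|_1$ fluctuation and the rare event's contribution. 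Hence the first term is also $\le e^{-2}$ (or at worst $\le 1/2$) for $h$ small (i.e.\ $N_1=h^{-1/4}$ large), and the sum of the two is $\le e^{-1}=\exp(-\ind_{\cC_{N_1}(y)})$, as desired.

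Multiplying the per-cell bounds $\exp(-\ind_{\cC_{N_1}(y)})$ over $y\in\lint 1,k-1\rint^2$, and recombining with the $\bE_N$-expectation over $\ind_{\cA^h_N}$ and the frozen $\delta_x$'s (noting that $\sum_{x\in\tilde\gL_N}h\delta_x$ survives untouched since the disorder cancellations involved only the $\gb\go_x-\gl(\gb)$ part), yields exactly
\begin{equation}
\bbE\left[ f(\go) Z^{\gb,\go}_{N,h}(\cA^h_N)\right]\le
\bE_N\left[ \exp\!\left( h\sum_{x\in\tilde\gL_N}\delta_x - \sum_{y\in\lint 1,k-1\rint^2}\ind_{\cC_{N_1}(y)}\right)\ind_{\cA^h_N}\right]=\hat Z(N,N_1,h),
\end{equation}
which is the claim. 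I expect the main obstacle to be the per-cell large-deviation estimate in the case $\cC_{N_1}(y)$ holds: one must carefully choose the threshold $\tfrac12\gl'(\gb)(\log N_1)^3$ in the definition of $\cE_{N_1}(y)$ so that, simultaneously, (a) on $\cE_{N_1}(y)^\cc$ the tilted exponential over a cluster of $\ge(\log N_1)^3$ contacts is genuinely small (this needs $\gb\gl'(\gb)-2\gl(\gb)<0$, i.e.\ strict convexity of $\gl$, together with control of the negative tail of the i.i.d.\ sum over the $\asymp(\log N_1)^4$-site ball), and (b) the cost term in Proposition \ref{cost}, $\bbP[\cE_{N_1}]\le N_1^2 e^{-c(\log N_1)^2}$, stays summable over the $(k-1)^2$ cells after normalization by $N^2$. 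Getting both sides of this trade-off to close with the same threshold is the delicate point; everything else is bookkeeping with the product structure and the identity $\bbE[e^{\gb\go-\gl(\gb)}]=1$.
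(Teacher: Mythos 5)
Your structural outline mirrors the paper's proof — reduce via Fubini and the product structure of $\go$ to a per-cell bound, dispose of the case $\cC_{N_1}(y)^\complement$ by dropping the penalty and using $\bbE[e^{(\gb\go_x-\gl(\gb))\delta_x}]=1$, and split on $\cE_{N_1}(y)$ vs.\ its complement when $\cC_{N_1}(y)$ holds. The gap is in the estimate of the first term $\bbE\bigl[e^{\sum_{z\in B}(\gb\go_z-\gl(\gb))\delta_z}\ind\{\sum_{z\in B}\go_z<\frac12\gl'(\gb)(\log N_1)^3\}\bigr]$ when $\cC_{N_1}(y)$ holds. Your pointwise bound on the exponent fails for two independent reasons. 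First, the constant $\frac12\gb\gl'(\gb)-\gl(\gb)$ is \emph{not} strictly negative under the standing hypotheses: for centered Gaussian $\go$ one has $\gl(\gb)=\gb^2/2$, $\gl'(\gb)=\gb$, so $\frac12\gb\gl'(\gb)-\gl(\gb)=0$ identically, and no gain is extracted from that term. Second, and more severely, the remainder $\gb\|\go_B^-\|_1$ enters the exponent with a positive sign summed over the full ball $B$ of $\asymp(\log N_1)^4$ sites; after the expectation, $\bbE[e^{\gb\|\go_B^-\|_1}]=\prod_{z\in B}\bbE[e^{\gb\go_z^-}]$ grows like $e^{c(\log N_1)^4}$, which overwhelms any $-(\log N_1)^3$ you could hope to gain. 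The ``Cram\'er-type control of the negative part'' does not rescue this, because the blow-up is in the deterministic size of $B$, not in a rare event.

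What you are missing is that the quantity you want to bound is exactly a \emph{probability} under a tilted law, not an expectation one should attack via pointwise exponent bounds. Writing $\dd\bbP^\phi/\dd\bbP(\go)=e^{\sum_x(\gb\go_x-\gl(\gb))\delta_x}$, your first term is simply $\bbP^\phi[\cE_{N_1}(y)^\complement]$, hence automatically $\le1$ with no estimate at all; and since under $\bbP^\phi$ the $\go_z$ are independent with $\bbE^\phi[\go_z]=\gl'(\gb)\delta_z$ and variance bounded by $\max(\gl''(\gb),1)$, the event $\cC_{N_1}(y)$ forces the local sum around $x_0$ to have $\bbP^\phi$-mean at least $\gl'(\gb)(\log N_1)^3$, which is twice the threshold defining $\cE_{N_1}(y)$, while its $\bbP^\phi$-variance is $O((\log N_1)^4)$. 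Chebyshev then gives $\bbP^\phi[\cE_{N_1}(y)^\complement]=o(1)$ as $N_1\to\infty$, which is all that is needed to conclude $\bbE^\phi[e^{-2\ind_{\cE_{N_1}(y)}}]\le e^{-1}$. This is the route the paper takes, and it uses only $\gl'(\gb)>0$ and a second-moment bound — no sign condition on $\gb\gl'(\gb)-2\gl(\gb)$, and no need to control $\|\go_B^-\|_1$ pointwise.
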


In the second step, we perform a factorization in order to reduce the estimate of $\hat Z(N,N_1,h)$ to that of 
similar system with only one cell. 
Let us set (see Figure \ref{fig:structure})
\begin{equation}
 \gL_{2N_1}(y):= N_1\big( y-(1,1)\big)+\gL_{2N_1}.
\end{equation}
Note  that for every for  $y\in  \lint 1,k-1\rint^2$ we have $\gL_{2N_1}(y)\subset \gL_N$  and that  $\gL_{2N_1}((1,1))=\gL_{2N_1}$.

\begin{proposition}\label{scorpiorizing}
 We have 
 
\begin{equation}
  \hat Z(N,N_1,h)\le e^{2 N_1 N h}\left(\max_{\{ \hat \phi \ : \ |\hat \phi|_{\infty}\le |\log h|^2\} }
 \bE^{\hat \phi}_{2N_1}\left[e^{4h\sum\limits_{x \in \tilde \gL'_{N_1}}
  \delta_x -4\ind_{\cC_{N_1}}} \right] \right)^{\frac{(k-1)^2}{4}}
\end{equation}

\end{proposition}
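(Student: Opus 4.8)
The plan is to derive the factorization in Proposition~\ref{scorpiorizing} from the definition of $\hat Z(N,N_1,h)$ by a sequence of elementary bounds: first decouple the boxes $\gL_{2N_1}(y)$ by conditioning on the field outside them, then crudely bound the pinning reward on the overlap regions, and finally invoke the spatial Markov property to replace each conditional law by a free field on $\gL_{2N_1}$ with a bounded boundary condition. Recall that $\hat Z(N,N_1,h)$ is an expectation under $\bP_N$ (zero boundary condition) of $\exp\big(h\sum_{x\in\tilde\gL_N}\delta_x-\sum_{y}\ind_{\cC_{N_1}(y)}\big)\ind_{\cA^h_N}$, and that on $\cA^h_N$ we have $|\phi_x|\le|\log h|^2$ everywhere, in particular on the internal boundaries of all the boxes $\gL_{2N_1}(y)$.

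First I would select a sub-family of the boxes $\gL_{2N_1}(y)$ that are pairwise disjoint. Since $\gL_{2N_1}(y)$ has side length $2N_1$ and the centers $yN_1$ form a grid of spacing $N_1$, the boxes with $y$ in a fixed residue class modulo $2$ (in each coordinate) are disjoint; there are four such classes, so one of them, call its index set $\Xi$, contains at least $(k-1)^2/4$ of the indices $y\in\lint 1,k-1\rint^2$. On this sub-family I would discard the penalty terms $\ind_{\cC_{N_1}(y)}$ for $y\notin\Xi$ (this only increases the integrand since they carry a minus sign), and likewise bound $h\sum_{x}\delta_x\le h\sum_{x}\delta_x\ind_{x\notin\bigcup_{y\in\Xi}\tilde\gL_{N_1}(y)}+\sum_{y\in\Xi}h\sum_{x\in\tilde\gL_{N_1}(y)}\delta_x$. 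The contribution of contact points lying outside $\bigcup_{y\in\Xi}\tilde\gL_{N_1}(y)$ is at most $h$ times the number of such sites, which is at most $N^2$; but a cleaner route — matching the stated prefactor $e^{2N_1Nh}$ — is to observe that the sites of $\tilde\gL_N$ not covered by the disjoint boxes $\tilde\gL_{N_1}(y)$, $y\in\Xi$, number at most $2N_1N$ (they lie in a union of $O(k)$ strips of width $N_1$), so $h\sum_{x\in\tilde\gL_N\setminus\bigcup_{y\in\Xi}\tilde\gL_{N_1}(y)}\delta_x\le 2N_1Nh$. Also note $\tilde\gL_{N_1}(y)\subset\gL_{2N_1}(y)$ for every $y$.

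Next I would use the spatial Markov property of the free field: conditionally on $(\phi_x)_{x\in\gL_N\setminus\bigcup_{y\in\Xi}\mathring\gL_{2N_1}(y)}$, the restrictions $(\phi_x)_{x\in\gL_{2N_1}(y)}$, $y\in\Xi$, are \emph{independent}, and each is distributed as a free field on $\gL_{2N_1}(y)$ with the boundary condition $\hat\phi$ read off from the conditioning field on $\partial\gL_{2N_1}(y)$. On the event $\cA^h_N$ this boundary condition satisfies $|\hat\phi|_\infty\le|\log h|^2$. Hence, pulling out the factor $e^{2N_1Nh}$ and taking the conditional expectation inside, the integrand factorizes over $y\in\Xi$, and each factor is bounded above by
\begin{equation*}
\max_{\{\hat\phi\,:\,|\hat\phi|_\infty\le|\log h|^2\}}\bE^{\hat\phi}_{2N_1}\Big[e^{4h\sum_{x\in\tilde\gL'_{N_1}}\delta_x-4\ind_{\cC_{N_1}}}\Big],
\end{equation*}
using translation invariance to move $\gL_{2N_1}(y)$ to $\gL_{2N_1}$ and $\tilde\gL_{N_1}(y)$ to $\tilde\gL'_{N_1}$, and the fact that $\delta^{\hat\phi}_x$ is defined through the harmonic extension $H^{\hat\phi}_N$ as in \eqref{deltaf}. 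The factors $4$ in the exponent are where one must be slightly careful: after taking the conditional expectation one still has an outer expectation over the conditioning field, and to keep things as a product of single-cell quantities one applies Cauchy--Schwarz (or Hölder with exponent $(k-1)^2/4$ over the $\Xi$-factors), which is exactly what converts $h\mapsto 4h$ and $\ind_{\cC_{N_1}}\mapsto 4\ind_{\cC_{N_1}}$ and the count $(k-1)^2/4$ of boxes into the exponent; I would spell out the Hölder step so that the number of disjoint boxes ($\ge(k-1)^2/4$) and the number of independent families ($4$) combine correctly.

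The main obstacle I anticipate is bookkeeping rather than anything deep: making the geometric counting precise (that a residue class mod $2$ gives disjoint boxes covering all but $\le 2N_1N$ sites of $\tilde\gL_N$, and that the discarded penalties and boundary strips are correctly absorbed into $e^{2N_1Nh}$), and correctly tracking how the Hölder inequality over the four independent sub-families and over the $\ge(k-1)^2/4$ disjoint boxes in each produces precisely the stated exponents (the $4h$, the $-4\ind_{\cC_{N_1}}$, and the power $(k-1)^2/4$). There is no analytic difficulty: once the conditioning is set up, the Markov property and translation invariance do all the work, and the bound $|\hat\phi|_\infty\le|\log h|^2$ on $\cA^h_N$ is exactly what licenses restricting the max over boundary conditions to that range.
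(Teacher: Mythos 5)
There is a genuine gap, and it lies in the very first move. You propose to \emph{select one} parity class $\Xi$ (so $|\Xi|\approx(k-1)^2/4$) and discard the penalties $\ind_{\cC_{N_1}(y)}$ for $y\notin\Xi$, and then you claim that the sites of $\tilde\gL_N$ not covered by $\bigcup_{y\in\Xi}\tilde\gL_{N_1}(y)$ number at most $2N_1N$, lying in strips of width $N_1$. That count is only correct when $\Xi$ is the \emph{full} index set $\lint 1,k-1\rint^2$: the small boxes $\tilde\gL_{N_1}(y)$, $y\in\lint 1,k-1\rint^2$, are already pairwise disjoint for all $y$ (they have side $N_1$ and are centered on a grid of spacing $N_1$ — it is only the enlarged boxes $\gL_{2N_1}(y)$ that overlap neighbors), and their union leaves out exactly $N^2-(k-1)^2N_1^2\le 2NN_1$ sites. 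If instead you keep only one parity class of the small boxes, the uncovered set has about $3N^2/4$ sites, and bounding their contribution crudely gives a prefactor $e^{\Theta(N^2h)}$ rather than $e^{2N_1Nh}$; such a prefactor swamps the per-cell gain $e^{-c(\log N_1)^{3/2}}$ raised to the power $(k-1)^2/4$ that Proposition~\ref{onecell} later provides, so the bound would be useless for the purpose it serves.

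Your last paragraph then appeals to ``Hölder over the four independent sub-families,'' which is the right mechanism for producing the $4h$ and $4\ind_{\cC_{N_1}}$, but it is inconsistent with the ``discard $y\notin\Xi$'' move: you cannot simultaneously throw away three quarters of the cells and take a Hölder product over the four classes containing all of them. The correct sequencing, which is what the paper does, is to keep every $y\in\lint 1,k-1\rint^2$, first absorb the $\le 2NN_1$ boundary-strip sites into the prefactor $e^{2N_1Nh}$, and only then split the index set into the four parity classes $\Xi(1),\dots,\Xi(4)$ and apply Hölder with exponent $4$:
\begin{equation*}
\bE_N\Big[\prod_{i=1}^4 A_i\,\ind_{\cA^h_N}\Big]\ \le\ \prod_{i=1}^4\bE_N\big[A_i^4\,\ind_{\cA^h_N}\big]^{1/4},\qquad A_i:=\exp\Big(\sum_{y\in\Xi(i)}\big(h\!\!\sum_{x\in\tilde\gL_{N_1}(y)}\!\!\delta_x-\ind_{\cC_{N_1}(y)}\big)\Big).
\end{equation*}
Inside each factor, the enlarged boxes $\gL_{2N_1}(y)$, $y\in\Xi(i)$, are disjoint, so conditioning on the field on $\gG(i)=\bigcup_{y\in\Xi(i)}\partial\gL_{2N_1}(y)$ and using the Markov property yields conditional independence, and on $\cA^h_N$ each conditional factor is dominated by $\max_{|\hat\phi|_\infty\le|\log h|^2}\bE^{\hat\phi}_{2N_1}\big[e^{4h\sum_{x\in\tilde\gL'_{N_1}}\delta_x-4\ind_{\cC_{N_1}}}\big]$. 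Since $\sum_{i=1}^4|\Xi(i)|/4=(k-1)^2/4$, the exponent comes out exactly as stated, not as an inequality $\ge(k-1)^2/4$.
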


Let us notice two important features in our factorization which are present to reduce possible nasty boundary effects:
\begin{itemize}
 \item There is a restriction on the boundary condition  $|\hat \phi|_{\infty}\le |\log h|^2$, which forbids wild behavior of the field.
 This restriction is directly inherited from the restriction to $\cA^h_N$ in the partition function  
 and brings some light on the role of Proposition \ref{boundtheprob} in our proof.
 \item The Hamiltonian  
 $$4h\sum\limits_{x \in \tilde \gL'_{N_1}}
   \delta_x- 4\ind_{\cC_{N_1}},$$ is a functional of $(\phi_x)_{x\in \tilde\gL'_{N_1}}$ i.e.\ of the field restricted to a region which 
   is distant from the boundary of the box $\partial \gL_{2N_1}$. 
\end{itemize}

The final step of the proof consists in evaluating the contribution of one single cell to the partition function.

\begin{proposition}\label{onecell}

There exists a constant $c$ such that for all $h$ sufficiently small 
for all $\hat \phi$ satisfying $|\hat \phi|_{\infty}\le |\log h|^2$ we have 
\begin{equation}
\log \bE^{\hat \phi}_{2N_1}\left[e^{ 4h\sum\limits_{x \in \tilde \gL_{N_1}}
   \delta_x- 4\ind_{\cC_{N_1}} } \right]\le e^{-2(\log h)^{3/2}}.
\end{equation}

\end{proposition}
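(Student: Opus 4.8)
\textbf{Plan of the proof of Proposition \ref{onecell}.}
The statement to establish is that a single cell of side-length $2N_1=2h^{-1/4}$, with a bounded boundary condition, produces essentially no free energy once clustered contacts are penalized. The plan is to bound $\bE^{\hat \phi}_{2N_1}[\exp(4h\sum_{x\in\tilde\gL_{N_1}}\delta_x-4\ind_{\cC_{N_1}})]$ by splitting the probability space according to whether a ``large'' number of contacts occurs in $\tilde\gL_{N_1}$. First I would fix a threshold $T=T(h)$, to be chosen of order $|\log h|^{C}$ for suitable $C$; the main dichotomy is between the event $\{\sum_{x\in\tilde\gL_{N_1}}\delta_x\le T\}$ and its complement. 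On the first event the exponential is at most $e^{4hT}$, and since $hT=h\,|\log h|^{C}\to 0$, this contributes $1+o(1)$, more precisely a factor $\le \exp(Ch|\log h|^{C})$ which is far smaller than the claimed $\exp(e^{-2(\log h)^{3/2}})$ bound once one takes logs — so this part is harmless. All the difficulty is on the second event.

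On the event $\{\sum_{x\in\tilde\gL_{N_1}}\delta_x> T\}$, one must show that the probability is tiny enough to beat the $e^{4hN_1^2}$-type prefactor coming from the crude bound $\sum\delta_x\le N_1^2$ on the Hamiltonian. The key geometric input is that if there are more than $T$ contacts in $\tilde\gL_{N_1}$ but $\cC_{N_1}$ does \emph{not} occur, then the contacts cannot be concentrated: every ball of radius $(\log N_1)^2$ contains at most $(\log N_1)^3$ of them, so the contacts must be spread over at least $\approx T (\log N_1)^{-3} (\log N_1)^{-4}$ disjoint regions of diameter $(\log N_1)^2$, i.e. over a macroscopic portion of the cell. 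I would therefore estimate, for the free field on $\gL_{2N_1}$ with bounded boundary data, the probability that a prescribed spread-out collection of $\approx T(\log N_1)^{-7}$ sites, pairwise at distance $\ge (\log N_1)^2$, all satisfy $|\phi_x+H^{\hat\phi}_N(x)|\le 1$. Here the covariance estimates of Lemma \ref{Greenesteem} and Lemma \ref{lem:kerestimate} enter: at each such site, conditionally on the others, $\phi_x$ is Gaussian with variance at least $c\log(\log N_1)^2=c\log\log N_1$ (the sites being far apart and far from nothing in particular, one gets a genuinely growing conditional variance), and the conditional mean is some bounded-plus-field quantity; the Gaussian density on an interval of length $2$ is then at most $C/\sqrt{\log\log N_1}$. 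Multiplying over the $\approx T(\log N_1)^{-7}$ sites gives a probability bounded by $(C/\sqrt{\log\log N_1})^{cT(\log N_1)^{-7}}$. With $T$ of order $(\log N_1)^{8}=(|\log h|/4)^8$ say — a polylog in $1/h$ — this is of order $\exp(-c|\log h|\log\log|\log h|)$, which after multiplication by $e^{4hN_1^2}=e^{4h^{1/2}}$ and taking logs is way below $e^{-2(\log h)^{3/2}}$; one has a lot of room, so the precise constants do not matter and I would not optimize them.

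To make the ``spread-out collection'' argument rigorous one sums over the (at most $2^{N_1^2}$, hence harmless after the super-exponentially small bound above — actually one should be slightly more careful and sum over choices of which of the $\approx T(\log N_1)^{-7}$ disjoint sub-regions are used and which representative site in each, giving only a $\binom{N_1^2}{T(\log N_1)^{-7}}$-type combinatorial factor, still beaten by the Gaussian product bound) possible locations of the contact set, using the union bound. The one point requiring genuine care is extracting the \emph{lower} bound $c\log\log N_1$ on the conditional variance of $\phi_x$ given the values of $\phi$ at the other selected sites: this is where one uses that the selected sites are pairwise at distance $\ge (\log N_1)^2$, so that by \eqref{eq:stimagreen} (applied in a box of side $(\log N_1)^2$ around $x$, using the Markov property to dominate the conditional variance from below) the conditional variance is at least $\tfrac{1}{2\pi}\log(\log N_1)^2-C$. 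I expect this conditional-variance lower bound, together with getting the geometry/combinatorics of ``at least $T(\log N_1)^{-7}$ well-separated contacts'' cleanly stated, to be the main obstacle; everything else is a matter of plugging in the heat-kernel and Green-function estimates from Section \ref{toolbox} and checking that the polylogarithmic losses are absorbed by the generous target bound $e^{-2(\log h)^{3/2}}$.
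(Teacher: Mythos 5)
Your dichotomy cannot give the claimed bound, and the error is already in the treatment of the ``benign'' event. On $\{\sum_{x\in\tilde\gL'_{N_1}}\delta^{\hat\phi}_x\le T\}$ you discard the penalty and bound the integrand by $e^{4hT}$, so this piece only shows $\log Z\le 4hT + (\text{tiny})$. You then assert that $Ch|\log h|^C$ is ``far smaller than the claimed $e^{-2(\log h)^{3/2}}$.'' This comparison is backwards: since $|\log h|^{3/2}>|\log h|$ for $h<e^{-1}$, one has $e^{-2|\log h|^{3/2}}<e^{-|\log h|}=h$, so $e^{-2|\log h|^{3/2}}$ is \emph{smaller} than every power of $h$, and in particular much smaller than $h|\log h|^C$. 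With $T$ polylogarithmic, your split proves at best $\log Z\lesssim h\,\mathrm{polylog}(1/h)$, which is many orders of magnitude weaker than $e^{-2|\log h|^{3/2}}$ and would not yield the infinite-order transition in Theorem \ref{mainres}.

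The missing idea is that the penalty $-4\ind_{\cC_{N_1}}$ must cancel the energy gain on the \emph{full} space, not only on an atypical tail. The paper's proof Taylor-expands the exponential (using $h\sum\delta_x\le hN_1^2=h^{1/2}$) to obtain
\begin{equation*}
\log\bE^{\hat\phi}_{2N_1}\bigl[e^{4h\sum\delta^{\hat\phi}_x-4\ind_{\cC_{N_1}}}\bigr]\;\le\; 5N_1^{-2}\max_{x}\bP^{\hat\phi}_{2N_1}\bigl[\phi_x\in[-1,1]\bigr]-\tfrac12\bP^{\hat\phi}_{2N_1}[\cC_{N_1}],
\end{equation*}
keeping both terms. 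The positive term is controlled by the Gaussian tail via the recentering $u=\min_x|H^{\hat\phi}_{2N_1}(x)|$ and the full variance $V_{N_1}\approx\frac{1}{2\pi}\log N_1$; the crucial matching input is Proposition \ref{boundforcluster}, a \emph{lower} bound $\bP^{\hat\phi}_{2N_1}[\cC_{N_1}]\ge c(\log N_1)^{-1}e^{-u^2/2V'_{N_1}}$ involving the slightly smaller variance $V'_{N_1}\approx V_{N_1}-\frac{2}{\pi}\log\log N_1$ of the slow part of the field. Optimizing over $u$, the right-hand side can be positive only when $u^2\gtrsim(\log N_1)^3/\log\log N_1$, in which case the positive term is itself $\le e^{-c(\log N_1)^2/\log\log N_1}\le e^{-c|\log h|^{3/2}}$. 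This variance--variance competition is the heart of the proposition and is invisible to a split on the number of contacts. (Your spread-out-contacts estimate on the bad event, with the conditional-variance lower bound of order $\log\log N_1$, is of the same flavor as Lemma \ref{lastep}, but it lives on the wrong side of the decomposition and cannot save the argument.)
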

\noindent Combining the three results presented above, we have 
\begin{equation}
   \bbE\left[ f(\go) Z^{\gb,\go}_{N,h}(\cA^h_N)\right]\le 2 N_1 N h+\frac{(k-1)^2}{4}e^{-2(\log h)^{3/2}},
\end{equation}
and this is sufficient to conclude the proof of Proposition \ref{benefit}.

\subsection{Proof of Proposition \ref{nonrandom}}\label{proufin}
Given a realization $\phi$, we let $\bbP^{\phi}$ be a probability law which 
is absolutely continuous with respect to $\bbP$ and whose the density is given by
\begin{equation}
 \frac{\dd  \bbP^{ \phi}}{\dd \bbP}(\go):= \exp\left(\sum_{x\in \tilde \gL_N} \left(\gb\go_x-\gl(\gb)\right)\delta_x \right).
\end{equation}
Under $\bbP^{ \phi}$, the variables $(\go_x)_{x\in \bbZ^d}$ are still independent but they are not IID, as the law of the $\go_x$s for which 
$\delta_x=1$ have been tilted.
In particular it satisfies  
\begin{equation}
 \bbE^{\phi}[\go_x]= \gl'(\gb)\delta_x \quad \text{ and }  \Var_{{\bbP}^{\phi}}[\go_x]=1+ (\gl''(\gb)-1)\delta_x 
\end{equation}
where $\gl'(\gb)$ and $\gl''(\gb)$ denote the two first derivatives of $\gl$ the function defined in \eqref{eq:assume-gl}.
This notation gives us another way of writing the quantity that we must estimate
\begin{equation}\label{rewrite}
 \bbE\left[ f(\go)Z^{\gb,\go}_{N,h}(\cA^h_N)\right]= \bE_N\left[ \bbE^{\phi}[f(\go)] e^{h \sum_{x \in \tilde \gL_N }\delta_x}\ind_{\cA^h_N}\right]. 
\end{equation} 
To conclude it is sufficient to prove that 
\begin{equation}
\bbE^{\phi}[f(\go)] \le  \exp \left( -\sum_{y \in \lint 0, k-1 \rint} \ind_{\cC_{N_1}(y)} \right).
\end{equation}
Note that because both $\bbE^{\phi}$ and $f(\go)$  have a product structure, it is in fact sufficient to prove that for any $y\in \lint 0, k-1 \rint^2$
we have 
\begin{equation}\label{unecase}
 \bbE^{\phi}\left[e^{-2 \ind _{\cE_{N_1}(y)}}\right]\le e^{-\ind_{\cC_{N_1}(y)}}.
\end{equation}
With no loss of generality we assume that $y=(1,1)$.
The result is obvious when  $\phi \notin \cC_{N_1}$ hence we can also assume $\phi \in \cC_{N_1}$.
Let $x_0\in \tilde \gL'_{N_1}$ be a vertex satisfying
$$ \sum_{\{z\in \gL'_{N_1} \ : \ |z-x_0|\le (\log N_1)^2\}} \delta_x \ge (\log N_1)^3,$$
(e.g.\ the smallest one for the lexicographical order).
We have 
\begin{equation}\begin{split}
\bbE^{\phi}\left[ \sum_{\{ z\in  \tilde \gL'_{N_1} \  : \ |z-x_0|\le (\log N_1)^2\}}  \!\!\!\!\!\! \!\! \go_z \ \right]&= \gl'(\gb) \!\!\!\!\!\! \!\!\!\!\!\!
\sum_{\{ z\in  \tilde \gL'_{N_1} \  : \ |z-x_0|\le (\log N_1)^2\}}  \!\!\!\!\!\! \!\!\!\!\!\! \gd_z  \, \ge  \, \gl'(\gb)(\log N_1)^3,\\
\Var_{\bbP^{\phi}}\left[ \sum_{\{ z\in  \tilde \gL'_{N_1} \  : \ |z-x_0|\le (\log N_1)^2\}}  \!\!\!\!\!\! \!\! \go_z \ \right]
&\le \left[2(\log N_1)^2+1\right]^2\max(\gl''(\gb),1).
\end{split}\end{equation}
Hence in particular if $N_1$ is sufficiently large, Chebychev's inequality gives 
\begin{equation}
 \bbP^{\phi} \left[\cE_{N_1}\right]\le e^{-1}-e^{-2},
\end{equation}
which implies \eqref{unecase}.

\qed

\subsection{Proof of Proposition \ref{scorpiorizing}}\label{proufeux}

We start by taking care of the contribution of the contact points located near the boundary $\partial \gL_N$, as they are not included in any  $\tilde \gL_{N_1}(y)$.
Assuming that all these points are contact points we obtain the following crude bound
\begin{equation}
\sum_{x \in \tilde \gL_N } \delta_x\le  \left[N^2-(k-1)^2N_1^2\right] +\sum_{y \in \lint 1, k-1 \rint^2} \sum_{x \in \tilde \gL_{N_1}(y)} \delta_x.
\end{equation}
and the first term is smaller than  $2NN_1$.
Hence  we have

\begin{equation}\label{gteraf2}
\hat Z(N,N_1,h)
\le   e^{2N_1 N h}  \bE_N\left[  e^{\sum_{y \in \lint 1,k-1\rint^2} \left( h\sum_{x \in \tilde \gL_{N_1}(y)} \delta_x -  
\ind_{\cC_{N_1}(y)}\right)} \ind_{\cA^h_N}\right].
\end{equation}
We partition the set of indices $\lint 1, k-1 \rint^2$ into $4$ subsets, according to the parity of the of the coordinates.
If we let $\alpha_1(i)$ and $\alpha_2(i)$ denote the first and second diadic digits of $i-1$. We set 
\begin{equation}
 \Xi(i):= \big\{ y=(y_1,y_2) \in \lint 1,k-1\rint^2 \ : \  \forall j \in \{1,2\}, \, y_j \stackrel{(\text{mod } 2)}{=}\alpha_j(i)\big\}.
\end{equation}
Using H\"older's inequality we have 
\begin{multline}
 \bE_N\left[  e^{\sum_{y \in \lint 1,k-1\rint^2} \left( h\sum_{x \in \tilde \gL_{N_1}(y)} \delta_x -  
\ind_{\cC_{N_1}(y)}\right)} \ind_{\cA^h_N}\right]^{4}
\\
 \le \prod_{i=1}^4 
  \bE_N\left[  e^{4\sum_{y \in \Xi(i)} \left( h\sum_{x \in \tilde \gL_{N_1}(y)} \delta_x -  
\ind_{\cC_{N_1}(y)}\right)} \ind_{\cA^h_N}\right].
\end{multline}
For a fixed $i\in \lint 1, 4 \rint$,
the interiors of the boxes  $\gL_{2N_1}(y)$, $y\in \Xi(i)$ are disjoint (neighboring boxes overlap only on their boundary, we refer to Figure \ref{fig:structure}).
This gives us a way to factorize the exponential: let us condition the expectation to the realization of $(\phi_x)_{x\in \gG(i)}$
where 
\begin{equation}
 \gG(i):= \bigcup_{y \in \Xi(i)} \partial \gL_{2N_1}(y).
\end{equation}
The spatial Markov property implies that conditionally on  $(\phi_x)_{x\in \gG(i)}$, the restrictions 
$\left[(\phi_x)_{x\in \gL_{2N_1}(y)}\right]_{y \in \Xi(i)}$ are independent. 
Hence we can factorize the expectation and get 
\begin{multline}\label{yipendence}
    \bE_N\left[  e^{4\sum_{y \in \Xi(i)} \left( h\sum_{x \in \tilde \gL_{N_1}(y)} \delta_x -  
\ind_{\cC_{N_1}(y)}\right)} \ | \ (\phi_x)_{x\in \gG(i)} \right]\\
    \le 
    \prod_{y\in \Xi(i)}     \bE_N\left[  e^{4 \left( h\sum_{x \in \tilde \gL_{N_1}(y)} \delta_x -  
\ind_{\cC_{N_1}(y)}\right)} \ | \ (\phi_x)_{x\in \gG(i)} \right].
\end{multline}
On the event 
$$\cA^h(i):= \{\max_{x\in \gG(i)} |\phi_x|\le |\log h|^2 \},$$
we have for any  $y\in \Xi(i)$, by translation invariance, 
\begin{multline}
 \bE_N\left[  e^{4 \left( h\sum_{x \in \tilde \gL_{N_1}(y)} \delta_x -  
\ind_{\cC_{N_1}(y)}\right)} \ | \ (\phi_x)_{x\in \gG(i)} \right]\\
 \le 
    \max_{\{ \hat \phi \ : \ \|\hat \phi\|_{\infty}\le |\log h|^2 \}} 
     \bE^{\hat \phi}_{2N}\left[e^{4h\left( \sum_{x \in \tilde \gL'_{N_1}}
   \delta_x\right) - 4\ind_{\cC_{N_1}}} 
   \right].
 \end{multline}
and hence we can conclude by taking the expectation of \eqref{yipendence} restricted to the event $\cA^h(i)$ (which includes $\cA^h_N$).

\qed

\subsection{Proof of Proposition \ref{onecell}}\label{proufoix}

Note that because of our choice of $N_1=h^{-1/4}$ 
we always have
\begin{equation}
 h \sum_{x \in \tilde \gL'_{N_1}} \delta_x \le h N^2_1\le h^{1/2},
\end{equation}
which is small.
Hence for that reason, if $h$ is sufficiently small, the Taylor expansion of the exponential gives
\begin{multline}\label{xfiles}
\log \bE^{\hat \phi}_{2N_1}\left[e^{4h\sum_{x \in \tilde \gL'_{N_1}}
   \delta_x- 4\ind_{\cC_{N_1}}}\right]
   \le \log \bE^{\hat \phi}_{2N_1}\Big[1+ 5h\sum_{x \in \tilde \gL'_{N_1}}
   \delta_x- \frac{1}{2}\ind_{\cC_{N_1}}  \Big]
  \\ \le 5h \bE^{\hat \phi}_{2N_1}\Big[ \sum_{x \in \tilde \gL'_{N_1}}
   \delta_x  \Big]- \frac{1}{2}\bP^{\hat \phi}_{2N_1}[\cC_{N_1}]\\
   \le 5 N_1^{-2} \max_{x\in \tilde \gL'_{N_1}}\bP^{\hat \phi}_{2N_1}[\phi_x\in[-1,1]]-\frac{1}{2}\bP^{\hat \phi}_{2N_1}[\cC_{N_1}].
\end{multline}
We have to prove that the r.h.s.\ is small. 
Before going into technical details let us quickly expose the main idea of the proof. For the r.h.s.\ of \eqref{xfiles} to be positive, we need
\begin{equation}\label{hpuissance}
 \frac{\max_{x\in \tilde \gL'_{N_1}}\bP^{\hat \phi}_{2N_1}\big(\phi_x\in[-1,1]\big)}{\bP^{\hat \phi}_{2N_1}[\cC_{N_1}]}\ge \frac{N_1^2}{10}.
\end{equation}
What we are going to show is that for this ratio to be large we need the boundary condition $\hat \phi$ to be very high above the substrate  
(or below by symmetry), but that in that case the quantity $\left(\max_{x\in \tilde \gL'_{N_1}}\bP^{\hat \phi}_{2N_1}[\phi_x\in[-1,1]]\right)$ itself has to be very small and this should allow ourselves to conclude.

\medskip

To understand the phenomenon better we need to introduce quantitative estimates.
Let  $G^*$ denote the Green function \eqref{greenff} in the box $\gL_{2N_1}$ with $0$ boundary condition, and set 
\begin{equation}
 V_{N_1}:= \max_{x\in  \tilde \gL'_{N_1}} G^*(x,x).
\end{equation}
We have from Lemma \ref{Greenesteem}
\begin{equation}
 |V_{N_1}- \frac{1}{2\pi} \log N_1 | \le C.
\end{equation}
Recall that from \eqref{transkix} we have 
\begin{equation}
\bP^{\hat \phi}_{2N_1}\big( \ \phi_x\in[-1,1] \ \big)\le \bP_{2N_1}\left(\phi_x\in \left[-1-H^{\hat \phi}_{2N_1}(x),1-H^{\hat \phi}_{2N_1}(x)\right] \ \right).
\end{equation}
With this in mind we fix
\begin{equation}
u=u(\hat \phi, N_1):=\min_{x\in \tilde \gL_N} |H^{\hat \phi}_{2N_1}(x)|.
\end{equation}
Hence using basic properties of the Gaussian distribution, we obtain (provided that $h$ is sufficiently small)
\begin{equation}\label{krix}
 \max_{x\in \tilde \gL'_{N_1}}\bP^{\hat \phi}_{2N_1}\big(\phi_x\in[-1,1]\big) \le e^{-\frac{(u-1)^2}{2V_{N_1}}}.
\end{equation}
It requires  a bit more work to
 obtain a good lower bound for $\bP^{\hat \phi}_{2N_1}[\cC_{N_1}]$ which is valid for all values of $u$. 
 Fortunately we only need a rough estimate as the factor $N^2_1$ in \eqref{hpuissance} gives us a significant margin in the computation.

\medskip

Recall that $P_t^*$ denotes the two-dimensional heat-kernel with zero boundary condition on $\partial \gL_{2N_1}$. Let us set 
\begin{equation}
V'_{N_1}:= \min_{x \in \tilde \gL'_{N_1}} \int_{(\log N_1)^8}^{\infty} P^*_t(x,x) \dd t.
\end{equation}
From the estimates in Lemma  \ref{lem:kerestimate}, we can deduce that
 \begin{equation}\label{boundvn}
\left|V'_{N_1}-  \frac{1}{2\pi}\left(\log N_1 - 4 \log \log N_1\right)\right|- C.
\end{equation}
For instance we have 
\begin{equation}\label{okless}
\left| \int_{0}^{(\log N_1)^8} P^*_t(x,x) \dd t - \frac{2}{\pi} \log \log N_1 \right|\le \frac{C}{2}.
\end{equation}
for some appropriate $C$ (the estimate is obtained using \eqref{kilcompare} and \eqref{lclt}) so that the result can be deduced from the 
estimate in the Green-function \eqref{eq:stimagreen}.

\begin{proposition}\label{boundforcluster}
For all $h$ sufficiently small, for all
 $\hat \phi$ satisfying $|\hat \phi|_{\infty}<|\log h|^2$, and all $u\in (0, (2\log N_1)^2)$ we have 
 \begin{equation}
  \bP^{\hat \phi}_{2N_1}[\cC_{N_1}]\ge c(\log N_1)^{-1} e^{-\frac{u^2}{2V'_{N_1}}},
 \end{equation}
\end{proposition}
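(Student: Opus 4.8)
The plan is to realise the whole cluster inside a single $\ell^1$-ball of radius $(\log N_1)^2$ around a well chosen point, by a one-point conditioning on the ``coarse'' part of the field followed by a second moment argument on the ``fine'' part. Write $H:=H^{\hat\phi}_{2N_1}$ and keep the notation $u=\min_{x\in\tilde\gL'_{N_1}}|H(x)|$ from the text preceding the statement (the minimum over $\tilde\gL'_{N_1}$ is what makes \eqref{krix} consistent); by the $\phi\mapsto-\phi$ symmetry of $\bP_{2N_1}$ and of $\cC_{N_1}$ we may assume $H\ge0$ at the minimizer. Since $H$ is discrete harmonic in $\mathring\gL_{2N_1}$ with $\|H\|_\infty\le|\log h|^2$, the random walk representation \eqref{RWrepresent} (coupling two walks) gives $|H(x)-H(y)|\le C\frac{|x-y|}{N_1}\|H\|_\infty$ for $x,y$ at distance $\ge N_1/4$ from $\partial\gL_{2N_1}$; hence one can pick $x_0$ in the $(\log N_1)^2$-interior of $\tilde\gL'_{N_1}$ with $B_*:=B(x_0,(\log N_1)^2)\subset\tilde\gL'_{N_1}$, $H(x_0)=:u'\in[u,u+\gep]$ and $\max_{z\in B_*}|H(z)-u'|\le\gep$, where $\gep:=C(\log N_1)^4/N_1\to0$.

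First I would split the field under $\bP_{2N_1}$ as $\phi=\chi+\psi$ with $\chi\perp\psi$ centered Gaussian, covariances $\int_T^\infty P^*_t(\cdot,\cdot)\,\dd t$ and $\int_0^T P^*_t(\cdot,\cdot)\,\dd t$, where $T=(\log N_1)^8$ is the time appearing in the definition of $V'_{N_1}$ (both kernels are positive semidefinite, being $P^*_TG^*$ and $(I-P^*_T)G^*$, which commute with $G^*$). Since the points of $B_*$ lie at distance $\Theta(N_1)\gg\sqrt T$ from $\partial\gL_{2N_1}$, Lemmas \ref{Greenesteem}, \ref{lem:kerestimate} and \eqref{okless} give: $\sigma_0^2:=\Var(\chi_{x_0})\in[V'_{N_1},\tfrac1{2\pi}\log N_1+C]$ (the lower bound because $x_0\in\tilde\gL'_{N_1}$); $\Var(\chi_z-\chi_{z'})\le C|z-z'|^2/T$ for $z,z'\in B_*$, whence $\bP[F^c]\le e^{-c(\log N_1)^4}$ for $F:=\{\max_{z\in B_*}|\chi_z-\chi_{x_0}|\le1\}$; and, for $z,z'\in B_*$, $\Var(\psi_z)=\tfrac2\pi\log\log N_1+O(1)$ and $\Var(\psi_z-\psi_{z'})\ge c(1+\log|z-z'|)$.

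Next I would condition on $\chi$ and work on $G:=F\cap\{\chi_{x_0}\in[-u',-u'+1]\}$. On $G$ one has $\chi_z+H(z)\in[-2,3]$ for every $z\in B_*$, so setting $a_z:=-1-\chi_z-H(z)\in[-4,1]$ the event $\{\gd^{\hat\phi}_z=1\}$ equals $\{\psi_z\in[a_z,a_z+2]\}$. Since $G$ is $\chi$-measurable and $\psi\perp\chi$, it suffices to bound below, on $G$, the conditional probability that $Y:=\#\{z\in B_*:\psi_z\in[a_z,a_z+2]\}$ is $\ge(\log N_1)^3$. As $|a_z|=O(1)$ is negligible against $\sqrt{\Var(\psi_z)}$, each term has conditional probability $\ge c/\sqrt{\log\log N_1}$, so $\bE[Y\mid\chi]\ge c|B_*|/\sqrt{\log\log N_1}$, which is far above $(\log N_1)^3$; and the pair bound $\bP[\psi_z\in[a_z,a_z+2],\,\psi_{z'}\in[a_{z'},a_{z'}+2]\mid\chi]\le C(\log\log N_1)^{-1/2}(\log(2+|z-z'|))^{-1/2}$ (from the joint Gaussian density together with $\Var(\psi_z-\psi_{z'})\ge c(1+\log|z-z'|)$), summed over $z,z'\in B_*$, yields $\bE[Y^2\mid\chi]\le C\,\bE[Y\mid\chi]^2$. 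Paley--Zygmund then gives $\bP[Y\ge(\log N_1)^3\mid\chi]\ge c_0>0$ on $G$; and since $\{Y\ge(\log N_1)^3\}$ forces $\cC_{N_1}$ (take the reference point $x=x_0$, all relevant $z$ lying in $B_*\subset\tilde\gL'_{N_1}$), we obtain $\bP^{\hat\phi}_{2N_1}[\cC_{N_1}]\ge c_0\,\bP[G]\ge c_0(\bP[\chi_{x_0}\in[-u',-u'+1]]-\bP[F^c])$.

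The final step, which is the delicate one, is to estimate $\bP[\chi_{x_0}\in[-u',-u'+1]]$ without losing a super-polylogarithmic factor: a two-sided window centred at $-u'$ bounded below by the density at its far endpoint produces $e^{-(u'+1)^2/2\sigma_0^2}$, whose cross term $e^{-cu'/\log N_1}=e^{-c(\log N_1)^{1/4}}$ (at the relevant scale $u\approx(\log N_1)^{5/4}$) is already too large. Instead, using the one-sided window $[-u',-u'+1]$ adjacent to $-u'$ and the monotonicity of $s\mapsto e^{-s^2/2\sigma_0^2}$ in $|s|$, the density there is $\ge(2\pi\sigma_0^2)^{-1/2}e^{-u'^2/2\sigma_0^2}$ when $u'\ge1$ (and $\ge c(2\pi\sigma_0^2)^{-1/2}$ when $u'<1$, covering small $u$), so $\bP[\chi_{x_0}\in[-u',-u'+1]]\ge\tfrac{c}{\sqrt{\log N_1}}e^{-u'^2/2\sigma_0^2}$ with no cross term. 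Combining with $u'\le u+\gep$ (so $u'^2\le u^2+o(\log N_1)$), $\sigma_0^2\ge V'_{N_1}$, and $\bP[F^c]\le e^{-c(\log N_1)^4}\ll e^{-u^2/2V'_{N_1}}$ (which uses $u<(2\log N_1)^2\Rightarrow u^2/2V'_{N_1}\le C(\log N_1)^3$) gives $\bP^{\hat\phi}_{2N_1}[\cC_{N_1}]\ge c(\log N_1)^{-1}e^{-u^2/2V'_{N_1}}$. I expect the bulk of the work to be in the variance and covariance estimates of the second paragraph (extracting from Lemmas \ref{Greenesteem}--\ref{lem:kerestimate} precisely the comparisons $\sigma_0^2\ge V'_{N_1}$ and ``$\chi$ essentially constant on $B_*$'') and in the second moment computation, which however stays in the easy regime because the target $(\log N_1)^3$ is far below $\bE[Y\mid\chi]$ and the shifts $a_z$ are bounded.
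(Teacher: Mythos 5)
Your proof follows the same strategy as the paper's: decompose $\phi$ into a coarse field $\chi$ (covariance $\int_{(\log N_1)^8}^\infty P^*_t$, the paper's $\phi_1$) and a fine field $\psi$ (the paper's $\phi_2$), pin $\chi$ near $-u$ at a point close to the minimiser of $|H|$, use the smoothness of $\chi$ and of $H$ on the $(\log N_1)^2$-ball, and then show the fine field produces $\gtrsim(\log N_1)^3$ near-zeros. Two points where you deviate in a way worth noting. First, for the one-point conditioning you use a one-sided window $[-u',-u'+1]$ so that the density is bounded below by its value at the far endpoint $-u'$ with no cross term; this is cleaner than the paper's symmetric window in \eqref{onyva}, whose bound, taken literally, loses a factor $e^{-u/4\sigma^2}$ which is not innocuous for $u$ in the upper part of the allowed range $u<(2\log N_1)^2$ (though it turns out to be harmless where the proposition is actually applied in \eqref{zorrib}). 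Your observation about this is accurate. Second, for the count in the fine field the paper avoids second moments entirely: it exploits that the count is deterministically bounded by $C(\log N_1)^4$ and applies a one-sided reverse-Markov inequality, yielding a probability $\gtrsim(\log\log N_1)^{-1/2}$, which suffices. Your Paley--Zygmund argument is equally valid and even gives a constant, but it needs the extra two-point input $\Var(\psi_z-\psi_{z'})\ge c(1+\log|z-z'|)$ together with the near-constancy of $\Var(\psi_z)$ from \eqref{okless} to control the conditional variance in the pair bound; these are standard heat-kernel estimates but are not packaged in Lemma \ref{lem:kerestimate}, so you would have to supply a short proof. Both routes reach the stated bound (yours in fact gives the slightly stronger $(\log N_1)^{-1/2}$ prefactor).
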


Combining the above result with \eqref{krix} and \eqref{xfiles}
We have 

\begin{multline}\label{zorrib}
  \max_{\{ \hat \phi \ : \ |\hat \phi|\le |\log h|^2 \}} \log 
 \bE^{\hat \phi}_{2N_1}\left[\exp\left( 4h\sum_{x \in \tilde \gL_{N_1}}
   \delta_x- 4\ind_{\cC_{N_1}} \right) \right]\\\le 
   \sup_{u>0}\left( 5N_1^{-2} e^{-\frac{(u-1)^2}{2V_{N_1}}}-  c( 2\log N_1)^{-1}e^{-\frac{u^2}{2V'_{N_1}}}\ind_{\{ u \le (\log N_1)^2\}} \right)
  \\ = \sup_{u>0} \frac{5e^{-\frac{(u-1)^2}{2V_{N_1}}}}{N_1^2}\left[1- \frac{cN_1^2}{10(\log N_1)} e^{-
  \frac{u^2(V_{N_1}-V'_{N_1})}{2V'_{N_1}V_{N_1}}-\frac{2u-1}{2V_{N_1}}} \ind_{\{ u \le (\log N_1)^2\}}\right].
\end{multline}
Now note that for the second factor to be positive, we need one of the terms in the exponential to be at least of order $\log N_1$ in absolute value.
Using the estimates we have for $V'_{N_1}$ and $V_{N_1}$, we realize that the exponential term is larger than
$$ c\exp\left(-\frac{cu^2(\log \log N_1)}{\log N_1}\right),$$
and hence the expression is negative if 
$u^2\le c (\log N_1)^3 (\log \log N_1)^{-1},$
for some small $c$.
For the other values of $u$ we can just consider the first factor which already gives a satisfying bound, and we can conclude that the l.h.s.\ of \eqref{zorrib}
is smaller than
 \begin{equation}
 e^{-\frac{c (\log N_1)^2}{\log \log N_1}}
   \le e^{-c |\log h|^{3/2}}.
\end{equation}

\subsection{Decomposing the proof of Proposition \ref{boundforcluster}}

We show here how to split the proof the proposition into three lemmas which we prove in the next subsection.
Set 
\begin{equation}
 x_{\min}:=\argmin_{x\in \tilde \gL'_N} |H^{\hat \phi}_{2N_1}(x)|,
\end{equation}
(it is not necessarily unique but in the case it is not we choose one minimizer in a deterministic manner)
and 
\begin{equation}
\hat \gL:= \{ z \in \tilde\gL'_{N_1} \ : \  |x_{\min}-z|\le (\log N_1)^2 \}
\end{equation}
We bound from below the probability of $\cC_{N_1}$ by only examining the possibility of having a cluster of contact around $x_{\min}$.
Using \eqref{greenff} we have
\begin{multline}\label{kxmin}
 \bP^{\hat \phi}_{2N_1}[ \cC_{N_1}]\le  \bP^{\hat \phi}_{2N_1}\left[ \sum_{z\in \hat \gL} \delta_z \ge 
 (\log N_1)^3 \right]\\
 =  \bP_{2N_1}\left[ \sum_{z \in \hat \gL} \ind_{[-1,1]}\left(\phi_z+H^{\hat \phi}_{2N_1}(x)\right)\ge 
 (\log N_1)^3 \right].
\end{multline}

To estimate the last probability, we first remark that for $x \in \hat \gL$, $H^{\hat \phi}_{2N_1}(x)$ is very close to $H^{\hat \phi}_{2N_1}(x_{\min})$
which we assume to be equal to $-u$ for the rest of the proof (the case $H^{\hat \phi}_{2N_1}(x_{\min})=+u$ is exactly similar).
The factor $\log N_1$ in the estimate is not necessary, but it yields a much simpler proof.

\begin{lemma}\label{regular}
We have for all $x, y \in \tilde \gL'_{N_1}$
\begin{equation}
\left| H^{\hat \phi}_{2N_1}(x)-H^{\hat \phi}_{2N_1}(y)  \ \right| \le \frac{C |\hat \phi|_{\infty}(\log N_1)|x-y|}{N_1}. 
\end{equation}
In particular if $h$ is sufficiently small, $|x-y|\le (\log N_1)^2$ and $|\hat \phi|_{\infty}\le |\log h|^2$,
we have 
\begin{equation}
 |H^{\hat \phi}_{2N_1}(x)-H^{\hat \phi}_{2N_1}(y)|\le 1/4.
\end{equation}
\end{lemma}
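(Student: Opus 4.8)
\textbf{Proof plan for Lemma \ref{regular}.}

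The plan is to exploit the random-walk representation \eqref{RWrepresent} of the harmonic extension $H^{\hat\phi}_{2N_1}$ (here with $m=0$, so $H^{\hat\phi}_{2N_1}(x)=E_x[\hat\phi(X_{\tau_{\partial\gL_{2N_1}}})]$) together with a coupling of the simple random walks started at $x$ and at $y$. First I would write the difference as
\begin{equation}
H^{\hat\phi}_{2N_1}(x)-H^{\hat\phi}_{2N_1}(y)=E\left[\hat\phi\bigl(X^x_{\tau^x}\bigr)-\hat\phi\bigl(X^y_{\tau^y}\bigr)\right],
\end{equation}
where $(X^x,X^y)$ is a coupling of two simple random walks with $X^x_0=x$, $X^y_0=y$, and $\tau^x,\tau^y$ are their exit times from $\gL_{2N_1}$. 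Using $|\hat\phi|_\infty\le|\log h|^2$, the difference is bounded by $2|\hat\phi|_\infty\,\bP(X^x_{\tau^x}\neq X^y_{\tau^y})$ under a coupling that makes the two walks coalesce before exiting whenever possible, so everything reduces to estimating the probability that the two walks exit the box at different points.

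The key step is the estimate on this disagreement probability. For $x,y$ both in $\tilde\gL'_{N_1}$ — hence at distance of order $N_1$ from $\partial\gL_{2N_1}$ — one runs the standard reflection/mirror coupling: the two walks move together in one coordinate until they agree, while in the other coordinate they take mirrored steps until coalescence. The probability that the walks have not coalesced by the time either has traveled distance comparable to $N_1$ is, by a gambler's-ruin / martingale estimate for the coordinate difference (a lazy simple random walk of step size $O(1)$ started at distance $|x-y|$ from $0$, killed at the boundary of a region of size $N_1$), of order $|x-y|/N_1$; and the extra logarithmic factor $\log N_1$ in the statement comfortably absorbs the probability that one walk exits the box of size $N_1$ before this coalescence time occurs, as well as lower-order corrections coming from the discrete geometry of the square (corners, etc.). This is exactly the kind of harmonic-measure regularity estimate that is classical for the square (see e.g.\ the Green-function/heat-kernel bounds recalled in Lemma \ref{lem:kerestimate}), and the deliberately non-sharp $\log N_1$ is there precisely so one need not optimize the coupling.

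The main obstacle is controlling boundary effects: near the corners and edges of $\partial\gL_{2N_1}$ the harmonic measure is not as smooth as in the interior, so one must make sure the coupling argument is run only while both walks remain in the bulk and then handle the event that a walk reaches distance $O(N_1)$ of the boundary before coalescing by a crude union bound — this is where the $\log N_1$ slack is spent. Once the bound
\begin{equation}
\bigl|H^{\hat\phi}_{2N_1}(x)-H^{\hat\phi}_{2N_1}(y)\bigr|\le \frac{C|\hat\phi|_\infty(\log N_1)|x-y|}{N_1}
\end{equation}
is established, the second assertion is immediate: with $N_1=h^{-1/4}$, $|x-y|\le(\log N_1)^2$ and $|\hat\phi|_\infty\le|\log h|^2$, the right-hand side is $C|\log h|^2(\log N_1)^3 h^{1/4}\to 0$, so it is below $1/4$ for $h$ small.
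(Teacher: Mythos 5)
Your strategy is exactly the paper's: use the random-walk representation \eqref{RWrepresent} for $H^{\hat\phi}_{2N_1}$, run a coordinatewise coupling of the walks started at $x$ and $y$, and bound the difference by $|\hat\phi|_\infty$ times the probability that a walk exits $\gL_{2N_1}$ before the two have met, then reduce to a one-dimensional estimate by a union bound over the two coordinates.

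The one place that needs repair is the one-dimensional estimate, which as phrased (\emph{gambler's ruin for the coordinate difference, ``killed at the boundary of a region of size $N_1$''}) does not quite describe the right stopping event. Exiting the box is a condition on the position $Y^x$ of the walk, not on the difference $Y^x-Y^y$, and under the independent lazy coupling (which your ``lazy step size $O(1)$'' suggests, as the paper uses) these two quantities are carried by genuinely different pieces of the randomness, so a single one-dimensional gambler's ruin for the difference does not directly bound $P[T'<T_{x,y}]$. The paper resolves this with a time split: with $T'$ the exit time of $Y^x$ from $\lint 0,2N_1\rint$ and $T_{x,y}$ the coalescence time of the first coordinates, write $P[T'<T_{x,y}]\le P[T'\le t]+P[T_{x,y}>t]$, combine $P[T_{x,y}>t]\le C|x_1-y_1|t^{-1/2}$ (the difference is a nearest-neighbour walk at rate $2$) with $P[T'\le t]\le 2e^{-cN_1^2/t}$ (using $x_1\in\lint N_1/2,3N_1/2\rint$, so distance $\ge N_1/2$ from $\{0,2N_1\}$), and choose $t\asymp N_1^2/(\log N_1)^2$; the $\log N_1$ in the statement is exactly the slack this costs. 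Alternatively, if you commit to the true mirror coupling then $Y^x+Y^y$ is frozen until coalescence, so exit of either walk becomes a condition on $Y^x-Y^y$ alone and the gambler's-ruin estimate does go through cleanly (with the routine parity fix when $x_1-y_1$ is odd), in fact without the $\log N_1$. Your final numerical check, plugging $N_1=h^{-1/4}$, $|x-y|\le(\log N_1)^2$ and $|\hat\phi|_\infty\le|\log h|^2$ into the bound, is correct.
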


\medskip

Then to estimate the probability for $\phi$ to form a cluster of point close to height $u$, we decompose the field 
$(\phi_x)_{x\in \hat \gL}$ into a rough field $\phi_1$ which is 
almost constant on the scale $(\log N_1)^2$ and an independent field  $\phi_2$ which accounts for the local variations of $\phi$.
We set 
\begin{equation}
\begin{split}
Q^1(x,y):=  \int_{(\log N)^8}^{\infty} P^*_t(x,y) \dd t,\\
Q^2(x,y):=  \int_0^{(\log N)^8} P^*_t(x,y) \dd t.
\end{split}
\end{equation}
We let $(\phi_1(x))_{x\in \gL_{2N_1}}$ and $(\phi_2(x))_{x\in \gL_{2N_1}}$ denote two independent centered fields with respective covariance function
$Q^1$ and $Q^2$.
By construction the law of $\phi_1+\phi_2$ has a law given by $\bP_{2N_1}$, and thus we set for the remainder  of the proof
\begin{equation}
\phi:= \phi_1+\phi_2,
\end{equation}
and use $\bP_{2N_1}$ to denote the law of $(\phi_1, \phi_2)$. We have by standard properties of Gaussian variables that for every $u>0$, and for $h$ sufficiently small
\begin{equation}\label{onyva}
 \bP_{2N_1}\left[\phi_1(x_{\min})\in [u-1/4,u+1/4]\right]\ge \frac{1}{4 \sqrt{ 2\pi V_{N_1} }}e^{-\frac{u^2}{2V'_{N_1}}} \ge \frac{1}{5 \sqrt{\log N_1}}e^{-\frac{u^2}{2V'_{N_1}}}.
\end{equation}
Now we have to check that the field  $\phi_1$ remains around level $u$ on the whole box $\hat \gL$.

\medskip

\begin{lemma}\label{locareg}
There exists a constant $c$ such that for all $h$ sufficiently small we have
\begin{equation}
\bP_{2N_1}\left[  \exists y\in \hat \gL, \ |\phi_1(y)-\phi_1(x_{\min})| >1/4 \right] \le  e^{-c(\log N_1)^4}.
\end{equation}

\end{lemma}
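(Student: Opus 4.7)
The plan is to estimate the variance of the increment $\phi_1(y) - \phi_1(x_{\min})$ for $y \in \hat\gL$, show it is tiny (of order $(\log N_1)^{-4}$), and then conclude by a Gaussian tail bound followed by a union bound.

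First, since $\phi_1$ is a centered Gaussian field with covariance
$$Q^1(x,y) = \int_{(\log N_1)^8}^{\infty} P^*_t(x,y)\,\dd t,$$
the increment $\phi_1(y) - \phi_1(x_{\min})$ is a centered Gaussian random variable with variance
$$\sigma^2(y) := \int_{(\log N_1)^8}^{\infty}\bigl( P^*_t(x_{\min},x_{\min}) + P^*_t(y,y) - 2 P^*_t(x_{\min},y) \bigr)\, \dd t.$$
For $y\in \hat\gL$ we have $|y-x_{\min}|\le (\log N_1)^2$, and for every $t\ge (\log N_1)^8$ the condition $|y-x_{\min}|\le \sqrt{t}$ is satisfied, so Lemma~\ref{lem:kerestimate}(i) applies and gives
$$P^*_t(x_{\min},x_{\min}) + P^*_t(y,y) - 2 P^*_t(x_{\min},y) \le \frac{C|y-x_{\min}|^2}{t^2}\le \frac{C(\log N_1)^4}{t^2}.$$
Integrating in $t$ from $(\log N_1)^8$ to infinity yields
$$\sigma^2(y) \le \frac{C (\log N_1)^4}{(\log N_1)^8} = \frac{C}{(\log N_1)^4}.$$

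Next, applying the Gaussian tail bound \eqref{gtail} to the single increment gives, for each $y\in \hat\gL$,
$$\bP_{2N_1}\bigl[ |\phi_1(y) - \phi_1(x_{\min})| > 1/4 \bigr] \le 2\exp\!\left( -\frac{1}{32\, \sigma^2(y)}\right)\le 2\exp\!\left( -\frac{(\log N_1)^4}{32 C}\right).$$
Finally we apply the union bound over $y\in \hat\gL$. Since $|\hat\gL| \le C(\log N_1)^4$, the prefactor is polylogarithmic in $N_1$ and is swallowed by the stretched-exponential decay (at the cost of a slightly smaller constant $c$):
$$\bP_{2N_1}\bigl[ \exists y\in \hat\gL,\ |\phi_1(y) - \phi_1(x_{\min})| > 1/4 \bigr] \le C(\log N_1)^4 \exp\!\left( -\frac{(\log N_1)^4}{32 C}\right) \le e^{-c(\log N_1)^4},$$
valid for $N_1$ large (equivalently, $h$ small). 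There is no real obstacle here: the whole point of cutting the covariance at time scale $(\log N_1)^8$ was precisely to make the field $\phi_1$ smooth at spatial scales up to $(\log N_1)^4 \gg (\log N_1)^2$, so the increment bound is essentially immediate from Lemma~\ref{lem:kerestimate}(i); the only choice that needs attention is verifying that $(\log N_1)^8$ is indeed large enough compared to $|y-x_{\min}|^2$ to apply the gradient estimate, which is satisfied by construction.
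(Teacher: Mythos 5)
Your proof is correct and follows essentially the same route as the paper: write the variance of the increment as the integral of $P^*_t(x_{\min},x_{\min})+P^*_t(y,y)-2P^*_t(x_{\min},y)$ over $t\ge(\log N_1)^8$, bound it by $C(\log N_1)^{-4}$ via the gradient estimate \eqref{gradientas}, and finish with the Gaussian tail bound and a union bound over the polylog-size set $\hat\gL$. (You have in fact written the variance identity with the correct signs and normalization, whereas the paper's displayed expression contains a typo.)
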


\medskip

Finally we show that it is rather likely for $\phi_2$ to have a lot of points around level zero.

\medskip

\begin{lemma}\label{lastep}
There exists a constant $c$ such that for all $h$ sufficiently small we have
\begin{equation}\label{hooop}
 \bP_N\left[ \sum_{z\in \hat \gL} \ind_{\{ |\phi_2(z)|\le 1/4\} } \ge (\log N_1)^3 \right]\ge c(\log\log  N_1)^{-1/2}.
\end{equation}
\end{lemma}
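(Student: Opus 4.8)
Looking at Lemma \ref{lastep}, the plan is to estimate the probability that the local-fluctuation field $\phi_2$ has at least $(\log N_1)^3$ points in $\hat\gL$ at height within $1/4$ of zero, where $|\hat\gL| = (2(\log N_1)^2+1)^2 \sim 4(\log N_1)^4$ (so asking for $(\log N_1)^3$ good points is asking for only a small fraction of the available sites). The key structural fact is that $\phi_2$ has covariance $Q^2(x,y) = \int_0^{(\log N_1)^8} P^*_t(x,y)\dd t$, so its pointwise variance is $V'' := Q^2(x,x)$, which by \eqref{okless} (and the surrounding estimates) is of order $\frac{2}{\pi}\log\log N_1 + O(1)$. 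Hence for any single $z$, $\bP[|\phi_2(z)|\le 1/4]$ is of order $(\log\log N_1)^{-1/2}$ by the Gaussian density (the variance being $\sim \frac{2}{\pi}\log\log N_1$). So the expected number of good points is of order $(\log N_1)^4 (\log\log N_1)^{-1/2}$, comfortably above the target $(\log N_1)^3$; the issue is turning an expectation bound into a lower bound on the probability of the event, given that the $\phi_2(z)$ are correlated.

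\textbf{First step.} I would establish the one-point estimate: for each $z\in\hat\gL$,
\begin{equation}
\bP_{2N_1}\left[|\phi_2(z)|\le 1/4\right] \ge \frac{c}{\sqrt{\log\log N_1}},
\end{equation}
using $\Var(\phi_2(z)) = Q^2(z,z) \le \frac{2}{\pi}\log\log N_1 + C$ from \eqref{okless} together with the fact that $\phi_2(z)$ is centered Gaussian (so its density at $0$ is at least $c/\sqrt{\Var}$). This gives $\bE[\sum_{z\in\hat\gL}\ind_{\{|\phi_2(z)|\le 1/4\}}] \ge c(\log N_1)^4(\log\log N_1)^{-1/2}$.

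\textbf{Second step, and the main obstacle.} To go from this first-moment bound to a high-probability statement, I would use a second-moment / Paley–Zygmund argument on $S := \sum_{z\in\hat\gL}\ind_{\{|\phi_2(z)|\le 1/4\}}$. For this I need to control $\bE[S^2] = \sum_{z,z'}\bP[|\phi_2(z)|\le 1/4, |\phi_2(z')|\le 1/4]$. The off-diagonal terms require a correlation estimate on $\phi_2$: I claim that for $z\ne z'$ in $\hat\gL$ with $|z-z'| = r \ge 1$, the covariance $Q^2(z,z')$ is smaller than $Q^2(z,z)$ by roughly $\frac{2}{\pi}\log r + O(1)$ (since cutting the heat-kernel integral at $(\log N_1)^8$ — a scale much larger than $(\log N_1)^2 \ge r$ — the pair $(z,z')$ behaves like points at distance $r$ in a log-correlated field of total variance $\sim\frac{2}{\pi}\log\log N_1$, and this can be read off from Lemma \ref{lem:kerestimate}(i) and \eqref{lclt} by comparing $\int_{r^2}^{(\log N_1)^8}[P^*_t(z,z) - P^*_t(z,z')]\dd t$, which is $O(1)$, against the diagonal contribution). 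Then a Gaussian computation shows the pair probability is at most $C/(\log\log N_1)$ for $r$ bounded and decays like a power of $r$ for larger $r$, so that $\bE[S^2] \le C(\bE[S])^2$ provided $\bE[S]$ is large — which it is, being $\ge c(\log N_1)^4(\log\log N_1)^{-1/2} \to \infty$. The delicate point is making the two-point bound uniform and summable: one needs $\sum_{z'\in\hat\gL}\bP[|\phi_2(z)|\le 1/4,\,|\phi_2(z')|\le 1/4] = O(\bE[S]\cdot(\log\log N_1)^{-1/2})$ roughly, i.e.\ the conditional probability $\bP[|\phi_2(z')|\le 1/4\mid |\phi_2(z)|\le 1/4]$ summed over $z'$ should not blow up the count — this is where the logarithmic decorrelation of $\phi_2$ at scales $1 \le r \le (\log N_1)^2$ is essential.

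\textbf{Third step.} With $\bE[S^2]\le C(\bE[S])^2$ in hand, Paley–Zygmund gives $\bP[S \ge \frac12\bE[S]] \ge c$, and since $\frac12\bE[S] \ge \frac{c}{2}(\log N_1)^4(\log\log N_1)^{-1/2} \gg (\log N_1)^3$ for $N_1$ large, we get $\bP[S\ge(\log N_1)^3] \ge c$. To recover the precise $(\log\log N_1)^{-1/2}$ in the statement rather than a pure constant, one can instead run a more careful truncation: rather than Paley–Zygmund, first reduce to a subset $\hat\gL'\subset\hat\gL$ of size exactly $\Theta((\log N_1)^3\sqrt{\log\log N_1})$ chosen with spacing $\ge$ some fixed constant, apply the same second-moment argument on $\hat\gL'$ (where now $\bE[S'] \approx (\log N_1)^3$ and the variance is still controlled, so $\bP[S' \ge (\log N_1)^3] \ge c\bE[S']^2/\bE[S'^2]$), but to get the extra factor one notes $\bE[S'^2]/\bE[S']^2$ itself carries a $\sqrt{\log\log N_1}$; alternatively, one concedes that the bound in \eqref{hooop} with $(\log\log N_1)^{-1/2}$ on the right is weaker than a constant and follows directly from a single application of Paley–Zygmund combined with the crude lower bound $\bP[S\ge(\log N_1)^3]\ge \bP[\phi_1\text{-type event}]$, but the cleanest route is the $\hat\gL'$-reduction. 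I expect the correlation/second-moment bookkeeping of the second step to be the real work; everything else is standard Gaussian estimates feeding off Lemma \ref{lem:kerestimate}.
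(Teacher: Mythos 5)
Your proposal would work, but it takes a genuinely different and considerably heavier route than the paper. The paper's proof never touches the second moment of $S:=\sum_{z\in\hat\gL}\ind_{\{|\phi_2(z)|\le1/4\}}$. It only needs two facts: (a) $\bE[S]\ge c(\log N_1)^4(\log\log N_1)^{-1/2}$, which is your first step, and (b) the deterministic bound $S\le|\hat\gL|\le C(\log N_1)^4$. From these, a one-sided (``reverse Markov'') inequality does everything: splitting $\bE[S]$ on the event $\{S\ge\bE[S]/2\}$ and its complement gives
\begin{equation*}
\bP\bigl[S\ge\tfrac12\bE[S]\bigr]\;\ge\;\frac{\bE[S]}{2C(\log N_1)^4}\;\ge\;c'(\log\log N_1)^{-1/2},
\end{equation*}
and since $\tfrac12\bE[S]\gg(\log N_1)^3$, the lemma follows. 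This neatly explains where the specific exponent $(\log\log N_1)^{-1/2}$ in the statement comes from: it is precisely the ratio $\bE[S]/|\hat\gL|$.

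Your second-moment/Paley--Zygmund plan is correct in outline and, if pushed through, would actually prove the \emph{stronger} statement $\bP[S\ge(\log N_1)^3]\ge c$ with a constant. (The heuristic covariance computation works out: the dominant contribution to $\sum_{z'}\bP[\text{pair}]$ comes from $|z-z'|\sim(\log N_1)^2$, where the correlation has decayed enough that $\bE[S^2]\lesssim(\bE[S])^2$.) But this requires genuine work — careful two-point kernel estimates for $Q^2$ at all intermediate scales $1\le r\le(\log N_1)^2$, including the near-diagonal regime where the correlation is close to $1$ — whereas the paper gets by with no correlation estimate at all beyond the variance. Also, your ``Third step'' has the logic inverted: Paley--Zygmund already yields a constant, which is \emph{stronger} than the stated $(\log\log N_1)^{-1/2}$ bound, so there is nothing to ``recover''; the proposed truncation to a sparse sub-lattice $\hat\gL'$ is unnecessary and would only complicate things. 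Finally, your covariance-drop claim ``$Q^2(z,z)-Q^2(z,z')\approx\frac{2}{\pi}\log r$'' is off by a factor of $4$; integrating the local CLT from $r^2$ to $(\log N_1)^8$ gives a drop of $\frac{1}{2\pi}\log r+O(1)$. This does not affect the order of magnitude of your bookkeeping, but is worth getting right if one actually carries out the second-moment argument.
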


\medskip

\noindent We can now combine all these ingredient into a proof 

\medskip

\begin{proof}[Proof of Proposition \ref{boundforcluster}]

According to Lemma \ref{regular}, if $|x_{\min}-z|\le (\log N_1)^2$ we have 
\begin{multline}
\left\{ (\phi_z+H^{\hat \phi}_{2N_1}(z))\in [-1,1] \right\} \supset \left\{ \phi_z \in [-3/4+u,3/4+u] \right\} \\
\supset 
\left\{ \ \left |\phi_1(x_{\min})-u  \right| \le 1/4  \right\}\cap \left\{ |\phi_1(x_{\min})-\phi_1(z)| \le 1/4 \right\} \cap \left\{ |\phi_2(z)| \le 1/4 \right\}.
\end{multline}
Thus we obtain as a consequence 
\begin{multline}
\left\{ \sum_{z \in \hat \gL} \ind_{[-1,1]}(\phi_z+H^{\hat \phi}_{2N_1}(x)) \ge
 (\log N_1)^3\right\} \\
 \subset   \left\{  \ \left|\phi_1(x_{\min})-\phi_1(z) \right| \le 1/4 \right\} \cap \left\{  \forall z\in \hat \gL, \  |\phi_1(x_{\min})-\phi_1(z)| \le 1/4 \right\}\\
 \cap  \Big\{  \sum_{z\in \hat \gL} \ind_{\{ |\phi_2(z)|\le (1/4)\} } \ge (\log N_1)^3 \Big\}.
\end{multline}
Using \eqref{onyva} combined with Lemmas \ref{locareg} and \ref{lastep} and the independence of $\phi_1$ and $\phi_2$ we conclude that 
\begin{multline}
\bP_{2N_1} \left[ \sum_{z \in \hat \gL} \ind_{[-1,1]}(\phi_z+H^{\hat \phi}_{2N_1}(x) \ge 
 (\log N_1)^3 \right] \\
 \ge \left[  \frac{c}{\sqrt{ \log N_1 }}e^{-\frac{u^2}{2V'_{N_1}}} - e^{-c(\log N)^4} \right]c(\log N_1)^{-1/2} \ge \frac{c'}{(\log N_1)}e^{-\frac{u^2}{2V'_{N_1}}}.
\end{multline}
where the last inequality is holds if $u\le (\log N_1)^2$ and $h$ is sufficiently small.
We can thus conclude using \eqref{kxmin}.
\end{proof}

\subsection{Proof of the technical lemmas}

\begin{proof}[Proof of Lemma \ref{regular}]
Given $x, y \in \tilde \gL'_{N_1}$,
let $X^x$ and $X^y$ be two simple random walk starting from $x=(x_1,x_2)$ and $y=(y_1,y_2)$, and coupled as follows: the coupling is made as the product of two
one-dimensional couplings, along each coordinate  the walk are independent until the coordinate match, then they move together. 
Let $\tau_{x,y}$ be the time where the two walks meet and  $\tau^x_{\partial\gL_{2N_1}}$ be the time
when $X^x$ hits the boundary.
Recalling \eqref{RWrepresent} we have
\begin{equation}
\left|H^{\hat\phi}_{2N_1}(y)- H^{\hat\phi}_{2N_1}(x)\right| \le |\hat \phi|_{\infty} P\left[ \tau_{x,y}< \tau^x_{\partial \gL_{2N_1}}\right].
\end{equation}
We conclude by showing that
\begin{equation}
  P\left[ \tau_{x,y}< \tau^x_{\partial \gL_{2N_1}} \right]< \frac{C|x-y| (\log N_1) }{N_1}
\end{equation}
By union bound, we can reduce to the one dimensional case. 
Let $Y^x$ and $Y^y$ denote the first coordinates of $X^x$ and $X^y$. Until the collision time, they are  two independent one dimensional random walk in 
$\lint 0 , 2N_1 \rint$ 
with initial condition $x_1$ and $y_1$  in 
 $\lint N_1/2 , 3N_1/2 \rint$. Let $T_{x,y}$ and $T'$ denote respectively their collision time and the first hitting time of $\{0,2N_1\}$ for $Y^x$.
We are going to show that 
\begin{equation}
  P\left[ T'<T_{x,y} \right]< C \frac{|x_1-y_1|(\log N_1)}{N_1}
\end{equation} 
Note that before collision, $Y^x-Y^y$ is a nearest neighbor random-walk with jump rate equal to $2$ and for that reason we have for any $t$
\begin{equation}
 P[ T_{x,y}>t]\le C |x_1-y_1| t^{-1/2}.
\end{equation}
On the other hand, we have for any $t\ge N_1^2$
\begin{equation}\label{sectr}
 P[T\le t] \le  2 \exp\left(- \frac{cN_1^2}{t} \right).
\end{equation}
We can conclude choosing $t=N^2_1 (\log N_1)^2$.
\end{proof}

\bigskip

\begin{proof}[Proof of Lemma \ref{locareg}]

We obtain the result simply by performing a union bound on $y\in \hat \gL$. Hence we only need to prove a bound on the variance 

\begin{multline}
\bE_{2N_1} \left[ \left(\phi_1(y)-\phi_1(x_{\min})\right)^2 \right] \\
 \le \int_{(\log N_1)^8}^{\infty} \left[P^*_t(x_{\min},x_{\min})- P^*_t(x_{\min},y)-P^*(y,y) \right]\dd t 
\end{multline}
Using \eqref{gradientas}, we obtain that for any $y\in \hat \gL$
\begin{equation}
\bE_{2N_1} \left[ \left(\phi_1(y)-\phi_1(x_{\min})\right)^2 \right]\le C(\log N_1)^{-4},
\end{equation}
and thus that 
\begin{equation}
\bP_{2N_1} \big[ \ |\phi_1(y)-\phi_1(x_{\min})|\ge 1/4 \big]\le |\hat \gL| e^{- c(\log N_1)^4 },
\end{equation}
which allows to conclude 

\end{proof}

\begin{proof}[Proof of Lemma \ref{lastep}]

We set 
$$J:= \sum_{\{z \ : \  |x_{\min}-z|\le (\log N_1)^2 \}} \ind_{\{\phi_2(z)\in [-1/4,1/4]\}}.$$
Using the fact that the sum is deterministically bounded by $C(\log N_1)^4$,
we have
\begin{equation}\label{chubby}
\bP_{2N_1} \left[  J \ge \frac{\bE_N[J]}{2} \right]\ge \frac{\bE_N[J]}{2C (\log N_1)^4}.
\end{equation}
From \eqref{okless}, we have for small $h$,
\begin{equation}
\Var(\phi_2(x))= Q^2(x,x)\le \log \log N_1.
\end{equation}
Then as $\phi_2(x)$ are centered Gaussians, we have
\begin{equation}
\bE_{2N_1} \left[  \sum_{\{z \ : \  |z-x_{\min}|\le (\log N_1)^2\}} \ind_{\{\phi_2(z)\in [-1/4,1/4]\}} \right]\ge c (\log N_1)^{4} (\log \log N_1)^{-1/2},
\end{equation}
which combined with \eqref{chubby} allows to conclude.

\end{proof}

\section{Finite volume criteria: adding mass and changing the boundary condition}\label{finicrit}

Let us remark that it seems technically easier to get a lower bound 
for $N^{-2}\bbE \left[\log Z^{\gb,\go}_{N,h}\right]$ for a given $N$ than to prove one directly for the limit.
However as there is no obvious sub-additivity property which allows to compare the two.

\medskip

In \cite{cf:GL}, for $d\ge 3$ we introduced the idea of replacing the boundary condition by an infinite volume free field in order to recover sub-additivity . 
In dimension $2$, the infinite volume free field does not exists as the variance diverges with the distance to the boundary of the domain.
A way to bypass the problem it to artificially introduce mass and then to find a comparison between the free energy of the system with massive free field
and the original one. This is the method that we adopted in our previous paper
(see \cite[Proposition 7.1 and Lemma 7.2]{cf:GL}). However our previous results turn out
out to be a bit two rough for our proof.
We present here an improvement of it (Proposition \ref{th:finitevol}) on which we build the proof of Theorem \ref{mainres}. 

\medskip

\subsection{A first finite volume criterion}\label{nonoptfinit}

Let us recall the comparison used in \cite{cf:GL}. Even 
it is not sufficient for our purpose in this paper, it will help us to explain the improvement presented in Section \ref{optfinit}.
Given $u>0$ and $m>0$,
we introduce the notation $$\delta^u_x:=\ind_{[u-1,u+1]}(\phi_x)$$ 
and set 
\begin{equation}
 Z^{\gb,\go,m}_{N,h,u}:= \bE^m_N\left[\exp\left( \sum_{x\in  \tilde \gL_N} (\gb \go_x-\gl(\gb)+h)\delta^u_x\right)\right].
\end{equation}
and 
\begin{equation}\label{massfree}
 \tf(\gb,h,m,u):=\lim_{N\to \infty}\frac{1}{N^d} \log Z^{\gb,\go,m}_{N,h,u}.
\end{equation}
The existence of the above limit is proved in \cite{cf:GL}.
We can compare this free-energy to the original one using the following result.

\begin{proposition}\label{massivecompa}
 We have for every $u$ and $m$
 \begin{equation}\label{massivecompar}
\tf(\gb,h,m,u)\le \tf(\gb,h)+f(m).
\end{equation}
where
\begin{equation}
f(m):= \frac 1 2\int_{[0,1]^2}  \log \left( 1+ \frac{m^2}{4 \left[\sin^2(\pi x/2)+ \sin^2(\pi y/ 2)\right]}\right) \dd x \dd y.
\end{equation}
There exists $C>0$ such that for every $m\le 1$ we have 
\begin{equation}\label{asymf}
 \left| f(m)-\frac{1}{4\pi}m^2|\log m| \right| \le Cm^2.
\end{equation}
Moreover for all $N$ we have 
\begin{equation}\label{subbadd}
 \tf(\gb,h,m,u)\ge \frac{1}{N^2} \hat \bE^m \bbE \left[ \log \bE^{m,\hat \phi}_N
 \left[ \exp\left( \sum_{x\in  \tilde \gL_N} (\gb \go_x-\gl(\gb)+h)\delta^u_x\right)\right]\right].
\end{equation}
\end{proposition}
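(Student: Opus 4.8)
The plan is to establish the three assertions essentially independently. For the upper bound \eqref{massivecompar}, the key observation is that the massive free field differs from the free field only through the extra Gaussian factor $\exp(-m^2\sum_{x\in\mathring\gL}\phi_x^2)$ in its density, so by the definition \eqref{massivedensity} one can write, for the partition function with zero boundary condition,
\begin{equation*}
 Z^{\gb,\go,m}_{N,h,u} = \frac{1}{\bE_N\left[\exp\left(-m^2\sum_{x\in\mathring\gL_N}\phi_x^2\right)\right]}\,
 \bE_N\left[\exp\left(\sum_{x\in\tilde\gL_N}(\gb\go_x-\gl(\gb)+h)\delta^u_x-m^2\sum_{x\in\mathring\gL_N}\phi_x^2\right)\right].
\end{equation*}
Dropping the nonpositive term $-m^2\sum\phi_x^2$ in the numerator bounds it above by $Z^{\gb,\go}_{N,h}$ (after translating the window $[u-1,u+1]$ back to $[-1,1]$ — here one uses that $\delta^u_x$ and $\delta_x$ give the same partition function up to a shift of the field, which does not change its law under $\bP_N$ with free boundary; more carefully, one compares $Z^{\gb,\go,m}_{N,h,u}$ with $Z^{\gb,\go,m}_{N,h}$ and then with $Z^{\gb,\go}_{N,h}$). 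The denominator is a pure Gaussian computation: $-\tfrac1{N^2}\log\bE_N[\exp(-m^2\sum_x\phi_x^2)]$ converges, by diagonalizing the discrete Laplacian on the torus/box via its Fourier modes $4(\sin^2(\pi j/2N)+\sin^2(\pi \ell/2N))$, to the Riemann integral $f(m)$. Taking $\frac1{N^2}\log$ and letting $N\to\infty$ yields \eqref{massivecompar}.

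The estimate \eqref{asymf} is a routine asymptotic analysis of the integral $f(m)$ as $m\to 0$: the integrand is $\log(1+m^2/(4(\sin^2(\pi x/2)+\sin^2(\pi y/2))))$, which is of order $m^2$ away from the origin and contributes the logarithmic correction only from the neighborhood of $(0,0)$ where $\sin^2(\pi x/2)+\sin^2(\pi y/2)\approx \tfrac{\pi^2}{4}(x^2+y^2)$; passing to polar coordinates there gives $\int_0^{\cdot} \log(1+cm^2/r^2)\,r\,\dd r = \tfrac12 m^2 c\log(1/m^2)+O(m^2)$, and collecting the constants produces the leading term $\tfrac1{4\pi}m^2|\log m|$ with an $O(m^2)$ error. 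This is elementary and I would only sketch it.

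For the sub-additivity lower bound \eqref{subbadd}, the idea is the one recalled in Section \ref{finicrit}: using the spatial Markov property (Section \ref{secmass}), if $\hat\phi$ is sampled from $\hat\bP^m$ and then $(\phi_x)_{x\in\gL_N}$ from $\bP^{m,\hat\phi}_N$, the resulting field has the law of the infinite-volume massive free field restricted to $\gL_N$. One then tiles $\bbZ^d$ (or a large box $\gL_{KN}$) by translates of $\gL_N$, and on each tile the partition function with the inherited boundary condition is a copy of $\bE^{m,\hat\phi}_N[\cdots]$; because the contact term $\sum_x(\gb\go_x-\gl(\gb)+h)\delta^u_x$ is a sum over cells of nonnegative (after the annealed shift — but here $\go$ is quenched, so rather: the exponentials multiply and the cells interact only through the shared boundary values which are already accounted for) contributions, the total log-partition function on $\gL_{KN}$ is at least the sum over tiles of the individual ones. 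Dividing by $(KN)^d$, applying the law of large numbers / ergodic theorem over the tiles, and taking $K\to\infty$ identifies the left side $\tf(\gb,h,m,u)$ as bounded below by the averaged single-cell quantity $\frac1{N^d}\hat\bE^m\bbE[\log\bE^{m,\hat\phi}_N[\cdots]]$. The main obstacle here — and the step I would be most careful about — is making the tiling/super-additivity argument rigorous: one must check that restricting the Hamiltonian to the interior of each tile only decreases the partition function, and that the boundary values glue consistently so that the product over tiles is genuinely dominated by the full system; this is exactly the kind of subtlety that \cite{cf:GL} handled in the $d\ge 3$ case and that must be re-examined with the mass present. The rest is bookkeeping.
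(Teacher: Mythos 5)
Your approach to \eqref{massivecompar} matches the paper's in its main mechanism (bounding the Radon--Nikodym derivative $\dd\bP^m_N/\dd\bP_N$ pointwise by the reciprocal of the Gaussian normalizing constant $W^m_N$, whose $N^{-2}\log$ converges to $-f(m)$), and your sketch of \eqref{asymf} is fine. But the step where you pass from $\delta^u_x$ to $\delta_x$ is a real gap. You claim that translating the window $[u-1,u+1]$ to $[-1,1]$ ``does not change its law under $\bP_N$ with free boundary.'' In this paper $\bP_N$ is the Dirichlet (zero) boundary condition field, and a constant shift of $\phi$ \emph{does} change its law: it produces $\bP^{\hat\phi}_N$ with $\hat\phi\equiv u$, a different measure. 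Your alternative -- compare $Z^{\gb,\go,m}_{N,h,u}$ with $Z^{\gb,\go,m}_{N,h}$ first and only then remove the mass -- is also blocked: as the paper's sketch explicitly remarks, the massive free energy \eqref{massfree} genuinely depends on $u$ because the quadratic confinement destroys translation invariance. What is actually needed is a separate statement, $\tf(\gb,h,0,u)=\tf(\gb,h,0,0)$ for all $u$, showing that for the \emph{massless} field the boundary effect of the height shift is asymptotically negligible; the paper invokes \cite[Proposition 4.1]{cf:GL} for exactly this, and it cannot be dismissed as a change of variables.

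On \eqref{subbadd}, the tiling-plus-Markov idea is the right direction, but the worry you flag -- ``restricting the Hamiltonian to the interior of each tile only decreases the partition function'' -- is a non-issue: the pinning term is already a sum of single-site functionals, so nothing is lost by localizing it. What actually produces the inequality (see the proof of \eqref{subaddt} in Section~\ref{homogg}) is conditioning on the field along the internal tile boundaries, factorizing the conditional expectation by the spatial Markov property, and then applying Jensen in the form $\log\bE[\,\cdot\mid\phi|_{\gG}\,]\ge\bE[\log(\cdot)\mid\phi|_{\gG}]$; combined with the observation that under $\hat\bP^m\times\bP^{m,\hat\phi}$ each tile-conditioned factor is equal in distribution to $\bE^{m,\hat\phi}_N[\cdots]$, this gives a clean dyadic super-additivity for $\hat\bE^m\bbE[\log\bE^{m,\hat\phi}_N[\cdots]]$. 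An appeal to ``the law of large numbers / ergodic theorem over tiles'' is not the engine of the argument.
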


\begin{proof}[Sketch of proof]
The result is proved in \cite{cf:GL} (as Proposition 7.1 and Lemma 7.2) but let us recall briefly how it is done.
For the first point, we have to remark that changing  the height of the substrate (i.e.\ replacing $\delta_x$ by $\delta^u_x$ in \eqref{eq:modZ}) 
for the original model  does not change the value of the free energy, that is , $$\tf(\gb,h,0,u)=\tf(\gb,h,0,0), \quad \text{ for all values of } u.$$ 
Heuristically this is because the free field Hamiltonian is translation invariant but a proof is necessary to show that the boundary effect are indeed negligible 
(see \cite[Proposition 4.1]{cf:GL}). Note that for the massive free field, the limit \eqref{massfree} really depends on $u$ because adding an harmonic confinement 
breaks this translation invariance.

\medskip

Then we can compare the partition of the two free fields by noticing that
the density of the massive field with respect to the original one (recall \eqref{massivedensity}) satisfies 
 \begin{equation}\label{reldensity}
\frac{\dd \bP^m_N}{\dd \bP_N}(\phi):=\frac{\exp\left( -\frac{m^2}{2}\sum_{x \in \mathring\gL_N} \phi^2_x\right)}{\bE_N\left[\exp\left( -\frac{m^2}{2}\sum_{x \in \mathring\gL_N} \phi^2_x\right)\right]}
\le \frac{1}{\bE_N\left[\exp\left( -\frac{m^2}{2}\sum_{x \in \mathring\gL_N} \phi^2_x\right)\right]},
\end{equation}
and that 
\begin{equation}\label{lmimw}
 \lim_{N\to \infty}\frac{1}{N^2}\log \bE_N\left[\exp\left( -\frac{m^2}{2}\sum_{x \in \mathring\gL_N} \phi^2_x\right)\right]=: \lim_{N\to \infty}\frac{1}{N^2}\log W^m_N =-f(m).
\end{equation}
Equation \eqref{subbadd} then  follows from of a sub-additive argument (see the proof of Proposition 4.2.\ in \cite{cf:GL} or that of \eqref{subaddt} below).
\end{proof}
\medskip

\begin{rem}
Note that Proposition \ref{massivecompa} gives a bound on $\tf(\gb,h)$ which depends only on the partition function of a finite system .
\begin{equation}\label{finitevol1}
\tf(\gb,h)\ge \frac{1}{N^2} \hat \bE^m \bbE\left[ \log \bE^{m,\hat \phi}_N
 \left[ \exp\left( \sum_{x\in  \tilde \gL_N} (\gb \go_x-\gl(\gb)+h)\delta^u_x\right)\right]\right]-f(m).
\end{equation}
In particular we can prove Theorem \ref{mainres}, if for any $h>0$, $\gb\in(0,\ubgb)$ we can find values for $u$ and $m$ and $N$ such that the l.h.s.\ is positive.
 However it turns out that  with our techniques, we cannot prove that the l.h.s is positive for very small $h$.
This is 
mostly because of the presence of a  $|\log m|$ factor in the asymptotic behavior of $f(m)$ around $0$. Therefore we need a better criterion in which the subtracted term is proportional to $m^2$.
\end{rem}

\subsection{A finer comparison}\label{optfinit}

To obtain a more efficient criterion, we want to restrict the
partition function to a set of $\phi$ where $(\dd \bP^m_N/\dd \bP_N)(\phi)$ is much smaller than $\exp(N^2f(m))$.
We define $\cD^0_N$ as a set where the density  $(\dd \bP^{m}_N/\dd \bP_N)(\phi)$ takes ``typical'' values (see Proposition \ref{thednproba}).
For some constant $K>0$, we set
\begin{equation}\label{dnzero}
 \cD^0_N:=\left\{  \sum_{x\in \gL_N} \phi(x)^2 \ge N^2\left( \frac{f(m)}{m^2}- K\right) \right\}.
 \end{equation}
 Recall that $\hat\bP^{m}$ denotes the law of the infinite volume massive free field (see Section \ref{secmass}) for the boundary condition $\hat \phi$.
 
 \medskip
 
\begin{proposition}\label{th:finitevol}
For any value of $N$, and $K$ and $m$ we have 
\begin{equation}\label{finitesds}
 \tf(\gb,h) \ge \frac{1}{N^2}\hat\bE^{m} \bbE\left[ \log \bE^{m,\hat \phi}_N \left[ \exp\left(\sum_{x\in \tilde \gL_N} \left( \gb\go_x-\gl(\gb)+h \right)\delta^{u}_x\right)\ind_{\cD^0_N}\right] \right]
-Km^2.
\end{equation}
\end{proposition}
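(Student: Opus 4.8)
The plan is to rerun the two-ingredient scheme behind the weaker criterion \eqref{finitevol1} — namely the finite-volume estimate \eqref{subbadd} obtained from a stationary boundary condition, together with the change-of-density comparison \eqref{massivecompar} — but to keep the indicator $\ind_{\cD^0_N}$ attached to the partition function at every step. The whole gain lies in the comparison: whereas the Radon--Nikodym density $(\dd\bP^m_N/\dd\bP_N)(\phi)$ can be as large as $\exp((f(m)+o(1))N^2)$, the defining constraint of $\cD^0_N$ forces it below $\exp((Km^2+o(1))N^2)$, and since $Km^2\ll f(m)$ for small $m$ this is exactly what trades the parasitic term $f(m)$ for $Km^2$.

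First I would prove the comparison step. For $u>0$ set $Z^{\gb,\go,m}_{N,h,u}(\cD^0_N):=\bE^m_N[\exp(\sum_{x\in\tilde\gL_N}(\gb\go_x-\gl(\gb)+h)\delta^u_x)\ind_{\cD^0_N}]$ and rewrite this expectation under $\bP_N$ by means of the density \eqref{reldensity}. On $\cD^0_N$ one has $\sum_{x\in\mathring\gL_N}\phi_x^2=\sum_{x\in\gL_N}\phi_x^2\ge N^2(f(m)/m^2-K)$ (the zero boundary condition makes the two sums coincide), so the definition \eqref{dnzero} of $\cD^0_N$ together with the asymptotics $N^{-2}\log W^m_N\to -f(m)$ of \eqref{lmimw} gives $(\dd\bP^m_N/\dd\bP_N)(\phi)\le\exp(Km^2N^2+o(N^2))$ on $\cD^0_N$. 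Hence $Z^{\gb,\go,m}_{N,h,u}(\cD^0_N)\le\exp(Km^2N^2+o(N^2))\,\bE_N[\exp(\sum_{x\in\tilde\gL_N}(\gb\go_x-\gl(\gb)+h)\delta^u_x)]$; taking $N^{-2}\bbE\log$, letting $N\to\infty$, and invoking \cite[Proposition 4.1]{cf:GL} (the massless free energy does not depend on the substrate height $u$) yields $\limsup_{N\to\infty}N^{-2}\bbE[\log Z^{\gb,\go,m}_{N,h,u}(\cD^0_N)]\le\tf(\gb,h)+Km^2$.

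Next I would establish the matching finite-volume lower bound. Write $a_N:=\hat\bE^{m}\bbE[\log\bE^{m,\hat\phi}_N[\exp(\sum_{x\in\tilde\gL_N}(\gb\go_x-\gl(\gb)+h)\delta^u_x)\ind_{\cD^0_N}]]$, so that the right-hand side of \eqref{finitesds} is $N^{-2}a_N-Km^2$; the goal is $N^{-2}a_N\le\tf(\gb,h)+Km^2$ for every $N$. This comes from adapting the sub-additive argument behind \eqref{subbadd} (see \cite[Proposition 4.2]{cf:GL} or \eqref{subaddt}) with $\ind_{\cD^0_N}$ carried along: under $\hat\bP^{m}\times\bP^{m,\hat\phi}_{pN}$ the field on $\gL_{pN}$ is the infinite-volume massive field restricted to $\gL_{pN}$, so conditioning it on the grid $(\bigcup_{y}\partial\gL_N(y))\cup\partial\gL_{pN}$ — itself distributed as the infinite-volume field there — makes the restrictions of $\phi$ to the $p^2$ sub-boxes conditionally independent, each a massive field on $\gL_N$ with infinite-volume boundary condition. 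Since $\tilde\gL_{pN}$ is the \emph{disjoint} union of the $p^2$ translates of $\tilde\gL_N$, both the exponent $\sum_{x\in\tilde\gL_{pN}}$ and the quadratic functional $\sum_x\phi_x^2$ split additively over the sub-boxes; in particular $\bigcap_y\cD^0_N(y)\subseteq\cD^0_{pN}$, so bounding $\ind_{\cD^0_{pN}}\ge\ind_{\cap_y\cD^0_N(y)}$, factorizing by conditional independence, and applying Jensen's inequality for $\log$ inside $\hat\bE^m\bbE$ gives $a_{pN}\ge p^2 a_N$, whence $N^{-2}a_N\le\liminf_p (pN)^{-2}a_{pN}$. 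Exactly as for \eqref{subbadd}, this last quantity is at most $\limsup_{M\to\infty}M^{-2}\bbE[\log Z^{\gb,\go,m}_{M,h,u}(\cD^0_M)]$ (passing from the stationary to the zero boundary condition costs only $o(M^2)$ in $\log$ of the partition function), and the comparison step bounds the latter by $\tf(\gb,h)+Km^2$.

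I expect the only genuinely delicate point to be bookkeeping, namely that the subtracted term in \eqref{finitesds} must come out exactly equal to $Km^2$ with no residual lower-order correction. This works precisely because the $o(N^2)$ in the density bound is the genuine error in \eqref{lmimw}; because the exact tiling of $\tilde\gL_{pN}$ together with the additivity of $\phi\mapsto\sum_x\phi_x^2$ makes the sub-additive inequality $a_{pN}\ge p^2 a_N$ loss-free and its limit equal to the quantity controlled in the comparison step; and because the passage between stationary and zero boundary conditions — the part imported from \cite{cf:GL}, now carried out in the presence of the extra event $\cD^0_M$ — costs only $o(M^2)$. This last transfer is where I would be most careful, since it is the one ingredient that is not already written down in the form needed here. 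Combining the comparison step with the sub-additive estimate then proves the proposition.
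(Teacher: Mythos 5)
Your overall scheme — super-additivity to pass to the limit, then a pointwise bound on the Radon--Nikodym density restricted to the ``typical'' event to trade $f(m)$ for $Km^2$ — is the same as the paper's, and Steps 1--3 (the zero-boundary density bound and the super-additive estimate $a_{pN}\ge p^2 a_N$) are sound modulo minor bookkeeping. But your Step 4 is a genuine gap, not a bookkeeping issue, and it is precisely where the bulk of the paper's proof lives.

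The difficulty is that the passage from stationary to zero boundary is \emph{not} independent of the indicator. Under $\hat\bP^m\times\bP^{m,\hat\phi}_M$ the translation $\phi\mapsto\phi+H^{m,\hat\phi}_M$ identifies $\bE^{m,\hat\phi}_M[\cdots\ind_{\cD^0_M}]$ with a \emph{zero}-boundary expectation, but the event is carried along: $\cD^0_M(\phi)$ becomes the shifted event $\cD_M$ of \eqref{defDn}, the constraint on $\sum_x(\phi_x+H(x))^2$, and the pinning window $\delta^u_x$ becomes $\delta^{\hat\phi,u}_x$. So the quantity you get from sub-additivity is $M^{-2}\hat\bE^m\bbE\log\bE^m_M[\exp(\cdots\delta^{\hat\phi,u}_x)\ind_{\cD_M}]$, \emph{not} the zero-boundary, zero-$H$ quantity $M^{-2}\bbE\log\bE^m_M[\exp(\cdots\delta^u_x)\ind_{\cD^0_M}]$ to which you want to apply your density bound. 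Declaring the difference to be $o(M^2)$ ``exactly as for \eqref{subbadd}'' does not follow: \eqref{subbadd} is a sub-additivity statement, not a boundary-passage statement, and the corresponding estimate in \cite[Proposition 4.2]{cf:GL} is proved for the unconstrained partition function. Controlling the $H$-dependence in the presence of the constraint is exactly what the paper's Girsanov computation \eqref{densityxx}--\eqref{statz2}, the bound on $T(\hat\phi,\phi)$, and the Borel--Cantelli argument are for.

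The paper circumvents the issue you are waving at by reversing the order of the last two steps: it applies the density bound \emph{while still on the shifted event} $\cD_M$, which is where the expansion $\sum(\phi+H)^2=\sum\phi^2+2\sum\phi H+\sum H^2$ makes the density estimate yield precisely $e^{Km^2M^2}$ times a Girsanov factor $e^{m^2[\sum\phi H+\sum H^2/2]}$; the indicator disappears at that point, and only then does one compare the resulting $Z''_M$ (zero boundary, $H$-tilted, no indicator) to the free energy. Your order would require a direct comparison of $\ind_{\cD_M}$ vs.\ $\ind_{\cD^0_M}$ and $\delta^{\hat\phi,u}_x$ vs.\ $\delta^u_x$ inside the $\log$; this is neither easier nor already written down, and it is the missing content of your proof. (A smaller remark: the constant $f(m)/m^2-K$ in \eqref{dnzero} versus $2f(m)/m^2-K$ in \eqref{defDn} is an inconsistency you have inherited from the paper; with $f(m)/m^2-K$ your density bound on $\cD^0_N$ does not actually come out to $e^{Km^2N^2+o(N^2)}$ but to $e^{\frac12 f(m)N^2+O(m^2N^2)}$, which ruins the step. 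Use the $2f(m)/m^2-K$ normalization consistently.)
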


\medskip

\noindent With the idea of working with a measure that does not depend on the boundary condition, we set similarly to \eqref{deltaf}
\begin{equation}
 \delta^{\hat\phi,u}_x:= \ind_{[u-1,u+1]}(\phi(x)+H^{m,\hat \phi}_N(x)),
\end{equation}
and 
\begin{equation}\label{defDn}
\cD_N:= \left\{ \phi \ : \ \sum_{x\in \tilde \gL_N} (\phi_x+H^{m,\hat \phi}_N(x))^2 \ge N^2\left( \frac{2f(m)}{m^2}- K\right)\right\}.
\end{equation}
With this notation and in view of the considerations of Section \ref{grbc} the expected value in the l.h.s.\ in \eqref{finitesds} is equal to 
\begin{equation}\label{otherexpr}
\hat\bE^m \bbE\left[ \log \bE^{m}_N \left[ \exp\left(\sum_{x\in \tilde \gL_N} \left( \gb\go_x-\gl(\gb)+h \right)\delta^{u,\hat\phi}_x\right)\ind_{\cD_N} \right]\right].
\end{equation}

\subsection{Using the criterion}
 Before giving a proof of Proposition \ref{th:finitevol} let us show how we are going to use it to prove our lower bound on the free energy \eqref{breaks}.
for the remainder of the proof we set
 \begin{equation}\label{parameters}\begin{split}
 N_h&:= \exp(h^{-20}),\\ 
 m_h&:= N^{-1}_h (\log N_h)^{1/4},\\
 u_h&:=\sqrt{\frac{2}{\pi}}\log N_h-\frac{2+\alpha}{2\sqrt{2\pi}}\log \log N_h,
\end{split}
\end{equation}
where $\alpha=3/4$ (we find that the computations are easier to follows with the letter $\alpha$ instead of a specific number, 
in fact any value in the interval $(11/20,1)$ would also work).
With Proposition \ref{th:finitevol}, the proof of the lower bound in \eqref{breaks}
is reduced to the following statement, whose proof will be detailed in the next three sections.
\begin{proposition}\label{mainproposition}
For any $\gb\le \ubgb$, there exists $h_0(\gb)$ such that for any $h\in (0,h_0(\gb))$  
 
 \begin{equation}
 \hat\bE^m 
 \bbE \left[ \log \bE^{m_h,\hat\phi}_{N_h} \left[ \exp\left(\sum_{x\in \tilde \gL_{N_h}} \left( \gb\go_x-\gl(\gb)+h \right)\delta^{u_h}_x\right)
 \ind_{\cD^0_{N_h}} \right]\right]
-K(m_hN_h)^2\ge   1.
 \end{equation}
 \end{proposition}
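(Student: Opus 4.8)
The plan is to obtain a positive lower bound on the quenched free energy of the finite massive system with stationary boundary condition, by showing that with non-vanishing probability (over the field $\phi_1$-component and the disorder $\go$) there is a large, well-localized cluster of contact points, which produces an exponentially large contribution to the partition function that beats the multiplicative cost $\exp(-K(m_hN_h)^2)$. First I would fix the parameters as in \eqref{parameters} and observe that $(m_hN_h)^2=(\log N_h)^{1/2}=h^{-10}$, so the term $-K(m_hN_h)^2$ to be beaten is only polynomially large in $h^{-1}$, while the target free energy lower bound in \eqref{breaks} is $\exp(-h^{-20})=N_h^{-1}$; thus at the level of $\frac1{N_h^2}\bbE[\log(\cdots)]$ we need the logarithm of the restricted partition function to be of order $\gtrsim N_h^2 (N_h^{-1}+h^{-10}/N_h^2)$, i.e. essentially of order $N_h$. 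The strategy is exactly the conditioned second-moment method over an \emph{ad hoc} scale decomposition of the field announced in Section \ref{decompo}: decompose $\phi=\phi_1+\phi_2$ with $\phi_1$ a coarse field (covariance $Q^1$, the heat kernel integrated from a large time such as $(\log N_h)^8$) that is essentially constant on mesoscopic boxes, and $\phi_2$ the fine part (covariance $Q^2$). Conditionally on $\phi_1$ and on $\go$, the contacts $\delta_x^{u_h}$ in a mesoscopic box where $\phi_1\approx u_h$ and the local $\go$-field is not atypically low behave like an almost-homogeneous pinning model, and one estimates the probability of forming a large contact cluster there by a standard second moment (first and second moment of the number of contacts, à la \cite{cf:AS}).

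The key steps, in order, would be: (1) Use Proposition \ref{th:finitevol} to reduce to a lower bound on $\hat\bE^{m_h}\bbE[\log \bE^{m_h,\hat\phi}_{N_h}[\exp(\sum(\gb\go_x-\gl(\gb)+h)\delta^{u_h}_x)\ind_{\cD^0_{N_h}}]]$; by the FKG/positivity considerations and by restricting to the event $\cD^0_{N_h}$ — which, by the very definition \eqref{dnzero} and the asymptotics \eqref{asymf} of $f(m)$, has probability bounded below by a positive constant (this is exactly Proposition \ref{thednproba}, which I would invoke) — it suffices to bound the partition function from below on a good event of $(\phi_1,\phi_2,\go)$ that is compatible with $\cD^0_{N_h}$. (2) Tile $\tilde\gL_{N_h}$ into mesoscopic cells of side length $\ell_h$ of order a power of $\log N_h$; in each cell, the value of the coarse field $\phi_1$ at the cell center is Gaussian with variance $\approx \frac1{2\pi}\log N_h$ (by Lemma \ref{Greenesteem}, with the correction from removing the short-time part being $O(\log\log N_h)$), so $\bP[\phi_1(\text{center})\in[u_h-\tfrac14,u_h+\tfrac14]]\ge \exp(-u_h^2/(2V'_{N_h}))/(C\sqrt{\log N_h})$; plugging in the choice of $u_h$ in \eqref{parameters} this probability is $\gtrsim N_h^{-1}(\log N_h)^{(2+\alpha)/2 - c}$, which — crucially, because $\alpha<1$ — means the \emph{expected number} of good cells (there are $\asymp (N_h/\ell_h)^2$ of them) tends to infinity, in fact polynomially in $N_h$. (3) On a cell where $\phi_1\approx u_h$ and, say, $\phi_1$ oscillates by at most $1/4$ across the cell (Lemma \ref{locareg}-type estimate, using \eqref{gradientas}), the number of $x$ in the cell with $|\phi_2(x)|\le 1/4$ is $\gtrsim \ell_h^2 (\log\log N_h)^{-1/2}$ with probability bounded below (Lemma \ref{lastep}-type second moment), and on the further event that $\sum_{z}(\gb\go_z-\gl(\gb))\delta^{u_h}_z \ge -C\ell_h^2$ — which again holds with probability bounded below by a constant uniformly, since the tilted increments are integrable — the contribution of that single cell to $\exp(\sum(\gb\go_x-\gl(\gb)+h)\delta^{u_h}_x)$ is at least $\exp((h-C')\ell_h^2 (\log\log N_h)^{-1/2})$, which is $\ge 1$ once $\ell_h$ is a large enough power of $\log N_h$ relative to $h$. (4) Finally, a Borel–Cantelli / concentration argument over the many independent-ish cells (independence made rigorous by conditioning on $\phi_1$ restricted to the grid separating the cells, as in the proof of Proposition \ref{scorpiorizing}) shows that the number of cells realizing the good event is, with overwhelming $\bbP$-probability, of its expected order, hence $\log$ of the partition function is $\gtrsim$ (number of good cells) $\gg N_h \gg N_h^2(N_h^{-1}+h^{-10}N_h^{-2})$, which gives $\hat\bE^m\bbE[\log(\cdots)] - K(m_hN_h)^2 \ge 1$.

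The main obstacle, and where most of the technical work lives (and is deferred to Sections \ref{liminouze} and \ref{intelinside}), is Step (3) together with the independence bookkeeping in Step (4): one must carefully separate the three scales (the coarse field $\phi_1$ pinning the cell near height $u_h$, the fine field $\phi_2$ producing many near-zero values, and the disorder $\go$ not being too unfavorable), control the harmonic extension $H^{m_h,\hat\phi}_{N_h}$ of the stationary boundary condition so that it does not spoil the "$\phi_1\approx u_h$" picture (this needs the regularity estimate of Lemma \ref{regular} and the decay estimates \eqref{kilcompare}), and verify that the constant-order lower bounds on the per-cell good events are genuinely uniform in $N_h$ and in the realization of $\phi_1$ on the grid. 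A subtle point is the choice of exponent $\alpha$: it must be large enough that the per-cell contact count $\ell_h^2(\log\log N_h)^{-1/2}$ times $h$ dominates the entropic/energetic cost $\asymp \ell_h^2$ (forcing $\ell_h$ to grow with a power of $\log N_h$ depending on $h$), yet small enough ($\alpha<1$, and in fact the stronger constraint $\alpha>11/20$ quoted in the paper governs the reverse inequality ensuring the number of good cells still beats $h^{-10}$) that the Gaussian cost $\exp(-u_h^2/2V'_{N_h})$ of raising $\phi_1$ to level $u_h$ leaves a polynomially large expected number of good cells; threading this needle is the conceptual heart of the argument.
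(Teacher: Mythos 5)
Your high-level skeleton is reasonable (finite-volume reduction via Proposition \ref{th:finitevol}, scale decomposition, second-moment method), but as written the argument has one arithmetic error and one conceptual gap that together make it collapse.

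In step (2) the Gaussian estimate is off by a factor $N_h^{-1}$: the coarse field has variance $\approx\frac{1}{2\pi}\log N_h$ and $u_h\approx\sqrt{2/\pi}\,\log N_h$, so the probability of a single point (or cell center) reaching the window around $u_h$ is $\asymp N_h^{-2}(\log N_h)^{1+\alpha}$ (cf.\ the first part of Lemma \ref{probacont}), \emph{not} $\gtrsim N_h^{-1}(\log N_h)^{(2+\alpha)/2-c}$. Consequently the expected number of cells touching level $u_h$ is only poly-logarithmic in $N_h$, not polynomially large, and the conclusion of step (4) that $\log Z\gg N_h$ is simply wrong. That target was also never needed: the proposition only requires $\hat\bE^m\bbE[\log\bE^{m_h,\hat\phi}_{N_h}[\cdots]]\ge 1+K(m_hN_h)^2 = 1+Kh^{-10}$, and the free-energy bound $\tf(\gb,h)\ge 1/N_h^2$ then follows from the $1/N^2$ prefactor in \eqref{finitesds}; the paper's proof actually establishes $\gtrsim h(\log N_h)^\alpha\asymp h^{-14}$, which suffices. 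The two mistakes ($N_h^{-1}$ in place of $N_h^{-2}$, and ``we need $\log Z\gtrsim N_h$'' in place of ``we need $\log Z\gtrsim h^{-10}$'') happen to mask each other, but both are errors.

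The deeper gap is the concentration assumption in step (4), namely that the number of good cells is of its expected order with overwhelming probability. For a log-correlated field at heights of order of the maximum this is false: the \emph{unconditional} mean is dominated by a rare event in which the multi-scale trajectory $(\phi_i(x))_{i\le k}$ overshoots a straight line and produces a clustered burst of high points, while on the complementary typical event the count is smaller by a factor $\asymp\log N_h$. Capturing this entropic/ballot correction is precisely the content of the event $\cA_N$ in \eqref{defan}, of Lemma \ref{probacont} (the conditional contact probability drops by $(\log N)^{-1}$ up to $\log\log$ factors), and of the barrier-restricted count $L'_N$ with its conditional first and second moments (Lemmas \ref{lesperance} and \ref{covariancee}). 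Your proposal never imposes this barrier, so the second-moment method you invoke would produce a second moment larger than the square of the (unconditioned) first moment by a factor $\log N_h$, and the Paley--Zygmund lower bound would degenerate. Relatedly, note that the paper's lower bound does \emph{not} use the two-scale $\phi_1+\phi_2$ split of Section \ref{proufoix} that you borrow from the upper bound, but the $k$-scale martingale decomposition $\phi=\sum_{i\le k}\xi_i$ of Section \ref{galefash} with $k\asymp\log N_h$; the two-scale split is too coarse to express the ballot constraint $\phi_i(x)\le \gamma(i-j(x))+\cdots$ that removes the spurious $\log N_h$.
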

 
 \medskip
 
 \noindent Indeed the result directly implies that 
 \begin{equation}
  \tf(\gb,h)\ge (N_h)^{-1}.
 \end{equation}

\subsection{Proof of Proposition \ref{th:finitevol}}\label{homogg}
Let us start by setting 
\begin{multline}
Z'_N(\hat \phi)=Z'_N:= \bE^m_N \left[ \exp\left(\sum_{x\in \tilde \gL_N} \left( \gb\go_x-\gl(\gb)+h \right)\delta^{\hat \phi,u}_x\right)\ind_{\cD_N} \right]\\
=\bE^{m,\hat \phi}_N \left[ \exp\left(\sum_{x\in \tilde \gL_N} \left( \gb\go_x-\gl(\gb)+h \right)\delta^{u}_x\right)\ind_{\cD^0_N} \right].
\end{multline}
A simple computation (see below) is sufficient to show that for any $k\ge 0$ we have 
\begin{equation}\label{subaddt}
\hat\bE^m \bbE \left[ \log Z'_{2^k N}\right]\ge 4^k \hat\bE^m \bbE \left[\log Z'_{N}\right].
\end{equation}
Hence that it is sufficient to prove \eqref{finitesds} with $N$ replaced by $2^k N$ for an arbitrary integer $k$, or by the limit when 
$k$ tends to infinity.

\medskip

\noindent Let us prove \eqref{subaddt}.
We divide the box $\gL_{2N}$ into $4$ boxes, $\gL^i_N$, $i=1,\dots,4$. Set 
\begin{equation}\begin{split}
\gL^i_N&:=\gL_N+(\alpha_1(i),\alpha_2(i))N \\
\tilde \gL^i_N&:=\tilde \gL_N+(\alpha_1(i),\alpha_2(i))N,
\end{split}\end{equation}
where $\alpha_j(i)\in\{0,1\}$ is the $j$-th digit of the dyadic development of $i-1$.
Set  
 \begin{equation}
\cD^{0,i}_N:= \left\{ \sum_{x\in \tilde \gL^i_N} \phi(x)^2\ge \left( \frac{2f(m)}{m^2}-K\right)N^2 \right\}.
\end{equation}
We notice that 
\begin{equation}\label{dainclusion}
\bigcap_{i=1}^4 \cD^{0,i}_{N}\subset\cD^0_{2N}.
\end{equation}
We define 
\begin{equation}
\gG_{N}:=\left( \bigcup_{i=1}^{4} \partial \gL^i_N \right) \setminus \partial\gL_{2N}.
\end{equation}
If we condition on the realization on $\phi$ in $\gG_{N}$, 
the partition functions of the system of size $2N$
factorizes into $4$ partition functions of systems of size $N$, whose boundary conditions are determined 
by $\hat \phi$ and $\phi|_{\gG_N}$, and we obtain
\begin{multline}
\bE^{m,\hat \phi}_{2N} \left[ \exp\left( \sum_{x\in  \tilde\gL_{2N}} (\gb \go_x-\gl(\gb)+h)\delta^u_x\right)\ind_{\bigcap_{i=1}^4 \cD^{0,i}_{N}} \ \Bigg| \ \phi|_{\gG_N} \right]
\\ = \prod_{i=1}^{4}\bE^{m,\hat \phi}_{2N} 
\left[ \exp\left( \sum_{x\in  \tilde\gL^i_{N}} (\gb \go_x-\gl(\gb)+h)\delta^u_x\right)\ind_{\cD^{0,i}_{N}} \ \Bigg| \ \phi|_{\gG_N} \right]
=: \prod_{i=1}^{4} \tilde Z^i(\hat \phi , \phi|_{\gG_N} , \go  ).
\end{multline}
By the spatial Markov property for the infinite volume field, each $\tilde Z^i(\hat \phi , \phi|_{\gG_N} , \go )$ 
has the same distribution as $Z'_N$ (if $\hat \phi$ and $\phi|_{\gG_N}$ have distribution
 $\hat \bE^m$ and $\bE^{m,\hat \phi}_{2N}$ respectively and the $\go_x$s are IID).
Using \eqref{dainclusion} and Jensen's inequality for $ \bE^{m,\hat \phi}_{2N} \left[ \ \cdot \ | \ \phi|_{\gG_N} \right]$
we have
\begin{equation}\begin{split}
\bbE \hat \bE^m \left[  \log Z'_{2N}\right]\ge 
 \sum_{i=1}^{4}   \bbE \hat \bE^m  \bE^{m,\hat \phi}_{2N}\left[ \log \tilde Z^i(\hat \phi , \phi|_{\gG_N} \ ) \right]
 =4 \bbE \hat \bE^m\left[  \log Z'_{N}\right],
 \end{split}\end{equation} 
which ends the proof of \eqref{subaddt}.

\medskip

\noindent Now we set  $M:=2^kN$ with $k$ large. In the computation, we write sometimes $H$ for $H^{m,\hat \phi}_M$ for simplicity. We remark that for $\phi\in \cD_M$ we have  
\begin{multline}
\log \left( \frac{\dd P^{m}_M}{\dd P_M}(\phi) \right)\\
= \frac{m^2}{2} \left(  
\sum_{x\in \gL_M} H^2(x)-\sum_{x\in \gL_M} (\phi_x-H(x))^2-2\sum_{x\in \gL_M} \phi_x H(x)\right) -\log W_M\\
\le\left[  M^2 \left(\frac{ m^2 K}{2}-f(m)\right)-\log W_M \right] +  m^2 \left[ \sum_{x\in \gL_M} \phi_x H(x)+ \frac{1}{2} \sum_{x\in \gL_M} H^2(x)\right]\\
\le m^2 M^2 K+  m^2 \left[ \sum_{x\in \gL_M} \phi_x H(x)+ \frac{1}{2} \sum_{x\in \gL_M} H^2(x)\right],
\end{multline}
where the first inequality follows from the definition of $\cD_M$ \eqref{defDn} and the last one from \eqref{lmimw} and is valid provided $K$ is sufficiently large.
From this inequality we deduce that 
\begin{multline}\label{rhss}
Z'_{M}\le e^{m^2K M^2} \bE_M\left[ e^{\sum_{x\in \tilde \gL_M} \left( \gb\go_x-\gl(\gb)+h \right)\delta^{\hat \phi,u}_x} e^{m^2  \sum_{x\in \gL_M} \left[H(x)\phi(x)
+\frac{H(x)^2}{2}\right]} \right]\\
:=e^{m^2K M^2} Z''_M.
\end{multline}
To conclude the proof, we must show that the r.h.s.\ is not affected, in the limit, by the presence of $H$
(which produces the two last terms and enters in the definition of $\delta^{\hat \phi,u}_x$)
i.e.\ that 
\begin{equation}\label{zsecond}
\lim_{M\to \infty} \frac{1}{M^2} \bbE \hat \bE^m \left[ \log Z''_M \right]=\tf(\gb,h).
\end{equation}
We can replace $\delta^{\hat \phi,u}_x$ by $\delta^u_x$ at the cost of a Girsanov-type term in the density.
For computations, it is pratical to define
\begin{equation}
H^0(x):= H(x)\ind_{\{x\in \mathring{\gL}_N\}}.
\end{equation}
The distribution $\phi+H^0$ under $\bP_M$ is absolutely continuous 
with respect to that of $\phi$. The density of its distribution $\tilde \bP_M$ with respect to $\bP_M$ is given by
\begin{multline}\label{densityxx}
 \frac{\dd \tilde \bP_M}{\dd \bP^0_M}(\phi)=\exp\bigg(\frac 1 2 \sum_{\gL_M} (\grad \phi)^2-(\grad \phi -\grad H^0)^2 \bigg)\\= 
 \exp\bigg(-\frac 1 2 \sum_{\gL_M} (\grad H_0)^2+  \sum_{\gL_M}\grad \phi\grad H^0 \bigg)\\
 =\exp\bigg( -m^2\sum_{x\in \gL_M} \left( H(x)\phi(x)- \frac{H^0(x)^2}{2} \right)\\
 + \sum_{x\in \partial\gL_M}\sumtwo{y\in  \partial^- \gL_M }{y\sim x}\left( H(x)\phi(y)-\frac{H(x)H(y)}{2} \right)\bigg),
\end{multline}
where we used the notation
\begin{equation}
\sum_{\gL_M} \grad R \grad T:= \frac{1}{2}\sumtwo{x,y \in \gL_M}{x\sim y} (R(x)-R(y))(T(x)-T(y)).
\end{equation}
To obtain the second line in \eqref{densityxx} we have used 
the summation by part formula (which is valid without adding boundary terms since the functions we are integrating have zero boundary condition) and \eqref{defH} to obtain
\begin{equation}\begin{split}
                \sum_{\gL_M} \grad H \grad \phi&=- \sum_{x\in \mathring{\gL}_M} \gD H(x)\phi(x)= -m^2 \sum_{x\in \gL_M} H(x)\phi(x),\\
                \sum_{\gL_M} \grad H \grad H^0&=- \sum_{x\in \mathring{\gL}_M} \gD H(x)H^0(x)= -m^2 \sum_{x\in \gL_M} H^0(x)^2.
                 \end{split}
\end{equation}
The substitution of $H$ by $H^0$ produces the second term (boundary effects).
Hence the expectation in \eqref{rhss} is equal to  (assume $u>1$)
\begin{multline}\label{Zseccc}
 \exp\bigg(\sum_{x\in \partial \gL_M \cap \tilde \gL_M}\left( \gb\go_x-\gl(\gb)+h \right)\ind_{[u-1,u+1]}(\hat \phi(x))\\ +m^2\sum_{x\in \tilde \gL_M} 
 \frac{H(x)^2+H^0(x)^2}{2}
 -\sum_{x\in \partial\gL_M}\sumtwo{y\in  \partial^- \gL_M}{y\sim x} \frac{H(x)H(y)}{2}   \bigg)\\
 \times
 \bE^0_M\left[ \exp\left(\sum_{x\in \tilde \gL_M} \left( \gb\go_x-\gl(\gb)+h \right)\delta^{u}_x- 
 \sumtwo{x\in \partial\gL_M, y\in \partial^- \gL_M}{y\sim x} H(x)\phi(y) \right)\right].
\end{multline}
Let us show first that the exponential term in front of the expectation in \eqref{Zseccc} does not affect the limit of $M^{-2}\log Z''_M$.
We have 
\begin{multline}\label{limitain}
\lim_{M\to \infty} \bbE \hat\bE^m \left|\frac{1}{M^2} \sum_{x\in \partial \gL_M \cap \tilde \gL_M}\left( \gb\go_x-\gl(\gb)+h \right)\ind_{[u-1,u+1]}(\hat \phi(x))\right|
\\ \le  \lim_{M\to \infty} \frac{1}{M^2} \bbE \sum_{x\in \partial \gL_M \cap \tilde \gL_M}| \gb\go_x-\gl(\gb)+h |=0.
\end{multline}
For the other terms, set
$$\mathcal M_M:= \max_{x \in \partial \gL_M}\hat\phi(x).$$
Being a maximum over $4M$ Gaussian variables of finite variance, it is not difficult to check that for all $M$ sufficiently large,
\begin{equation}
\hat\bE[\cM^2_M]\le (\log M)^2.
\end{equation}
Moreover from the definition of $H^{\hat \phi,N}_M$ we gave for any $x\in \mathring \gL_{M}$ we have 
\begin{equation}
 |H^{\hat \phi,N}_M(x)|= \frac{1}{2d+m^2}\left|\sum_{y\sim x}H^{\hat \phi,N}_M(y)\right|\le \frac{2d}{2d+m^2} \max_{y\sim x} |H^{\hat \phi,N}_M(y)|.
\end{equation}
This implies that the maximum of $H$ is attained on the boundary and that 
\begin{equation}
 |H^{\hat \phi,N}_M(x)|\le \cM_M  \left(\frac{2d}{2d+m^2}\right)^{d(x,\partial \gL_M)}.
\end{equation}
This implies that
\begin{equation}
\left|m^2\sum_{x\in \tilde \gL_M} 
 \frac{H(x)^2+H^0(x)^2}{2}\\
 -\sum_{x\in \partial\gL_M}\sumtwo{y\in  \partial^- \gL_M}{y\sim x} \frac{H(x)H(y)}{2}   \bigg)\right|\le C_m M \cM^2_M.
\end{equation}
In particular we have
\begin{equation}\label{limitdeux}
\lim_{M\to \infty} \frac{1}{M^2} \hat \bE^m \left |m^2\sum_{x\in \tilde \gL_M} 
 \frac{H(x)^2+H^0(x)^2}{2}
 -\sum_{x\in \partial\gL_M}\sumtwo{y\in  \partial^- \gL_M}{y\sim x} \frac{H(x)H(y)}{2}   \right|=0.
\end{equation}
Hence from \eqref{Zseccc}, \eqref{limitain} and \eqref{limitdeux}, Equation \eqref{zsecond} holds provided we can show that
\begin{equation}\label{statz}
 \lim_{M\to \infty } \frac{1}{M^2}\bbE \hat \bE^m\log 
 \bE_M\left[ \exp\left(\sum_{x\in \tilde \gL_M} \left( \gb\go_x-\gl(\gb)+h \right)\delta^{u}_x+ T(\hat \phi, \phi)\right)
\right]
 =\tf(\gb,h),
\end{equation}
where we have used the notation
\begin{equation}
T(\hat \phi, \phi):=\sum_{x\in \partial\gL_M}\sumtwo{y\in \partial^-{\gL}_M}{y\sim x}\hat\phi(x)\phi(y).
\end{equation}
This is extremely similar to the proof of \cite[Proposition 4.2]{cf:GL} but we include the main line of the computation for the sake of completeness.
First we note that because of uniform integrability, \eqref{statz} holds if we prove the convergence in probability,
\begin{equation}\label{statz3}
\lim_{M\to \infty}\frac{1}{M^2}\log 
 \bE_M\left[ \exp\left(\sum_{x\in \tilde \gL_M} \left( \gb\go_x-\gl(\gb)+h \right)\delta^{u}_x+ T(\hat \phi, \phi)\right)
\right]=\tf(\gb,h).
\end{equation}
\medskip

Note that conditioned to $\hat \phi$, $T(\hat \phi, \phi)$ is a centered Gaussian random variable.
We show in fact $\hat \bP^m \otimes \bbP$ almost sure convergence for rather than convergence of the expectation of \eqref{statz}, but since 
\begin{multline}
\left|M^{-2}\log 
 \bE_M\left[ \exp\left(\sum_{x\in \tilde \gL_M} \left( \gb\go_x-\gl(\gb)+h \right)\delta^{u}_x+ T(\hat \phi, \phi)\right)\right] \right|\\ 
 \le 
M^{-2}\sum_{x\in \tilde \gL_M} | \gb\go_x-\gl(\gb)+h|+ M^{-2} \log \bE_M\left[e^{T(\hat \phi,\phi)}\right]\\
=M^{-2}\sum_{x\in \tilde \gL_M} | \gb\go_x-\gl(\gb)+h|+ M^{-2}\Var_{\bP_M} \left(T(\hat \phi,\phi) \right),
\end{multline}
 and the sequence is uniformly integrable (cf. \eqref{variartz}),
 almost sure convergence implies convergence in $L_1$. 
 
 \medskip

Now to prove \eqref{statz3}, we set 
$$\cM_M(\hat \phi):= \max_{x\in \partial \gL_M} |\hat \phi_x|.$$ 
As the covariance function of $\phi$ is positive, we have 
\begin{equation}\label{variartz}
\bE_M\left[ T(\hat \phi, \phi)^2\right]\le  \cM_M ^2 \bE_M \left[ \left(\sum_{x\in \partial\gL_M}\sumtwo{y\in \partial^- \gL_M}{y\sim x}\phi(y)\right)^2\right]
=4(M-1)\cM_M ^2.
\end{equation}
We define
$$A_M:=\big\{\ |T(\hat \phi, \phi)|\le M^{7/4}\cM_M \big\} $$
Combining our bound on the variance and standard Gaussian estimates, we obtain
\begin{equation}\label{deviats}
\begin{split}
\bP_M\left[ A_M  \right]&\le e^{-c M^{5/2}},\\ 
\bE_M\left[ e^{T(\hat \phi, \phi)} \ind_{A_M}\right]&\le e^{-c M^{5/2}}.
\end{split}
\end{equation}
Combining the second line of \eqref{deviats} with an annealed bound we obtain that
\begin{equation}
 \lim_{M\to \infty} \frac{1}{M^2} \log  \bE_M\left[ e^{\sum_{x\in \tilde \gL_M} \left( \gb\go_x-\gl(\gb)+h \right)\delta^{u}_x+ 
T(\hat \phi,\phi) } \ind_{A^{\complement}_M}\right]=-\infty,
\end{equation}
and hence \eqref{statz3} is equivalent to
\begin{equation}\label{statz2}
 \lim_{M\to \infty} \frac{1}{M^2} \log  \bE_M\left[ e^{\sum_{x\in \tilde \gL_M} \left( \gb\go_x-\gl(\gb)+h \right)\delta^{u}_x+ 
 T(\hat \phi,\phi)} \ind_{A_M} \right]=\tf(\gb,h).
\end{equation}
To prove \eqref{statz2}, we first note using the first line of \eqref{deviats} that \eqref{eq:freen} implies that
\begin{equation}
 \lim_{M\to \infty} \frac{1}{M^2} \log  \bE_M\left[ \exp\left(\sum_{x\in \tilde \gL_M} \left( \gb\go_x-\gl(\gb)+h \right)\delta^{u}_x\right) \ind_{A_M} \right]=\tf(\gb,h).
\end{equation}
By definition of $A_M$ we have
\begin{equation}
 \frac{1}{M^2} 
 \left| \log \frac{ \bE_M\left[ e^{\sum_{x\in \tilde \gL_M} \left( \gb\go_x-\gl(\gb)+h \right)\delta^{u}_x+T(\hat \phi,\phi)} 
 \ind_{A_M} \right]}{\bE_M\left[  e^{\sum_{x\in \tilde \gL_M} \left( \gb\go_x-\gl(\gb)+h \right)\delta^{u}_x} \ind_{A_M} \right]} \right|
 \le M^{-1/4} \cM_M.
\end{equation}
Hence to conclude we just need to show that 
\begin{equation}
 \lim_{M\to \infty} M^{-1/4} \cM_M=0.
\end{equation}
This follows from the definition of $\cM_M$ and Borel-Cantelli's Lemma.
\qed.

\section{Decomposition of the proof of Proposition \ref{mainproposition}}\label{decompo}

The overall idea for the proof is to restrict the partition function to a set of typical trajectories $\phi$ and to control 
the first two moments of the restricted partition function to get a good estimate for the expected $\log$.
However the implementation of this simple idea requires a lot of care.
We decompose the proof in three steps.

\medskip

In Section \ref{sketch}, we briefly present these steps  and combine them to obtain the proof and in Section \ref{propnineone} we perform the first step of the proof,
which is the simpler one. The two other steps need some detailed preparatory work which is only introduced in Section \ref{liminouze}.

\subsection{Sketch of proof}\label{sketch}

The first step is to show that $\cD_N$ is a typical event in order to ensure that our restriction 
to $\cD_N$ in the partition function does not cost much.

\begin{proposition}\label{thednproba}
 We can choose $K$ in a way that 
for all $m\le 1$ sufficiently large,  for all $N\ge m^{-1}|\log m|^{1/4}$, and for all realization of $\hat \phi$
 \begin{equation}
  \bP_N[\cD^{\complement}_N]\le C(\log N)^{-1/2}.
 \end{equation}
\end{proposition}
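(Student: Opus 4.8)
The plan is to show $\cD_N^{\complement}$ is atypical by reducing it to a one‑sided (lower) deviation of $\sum_x\phi_x^2$ below its $\bP_N$‑mean and bounding that deviation with an exponential‑moment (Chernoff) estimate — plain Chebyshev is not enough here, since under $\bP_N$ the variance of $\sum_x\phi_x^2$ is of order $N^4$ (dominated by the lowest Dirichlet mode) while the deficit we can afford is only of order $N^2\log\log N$. \emph{Reduction to $\hat\phi\equiv 0$:} under $\bP_N$ the field vanishes on $\partial\gL_N$, so $\sum_{x\in\tilde\gL_N}(\phi_x+H^{m,\hat\phi}_N(x))^2\ge\sum_{x\in\mathring\gL_N}(\phi_x+H^{m,\hat\phi}_N(x))^2$. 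Diagonalising the massless Dirichlet covariance $G^*$ on $\mathring\gL_N$: with $\mu_j:=4\sin^2(\tfrac{\pi j_1}{2N})+4\sin^2(\tfrac{\pi j_2}{2N})$, $j\in\{1,\dots,N-1\}^2$, the eigenvalues of $-\gD$ with Dirichlet boundary condition and $(\psi_j)$ the associated orthonormal eigenfunctions, one has $G^*(x,y)=\sum_j\mu_j^{-1}\psi_j(x)\psi_j(y)$, so under $\bP_N$ the field has the law of $\sum_j\mu_j^{-1/2}Z_j\psi_j$ with $(Z_j)$ i.i.d.\ standard Gaussian. By Parseval, $\sum_{x\in\mathring\gL_N}(\phi_x+H^{m,\hat\phi}_N(x))^2=\sum_j(\mu_j^{-1/2}Z_j+h_j)^2$ where $h_j:=\langle H^{m,\hat\phi}_N,\psi_j\rangle$. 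Since the Gaussian mass of an interval of fixed length is maximised when centred at $0$, $(Z+c)^2$ stochastically dominates $Z^2$ for all $c\in\bbR$, and by independence $\sum_j(\mu_j^{-1/2}Z_j+h_j)^2$ stochastically dominates $\sum_j\mu_j^{-1}Z_j^2=\sum_{x\in\mathring\gL_N}\phi_x^2$; hence, for every $\hat\phi$, $\bP_N[\cD_N^{\complement}]\le\bP_N[\sum_{x\in\mathring\gL_N}\phi_x^2<N^2(\tfrac{2f(m)}{m^2}-K)]$.

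\emph{Two moments and the gap.} Here $\sum_j\mu_j^{-1}=\mathrm{tr}\,G^*=\sum_{x\in\mathring\gL_N}G^*(x,x)$. Lemma~\ref{Greenesteem}(ii) at $m=0$ gives $G^*(x,x)\ge\tfrac1{2\pi}\log d(x,\partial\gL_N)-C$, and from $\log d(x,\partial\gL_N)=\log\min(x_1,N-x_1,x_2,N-x_2)\ge\log x_1+\log(N-x_1)+\log x_2+\log(N-x_2)-3\log N$ together with Stirling one obtains $\sum_{x\in\mathring\gL_N}\log d(x,\partial\gL_N)\ge N^2(\log N-C_1)$, so $\sum_j\mu_j^{-1}\ge N^2(\tfrac{\log N}{2\pi}-c_0)$ for universal constants and all large $N$. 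For the second moment, $\mu_j\ge 4(j_1^2+j_2^2)/N^2$ (using $\sin\theta\ge 2\theta/\pi$ on $[0,\tfrac\pi2]$), hence $\sum_j\mu_j^{-2}\le\tfrac{N^4}{16}\sum_{j_1,j_2\ge1}(j_1^2+j_2^2)^{-2}=:C_*N^4<\infty$. Finally, by \eqref{asymf}, $\tfrac{2f(m)}{m^2}\le\tfrac{|\log m|}{2\pi}+C_2$, while the hypothesis $N\ge m^{-1}|\log m|^{1/4}$ gives $\log N-|\log m|\ge\tfrac14\log|\log m|\ge\tfrac14\log\log N-C_3$ (splitting according to whether $|\log m|\ge\tfrac12\log N$ or not). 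Choosing $K$ a large enough universal constant therefore makes $t:=N^{-2}\sum_{x\in\mathring\gL_N}G^*(x,x)-(\tfrac{2f(m)}{m^2}-K)\ge\tfrac{\log\log N}{8\pi}$.

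\emph{Exponential moment and conclusion.} For $\theta>0$, by independence of the $Z_j$,
\[
\bP_N\Big[\sum_{x\in\mathring\gL_N}\phi_x^2\le\sum_j\mu_j^{-1}-tN^2\Big]\ \le\ e^{\theta(\sum_j\mu_j^{-1}-tN^2)}\prod_j(1+2\theta\mu_j^{-1})^{-1/2},
\]
and since $\log(1+u)\ge u-\tfrac{u^2}{2}$ the logarithm of the right side is at most $-\theta tN^2+\theta^2\sum_j\mu_j^{-2}\le-\theta tN^2+C_*\theta^2N^4$, which at $\theta=t/(2C_*N^2)$ equals $-t^2/(4C_*)$. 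Combining with the reduction,
\[
\bP_N[\cD_N^{\complement}]\ \le\ \exp\!\Big(-\frac{t^2}{4C_*}\Big)\ \le\ \exp\!\Big(-\frac{(\log\log N)^2}{256\pi^2C_*}\Big)\ \le\ (\log N)^{-1/2},
\]
the last step for $N$ large, which is guaranteed in the stated parameter range; this proves the proposition.

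\emph{Main obstacle.} The delicate point is the absence of an extra $\log N$ in the bound $\sum_j\mu_j^{-2}=\sum_{x,y}G^*(x,y)^2\le C_*N^4$: the naive estimate $\sum_yG^*(x,y)^2\le(\max G^*)\sum_yG^*(x,y)\le CN^2\log N$ loses a factor $\log N$, which would turn the Chernoff exponent into something of order $(\log\log N)^2/\log N\to0$ and make the bound vacuous; it is the summability of $\sum_{j_1,j_2\ge1}(j_1^2+j_2^2)^{-2}$ — equivalently the genuine decay of $G^*(x,y)$ in $|x-y|$ — that rescues the argument. Beyond that, the only care needed is to track constants so that the logarithmic margin afforded by the hypothesis $N\ge m^{-1}|\log m|^{1/4}$ (rather than merely $N\gg m^{-1}$) is preserved.
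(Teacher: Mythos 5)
Your proof is correct, but it takes a genuinely different route from the paper, and it is worth spelling out the contrast. You attack the massless measure $\bP_N$ head-on by diagonalizing $G^*$ in the Dirichlet sine basis and running a Chernoff/Laplace-transform bound for the Gaussian chaos $\sum_j\mu_j^{-1}Z_j^2$; the key summability $\sum_j\mu_j^{-2}\le C_*N^4$ (with no stray $\log N$) is exactly what makes the exponent $t^2/(4C_*)\gtrsim(\log\log N)^2$ come out positive, and the bound you obtain, $\exp(-c(\log\log N)^2)$, is in fact much stronger than $(\log N)^{-1/2}$. The paper instead works under the \emph{massive} field $\bP^m_N$: it computes $\Var_{\bP^m_N}(\sum(\phi+H)^2)\le Cm^{-2}N^2$ (the mass regularizes the variance, using $\sum_y G^{m,*}(x,y)^2\le Cm^{-2}$, which is \eqref{greensum} in disguise) while the mean exceeds the threshold by $(K-C)N^2$, and then a one-line Chebyshev gives $Cm^{-2}N^{-2}\lesssim(\log N)^{-1/2}$ in the stated range of $N$. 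So your remark that ``plain Chebyshev is not enough'' is accurate for $\bP_N$ but is precisely the difficulty the paper sidesteps by adding the mass; each route pays for the same $\log N$ in a different place (you in the deficit $t$, the paper in the variance).

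One subtlety to be aware of: the statement as printed says $\bP_N$, but the paper's own proof uses $\bE^m_N$/$\bP^m_N$ throughout, and the proposition is invoked later (in the proof of Proposition~\ref{prop:boundary}) under the notational convention of Section~\ref{galefash}, where $\bP$ silently denotes $\bP^m_N$; so the intended measure is almost certainly the massive one, and the $\bP_N$ in the statement is a typo. Your argument proves the literal $\bP_N$ version, which is not automatically equivalent (there is no clean stochastic domination between $\sum_j(\mu_j^{-1/2}Z_j+h_j)^2$ and $\sum_j((\mu_j+m^2)^{-1/2}Z_j+h_j)^2$ because of the $h_j$'s). Fortunately your method carries over with minor changes: replace $\mu_j$ by $\mu_j+m^2$; the gap $t^m$ then drops from order $\log\log N$ to order $K$ (using $N^{-2}\operatorname{tr}G^{m,*}\ge\tfrac{|\log m|}{2\pi}-C\ge\tfrac{2f(m)}{m^2}-C'$), but simultaneously $\sum_j(\mu_j+m^2)^{-2}\le Cm^{-2}N^2$ (rather than $C_*N^4$), and the Chernoff exponent becomes $-c\,m^2N^2\le -c|\log m|^{1/2}$, again far below the required $(\log N)^{-1/2}$. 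It would be worth adding a sentence to that effect so the proof covers the measure the downstream argument actually uses.
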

The result is not used directly in the proof of Proposition \ref{mainproposition} but is a crucial input for the proof of 
Proposition \ref{prop:boundary} below.

\medskip

The aim of the second step is to show that at a moderate cost one can restrict the zone of the interaction to a sub-box $\gL'_N$ defined by
\begin{equation}\label{defglprim}
   \gL'_N:= \bbZ^2 \cap [N(\log N)^{-1/8},N(1-(\log N)^{-1/8})]^2.
\end{equation}
The reason for which we want to make that restriction is that it is difficult to control the effect of the boundary condition (i.e.\ of $H^{m,\hat \phi}_N$) in $\partial \gL_N \setminus    \gL'_N$.
Inside  $\gL'_N$ however, due to the choice of the relative values of $m$ and $N$ in \eqref{parameters}, $H^{m,\hat \phi}_N$ is very small and has almost no effect.

\begin{proposition}\label{prop:boundary}
There exists an event $\cC_N\subset \cD_N$ satisfying 
\begin{equation} \label{lecnepetit} 
\bP[\cC^{\cc}_N]\le C (\log N)^{-1/16}
\end{equation}
 and a constant $C(\gb)$ such that
 \begin{multline}\label{primieq}
\bE^m_N \left[ \exp\left(\sum_{x\in \tilde \gL_N} \left( \gb\go_x-\gl(\gb)+h \right)\delta^{\hat \phi,u}_x\right)\ind_{\cD_N} \right]\\
\ge \bE^m_N \left[ \exp\left(\sum_{x\in \gL'_N} \left( \gb\go_x-\gl(\gb)+h \right)\delta^{\hat \phi,u}_x\right)\ind_{\cC_N} \right]-
C(\gb) (\log \log N)^4(\log N)^{\alpha-1/16}.
 \end{multline}
\end{proposition}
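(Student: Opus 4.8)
The plan is the following. Fix a threshold $\ell_N$ of order $\sqrt{\log N\log\log N}$ (to be tuned), set $\cM_N:=\max_{x\in\partial\gL_N}|\hat\phi_x|$, and take
$$\cC_N\,:=\,\cD_N\cap\{\cM_N\le\ell_N\}\cap\cH_N,$$
where $\cH_N$ is an auxiliary event depending only on $(\phi,\hat\phi)$, of the form $\{\sum_{x\in\tilde\gL_N\setminus\gL'_N}\ind_{[u-1-\ell_N,\,u+1+\ell_N]}(|\phi_x|)\le B_N\}$, which prevents the field from creating anomalously many near-contacts in the frame $\tilde\gL_N\setminus\gL'_N$. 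The role of $\{\cM_N\le\ell_N\}$ is that the massive harmonic extension $H^{m,\hat\phi}_N$, which solves $\gD H=m^2H$ in $\mathring{\gL}_N$ with $H=\hat\phi$ on $\partial\gL_N$, decays exponentially in $m\,d(x,\partial\gL_N)$; since $m_h\,d(x,\partial\gL_N)\ge(\log N)^{1/8}$ for $x\in\gL'_N$ by the choice \eqref{parameters}, this gives $|H^{m,\hat\phi}_N(x)|\le\ell_N\,e^{-c(\log N)^{1/8}}$ on $\gL'_N$, which is negligible, so that $\delta^{\hat\phi,u}_x$ is indistinguishable from $\delta^{u}_x$ there.

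First I would estimate $\bP[\cC_N^\cc]$. Proposition \ref{thednproba} yields $\bP[\cD_N^\cc]\le C(\log N)^{-1/2}$. For the boundary event, each $\hat\phi_x$ with $x\in\partial\gL_N$ is centred Gaussian of variance $G^{m}(x,x)\le\frac1{2\pi}|\log m|+C\le\frac1{2\pi}\log N+C$ by Lemma \ref{Greenesteem}(i), and the massive field decorrelates on the scale $m_h^{-1}=N_h(\log N_h)^{-1/4}$, so (after a standard chaining step) $\cM_N$ behaves like the maximum of $O(N_hm_h)=O((\log N_h)^{1/4})$ essentially independent such Gaussians; this gives $\hat\bP^m[\cM_N>\ell_N]\le C(\log N)^{-1/16}$ for the chosen $\ell_N$. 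For $\cH_N$ one uses Markov's inequality on the first moment $\bE^m_N\big[\sum_{x\in\tilde\gL_N\setminus\gL'_N}\ind_{[u-1-\ell_N,\,u+1+\ell_N]}(|\phi_x|)\big]$, estimating each term by the Gaussian tail \eqref{gtail} with the variance $G^{m,*}_N(x,x)$ controlled by Lemma \ref{Greenesteem}(ii): here the $\frac{2+\alpha}{2\sqrt{2\pi}}\log\log N$ correction in $u_h$ is exactly what makes $u_h^2/\big(2G^{m,*}_N(x,x)\big)$ exceed $2\log N$ with a $\frac94\log\log N$ margin, so that this first moment, hence a suitable $B_N$, is of polylogarithmic size. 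Summing the three bounds gives $\bP[\cC_N^\cc]\le C(\log N)^{-1/16}$.

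For \eqref{primieq}, since $\cC_N\subset\cD_N$ and all Boltzmann weights are nonnegative, it suffices to bound the loss incurred by dropping the interaction on the frame. Writing the weight in the left-hand side as $\exp\!\big(\sum_{x\in\gL'_N}(\ldots)\delta^{\hat\phi,u}_x\big)\exp\!\big(\sum_{x\in\tilde\gL_N\setminus\gL'_N}(\ldots)\delta^{\hat\phi,u}_x\big)$ and using $1-e^{t}\le\ind_{\{t<0\}}$ for the second factor, the loss is at most
$$\bE^m_N\Big[\exp\Big(\sum_{x\in\gL'_N}(\gb\go_x-\gl(\gb)+h)\delta^{\hat\phi,u}_x\Big)\;\ind\!\big[\exists x\in\tilde\gL_N\setminus\gL'_N:\ \delta^{\hat\phi,u}_x=1\big]\;\ind_{\cD_N}\Big],$$
since a net negative frame energy forces at least one frame near-contact. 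Bounding the indicator by $\sum_{x\in\tilde\gL_N\setminus\gL'_N}\delta^{\hat\phi,u}_x$ and, for each frame site $x$, decoupling $\phi|_{\gL'_N}$ from $\phi_x$ through the spatial Markov property — by conditioning on $\phi$ along a layer $\gamma_N\subset\tilde\gL_N\setminus\gL'_N$ separating $\gL'_N$ from $x$ — one produces a factor $\bP^m_N[\delta^{\hat\phi,u}_x=1\mid\phi|_{\gamma_N}]$, again small by \eqref{gtail} and Lemma \ref{Greenesteem}(ii) (on $\{\cM_N\le\ell_N\}$ the shift $|H^{m,\hat\phi}_N(x)|\le\ell_N$ is under control); summing over the $O\big(N^2(\log N)^{-1/8}\big)$ frame sites, and combining with the probability estimates of the previous step, contributes the additive error $C(\gb)(\log\log N)^4(\log N)^{\alpha-1/16}$, whence \eqref{primieq}.

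The main obstacle is this last decoupling: the constraint $\ind_{\cD_N}$ and the part of $H^{m,\hat\phi}_N$ that is not small in the frame couple the field in $\gL'_N$ to the field in $\tilde\gL_N\setminus\gL'_N$, so the Markov factorisation cannot be applied verbatim; one must enlarge $\cC_N$ to an event with the appropriate product structure (keeping its complement of probability $\le C(\log N)^{-1/16}$) or carry the coupling term explicitly into the error, and one must be careful that the per-site probabilities sum only to a polylogarithmic quantity rather than to the surviving $\gL'_N$-partition function. The fact that near-contacts in the frame concentrate precisely where $\cM_N$ is attained — the one region where $H^{m,\hat\phi}_N$ is not small — is the reason $\cC_N$ must control $\cM_N$ and why the exponent $\alpha$ (any value in $(11/20,1)$) is tuned as in \eqref{parameters}.
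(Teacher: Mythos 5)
The paper's proof takes a genuinely different and considerably shorter route, and two of its moves are worth singling out.

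First, although the display in the Proposition is written without them, both sides of \eqref{primieq} are implicitly wrapped in $\bbE\,\hat\bE^m\log[\cdot]$; this is visible both in the proof in Section~\ref{boundarry} (every line there carries $\bbE[\log\cdots]$ and the final estimate is for $\hat\bE^m\bE[\cdots]$) and in how \eqref{primieq} is invoked in the proof of Proposition~\ref{mainproposition}. This matters: the direct partition-function inequality $Z_1\ge Z_2 - \varepsilon$ that you set out to prove does \emph{not} imply $\bbE\log Z_1\ge\bbE\log Z_2-\varepsilon$, since $\log(Z_2-\varepsilon)$ can be far below $\log Z_2 -\varepsilon$ when $Z_2$ is small. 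So even if the decoupling you sketch could be repaired, it would prove a statement that is not the one actually used.

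Second, and more fundamentally, the paper avoids your entire decoupling difficulty with a de-disordering convexity trick. One considers the map
\begin{equation*}
\gb_2\mapsto \bbE\left[\log\bE^m_N\!\left[\exp\!\Big(\sum_{x\in\gL'_N}(\gb\go_x+h-\gl(\gb))\delta^{\hat\phi,u}_x+\sum_{x\in\tilde\gL_N\setminus\gL'_N}(\gb_2\go_x+h-\gl(\gb))\delta^{\hat\phi,u}_x\Big)\ind_{\cD_N}\right]\right],
\end{equation*}
which is convex in $\gb_2$ (the second derivative is a variance) and has vanishing derivative at $\gb_2=0$ because the $\go_x$'s in the frame are centred and decoupled from the Gibbs weight there. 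Hence the value at $\gb_2=\gb$ dominates the value at $\gb_2=0$; this replaces the random weight $(\gb\go_x-\gl(\gb)+h)$ on the frame by the \emph{deterministic} $(h-\gl(\gb))$, which is $\ge -\gl(\gb)$. After that, the frame contribution is a deterministic penalty $-\gl(\gb)\sum_{\text{frame}}\delta^{\hat\phi,u}_x$; one restricts to $\cC_N=\cD_N\cap\cC'_N$ with $\cC'_N$ exactly the Markov-inequality event $\{\sum_{x\in\tilde\gL_N\setminus\gL'_N}\delta^{\hat\phi,u}_x\le(\log N)^{1/16}\bE[\sum\delta\,|\,\cA_N]\}$, on which the penalty is bounded deterministically by $\gl(\gb)(\log N)^{1/16}\bE[\sum\delta\,|\,\cA_N]$ and can be taken outside the $\log$. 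That conditional first-moment bound, from Lemma~\ref{probacont} (which in turn relies on the event $\cA_N$ and the martingale/bridge analysis of Section~\ref{typic}), is what produces the $(\log\log N)^4(\log N)^{\alpha-1/8}$, and Markov plus $\bP[\cA_N^\cc]\le(\log N)^{-99}$ gives \eqref{lecnepetit}. Note in particular that the paper's $\cC_N$ does not involve $\cM_N$ or a separate $\cH_N$; the control you were trying to extract from $\{\cM_N\le\ell_N\}$ is already packaged into $\cC'_N$ and $\cA_N$.

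Your acknowledged obstacle — that $\ind_{\cD_N}$ and the non-negligible part of $H^{m,\hat\phi}_N$ near $\partial\gL_N$ entangle the field inside and outside $\gL'_N$ so the spatial Markov factorisation fails — is indeed a real one, and the convexity-in-$\gb_2$ argument is precisely the device that makes it irrelevant: after de-disordering, there is nothing stochastic to decouple on the frame, only a deterministic multiplicative factor to bound. Without that step, I do not see how to close the decoupling gap, and the per-site bound you would get multiplies, rather than adds to, the surviving $\gL'_N$-partition function, which would force you to know a priori that the latter is $\Theta(1)$ — which is essentially what Proposition~\ref{prop:inside} establishes, so the argument would become circular.
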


Finally we have to show that the expected $\log$ of the restricted partition function  in the r.h.s.\ of  \eqref{primieq} is indeed sufficiently large to compensate for the second term.
We actually only prove that this is the case for the set of good boundary conditions $\hat \phi$ which have no significant influence in the bulk of the box
\begin{equation}
\hat \cA_N:= \{ \forall  x\in \gL'_N, |H^{m,\hat\phi}_N(x)|\le 1\},
\end{equation}
and show that the contribution of bad boundary condition is irrelevant.

\medskip

We have chosen $u_h$ in a way such that the density of expected density contact is very scarce 
(the total expected number of contact in the box is a power of $\log N$, see \eqref{grelot} below),
but the unlikely event that $\phi$ has a lot of contact is sufficient to make the second moment of the partition very large.
Hence for our analysis to work, it is necessary  to restrict the partition function to trajectories which have few contacts.
We set 
\begin{equation}\begin{split}\label{defbn}
L_N&:= \sum_{x\in  \gL'_N} \delta^{\hat \phi,u}_x,\\
\cB_N&:=\cC_N\cap \left\{ L_N\le (\log N)^{\frac{\alpha+1}{2}}\right\}.
\end{split}\end{equation}

\begin{proposition}\label{prop:inside}
We have 
\begin{itemize}
 \item [(i)]  For $N$ sufficiently large 
 \begin{equation}\label{bcondition}
  \hat\bP^m[ \hat \cA^\cc_N] \le N^{-4}.
 \end{equation}
  \item [(ii)] For any $\hat \phi \notin \hat \cA_N$ 
  \begin{equation}\label{binfluence}
  \bbE \bE^m_N \left[ \exp\left(\sum_{x\in \gL'_N} \left( \gb\go_x-\gl(\gb)+h \right)\delta^{\hat \phi,u}_x\right)\ind_{\cC_N} \right]\ge -N^2 \gl(\gb)-\log 2.
 \end{equation}
\item [(iii)] There exists a constant $c>0$ such that for any $\hat \phi \in \hat \cA_N$ 
 \begin{equation}\label{gcondition}
  \bbE \log \bE^m_N \left[ \exp\left(\sum_{x\in \gL'_N} \left( \gb\go_x-\gl(\gb)+h \right)\delta^{\hat \phi,u}_x\right)\ind_{\cB_N} \right]\ge  c h (\log N)^{\alpha}-\log 2.
 \end{equation}

\end{itemize}
\end{proposition}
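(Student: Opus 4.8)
\emph{Items (i) and (ii).} For (i) the plan is to bound the variance of the harmonic extension on $\gL'_N$: by \eqref{covH}, for $x\in\gL'_N$,
\begin{equation*}
\hat\bE^m\big[(H^{m,\hat\phi}_N(x))^2\big]=G^m(x,x)-G^{m,*}(x,x)=\int_0^\infty e^{-m^2t}\big(P_t(x,x)-P^*_t(x,x)\big)\,\dd t .
\end{equation*}
Since $x\in\gL'_N$ one has $d(x,\partial\gL_N)\ge N(\log N)^{-1/8}$ while $m_h^{-1}=N_h(\log N_h)^{-1/4}$, so $d(x,\partial\gL_N)\,m_h\ge(\log N_h)^{1/8}$. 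Splitting the integral at $t=m^{-2}$ and using \eqref{kilcompare} (for $t\le m^{-2}$ the integrand is $\le Ct^{-1}e^{-(d(x,\partial\gL_N)m)^2/C}\le Ct^{-1}e^{-(\log N)^{1/4}/C}$, and for $t\ge m^{-2}$ the factor $e^{-m^2t}$ dominates) one gets $\hat\bE^m[(H^{m,\hat\phi}_N(x))^2]\le e^{-(\log N)^{c}}$ for some $c>0$; the Gaussian tail bound \eqref{gtail} then gives $\hat\bP^m[|H^{m,\hat\phi}_N(x)|>1]\le\exp(-\tfrac12 e^{(\log N)^{c}})$, and a union bound over the at most $N^2$ sites of $\gL'_N$ yields \eqref{bcondition}. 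For (ii), set $Y:=\sum_{x\in\gL'_N}(\gb\go_x-\gl(\gb)+h)\delta^{\hat\phi,u}_x$; Jensen's inequality for the (disorder independent) measure $\bP^m_N(\cdot\mid\cC_N)$ gives $\log\bE^m_N[e^Y\ind_{\cC_N}]\ge\log\bP^m_N[\cC_N]+\bE^m_N[Y\mid\cC_N]$, and taking $\bbE$, using $\bbE[\go_x]=0$, $0\le\delta^{\hat\phi,u}_x\le1$, $|\gL'_N|\le N^2$, $h>0$ and $\bP^m_N[\cC_N]\ge\tfrac12$ for $N$ large (by \eqref{lecnepetit}), one obtains \eqref{binfluence}.

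\emph{Item (iii): set-up.} Here one runs a conditioned second moment argument. Write $W:=\bE^m_N[e^Y\ind_{\cB_N}]$. Since $\bbE[e^{(\gb\go_x-\gl(\gb))j}]$ equals $1$ for $j\in\{0,1\}$ and $e^{\chi}$ for $j=2$, with $\chi:=\gl(2\gb)-2\gl(\gb)\ge0$, the two annealed moments read
\begin{equation*}
\bbE[W]=\bE^m_N\big[e^{hL_N}\ind_{\cB_N}\big],\qquad
\bbE[W^2]=\bE^{m,\otimes 2}_N\big[e^{\chi\,O+h(L^{(1)}_N+L^{(2)}_N)}\ind_{\cB^{(1)}_N}\ind_{\cB^{(2)}_N}\big],
\end{equation*}
where $\phi^{(1)},\phi^{(2)}$ are two independent copies of the field, $L^{(i)}_N,\cB^{(i)}_N$ the associated quantities, and $O$ is the number of sites of $\gL'_N$ that are contacts for both copies (the replica overlap). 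As $\hat\phi\in\hat\cA_N$ we have $|H^{m,\hat\phi}_N|\le1$ on $\gL'_N$, so the boundary condition perturbs the events $\{\delta^{\hat\phi,u}_x=1\}$ only harmlessly and the analysis reduces to that of the bulk field.

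\emph{Item (iii): the two moments.} For the first moment, the plan is to invoke the fine description of the high level sets of the two-dimensional free field developed in Sections \ref{liminouze}--\ref{intelinside} to show that with $\bP^m_N$-probability bounded below the field has of order $(\log N)^\alpha$ points of $\gL'_N$ near level $u$, all contained in a single mesoscopic cluster, and that $\cB_N$ holds on that event (in particular the constraint $L_N\le(\log N)^{(\alpha+1)/2}$ leaves ample room); this gives $\bbE[W]\ge c\,e^{c'h(\log N)^\alpha}$. The delicate point is the matching second moment bound $\bbE[W^2]\le C(\bbE[W])^2$, which amounts to showing that $O$ is negligible: on $\cB_N$ the contacts sit inside a cluster of mesoscopic diameter, and since $\phi^{(1)},\phi^{(2)}$ are independent their clusters occupy ``spread out'' positions, so $\{O\ge1\}$ has probability at most of order $(\text{mesoscopic volume})^2/N^2$, which for the choice \eqref{parameters} is far below $e^{-\chi(\log N)^{(\alpha+1)/2}}$; splitting $\bbE[W^2]$ over $\{O=0\}$ and $\{O\ge1\}$ then gives the claim. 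Making this rigorous requires conditioning on the coarse components of the scale decomposition of Section \ref{decompo}, so that, given the coarse fields, the two fine fields are essentially independent with well-spread cluster locations.

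\emph{Item (iii): conclusion.} By Paley--Zygmund, $\bbP[W\ge\tfrac12\bbE[W]]\ge c''>0$, so on a set of disorders of positive probability $\log W\ge\log(\tfrac12\bbE[W])\ge ch(\log N)^\alpha$; combined with a crude deterministic lower bound for $W$ on the remaining disorders — again exploiting that the $\cB_N$-contacts live in a mesoscopic region, so only the local disorder matters — this shows that the bad-disorder contribution to $\bbE\log W$ is negligible compared with $h(\log N)^\alpha$, and \eqref{gcondition} follows after adjusting the constants. The main obstacle throughout is the second moment estimate, i.e.\ the control of the overlap $O$ via the scale decomposition and the extremal-process technology for the two-dimensional free field, which is exactly the content of Sections \ref{liminouze} and \ref{intelinside}.
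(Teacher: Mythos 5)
Items (i) and (ii) of your proposal coincide with the paper's proof: for (i) the bound on $\hat\bE^m[(H^{m,\hat\phi}_N(x))^2]$ via \eqref{kilcompare} and a union bound, and for (ii) Jensen's inequality conditionally on $\cC_N$ together with $\bP[\cC_N]\ge 1/2$. These parts are fine.

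For item (iii), the high-level shape (a conditioned second-moment argument) is right, but the mechanism you invoke for the second-moment control is not the one that actually works, and the concentration step is left with a real gap. Concretely: you claim that the constraint $\cB_N$ forces the contact set into ``a single mesoscopic cluster'' and that two independent clusters overlap with probability $(\text{meso.\ volume})^2/N^2$; but $\cB_N$ only caps the \emph{count} $L_N\le(\log N)^{(1+\alpha)/2}$ and says nothing about the geometry of the level set. The paper's control of the second moment is in fact elementary and does not use the scale decomposition at all. Writing $\zeta:=Y_N/\bbE[Y_N]$, one uses the cap $L_N\le(\log N)^{(1+\alpha)/2}$ twice: first to bound the tilted density by $\dd\tilde\bP_h/\dd\bP\le N^{1/4}$, and second to bound the overlap factor $\exp(\chi\,\delta^{(1)}\!\cdot\!\delta^{(2)})$ by $1+e^{\chi(\log N)^{(1+\alpha)/2}}\sum_x\delta^{(1)}_x\delta^{(2)}_x$; then the pointwise bound $\bE[\delta^{\hat\phi,u}_x]\le CN^{-2}(\log N)^{1+\alpha}$ of Lemma \ref{probacont} gives $\sum_{x\in\gL'_N}(\bE[\delta^{\hat\phi,u}_x])^2\lesssim N^{-2}(\log N)^{2(1+\alpha)}$, so $\bbE[\zeta^2]-1\le N^{-1}$. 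No cluster geometry is needed here. The scale decomposition and barrier events of Sections \ref{liminouze}--\ref{intelinside} enter a \emph{different} step, namely the lower bound $\bE[L_N\mid\cB_N]\ge c(\log N)^\alpha$: one introduces the barrier-restricted count $L'_N$ and controls its first and second moments (Lemmas \ref{lesperance} and \ref{covariancee}), none of which requires the contacts to lie in a single cluster.

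Your conclusion via Paley--Zygmund is also not closed. After $\bbP[W\ge\tfrac12\bbE[W]]\ge c''$ you appeal to ``a crude deterministic lower bound for $W$ on the remaining disorders'', but no such bound is at hand: on the complement there is no a priori lower bound for $\log W$, and restricting to $\{L_N=0\}$ gives $W\ge\bP[\cB_N\cap\{L_N=0\}]$, which is not obviously bounded below since $\bE[L_N]$ is of order $(\log N)^{\alpha}$. The paper avoids this by estimating $\bbE[\log\zeta]$ directly: using $\log\zeta+1-\zeta\ge-(\zeta-1)^2$ on $\{\zeta\ge1/2\}$ and Cauchy--Schwarz on the event $\{\zeta\le1/2\}$ (whose probability is $O(N^{-1})$ by Chebyshev), together with the rough bound $|\log\zeta|\le(\log N)^{(1+\alpha)/2}\max_{x\in\gL'_N}|\gb\go_x-\gl(\gb)|$ which has second moment $O((\log N)^{3+\alpha})$. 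This gives $\bbE[\log\zeta]\ge-CN^{-1/2}\mathrm{polylog}(N)$, and then $\bbE\log Y_N\ge\log\bbE[Y_N]+\bbE[\log\zeta]\ge h\,\bE[L_N\mid\cB_N]+\log\bP[\cB_N]-o(1)$, which combined with $\bE[L_N\mid\cB_N]\ge c(\log N)^\alpha$ yields \eqref{gcondition}. You would need to supply an argument of comparable force to close the Paley--Zygmund route.
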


\begin{proof}[Proof of Proposition \ref{mainproposition}.]

Using Proposition \ref{prop:inside}, we have
\begin{multline}
  \bbE \hat \bE^{m}\log \bE^m_N \left[ \exp\left(\sum_{x\in \gL'_N} \left( \gb\go_x-\gl(\gb)+h \right)\delta^{\hat \phi,u}_x\right)\ind_{\cC_N} \right]
  \\ 
  \ge -\hat\bP^m[ \hat \cA^{\cc}_N]\left(N^2 \gl(\gb)+\log 2\right)\\
  + \bbE \hat \bE^{m}\left[ \log   \bE^m_N \left[ \exp\left(\sum_{x\in \gL'_N} \left( \gb\go_x-\gl(\gb)+h \right)\delta^{\hat \phi,u}_x\right)\ind_{\cB_N}\right] 
  \ind_{\hat \cA_N}\right]
 \\ \ge c h (\log N)^{\alpha}-1.
  \end{multline}
  Using Proposition \ref{prop:boundary} and recalling our choice of parameters  \eqref{parameters}, we have, for $h$ sufficiently small
  \begin{multline}
\bE^m_N \left[ \exp\left(\sum_{x\in \tilde \gL_N} \left( \gb\go_x-\gl(\gb)+h \right)\delta^{\hat \phi,u}_x\right)\ind_{\cD_N} \right]- K(mN)^2\\
 \ge 
c h (\log N)^{\alpha}-C(\gb)(\log \log N)^4 (\log N)^{\alpha-\frac{1}{16}}- K (\log N)^{1/2}-1\\
\ge 
(c/2) (\log N)^{\alpha-\frac{1}{20}},
\end{multline}
where in the last line we used that $\alpha-\frac{1}{20}>1/2$.
This is sufficient to conclude.

\end{proof}

\subsection{Proof of Proposition \ref{thednproba}}\label{propnineone}

Again in this proof simply write $H$ for $H^{m,\hat \phi}_N$
The proof simply relies on computing the expectation and variance of  $$\sum_{x\in \tilde \gL_N} (\phi(x)+H(x))^2.$$
We have 
\begin{equation}\label{espuin}
\bE^m_N \left[ \sum_{x\in \tilde \gL_N} [\phi(x)+H(x)]^2\right]= \bE^m_N \left[ \sum_{x\in \gL_N} \phi(x)^2\right]+\sum_{x\in \tilde \gL_N} H(x)^2.
\end{equation}
From \eqref{eq:stimagreen}, for an appropriate choice of $C$, the following holds
\begin{equation}\label{espuout}
 \frac{1}{N^2} \bE^m_N \left[ \sum_{x\in \gL_N} \phi(x)^2\right] \ge \frac{1}{2\pi} | \log m | -C\ge \frac{2f(m)}{m^2}-C.
\end{equation}
Now let us estimate the variance.
With the cancellation of odd moments of Gaussians, the expansion of the products gives 
\begin{multline}\label{eq:decompp}
\bE^m_N \left[ \left( \sum_{x\in \tilde \gL_N} (\phi(x)+H(x))^2 \right)^2\right]-\left(\bE^m_N \left[ \sum_{x\in \tilde \gL_N} (\phi(x)+H(x))^2\right]\right)^2\\
=
\bE^m_N \left[ \left( \sum_{x\in \gL_N} \phi(x)^2 \right)^2\right] - \left(\bE^m_N \left[ \sum_{x\in \gL_N} \phi(x)^2\right]\right)^2\\
+
 4\bE^m_N\left[ \sum_{x,y \in \tilde \gL_N}  \phi(x)\phi(y)H(x)H(y) \right].
\end{multline}
We treat the last term separately and first concentrate on the two firsts which correspond to the zero boundary condition case.
We have 
\begin{equation}
 \bE^m_N\left[\phi(x)^2\phi(y)^2\right]- \bE^m_N\left[\phi(x)^2\right]\bE^m_N\left[\phi(y)^2\right]= 2 \left[ G^{m,*}(x,y)\right]^2,
\end{equation}
and hence from \eqref{greensum} we can deduce that  
 \begin{multline}
 \bE^m_N \left[ \left( \sum_{x\in \gL_N} \phi(x)^2 \right)^2\right]-\left(\bE^m_N \left[ \sum_{x\in \gL_N} \phi(x)^2\right]\right)^2
 \\ = 2 \sum_{x, y\in \gL_N} G^{m,*}(x,y)
  \le C N^2 m^{-2}.
 \end{multline}
Concerning the last term in \eqref{eq:decompp}, we bound it as follows
\begin{multline}
  \bE^m_N\left[ \sum_{x,y\in \tilde \gL_N}  \phi(x)\phi(y)H(x)H(y) \right]=  \sum_{x,y\in \tilde \gL_N} G^{m,*}(x,y)H(x)H(y)\\
  \le \sum_{x\in\tilde \gL_N} H(x)^2 \sum_{y\in \gL_N} G^{m,*}(x,y)\le  C m^{-2}\sum_{x\in \tilde \gL_N} H(x)^2,
\end{multline}
where in the last inequality we used \eqref{greensum}.
This gives 
\begin{equation}\label{lavarr}
\Var_{\bP^m_N}\left( \sum_{x\in \tilde \gL_N} (\phi(x)+H(x))^2  \right) \le C  m^{-2}\left( N^2+ \sum_{x\in \tilde \gL_N} H(x)^2\right).
\end{equation}
Hence, as long as $K$ is chosen sufficiently large, using \eqref{lavarr} and \eqref{espuin}-\eqref{espuout} we obtain
\begin{multline}
 \bP^m_N\left[\frac{1}{N^2}\sum_{x\in \tilde \gL_N} (\phi(x)+H(x))^2 \le   \frac{2 f(m)}{m^2}-K \right]\\ 
 \le \frac{\Var_{\bP^m_N}\left( \sum_{x\in \tilde \gL_N} (\phi(x)+H(x))^2  \right)}{\left(\bE^m_N\left[\sum_{x\in \tilde \gL_N} (\phi(x)+H(x))^2 \right]-N^2 \left[\frac{2 f(m)}{m^2}-K\right] \right)^2}
 \\
 \le  \frac{ C m^{-2}\left( \sum_{x\in \tilde \gL_N} H(x)^2+N^{2} \right)}{\left((K-C)N^2+\sum_{x\in \tilde \gL_N} H(x)^2 \right)^2}
 \le C m^{-2} N^{-2}.
 \end{multline}
 The result thus follows for our choice for the range of $N$.
 
\qed

\section{Preliminary work for the proofs of Propositions \ref{prop:boundary} and \ref{prop:inside}}\label{liminouze}

Both proofs require a detailed knowledge on the distribution of the number of contact in $\gL_N\setminus \gL'_N$ and in $\gL'_N$.
The highly correlated structure of the field makes this kind of information difficult to obtain.

\medskip

We have chosen $u$ quite high in order to obtain a very low empirical density of contact. For this reason our problem is quite related to that of the 
study of the maximum and of the extremal process of the $2$-dimensional free field, which has been the object of numerous studies in the past 
\cite{cf:BDG, cf:BDZ, cf:Dav, cf:Mad} together with the related subject of Branching Random Walk \cite{cf:A, cf:AS, cf:HS} or Brownian Motion \cite{cf:ABK}.
We borrow two key ideas from this literature:
\begin{itemize}
 \item [(1)] The Gaussian Free Field can be written as a sum of independent fields whose correlation spread on different scales.
 This makes the process very similar to the branching random walk.
 \item [(2)] The number of point present at a height close to the expected the maximum of the field is typically much smaller than its expectation
 (that is: by a factor $\log N$) but this $\log$ factor disapears if one conditions to a typical event.
\end{itemize}

These two points are respectively developed in Section \ref{galefash} and \ref{typic}.

  \subsection{Decomposing the free field in a martingale fashion}\label{galefash}

Let us decompose the massive free field into independent fields in order to separate the different scales 
in the correlation structure. The idea of decomposing the GFF is not new was used a lot to study the extremum and there are several possible choices 
(see \cite{cf:BDG} where a coarser decomposition is introduced or 
more recently \cite{cf:BDZ}). Our choice of decomposition is made in order to have a structure similar to that present in \cite{cf:Mad}. 

\medskip

There are several possible choices for the decomposition. The advantage of the one we present below is that the kernel of 
all the fields are expressed in terms of the 
heat-kernel, for which we have good estimates (cf. Section \ref{hkernel}).
Set (recall \ref{greenff})
\begin{equation}
k:=\lfloor G^m(x,x) \rfloor
\end{equation}
(it does not depend on $x$ as $G$ is translation invariant).
We perform the decomposition of $\phi$ into a sum $k$ subfield, each of which having (roughly) unit variance. With this construction,
$\phi(x)$ is the final step of a 
centered Gaussian random walk with $k$ steps.
With this in mind we define a decreasing sequence of times $t_i$, $i\in \lint 0,k \rint$ as follows
\begin{equation}\label{defti}\begin{cases}
t_0:=\infty,\\
\int_{t_1}^{\infty}e^{-m^2t} P_t(x,x) \dd t:=1,\\
\int_{t_{i+1}}^{t_{i}}e^{-m^2t} P_t(x,x) \dd t:=1, \quad  i\in \lint 1,k-2 \rint, \\
t_k:=0.
\end{cases}
\end{equation}
This definition implies that 
\begin{equation}
\int_{0}^{t_{k-1}}e^{-m^2t} P_t(x,x) \dd t \in [1,2).
\end{equation}
From the Local Central Limit Theorem \eqref{lclt}
we can deduce that
there exists a constant $C>0$ such that 
\begin{equation}\begin{split}\label{madga}
 \sup_{i\in \lint 1,k-1 \rint} |\log t_i-  4\pi(k-i)|\le C,\\
\left |k+  \frac{1}{2\pi}\log m\right|\le C,
\end{split}\end{equation}
We define $(\xi_i)_{i\in\lint 1,k \rint}$ to be a sequence of centered Gaussian fields (we use $\bP$ to denote their joint law) indexed by $\gL_N$,
each with covariance functions given by
 \begin{equation}
Q^{*}_i(x,y):=\int_{t_{i}}^{t_{i-1}}e^{-m^2t} P^{*}_t(x,y) \dd t,
\end{equation}
and set 
\begin{equation}
 \phi_i:=\sum_{j=1}^i \xi_i.
\end{equation}
Note that the  covariance of $\phi_k$ is given by $G^{m,*}_N$ and for this reason we simply set $\phi:=\phi_k$ and work from now on this extended probability space.
For this reason we use simply $\bP$ instead of $\bP^m_N$ (this should bring no confusion as $m$ and $N$ a now fixed by \eqref{parameters}).

\medskip

Note that the distribution of the field $\xi_i$  in the bulk of $\gL_N$ is ``almost'' translation invariant and its variance is very close to one. 
When $x$ is close to the boundary $Q^*_i(x,x)$ becomes smaller, and this effect starts at distance 
$\exp(2\pi (k-i))$ from the boundary. The distance  $\exp(2\pi(k-1))$ is also the scale on which
 covariance function $Q^*_i(x,y)$  varies in the bulk.
For this reason it is useful to set
\begin{equation}\label{defjx}
j(x):= \left ( k- \left\lceil \frac{1}{2\pi}\log d(x,\partial\gL_N) \right\rceil \right)_+.
\end{equation}
As a consequence of \eqref{eq:stimagreen}, of the definition of $k$ and that $j(x)$, we have
\begin{equation}\label{kratz1}
|\bE[\phi^2(x)]-(k-j(x))|\le C.
\end{equation}
We can deduce from this an estimate of the variance of $\phi_i(x)$,  up to a $O(1)$ correction:
there exists a constant $C$ such that
\begin{equation}\label{eq:varj}
\forall x \in \gL'_N, \, \forall i\in \{0,\dots,k\}, \quad   \left| \bE[\phi^2_i(x)]- (i-j(x))_+ \right | \le  C
\end{equation}
Indeed from Lemma \ref{lem:kerestimate} $(iii)$, we have  
\begin{equation}\label{kratz2}
\int_{\infty}^{t_{j(x)}} P^*_t(x,x)\dd t\le C,
\end{equation}
As the variance of $\xi_i(x)$ is bounded by $1$ (or $2$ when $i=k$) this implies 
\begin{equation}
 \bE[\phi^2_i(x)]\le C+(i-j(x))_+. 
\end{equation}
Finally we obtain the other bound using the fact that, as the increments have variance smaller than one (ore two for the last one) we have
\begin{equation}
  \bE[\phi^2(x)]-\bE[\phi^2_i(x)]  \le k-i+1
\end{equation}
and we conclude using \eqref{kratz1}.

\subsection{The conditional expectation for the number of contact}\label{typic}

Now we are going to use the decomposition in order to obtain finer results on the structure of the field $\phi$.
The idea is to show that with high probability the trajectory of $(\phi_i(x))_{i\in \lint 0,k \rint}$ tend to stay below a given line, for all $x\in \gL_N$, 
and thus 
if $\phi(x)$ reaches a value close to the maximum of the field, then conditioned to its final point, $(\phi_i(x))_{i=0}^k$ look 
more like a Brownian excursion than like a Brownian bridge, as it ``feels'' a constraint from above. 
If one restricts to the typical event described above, this constraint yields a loss of a factor $k$ 
(hence $\log N$) in the probability of contact.

\medskip

Note that for technical reasons, points near the boundary are a bit delicate to handle and thus we choose to prove a property in a sub-box $\gL''_N$ which excludes only 
a few points of $\gL_N$. We set
\begin{equation}
\gL''_N:= \bbZ^2 \cap \left[N(\log N)^{-2}, N(1-(\log N)^{-2})\right],
\end{equation}
and
$$\gamma:= 2\sqrt{2\pi},$$
and define the event
\begin{equation}\label{defan}
\mathcal A_N= \left\{ \forall x \in \gL''_N,\ \forall i\ge j(x),\, \phi_i(x)\le \gamma(i-j(x))+   100\gamma\log \log N \right\}.
\end{equation}
We show that this event is very typical.
This is a crucial step to define the event $\cC_N$ and to estimate the probability of  $\cB_N$.

\medskip

\begin{proposition} \label{th:condfirstmom}
We have 
\begin{equation}
 \bP\left[ \mathcal A_N \right]\ge 1-(\log N)^{-99},
\end{equation}
\end{proposition}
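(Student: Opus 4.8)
The plan is to bound the probability of the complement of $\mathcal A_N$ by a union bound over $x \in \gL''_N$ and over the scale index $i \geq j(x)$, using at each step a Gaussian tail estimate combined with the martingale structure of $(\phi_i(x))_{i}$ established in Section \ref{galefash}. The key point is that by \eqref{eq:varj} the variance of $\phi_i(x)$ is $(i-j(x))_+ + O(1)$, so $\phi_i(x)$ behaves like a centered random walk with essentially unit-variance increments run for $i - j(x)$ steps. The barrier $\gamma(i-j(x)) + 100\gamma\log\log N$ with $\gamma = 2\sqrt{2\pi}$ is linear in the number of steps, hence lies well above the typical fluctuation $\sqrt{i-j(x)}$; the slope $\gamma$ is chosen large enough (in particular $\gamma^2/2 > 2\cdot 2\pi$, recall $k \approx \frac{1}{2\pi}|\log m| \approx \frac{1}{2\pi}\log N$ from \eqref{madga}) that the resulting Gaussian exponent beats the entropy cost of summing over all $x$ and all $i$.

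The steps, in order, would be: (1) Fix $x \in \gL''_N$ and $i \geq j(x)$; write $n := i - j(x) \in \lint 0, k \rint$ and note $\mathrm{Var}(\phi_i(x)) \leq n + C$ by \eqref{eq:varj}. (2) Apply the Gaussian tail bound \eqref{gtail} to get
\begin{equation}
\bP\left[\phi_i(x) > \gamma n + 100\gamma\log\log N\right] \leq \exp\left(-\frac{(\gamma n + 100\gamma\log\log N)^2}{2(n+C)}\right) \leq \exp\left(-\frac{\gamma^2}{2}n - c\,\gamma^2 (\log\log N)^2 / (n+C)\right),
\end{equation}
and crucially, using $(\gamma n + a)^2/(n+C) \geq \gamma^2 n + 2\gamma a - C'$ when $a = 100\gamma\log\log N$, extract a clean lower bound of the form $\exp(-\tfrac{\gamma^2}{2} n)\cdot (\log N)^{-D}$ for a large constant $D$ (taking $D$ as large as we like by the factor $100$). (3) Sum over $n$ from $0$ to $k$: since $\sum_{n\geq 0} e^{-\gamma^2 n/2} < \infty$ (as $\gamma^2/2 = 4\pi > 0$), the sum over scales contributes only a bounded factor, so
\begin{equation}
\bP\left[\exists\, i \geq j(x):\ \phi_i(x) > \gamma(i-j(x)) + 100\gamma\log\log N\right] \leq C (\log N)^{-D}.
\end{equation}
(4) Union bound over $x \in \gL''_N$, of which there are at most $N^2$: obtain $\bP[\mathcal A_N^{\cc}] \leq C N^2 (\log N)^{-D}$. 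This is not yet good enough, so one must be more careful in step (2)–(3): the entropy of choosing $x$ is really only $\log(N^2) = 2\log N$ worth of exponential cost, but the point $x$ with small $j(x)$ sits near the boundary and the field there has smaller variance, while for $x$ in the bulk $j(x)$ is small and the barrier starts at $i$ near $0$. The correct bookkeeping is to sum over $i$ first (giving geometric decay $e^{-\gamma^2 i/2}$ from scale $i$) and observe that the number of points $x$ with $j(x) \leq i$ — equivalently $d(x,\partial\gL_N) \lesssim e^{2\pi i}$ — is at most $C N e^{2\pi i}$, not $N^2$; then $\sum_i N e^{2\pi i} \cdot e^{-\gamma^2 i / 2}(\log N)^{-D} = \sum_i N e^{(2\pi - 2\pi)i}(\log N)^{-D}$... which fails to decay. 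This shows one cannot be this crude and must instead exploit a maximal/Doob-type inequality across scales for fixed $x$ before union-bounding, or directly invoke a known barrier estimate for the decomposed field.

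The main obstacle I anticipate is precisely this balance: a naive union bound over the roughly $N^2$ sites against the Gaussian tail does not immediately yield the target $(\log N)^{-99}$, because $\gamma^2/2$ only barely dominates the per-scale "branching" factor $2\pi$ coming from the number of sites at a given distance-to-boundary scale, and the two effects ($N^2$ sites vs. geometric gain per scale) must be disentangled carefully. The resolution is to use the martingale structure: for fixed $x$, control $\max_{i \geq j(x)}(\phi_i(x) - \gamma(i - j(x)))$ via a single tail estimate (using that $\phi_i(x) - \gamma(i-j(x))$ is a supermartingale-like object once we subtract the drift, so its maximum has an exponential tail with the full rate $\gamma^2/2$ applied at the level $100\gamma\log\log N$), giving a per-site probability $\leq (\log N)^{-101}$ say; the crude union bound over $N^2$ sites would still cost a factor $N^2$, so in fact the barrier constant $100\gamma\log\log N$ must be traded against $N$, not $\log N$ — meaning the event as literally stated with a $\log\log N$ correction is subtler than it looks and likely relies on the fact that $\phi_i(x)$ for $i < j(x)$ is essentially degenerate (variance $O(1)$) so that the effective number of "free" scales is $k - j(x) \leq \frac{1}{2\pi}\log(N/d(x,\partial\gL_N))$, and a chaining/Borell-TIS argument over the field rather than a pointwise union bound is what actually closes it. I would therefore structure the proof around a Borell–TIS or Dudley-type bound for $\sup_{x,i}$ of the centered field $\phi_i(x) - \gamma(i-j(x))$ restricted appropriately, which is the standard route in the $\log$-correlated literature cited (\cite{cf:BDG, cf:Mad}), and flag the scale-by-scale variance control \eqref{eq:varj} together with the gradient estimate \eqref{gradientas} (to control the modulus of continuity of each $\xi_i$) as the two technical inputs doing the real work.
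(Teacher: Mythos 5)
You correctly diagnose that a naive union bound over the $\sim N^2$ sites cannot close, and the reason you give is essentially the right one: the per-site probability of ever crossing the barrier (even after the sharpest Doob bound for the single-site exponential martingale $e^{\gamma\phi_i(x) - \frac{\gamma^2}{2}\mathrm{Var}(\phi_i(x))}$, which gives $\sim (\log N)^{-100\gamma^2}$) is nowhere near $N^{-2}$, and refining the site count by $j(x)$ does not remove the $N^2$. But your proposed fix --- a Borell--TIS or Dudley chaining bound for $\sup_{x,i}\big(\phi_i(x)-\gamma(i-j(x))\big)$ --- is not the route the paper takes and is unlikely to deliver the target: Borell--TIS concentrates the supremum at scale $\sqrt{\sup_x\mathrm{Var}\cdot\log\log N}\sim\sqrt{k\log\log N}\sim\sqrt{\log N\cdot\log\log N}$, which is vastly larger than the $O(\log\log N)$ margin in the barrier, so it cannot distinguish the event $\cA_N^\cc$ from the typical behaviour at the required precision.

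The key idea you are missing is to \emph{average} the exponential martingale over the whole box \emph{before} applying a maximal inequality. The paper sets
\begin{equation}
M_i := \frac{1}{|\gL''_N|}\sum_{x\in \gL''_N} \exp\Big(\gamma\phi_i(x) - \tfrac{\gamma^2}{2}\bE[\phi_i^2(x)]\Big),
\end{equation}
a nonnegative martingale in $i$ with $\bE[M_0]=1$, so Doob's maximal inequality gives $\bP[\exists i:\ M_i\ge(\log N)^{100}]\le(\log N)^{-100}$ at no extra cost. The bridge between a single barrier crossing and $M_i$ being large is a continuity estimate for $\phi_i$ (the display \eqref{eq:localvar}, derived from the gradient bound \eqref{gradientas}): with probability $\ge 1-1/N$, any two sites within distance $e^{2\pi(k-i)}(\log N)^{-1}$ have $\phi_i$-values within $1$. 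So if the barrier is breached at some $(x,i)$, then all $\sim e^{4\pi(k-i)}(\log N)^{-2}$ sites near $x$ contribute $\gtrsim e^{\frac{\gamma^2}{2}(i-j(x))+100\gamma\log\log N - C}$ to the sum defining $M_i$; since $\gamma^2/2 = 4\pi$ exactly, the $e^{4\pi(k-i)}$ site count and the $e^{\frac{\gamma^2}{2}(i-j(x))}$ barrier gain combine against the $1/|\gL''_N|\sim N^{-2}\sim e^{-4\pi k}$ normalisation to leave only $e^{-4\pi j(x)}$ and powers of $\log N$. Because $j(x)\lesssim\frac1\pi\log\log N$ for $x\in\gL''_N$ (this is precisely why the event is stated on $\gL''_N$ and not $\gL_N$), one obtains $M_i\gtrsim(\log N)^{100\gamma - O(1)}\ge(\log N)^{100}$, and the proof closes. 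This averaged-martingale-plus-continuity argument is the standard Bramson/A\"{\i}d\'ekon--Shi first-moment upper bound from the branching random walk literature; it is the averaging over sites, exploiting the branching-like correlation structure, that absorbs the volume factor you were stuck on, not a Gaussian concentration inequality.

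Two smaller points: with $\gamma=2\sqrt{2\pi}$ you have $\gamma^2/2=4\pi$ exactly, not $2\pi$ (your computation ``$e^{(2\pi-2\pi)i}$'' uses the wrong value) and not strictly greater than $4\pi$ as you suggest earlier; this exact balance $\gamma^2/2=4\pi$ is the critical slope matching the per-scale entropy. And the restriction to $\gL''_N$, not $\gL'_N$ or $\gL_N$, is load-bearing in the argument and should not be glossed over.
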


\medskip

\begin{proof}
 We define for $i=0,\dots,k$
 \begin{equation}
 M_i:=\frac{1}{|\gL''_N|}\sum_{x\in \gL''_N} \exp\left( \gamma \phi_i(x)-\frac{\gamma^2}{2} \bE\left[\phi^2_i(x)\right] \right).
 \end{equation}
 It is trivial to check that it is a martingale for the filtration 
 \begin{equation}
  \cF_i:=\sigma(\phi_j(x), j\le i, x\in \gL''_N). 
 \end{equation}
Integrating the second inequality in \eqref{gradientas} on the interval $[t_i,\infty)$, we have for all $x,y \in \gL_N$
 \begin{equation}
|x-y|\le e^{2\pi(k-i)} \Rightarrow  \bE\left[  \left( \phi_i(x)-\phi_i(y) \right)^2\right]\le C |x-y|^2 e^{-4\pi(k-i)}.
\end{equation}
Using a union bound, this implies that for $N$ sufficiently large 
\begin{equation}\label{eq:localvar}
\bP\left[ \max_{i\in\lint 0,k-1 \rint}\max_{\left\{ (x,y)\in (\gL_N)^2 \ : \ |x-y| \le e^{2\pi(k- i)}(\log N)^{-1} \right\}}  |\phi_i(x)-\phi_i(y)| > 1 \right]
\le \frac{1}{N}.
\end{equation}
On the complement of this event, if for a fixed $x\in \gL''_N$ we have$$\phi_i(x)\ge \gamma (i-j(x)) + 100\log \log N,$$
then 
\begin{equation}\label{ineqcro}
M_j\ge \frac{1}{|\gL''_N|}\sum_{\{ y \ :  \ |y-x|\le  e^{2\pi(k- i)}(\log N)^{-1} \} }  
e^{ \gamma^2 (i-j(x))+ 100\gamma\log \log N - \frac{\gamma^2}{2} \bE[\phi^2_i(y)] }.
\end{equation}
 Now as $i\ge j(x)$, we realize that in the range of $y$ which is considered $j(y)\ge j(x)-1$ and hence from \eqref{eq:varj} we have
 $$\bE[\phi^2_i(y)] \le i-j(x)+C+1.$$  
 For this reason, if $N$ is sufficiently large, \eqref{ineqcro} implies that.
 \begin{multline}
  M_i\ge  \frac{1}{\gL''_N} \exp\left( 4\pi(k-i)+ \frac{\gamma^2}{2} (i-j(x))+100\gamma (\log \log N) \right)\\
  \le 
  C e^{-4 \pi j(x)} (\log N)^{100\gamma}\le (\log N)^{-100}.
 \end{multline}
 The last inequality is valid for $N$ sufficiently large, it is obtained by
 using the definition \eqref{defjx} and the fact that $x\in \gL''_N$ (which implies that $j(x)\le \frac{1}{\pi} \log \log N + C$).
 Using \eqref{eq:localvar} and the fact that $M$ is a martingale, we conclude that
\begin{equation}
 \bP[\cA_N]\le \frac{1}{N}+ \bP \left[ \exists i, \, M_i\ge (\log N)^{100} \right] \le  \frac{1}{N}+ (\log N)^{-100}.
\end{equation}
\end{proof}
To conclude this section, we note that conditioning on the event $\cA_N$ the probability of having a contact drops almost 
by a factor $(\log N)$, in the bulk of the box. More precisely 
\begin{lemma}\label{probacont}
There exists a constant $C$ such that 
\begin{itemize}
 \item For all $x\in \gL_N$ we have 
 \begin{equation}
   \frac{1}{C}N^{-2}(\log N)^{1+\alpha} \le \hat \bE\bE\left[ \delta^{\hat \phi,u}_x\right]\le CN^{-2}(\log N)^{1+\alpha}.
 \end{equation}
 \item For all $x\in \gL''_N$, we have
 \begin{equation}\label{greluche}
 \bE\left[ \delta^{\hat \phi,u}_x \ind_{\cA_N} \right]
 \le  C N^{-2} (\log N)^{\alpha} \left[ H(x)^2+(\log \log N)^2\right] \exp\left(\gamma H(x)- \frac{\gamma^2}{2}j(x)  \right).
\end{equation}
In particular 
\begin{equation}\label{grelot}
 \hat\bE^m  \bE\left[ \delta^{\hat \phi,u}_x \ind_{\cA_N} \right]\le  C N^{-2} (\log N)^{\alpha} (\log \log N)^2.
\end{equation}
\end{itemize}
\end{lemma}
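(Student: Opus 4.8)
The plan is to estimate $\bE[\delta^{\hat\phi,u}_x]$ by conditioning on the value of the final field $\phi(x)$ and exploiting the Gaussian variance estimate $\bE[\phi^2(x)] = k - j(x) + O(1)$ from \eqref{kratz1}, together with the relation $k + \frac{1}{2\pi}\log m = O(1)$ from \eqref{madga} and our choice of parameters \eqref{parameters}. First I would establish the unconditioned two-sided bound: since $\delta^{\hat\phi,u}_x = \ind_{[u-1,u+1]}(\phi(x) + H(x))$ and, for $\hat\phi$ distributed as $\hat\bP^m$ and $\phi$ under $\bP$, the sum $\phi(x) + H(x)$ is a centered Gaussian with variance $G^m(x,x) = k + O(1)$, the probability that it lies in the window $[u-1,u+1]$ of width $2$ is, by the Gaussian density, comparable to $\frac{1}{\sqrt{2\pi G^m(x,x)}} e^{-u^2/(2 G^m(x,x))}$. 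Plugging in $u = u_h \sim \sqrt{2/\pi}\,\log N$, $G^m(x,x) \sim \frac{1}{2\pi}\log N$ (recall $m_h N_h = (\log N_h)^{1/4}$, so $|\log m_h| \sim \log N_h$ and $k \sim \frac{1}{2\pi}\log N$), one computes $u_h^2/(2 G^m(x,x)) = 2\log N - (2+\alpha)\log\log N + O(1)$, so the exponential contributes $N^{-2}(\log N)^{2+\alpha}$, and the prefactor $1/\sqrt{\log N}$ brings this to $N^{-2}(\log N)^{3/2 + \alpha}$. Wait --- I need to recheck: the correct reading is that the $\log\log N$ correction in $u_h$ is tuned precisely so the answer is $N^{-2}(\log N)^{1+\alpha}$ after accounting for all factors; I would carry out this elementary computation carefully, using $G^m(x,x) = \frac{1}{2\pi}\log N + O(1)$ uniformly (Lemma~\ref{Greenesteem}(i) with $m = m_h$, together with $m_h \asymp N^{-1}(\log N)^{1/4}$), to pin down the exponents.

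Second, for the conditioned estimate \eqref{greluche} I would use the martingale decomposition $\phi = \sum_{i=1}^k \xi_i$ and the observation that on $\cA_N$ the trajectory $(\phi_i(x))_{i \ge j(x)}$ is confined below the line $\gamma(i - j(x)) + 100\gamma\log\log N$. Conditioning on $\phi(x) + H(x) \in [u-1, u+1]$, i.e.\ on $\phi(x)$ near level $u - H(x)$, the event that the bridge $(\phi_i(x))_{i=j(x)}^{k}$ from $\phi_{j(x)}(x) = O(1)$ (its variance is $O(1)$ by \eqref{eq:varj}) to $\phi(x) \approx u$ stays below this line is, by the ballot-type / Brownian-bridge-below-a-line estimate (essentially Lemma~\ref{lem:bridge} applied to the Gaussian random walk with steps $\xi_i(x)$, whose increments have variance bounded by $2$ and whose total variance is $\gtrsim k - j(x)$), of probability $\asymp \frac{(\mathrm{gap})^2}{k - j(x)}$ where the gap is the distance from $u$ to the line at the endpoint, roughly $\gamma(k - j(x)) + 100\gamma\log\log N - (u - H(x))$. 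Since $u = u_h \approx \sqrt{2/\pi}\log N = \frac{\gamma}{2}\cdot\frac{1}{2\pi}\cdot 2\pi \cdots$ --- more precisely $u_h \approx \gamma k / 2 \cdot (\text{const})$, the leading linear terms partially cancel and one is left with a gap of order $H(x) + \log\log N$, giving the factor $[H(x)^2 + (\log\log N)^2]$. The shift of the endpoint by $H(x)$ also multiplies the Gaussian density by $e^{\gamma H(x) - \gamma^2 j(x)/2 + \cdots}$ via the standard change of the mean in the exponent; combining the density factor $\asymp N^{-2}(\log N)^{1+\alpha}$ with the extra confinement loss of order $1/(k - j(x)) \asymp 1/\log N$ and the gap-squared factor yields exactly \eqref{greluche}. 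Then \eqref{grelot} follows by taking $\hat\bE^m$: the field $H(x) = H^{m,\hat\phi}_N(x)$ has, under $\hat\bP^m$, covariance $G^m(x,x) - G^{m,*}(x,x)$ by \eqref{covH}, which for $x \in \gL''_N$ is $O(\log\log N)$ (the distance to the boundary is $\ge N(\log N)^{-2}$, so by Lemma~\ref{Greenesteem} the variance of $H(x)$ is $\le \frac{1}{2\pi}\log\frac{m^{-1}}{\,d(x,\partial\gL_N)\,} + O(1) = O(\log\log N)$); hence $\hat\bE^m[e^{\gamma H(x)}(H(x)^2 + (\log\log N)^2)] \le C(\log\log N)^2$ by the Gaussian moment generating function, absorbing the $e^{-\gamma^2 j(x)/2}$ factor (which is $\le 1$).

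The main obstacle I anticipate is the conditioned estimate \eqref{greluche}: one must combine the Gaussian bridge computation with the confinement event $\cA_N$ in a way that is uniform over $x \in \gL''_N$ and over the allowed boundary contributions, and carefully track how the $j(x)$-dependent starting scale interacts with Lemma~\ref{lem:bridge}. In particular the increments $\xi_i(x)$ have variances that are only $1 + O(\cdot)$ rather than exactly $1$, and near the lower range $i \approx j(x)$ the field $\phi_{j(x)}(x)$ already has $O(1)$ variance rather than starting from $0$, so one cannot apply a clean bridge estimate directly but must first absorb the first $j(x)$ steps into an $O(1)$ boundary term and then apply Lemma~\ref{lem:bridge} to the remaining $k - j(x)$ steps. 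Getting the upper bound on the confinement probability (rather than the lower bound, which is what one usually wants and which Lemma~\ref{lem:bridge} also supplies) requires the right-hand inequality in Lemma~\ref{lem:bridge}, $\bP[\max_i X_i \le x \mid X_k = 0] \le C(x + \log k)^2/k$, applied after the affine change of variables that turns the line $\gamma(i - j(x)) + \text{const}$ into a horizontal barrier --- this change of variables is legitimate because subtracting a deterministic affine function from a Gaussian random walk conditioned on its endpoint leaves a Gaussian bridge. The rest --- the unconditioned bound and the passage to \eqref{grelot} --- is routine Gaussian computation using the Green-function estimates of Section~\ref{hkernel}.
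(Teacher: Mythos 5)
Your first two bullets follow the paper's argument and are essentially correct: the unconditioned bound is the elementary Gaussian density computation for $\phi(x)+H(x)$ under $\hat\bP^m\otimes\bP$ (an infinite-volume field of variance $G^m(x,x)=\frac{1}{2\pi}\log N+O(1)$), and for \eqref{greluche} you correctly identify that one conditions on $\phi(x)$, uses the Gaussian density to produce the factor $N^{-2}(\log N)^{1+\alpha}e^{\gamma H(x)-\gamma^2 j(x)/2}$, and then invokes the upper bound in Lemma~\ref{lem:bridge} for the recentered walk $\phi_i(x)-\bE[\phi_i(x)\mid\phi(x)=t]$ to obtain the confinement cost of order $((\log\log N)+|H(x)|)^2/k$. (The paper first disposes of the degenerate case $H(x)\ge 4u/5$, where \eqref{greluche} holds trivially; you should do the same.)

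However, your deduction of \eqref{grelot} from \eqref{greluche} contains a genuine error. Writing $\sigma^2:=\Var_{\hat\bP^m}(H(x))=G^m(x,x)-G^{m,*}(x,x)$, you assert that $\hat\bE^m\big[e^{\gamma H(x)}\big(H(x)^2+(\log\log N)^2\big)\big]\le C(\log\log N)^2$ and that the remaining factor $e^{-\gamma^2 j(x)/2}$ may simply be discarded because it is $\le 1$. But by your own computation $\sigma^2$ can be of order $\log\log N$ for $x\in\gL''_N$ close to $\partial\gL_N$, so $\hat\bE^m[e^{\gamma H(x)}]=e^{\gamma^2\sigma^2/2}$ is a positive power of $\log N$ there, and your asserted bound is false. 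The factor $e^{-\gamma^2 j(x)/2}$ is not a harmless $\le 1$ term; it is precisely what compensates this exponential growth. Indeed, \eqref{kratz1} gives $G^{m,*}(x,x)=k-j(x)+O(1)$, while $G^m(x,x)=k+O(1)$ by the definition of $k$, hence $\sigma^2=j(x)+O(1)$, and therefore
\begin{equation*}
\hat\bE^m\left[\big(H(x)^2+(\log\log N)^2\big)e^{\gamma H(x)-\gamma^2 j(x)/2}\right]=e^{\gamma^2(\sigma^2-j(x))/2}\big(\gamma^2\sigma^4+\sigma^2+(\log\log N)^2\big)\le C(\log\log N)^2,
\end{equation*}
where the last step uses $\sigma^2=j(x)+O(1)$ to bound the exponential by a constant and $\sigma^2=O(\log\log N)$ to bound $\gamma^2\sigma^4+\sigma^2$. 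Without the identity $\sigma^2=j(x)+O(1)$ the deduction of \eqref{grelot} does not close.
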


\begin{proof}

For the first point we notice that under law $\hat \bP^m\otimes \bP$, $\phi_x+H(x)$ is distributed like an infinite volume free field and hence has covariance 
$G^{m}(x,x)\in[k,k+1)$.
For this reason if $u\ge 1$ we have 
\begin{equation}
   \hat \bE\bE\left[ \delta^{\hat \phi,u}_x\right]= \int_{u-1}^{u+1}\frac{\dd t}{\sqrt{2\pi G^m(x,x)}} e^{-\frac{-t^2}{2G^{m}(x,x)}} \le \frac{2}{\sqrt{2\pi G^m(x,x)}} e^{-\frac{-(u-1)^2}{2G^{m}(x,x)}},
\end{equation}
and the result (the upper bound, but the lower bound is proved similarly)
follows by replacing $u$ by its value, and $G^{m}(x,x)$ by the asymptotic estimate $\frac{1}{2\pi} \log N+O(1)$.

\medskip

Let us now focus on the second point.
First we note that the result is completely obvious is $H(x)\ge 4u/5$ (the l.h.s. of \eqref{greluche} is larger than one).
Hence we assume $H(x)\le 4u/5$.
Then note that
\begin{multline}
\bE\left[ \delta^{\hat \phi,u}_x \ind_{\cA_N} \right]\le 
\bP\big[ \forall i\in \lint j(x),k \rint,\ \phi_i(x)\le \gamma (i-j(x))+100(\log \log N)\ ; \\ 
\phi(x)+H(x)\in [u-1,u+1]  \big]
\end{multline}
A first step is to show that 
\begin{equation}\label{asdas}
    \bP\big[  \phi_k(x)+H(x)\in [u-1,u+1]  \big]\le C N^{-2}(\log N)^{\alpha+1} \exp\left(\gamma H(x)- \frac{\gamma^2}{2}i(x)  \right).
\end{equation}
Using the Gaussian tail estimate \eqref{gtail} and \eqref{kratz1} we have
\begin{equation}
  \bP\big[  \phi_k(x)+H(x)\in [u-1,u+1]  \big]\le  \frac{C\sqrt{k}}{u-H(x)}\exp\left(-\frac{\left(u-1-H(x)\right)^2}{2(k-j(x)+C)} \right).
\end{equation}
Note that the factor in front of the exponential is smaller than $C(\log N)^{-1/2}$ when $H(x)\le 4u/5$.
Concerning the exponential term, notice that
\begin{multline}
\frac{\left(u-1-H(x)\right)^2}{2(k-j(x)+C)}=\frac{u^2}{2k}+ \frac{u^2(j(x)-C)}{2k(k-j(x)+C)}- \frac{(1+H(x))u}{k-j(x)+C}+  \frac{(1+H(x))^2}{2(k-j(x)+C)}
\\ 
\ge  2\log N-(\alpha+3/2)(\log \log N)+\frac{\gamma^2}{2}j(x)-\gamma H(x)-C'.
\end{multline}
This yields \eqref{asdas}.
To conclude the proof we need to show that for all $t\in [u-H(x)-1,u-H(x)+1]$
\begin{multline}
  \bP\big[ \forall i\in \{ 0,\dots, k\}, \phi_i(x)\le \gamma (i-j(x))_+ +100(\log \log N)\ | \ \phi(x)=t  \big]\\
\le C(\log N)^{-1}\left( H(x)^2+(\log \log N)^2\right).
\end{multline}
We use Lemma \ref{lem:bridge}, for the re-centered walk $$\phi_i(x)-\bE[\phi_i(x) \ | \ \phi(x)=t].$$
Let $V_i=V_i(x)$ denote the variance of $\phi_i(x)$ and $V=V(x)$ that of $\phi(x)$. We have by standard properties of Gaussian variables 
$$\bE[\phi_i(x) \ | \ \phi(x)=t]=(V_i/V) t.$$
Using the bound \eqref{eq:varj}, for all the considered values of $t$ we have 
\begin{equation}
  \gamma (i-j(x))_+ +100(\log \log N)-(V_i/V)t\le 200 (\log \log N)+ |H(x)|.
\end{equation}
Hence we have 
\begin{multline}
 \bP\left[ \forall i\in \lint 0, k\rint, \phi_i(x)\le \gamma (i-j(x))_+ +100(\log \log N)\ | \ \phi(x)=t  \right]\\
 \le  \bP\big[ \ \forall i\in \{ 0,\dots, k\}, \phi_i(x)\le 200 (\log \log N)+ |H(x)| \ | \ \phi(x)=0   \ \big],
\end{multline}
and we conclude using Lemma \ref{lem:bridge}.

\end{proof}

\subsection{Proof of Proposition \ref{prop:boundary}} \label{boundarry}
We are now ready to define the event $\cC_N$.
We set 
\begin{equation}
\cC_N:=\cD_N\cap \cC'_N, 
\end{equation}
where
\begin{equation}
 \cC'_N:=\left\{  \left(
\sum_{x\in \tilde \gL_N\setminus \gL'_N}\delta^{\hat\phi,u}_x\right)\le (\log N)^{1/16}\bE \left[\sum_{x\in \tilde \gL_N\setminus \gL'_N}\delta^{\hat\phi,u}_x \ | 
\ \cA_N \right] \right\}.
\end{equation}
From Markov's inequality, it is obvious that 
\begin{equation}
 \bP[\cC^{\cc}_N \ | \ \cA_N] \le (\log N)^{-1/16},
\end{equation}
and we can conclude (provided that $N$ is large enough) by using Proposition \ref{th:condfirstmom}, that \eqref{lecnepetit} holds.
\medskip

Let us turn to the proof of \eqref{primieq}.
We want to get rid of the environment outside $\gL'_N$.
The reader can check (by computing the second derivative that can be expressed as a variance)
\begin{multline}
\gb_2 \mapsto\bbE \left[ \log  \bE\left[ \exp\left( \sum_{x\in \gL'_N} (\gb \go_x+h-\gl(\gb)) \delta^{\hat\phi,u}_x\right.\right.\right. \\+  
\left.\left.\left. \sum_{x\in \tilde \gL_N\setminus \gL'_N}(\gb_2 \go_x+h-\gl(\gb))\delta^{\hat\phi,u}_x \right) \ind_{\cD_N}  \right] \right]
\end{multline}
is convex in $\gb_2$ and has zero derivative at $0$. Hence reaches its minimum when $\gb_2$ equals zero, and
\begin{multline}
\bbE\left[ \log \bE\left[ \exp\left(\sum_{x\in \tilde \gL_N} \left( \gb\go_x-\gl(\gb)+h \right)\delta^{\hat \phi,u}_x\right) \ind_{\cD_N}  \right] \right]\\
\ge  \bbE\left[ \log  \bE\left[ \exp\left( \sum_{x\in \gL'_N} (\gb \go_x+h-\gl(\gb)) \delta^{\hat\phi,u}_x -\gl(\gb) 
 \sum_{x\in \tilde \gL_N\setminus \gL'_N}\delta^{\hat\phi,u}_x\right)\ind_{\cD_N}  \right] \right]\\
 \ge 
 \bbE\left[ \log  \bE\left[ \exp\left( \sum_{x\in \gL'_N} (\gb \go_x+h-\gl(\gb)) \delta^{\hat\phi,u}_x\right)\ind_{\cC_N}  \right] \right] \\
 - (\log N)^{1/16}\gl(\gb)\bE \left[\sum_{x\in \tilde \gL_N\setminus \gL'_N}\delta^{\hat\phi,u}_x \ | \ \cA_N \right],
\end{multline}
where the last line is obtained by restricting the expectation to $\cC_N$ in order to bound 
$(\sum_{x\in \tilde \gL_N\setminus \gL'_N}\delta^{\hat\phi,u}_x)$ from below.
Finally, using Lemma \ref{probacont} and the definition of $\gL'_N$ \eqref{defglprim} we obtain that
\begin{equation}
 \hat \bE^m\bE \left[\sum_{x\in \tilde \gL_N\setminus \gL'_N}\delta^{\hat\phi,u}_x \ | \ \cA_N\right]
 \le C(\log \log N)^4(\log N)^{\alpha-1/8},
\end{equation}
which is sufficient to conclude.
\qed

\section{Proof of Proposition \ref{prop:inside}}\label{intelinside}

\subsection{Control of bad boundary conditions: Proof of \eqref{bcondition} and \eqref{binfluence}}

We start with the easy part of the proposition: showing that the probability of a bad boundary condition is scarce \eqref{bcondition}, and that
for this reason, a quite rough bound \eqref{binfluence} is sufficient to bound their contribution to the total expectation.

\medskip

To prove \eqref{bcondition}, we use Lemma \ref{lem:kerestimate}. For a fixed $x\in \gL'_N$, we set in the next equation $d:= d(x,\partial \gL_N)$. We have 
\begin{multline}\label{smass}
 \hat \bE^m [ (H^{m,\hat \phi}_N(x))^2 ]= \int^{\infty}_0 e^{-m^2t}[ P_t(x,x)-P^*_t(x,x) ]\dd t\\ 
 \le 
 \int^{\infty}_0 \frac{C}{t} e^{-m^2t} \exp \left( -C^{-1}\min\left(\frac{d^2}{t},d\log[(d/t)+1] \right)\right)\dd t\\
 \le e^{-c'dm} \le \exp\left(-c' (\log N)^{1/8} \right).
\end{multline}
We have used in the last inequality that  $d(x,\gL_N)\ge N(\log N)^{-1/8}$ for $x\in \gL'_N$.
Hence we have for any $x\in \gL'_N$
\begin{equation}
\hat \bP^m\left[|H^{m,\hat \phi}_N(x)|\ge 1 \right]\le \exp\left(-e^{c(\log N)^{1/8}}\right),
\end{equation}
and we can conclude using a union bound.

\medskip

To prove \eqref{binfluence}, we use Jensen's inequality and obtain
 \begin{multline}
    \bbE \log \bE\left[ \exp\left(\sum_{x\in \gL'_N} \left( \gb\go_x-\gl(\gb)+h \right)\delta^{\hat \phi,u}_x\right) \ \big| \ \cC_N \right]\\
    \ge     \bbE \bE \left[\sum_{x\in \gL'_N} \left( \gb\go_x-\gl(\gb)+h \right)\delta^{\hat \phi,u}_x \ \big| \ \cC_N \right]
    \ge -\gl(\gb)N^2.
 \end{multline}
Hence the conclusion follows from $\bP[\cC_N]\ge 1/2$.

\subsection{Decomposing the proof of \eqref{gcondition}}
Proving that good boundary conditions give a good contribution to the expected $\log$ partition function \eqref{gcondition}, 
is the most delicate point.
We divide the proof in several steps.
First we want to show that conditioned on the event $\cB_N$, the expected $\log$ partition function is close to the corresponding 
annealed bound (obtained by moving the expectation w.r.t.\ $\go$ 
inside the $\log$). This result is obtained by a control of the second moment of the restricted partition function.

\medskip

\begin{lemma}\label{consad}
 For any $\hat \phi\in \hat \cA_N$ we have 
\begin{equation}\label{consaf1}
  \log \bE\left[ \exp\left(\sum_{x\in \gL'_N} \left( \gb\go_x-\gl(\gb)+h \right)\delta^{\hat \phi,u}_x\right) \ |  \ \cB_N \right]
  \ge h  \bE \left[L_{N} \ | \ \cB_N \right]- 1.
  \end{equation}
 \end{lemma}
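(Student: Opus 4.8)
The statement to prove is the \emph{quenched} lower bound \eqref{consaf1}: pointwise in $\go$ (for $\hat\phi\in\hat\cA_N$), the $\log$ of the disordered restricted partition function is at least $h\,\bE[L_N\mid\cB_N]-1$. The natural route is a conditional second moment (Paley--Zygmund-type) argument on the measure $\bP(\,\cdot\mid\cB_N)$, applied to the random variable
\begin{equation}
W(\go):=\exp\Big(\sum_{x\in\gL'_N}(\gb\go_x-\gl(\gb)+h)\delta^{\hat\phi,u}_x\Big),
\end{equation}
but one must be careful about \emph{which} expectation the second moment is taken in. The key observation is that the inequality is purely deterministic in $\go$ once we compare $\log\bE[W\mid\cB_N]$ to the ``annealed-in-$\phi$'' quantity: write $\go_x=\gb\go_x-\gl(\gb)$ shift absorbed, and note that under $\bP(\,\cdot\mid\cB_N)$ the contact configuration $(\delta^{\hat\phi,u}_x)_{x\in\gL'_N}$ has at most $(\log N)^{(\alpha+1)/2}$ ones (by definition of $\cB_N$ in \eqref{defbn}), so the fluctuation term $\sum_x(\gb\go_x-\gl(\gb))\delta^{\hat\phi,u}_x$ is a sum of at most a polylog-in-$N$ many disorder variables. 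First I would split $\log\bE[W\mid\cB_N]$ via Jensen downward only on the ``bad disorder'' part: more precisely, I would use the decomposition $W=e^{h L_N}\cdot e^{\sum_x(\gb\go_x-\gl(\gb))\delta^{\hat\phi,u}_x}$ and apply a second-moment bound to the second factor.

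\textbf{Main steps.} (1) Reduce to showing $\log\bE\big[e^{h L_N}\,g(\go,\phi)\mid\cB_N\big]\ge h\,\bE[L_N\mid\cB_N]-1$ where $g(\go,\phi)=\exp(\sum_{x\in\gL'_N}(\gb\go_x-\gl(\gb))\delta^{\hat\phi,u}_x)$; since $\bbE^{\phi}[g]=1$ in the notation of Section~\ref{proufin} (the tilted law $\bbP^{\phi}$), the natural annealed comparison is $\bE[e^{hL_N}\mid\cB_N]\ge e^{h\bE[L_N\mid\cB_N]}$ by Jensen applied to the convex function $e^{h\cdot}$, hence it suffices to show that conditioning on $\cB_N$ the quenched partition function is comparable to $\bE[e^{h L_N}\mid\cB_N]$ up to a factor $e^{-1}$, uniformly in $\go$ --- but this is false pointwise for atypical $\go$. (2) So instead the correct reading: the inequality \eqref{consaf1} should be established via a second moment computation showing $\bE[W^2\mid\cB_N]\le C\,\bE[W\mid\cB_N]^2$ for the \emph{fixed} $\go$, which fails for bad $\go$; the resolution must be that Lemma~\ref{consad} is only claimed for $\hat\phi\in\hat\cA_N$ and the restriction to $\cB_N$ already caps $L_N$, so that $\log W\le h L_N+\gb\sum_x|\go_x|\delta^{\hat\phi,u}_x-\gl(\gb)L_N$; then using $\bE[\cdot\mid\cB_N]$ and Jensen one gets a lower bound involving $h\bE[L_N\mid\cB_N]$ plus a term $\bE[(\gb\sum_x\go_x\delta^{\hat\phi,u}_x-\gl(\gb)L_N)\mid\cB_N]$ which, after taking a conditional second moment / truncation, concentrates. (3) The honest strategy: let $\mu$ denote $\bP(\,\cdot\mid\cB_N)$; by Jensen for the measure $\mu$, $\log\bE_\mu[W]\ge \bE_\mu[\log W]$ is useless (wrong direction). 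Instead one uses the paley--Zygmund/Cauchy--Schwarz trick: $\bE_\mu[W]\ge \bE_\mu[\sqrt W]^2/\bE_\mu[\mathbf 1]=\bE_\mu[\sqrt W]^2$, and $\log\bE_\mu[\sqrt W]=\log\bE_\mu[\exp(\tfrac12\sum_x(\gb\go_x-\gl(\gb)+h)\delta^{\hat\phi,u}_x)]\ge \bE_\mu[\tfrac12\sum_x(\gb\go_x-\gl(\gb)+h)\delta^{\hat\phi,u}_x]=\tfrac{h}{2}\bE_\mu[L_N]+\tfrac12\bE_\mu[\sum_x(\gb\go_x-\gl(\gb))\delta^{\hat\phi,u}_x]$; doubling gives $\log\bE_\mu[W]\ge h\bE_\mu[L_N]+\bE_\mu[\sum_x(\gb\go_x-\gl(\gb))\delta^{\hat\phi,u}_x]$. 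So it remains to show the \emph{quenched} bound $\bE_\mu[\sum_{x\in\gL'_N}(\gb\go_x-\gl(\gb))\delta^{\hat\phi,u}_x]\ge -1$, i.e. $\gb\sum_{x\in\gL'_N}\go_x\,\bE_\mu[\delta^{\hat\phi,u}_x]\ge \gl(\gb)\bE_\mu[L_N]-1$. This is where a concentration argument in $\go$ enters, but note Lemma~\ref{consad} as stated must hold for \emph{all} $\go$; re-examining, the intended statement is almost certainly that this holds up to a $\bbP$-null or high-probability set, OR that $\gl(\gb)\ge 0$ and a rougher bound suffices.

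\textbf{The main obstacle.} The real difficulty is controlling $\gb\sum_{x\in\gL'_N}\go_x\,\bE_\mu[\delta^{\hat\phi,u}_x]$ from below for fixed $\go$: since $\bE_\mu[\delta^{\hat\phi,u}_x]$ are (random, $\phi$-averaged) weights summing to $\bE_\mu[L_N]\le(\log N)^{(\alpha+1)/2}$ and each individual $\bE_\mu[\delta^{\hat\phi,u}_x]$ is extremely small (of order $N^{-2}\mathrm{polylog}$ by Lemma~\ref{probacont}), the weighted sum $\sum_x\go_x\bE_\mu[\delta^{\hat\phi,u}_x]$ has, for \emph{typical} $\go$, size $\ll 1$; for atypical $\go$ one invokes that we are free to discard a $\bbP$-exceptional event of small probability (this is consistent with how Lemma~\ref{consad} will be used inside the proof of \eqref{gcondition}, where it is further averaged via $\bbE$ and the exceptional contribution is bounded by the crude estimate \eqref{binfluence}). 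So I expect the proof to consist of: (i) the Cauchy--Schwarz/Jensen chain above giving $\log\bE_\mu[W]\ge h\bE_\mu[L_N]+\gb\sum_x\go_x\bE_\mu[\delta^{\hat\phi,u}_x]-\gl(\gb)\bE_\mu[L_N]$; (ii) absorbing $-\gl(\gb)\bE_\mu[L_N]$ by noting (after possibly re-centering or recalling $\bbE^\phi[g]=1$ identity of Section~\ref{proufin}) that the quenched weights $\bE_\mu[\delta^{\hat\phi,u}_x]$ combined with the $-\gl(\gb)$ drift and the disorder term reconstitute a log-moment-generating expression bounded below by $-1$; and (iii) a one-line estimate using $\bE_\mu[L_N]\le(\log N)^{(\alpha+1)/2}$ and the smallness of individual weights to close. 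The genuinely technical part is step (ii)--(iii), i.e. showing the net disorder-dependent contribution is $\ge-1$, which will use the second-moment bound $\bE_\mu[W^2]\le e^{\gl(2\gb)}\cdots$ combined with the cap on $L_N$ — this is the step I would spend the most effort on.
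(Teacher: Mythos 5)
There is a genuine gap, and it is in the very first reduction. Your chain of inequalities (step (3)) is, after simplification, just Jensen in $\phi$ under $\mu=\bP(\cdot\mid\cB_N)$: $\log \bE_\mu[W]\ge \bE_\mu[\log W]= h\,\bE_\mu[L_N]+\sum_{x}(\gb\go_x-\gl(\gb))\bE_\mu[\delta^{\hat\phi,u}_x]$. Averaging over $\go$ (the lemma is indeed meant with an outer $\bbE$, as its use in the proof of \eqref{gcondition} shows), the fluctuation term $\gb\sum_x\go_x\bE_\mu[\delta^{\hat\phi,u}_x]$ has mean zero, but you are left with the deterministic deficit $-\gl(\gb)\,\bE_\mu[L_N]$. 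Since $\gl(\gb)$ is of order one while the target gain is only $h\,\bE_\mu[L_N]$ with $h$ tiny, this loss is of order $(\log N)^{\alpha}$ and cannot be recovered by any concentration argument: it is not a fluctuation, it is the systematic entropy cost of the disorder, and Jensen in $\phi$ throws away exactly the cancellation $\bbE[e^{\gb\go_x-\gl(\gb)}]=1$ that the lemma is designed to exploit. Your fallback, a second moment of $W$ over $\phi$ for fixed $\go$, does not repair this either (and, as you note yourself, fails pointwise in $\go$).

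The correct route takes the second moment in the \emph{disorder}, not in $\phi$. Set $Y_N:=\bE[\exp(\sum_x(\gb\go_x-\gl(\gb)+h)\delta^{\hat\phi,u}_x)\ind_{\cB_N}]$ and $\zeta:=Y_N/\bbE[Y_N]$, and write $\bbE[\log Y_N]=\log\bbE[Y_N]+\bbE[\log\zeta]$. The annealed term is computed exactly, $\log\bbE[Y_N]=\log\bE[e^{hL_N}\ind_{\cB_N}]\ge h\,\bE[L_N\mid\cB_N]+\log\bP[\cB_N]$, with no $\gl(\gb)$ loss. The work then goes into showing $\bbE[\log\zeta]$ is negligible, via $\bbE[\zeta^2]-1\le N^{-1}$: the two-replica computation produces the factor $e^{\chi(\gb)\sum_x\delta^{(1)}_x\delta^{(2)}_x}$ with $\chi(\gb)=\gl(2\gb)-2\gl(\gb)$, and it is precisely here that the cap $L_N\le(\log N)^{(1+\alpha)/2}$ from the definition of $\cB_N$ and the bound $\bE[\delta^{\hat\phi,u}_x]^2\le CN^{-4}(\log N)^{2(1+\alpha)}$ (valid because $\hat\phi\in\hat\cA_N$ forces $|H(x)|\le 1$) are used; one then converts the variance bound into a lower bound on $\bbE[\log\zeta]$ using $\log\zeta+1-\zeta\ge-(\zeta-1)^2$ for $\zeta\ge 1/2$ together with a crude tail bound $|\log\zeta|\le\max_x|\gb\go_x-\gl(\gb)|$ on the event $\{\zeta\le 1/2\}$. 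You had the right ingredients in hand (the cap on $L_N$, the smallness of the contact probabilities, the quantity $\gl(2\gb)-2\gl(\gb)$), but they must be deployed on $\bbE[Y_N^2]$ rather than on $\bE_\mu[W^2]$.
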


 \medskip

The second point is to show that $\bE \left[L_{N} \ | \ \cB_N \right]$ is large. What makes this difficult is that 
$L_N$ typically does not behave like its expectation $\bE_N[L_N]$ (cf. Lemma \ref{probacont})
We are going to prove that conditioned to 
$\cA_N$, $L_N$ almost behaves like its expectation. 
To prove such a statement, we will impose a restriction to the trajectories which is slightly stronger than $\cA_N$, as this makes computation easier.

\medskip

\begin{lemma}\label{sdasda}
We have for any $\hat \phi \in \hat \cA_N$
\begin{equation}\label{condexp}
 \bE \left[L_{N} \ | \ \cB_N \right]\ge c(\log N)^\alpha.
 \end{equation}
\end{lemma}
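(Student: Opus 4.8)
The plan is to estimate $\bE[L_N \mid \cB_N]$ from below by a conditional first-moment computation on the event $\cA_N$, exploiting the key point (developed in Lemma \ref{probacont}) that conditioning on $\cA_N$ depresses the contact probability by essentially a factor $(\log N)$ uniformly in the bulk. First I would write $\bE[L_N \mid \cB_N] = \bE[L_N \ind_{\cB_N}]/\bP[\cB_N]$ and, since $\bP[\cB_N]$ is bounded below by a constant (it contains $\cC_N$ and the moderate-deviation event $\{L_N \le (\log N)^{(\alpha+1)/2}\}$; one must check the latter using the first-moment bound $\hat\bE^m\bE[L_N\ind_{\cA_N}] \le C (\log N)^{\alpha}(\log\log N)^2 \cdot |\gL'_N| N^{-2} = C(\log N)^{\alpha}(\log\log N)^2$ from \eqref{grelot}, so that by Markov $L_N$ exceeds $(\log N)^{(\alpha+1)/2}$ with probability $o(1)$ on $\cA_N$ when $\alpha<1$), it suffices to bound $\bE[L_N \ind_{\cB_N}]$ from below. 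I would then lower bound this by $\bE[L_N \ind_{\cA_N}]$ minus the contributions from $\cA_N \setminus \cC_N$ (controlled by $\bP[\cC_N^\cc]\le C(\log N)^{-1/16}$ times the deterministic bound $L_N \le |\gL'_N| \le N^2$, which is far too crude) — so a naive splitting does not work, and instead I would carry the restriction through a genuine conditional second-moment / concentration argument as the proof seems to anticipate.

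The cleaner route, which I expect the authors take, is: show that conditioned on $\cA_N$, $L_N$ concentrates around a quantity of order $(\log N)^\alpha$. For the \emph{lower} direction on the conditional mean, use the local lower bound in the first bullet of Lemma \ref{probacont}, $\hat\bE\bE[\delta^{\hat\phi,u}_x] \ge c N^{-2}(\log N)^{1+\alpha}$, summed over $x\in \gL'_N$ (roughly $N^2$ points) to get $\hat\bE^m\bE[L_N] \ge c (\log N)^{1+\alpha}$ \emph{without} conditioning; the loss of the $\log N$ factor upon conditioning on $\cA_N$ leaves $\bE[L_N \ind_{\cA_N}] \ge c(\log N)^\alpha$ on a set of $\hat\phi$ of full measure (using $\hat\phi\in\hat\cA_N$ so $H$ is negligible in $\gL'_N$, hence $j(x)=0$ and the exponential factor $\exp(\gamma H(x)-\tfrac{\gamma^2}{2}j(x))$ in \eqref{greluche} is $\Theta(1)$ and the matching lower bound holds). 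Then I need to upgrade $\bE[L_N\ind_{\cA_N}] \ge c(\log N)^\alpha$ to $\bE[L_N \mid \cB_N]\ge c(\log N)^\alpha$: here one estimates the conditional second moment $\bE[L_N^2 \ind_{\cA_N}]$, bounding the diagonal terms trivially and the off-diagonal terms $\bE[\delta^{\hat\phi,u}_x\delta^{\hat\phi,u}_y\ind_{\cA_N}]$ by decomposing according to the branching scale $i_{x,y} := k - \lceil \tfrac1{2\pi}\log|x-y|\rceil$ at which the trajectories $\phi_\cdot(x)$ and $\phi_\cdot(y)$ decouple, so that the pair contributes like a single contact up to scale $i_{x,y}$ and two independent contacts afterwards; summing over the dyadic distance scales should give $\bE[L_N^2\ind_{\cA_N}] \le C(\log N)^{2\alpha}(\log\log N)^{O(1)}$, i.e. comparable to $(\bE[L_N\ind_{\cA_N}])^2$ up to polylog-in-$\log$ corrections. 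A Paley–Zygmund / truncation argument on $\cA_N$, combined with the cutoff $L_N \le (\log N)^{(\alpha+1)/2}$ defining $\cB_N$ (which, since $(\alpha+1)/2 > \alpha$, removes only an atypical tail and does not kill the bulk of the mass), then yields $\bE[L_N\ind_{\cB_N}] \ge c(\log N)^\alpha$ and hence the claim.

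The main obstacle is the conditional second-moment estimate: one must show that the correlations induced by the $\log$-correlated structure of the field, \emph{after conditioning on the barrier event} $\cA_N$, do not blow up the second moment by more than a polylogarithmic-in-$\log N$ factor. This requires the scale-decomposition of Section \ref{galefash} together with barrier (ballot-type) estimates à la Lemma \ref{lem:bridge} applied to the joint trajectory of $(\phi_i(x),\phi_i(y))$, treating the common part (scales $> i_{x,y}$, i.e. a single Gaussian walk forced under the line) and the independent parts (scales $\le i_{x,y}$) separately and tracking the entropic cost of forcing both endpoints near level $u$; this is precisely the conditioned-second-moment technique referenced from \cite{cf:AS}. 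The bookkeeping of the $(\log\log N)$ powers coming from the slack $100\gamma\log\log N$ in the definition of $\cA_N$ and from the $H(x)^2+(\log\log N)^2$ prefactor in \eqref{greluche} is delicate but, since $\alpha$ was chosen with room to spare (any $\alpha\in(11/20,1)$ works), these corrections are harmless in the end.
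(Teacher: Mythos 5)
Your high-level philosophy is the right one -- barrier estimates, a conditional second moment, truncation by the cap $L_N \le (\log N)^{(\alpha+1)/2}$ -- but there is a genuine gap at the crucial step, and it is the \emph{first} moment, not the second. You claim that $\bE[L_N\ind_{\cA_N}]\ge c(\log N)^\alpha$ follows from ``the loss of the $\log N$ factor upon conditioning on $\cA_N$,'' citing Lemma~\ref{probacont}. But Lemma~\ref{probacont} only gives the \emph{upper} bound $\bE[\delta^{\hat\phi,u}_x\ind_{\cA_N}]\le CN^{-2}(\log N)^\alpha(\log\log N)^2$; a matching lower bound is not a consequence of it, and does not follow from the unconditioned lower bound $\bE[\delta^{\hat\phi,u}_x]\ge cN^{-2}(\log N)^{1+\alpha}$ by any simple subtraction. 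Indeed the expected number of contacts is of order $(\log N)^{1+\alpha}$ while the \emph{typical} number is of order $(\log N)^\alpha$: the bulk of the expectation sits on the tiny event $\cA_N^{\cc}$, so $\bE[L_N\ind_{\cA_N^{\cc}}]$ is not negligible and a crude Cauchy--Schwarz on $\cA_N^{\cc}$ using the \emph{unrestricted} second moment of $L_N$ (which is much larger than $(\bE[L_N\ind_{\cA_N}])^2$) cannot rescue the argument. Showing that conditioning on the barrier costs \emph{only} a factor $\log N$ per site, and not more, is precisely what has to be proved, and you have not proved it.

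The paper resolves this by introducing a smaller, per-site truncated variable $L'_N:=\sum_{x\in\gL'_N}\delta'_x$, where $\delta'_x$ requires the single trajectory $(\phi_i(x))_{i\le k}$ to stay below the line $ui/k+10$. This replaces the \emph{global} barrier event $\cA_N$ (which constrains every site simultaneously) by a per-site constraint, and that change is decisive: the first-moment lower bound $\bE[\delta'_x]\ge cN^{-2}(\log N)^\alpha$ reduces to one Gaussian density estimate times the ballot estimate of Lemma~\ref{lem:bridge} applied at the single site $x$ (Lemma~\ref{lesperance}), while the second moment is a tractable two-site computation along the decoupling scale (Lemma~\ref{covariancee}). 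Since $L'_N\le L_N$ one then writes
\begin{equation*}
\bE[L_N\mid\cB_N]\ \ge\ \bE[L_N\ind_{\cB_N}]\ \ge\ \bE[L'_N\ind_{\cB_N}]\ =\ \bE[L'_N]-\bE[L'_N\ind_{\cB_N^{\cc}}]\ \ge\ \bE[L'_N]-\sqrt{\bE[(L'_N)^2]}\,\sqrt{\bP[\cB_N^{\cc}]},
\end{equation*}
and the moment bounds together with $\bP[\cB_N^{\cc}]\le C(\log N)^{-(1-\alpha)/4}$ finish the job. You should restructure your argument around such a per-site barrier truncation rather than around the global event $\cA_N$: without it, your first-moment lower bound is unproved; with it, the two-site second-moment estimate you sketch (splitting at the decoupling scale, ballot estimates on the common and on the two independent pieces) is exactly the paper's argument.
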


\medskip

\begin{proof}[Proof of \eqref{gcondition}]
 
 Combining \eqref{consaf1} and\eqref{condexp} We have for $\hat \phi\in \hat\cA_N$,
 \begin{multline}
   \bbE \log \bE\left[ \exp\left(\sum_{x\in \gL'_N} \left( \gb\go_x-\gl(\gb)+h \right)\delta^{\hat \phi,u}_x\right)\ind_{\cC_N}\right]\\
   \ge   \bbE \log \bE\left[ \exp\left(\sum_{x\in \gL'_N} \left( \gb\go_x-\gl(\gb)+h \right)\delta^{\hat \phi,u}_x\right) \ | \ \cB_N\right]
    + \log \bP\left[ \cB_N \right]\\
    \ge h\bE\left[ L_N \ | \ \cB_N \right] -1
    \ge c h (\log N)^{\alpha}-1.
 \end{multline}

 \end{proof}

\subsection{Proof of Lemma \ref{consad}}

\begin{proof}

First let us get a rough estimate on the probability of $\cB_N$, valid for $N$ sufficiently large
\begin{equation}\label{kkk}
 \bP[\cB^\cc_N]\le  C(\log N)^{-\frac{1-\alpha}{4}}.
\end{equation}
According to \eqref{greluche}, for all $\hat \phi\in \hat \cA_N$ 
\begin{equation}
 \bE\left[ L_N \ind_{\cA_N}\right]\le C(\log N)^{\alpha}(\log \log N)^2.
\end{equation}
Hence using the Markov inequality and the definition of $\cB_N$ \eqref{defbn}, we have 
\begin{equation}
 \bP[ \cB^\cc_N \ |  \ \cA_N] \ge  \bP[ L_N\ge (\log N)^{\frac{1+\alpha}{2}} \ |  \ \cA_N] \le C(\log N)^{-\frac{1-\alpha}{2}}(\log \log N)^2.
\end{equation}
We deduce \eqref{kkk} from the above equation, using the fact that $\bP[\cA_N]$ tends to one very fast (Proposition \ref{th:condfirstmom}).

\medskip

\noindent We continue the proof by setting,
\begin{equation}
Y_N:=\bE\left[ \exp\left(\sum_{x\in \gL'_N} \left( \gb\go_x-\gl(\gb)+h \right)\delta^{\hat \phi,u}_x\right) \ind_{\cB_N} \right],
\end{equation}
and $\zeta:=Y_N/\bE[Y_N]$. 
We can bound the first term from below using Jensen's inequality as follows 
\begin{equation}
 \bbE \left[ \log Y_N  \right]=   \log \bbE \left[  Y_N  \right]+ \bbE \log[\zeta].
\end{equation}
We have 
\begin{equation}
 \log \bbE[Y_N]=\log \bE\left[ \exp\left( h L_{N}\right)\ind_{\cB_N} \right]
 \ge h  \bE \left[L_{\alpha} \ | \ \cB_N \right]+ \log \bE[\cB_N].
\end{equation}
By \eqref{kkk}, the second term is larger than $-\log 2$.
To estimate $\bbE \log[\zeta]$ we simply compute the second moment of $\zeta$.
We have 
\begin{equation}
\bbE[\zeta^2]=
\tilde \bE_h^{\otimes 2}\left[\exp\left(\sum_{x\in \gL'_N}\chi(\gb)\delta^{(1)}_x \delta^{(2)}_x \right) \right],
\end{equation}
where $\chi(\gb):= \gl(2\gb)-2\gl(\gb)$ and 
\begin{equation}
  \frac{\dd \tilde\bP_h}{\dd \bP}(\phi):= \frac{1}{\bbE[Y_N]}\exp\left( hL_{N} \right) \ind_{\cB_N}.
\end{equation}
Note that as a consequence of the definition of $\cB_N$ for $N$ sufficiently large, the density is bounded from above as follows
$$\frac{\dd \tilde\bP_h}{\dd \bP}(\phi)\le  \frac{1}{\bP[\cB_N]}\exp\left( h (\log N)^{\frac{1+\alpha}{2}} \right)\le N^{1/4}.$$
Using the inequality 
\begin{equation}
\exp\left(\chi X\right)\le 1+[e^{\chi K}-1]X   
\end{equation}
valid for $X\in [0,K]$, we obtain
\begin{multline}\label{hophop}
 \bbE[\zeta^2]\le 1+e^{\chi(\gb)(\log N)^{\frac{1+\alpha}{2}}}\tilde \bE_h^{\otimes 2}[\delta^{(1)}_x \delta^{(2)}_x]\\
 \le 1+N^{1/2}e^{\chi(\gb)(\log N)^{\frac{1+\alpha}{2}}}\sum_{x\in \gL'_N} (\bE[\delta^{\hat\phi}_x])^2\le 1+N^{3/4}\sum_{x\in \gL'_N} (\bE[\delta^{\hat\phi}_x])^2.
\end{multline}
Note that from \eqref{eq:stimagreen} and our choice for $m$ \eqref{parameters}, the variance of $\phi$  satisfies 
\begin{equation}\label{danslaboit}
\forall x \in \gL'_N, \quad \left| G^{*,m}(x,x)+  \frac{1}{2\pi} \log m \right| \le C.
\end{equation}
Thus using our assumption on $|H(x)|\le 1$, and replacing $u$ by its value \eqref{parameters} we obtain that for all $x\in \gL'_N$ 
\begin{equation}
\bE[\delta^{\hat\phi,u}_x]^2\le \left[ \frac{2}{\sqrt{2\pi G^{*,m}(x,x)}} \exp\left(- \frac{(u-1-H(x))^2}{2G^{*,m}(x,x)} \right) \right]^2\le C N^{-4}(\log N)^{2(1+\alpha)}.
\end{equation}
Thus we deduce from \eqref{hophop} that
\begin{equation}\label{varixx}
 \bbE[\zeta^2]-1\le N^{-1}.
\end{equation}
This ensures that $\zeta$ is close to one with a large probability.
However to estimate $\bbE[\log \zeta]$, we also need some estimate on the right-tail distribution of $\log \zeta$.
We use a rather rough one
\begin{equation}\label{aaaa}
 |\log \zeta|\le \max_{x\in \gL'_N}|\gb\go_x-\gl(\gb)|.
\end{equation}
To conclude we note that for $\zeta\ge 1/2$ we have 
\begin{equation}
 \log (\zeta)+1-\zeta \ge -(\zeta-1)^2,
\end{equation}
and hence that 
\begin{equation}
  \bbE[\log \zeta]=\bbE[\log(\zeta)+1-\zeta]\ge -\bbE[(\zeta-1)^2]+\bbE\left[ (\log (\zeta) +1-\zeta)\ind_{\{ \zeta\le 1/2\}} \right].
\end{equation}
The first term in the r.h.s.\ can be controlled using  \eqref{varixx}.
By Cauchy-Schwartz, the second term is smaller in absolute value than 
\begin{equation}
 (\bbP[\zeta\le 1/2])^{1/2}\left(\bbE\left[ (\log \zeta+1-\zeta)^2 \ind_{\{ \zeta\le 1/2\}} \right] \right)^{1/2}\le (\bbP[\zeta\le 1/2])^{1/2}\left(\bbE\left[ (\log \zeta)^2\right] \right)^{1/2}  .
\end{equation}
Using Chebychev inequality together with \eqref{varixx}, we get that
$$\bbP[\zeta\le 1/2]\le 4N^{-1}.$$
Using \eqref{aaaa} and the fact that $\go$ have exponential tails (cf.\ assumption \eqref{eq:assume-gl}), we have 
\begin{equation}
 \bbE\left[ (\log \zeta)^2 \right]\le C (\log N)^2.
\end{equation}
Altogether we obtain that 
\begin{equation}
  \log \bbE[Y_N]\ge h  \bE \left[L_{N} \ | \ \cB_N \right]+ \log \bE[\cB_N]- CN^{-1/2}(\log N),
\end{equation}
and we can conclude using \eqref{kkk}.

\end{proof}

\subsection{Proof of Lemma \ref{sdasda}}

Instead of counting all the contacts, we decide to consider only a subset of them: those for which the trajectory 
$(\phi_i(x))_{i\in\lint 0,k \rint}$ stays below a given line. 
We choose the restriction to be a bit stronger than the one used in the definition of the event $\cA_N$ \eqref{defan}.
We set
\begin{equation}
\begin{split}
 \delta'_x&:=\ind_{\big\{ \left(\phi(x)-u+H(x)\right) \in [-1,1], \, \forall i\in \lint 1,k\rint, \,  \phi_i(x)\le  \frac{u i}{k}+10\big \}},\\
 L'_N&:= \sum_{x\in \gL'_N} \delta'_x.
 \end{split}
\end{equation}
Let us first show how to reduce the proof of Lemma \ref{sdasda} to a control on the two first moment of $L'_N$.
We have 
\begin{multline}\label{dsaddsadcz}
 \bE \left[L_{N} \ind_{\cB_N} \right]\ge \bE \left[L'_{N} \ind_{\cB_N}\right]
 = \bE \left[L'_{N}\right]- \bE \left[ L'_{N} \ind_{\cB^{\complement}_N}\right]\\
 \ge \bE \left[L'_{N}\right]- \sqrt{ \bE \left[ (L'_{N})^2 \right]} \sqrt{ \bP \left[ \cB^{\complement}_N \right] }. 
\end{multline}
Thus we can conclude provided that one can prove the two following bounds on the expectation and variance of $L'_N$
\begin{equation}\label{lnprime}\begin{split}
 \bE[L'_N] &\ge c(\log N)^\alpha,\\
 \bE[(L'_N)^2]&\le C(\log N)^{2\alpha}(\log \log N)^{8}.
\end{split}\end{equation}
It is then sufficient to combine these results with \eqref{dsaddsadcz} and \eqref{kkk}.
Hence we need to prove the two following results.

\medskip

\begin{lemma}\label{lesperance}
 For all $x\in \gL'_N$ and $\hat\phi\in \hat \cA_N$, we have
\begin{equation}\label{grominet}
 \bE[\delta'_x]\ge c N^{-2} (\log N)^{\alpha}.
\end{equation}
 \end{lemma}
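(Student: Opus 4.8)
The plan is to estimate $\bE[\delta'_x]$ from below by a direct computation on the random walk $(\phi_i(x))_{i\in\lint 0,k\rint}$, using the decomposition introduced in Section~\ref{galefash}. Recall that $\delta'_x$ asks simultaneously that $\phi(x)+H(x)$ lands in the window $[u-1,u+1]$ \emph{and} that the whole trajectory $(\phi_i(x))_i$ stays below the line $i\mapsto ui/k+10$. Since $\hat\phi\in\hat\cA_N$ we have $|H(x)|\le 1$, so it suffices to bound from below the probability that $\phi(x)\in[u-2,u+2]$ (say) and $\phi_i(x)\le ui/k+10$ for all $i$. Writing $\phi_i(x)=\sum_{j\le i}\xi_j(x)$ with the increments $\xi_j(x)$ independent centered Gaussians of variance within $O(1)$ of $1$ (and the last one of variance at most $2$), and using $k=\frac{1}{2\pi}\log N+O(1)$ together with $u=u_h=\sqrt{2/\pi}\log N-\frac{2+\alpha}{2\sqrt{2\pi}}\log\log N$, the endpoint constraint alone has probability of order $\frac{1}{\sqrt k}e^{-u^2/(2k)}\asymp N^{-2}(\log N)^{\alpha+1/2}$ by the local CLT for the Gaussian random walk.

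The main point is then the cost of the barrier condition $\phi_i(x)\le ui/k+10$ conditionally on the endpoint. This is precisely a ballot-type estimate for a Gaussian bridge: for a random walk bridge of length $k$ pinned near the line's endpoint, the probability of staying below a line parallel to the mean path (here the line $ui/k+10$ sits a constant $+10$ above the interpolating line $i\mapsto (u/k)i$, once we recenter by the conditional mean $\bE[\phi_i(x)\mid \phi(x)=t]=(V_i/V)t$, which by \eqref{eq:varj} differs from $(u/k)i$ by only $O(\log\log N)$) is of order $1/k\asymp 1/\log N$ up to the $O(\log\log N)$ slack. Concretely I would apply Lemma~\ref{lem:bridge} to the recentered walk $\phi_i(x)-\bE[\phi_i(x)\mid\phi(x)=t]$: its increments are centered Gaussian with variance $\le 2$ and total variance $\ge k/2$ by \eqref{eq:varj}, and the barrier becomes $\max_i(\text{walk}) \le C\log\log N$ for a constant, so the lower bound $1-e^{-x^2/k}$ in Lemma~\ref{lem:bridge} with $x\asymp\log\log N$ gives a probability $\ge c(\log\log N)^2/\log N$ — already enough, but in fact one wants the cleaner $\ge c/\log N$; either way one loses only a $\log N$ (times lower-order) factor. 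Multiplying the endpoint probability $\asymp N^{-2}(\log N)^{\alpha+1/2}$ by the barrier cost $\gtrsim (\log N)^{-1}$ (absorbing the $\log\log N$ factors into the final estimate, or noting $(\log N)^{1/2}\times(\text{lower order})$ still dominates $(\log N)^0$) yields $\bE[\delta'_x]\ge cN^{-2}(\log N)^{\alpha}$, which is \eqref{grominet}.

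A couple of technical points to watch. First, the line in the definition of $\delta'_x$ is $ui/k+10$, whereas the natural recentering is by the conditional mean $(V_i/V)t$ with $t$ near $u$; one must check that $(u/k)i - (V_i/V)t$ stays bounded in absolute value by, say, $C\log\log N$ uniformly in $i$ and in $t\in[u-2,u+2]$, which follows from $|V_i-(i-j(x))_+|\le C$ and $|V-k|\le C$ with $j(x)=O(\log\log N)$ for $x\in\gL'_N$ — exactly the computation already carried out in the proof of Lemma~\ref{probacont} leading to \eqref{asdas}. Second, the endpoint lower bound needs $u^2/(2k)$ expanded carefully: using $k=\frac{1}{2\pi}\log N+O(1)$ one gets $u^2/(2k)=2\log N-(\alpha+1/2)\log\log N+O(1)$, so $e^{-u^2/(2k)}\ge cN^{-2}(\log N)^{\alpha+1/2}$, and the prefactor $1/\sqrt{k}\asymp(\log N)^{-1/2}$ exactly cancels the extra half-power, leaving $cN^{-2}(\log N)^{\alpha}$ after paying the barrier cost $(\log N)^{-1}$ — wait, this would give $(\log N)^{\alpha-1}$, so in fact one must be more careful and use the sharp form of Lemma~\ref{lem:bridge} (the remark after its statement notes the $\log k$ can be replaced by $1$, giving barrier cost $\asymp(\log\log N)^2/\log N$ is too lossy; the correct heuristic is that conditioning on a \emph{typical} endpoint near the maximum costs only an order-one constant in the barrier probability once the slack is $\Theta(\log\log N)$, because the walk is already pushed down). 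I expect reconciling these powers of $\log N$ — i.e. extracting the right barrier cost from Lemma~\ref{lem:bridge}, possibly by choosing the line slack to be a small power of $\log N$ rather than a constant, or by a two-scale argument separating the last $\Theta(\log\log N)$ steps — to be the main obstacle; everything else is the local CLT plus bookkeeping already done elsewhere in the paper.
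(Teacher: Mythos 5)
Your overall strategy is the paper's: bound the endpoint probability $\bP\big[\phi(x)\in[u-1-H(x),\,u+1-H(x)]\big]$ by the Gaussian density at the far edge of the window, then pay a factor $\asymp 1/k$ for the barrier constraint via Lemma~\ref{lem:bridge} applied to the bridge recentered by $(V_i/V)t$. But the ``power of $\log N$'' obstruction you report at the end is an artifact of two arithmetic errors, not a real gap in the method.

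First, the expansion of $u^2/(2k)$. With $u_h=\sqrt{2/\pi}\log N-\frac{2+\alpha}{2\sqrt{2\pi}}\log\log N$ the cross term gives $u^2=\frac{2}{\pi}(\log N)^2-\frac{2+\alpha}{\pi}\log N\,\log\log N+O((\log\log N)^2)$. If, as you wrote, one takes $k=\frac{1}{2\pi}\log N+O(1)$, this yields $u^2/(2k)=2\log N-(2+\alpha)\log\log N+O(1)$, \emph{not} $2\log N-(\alpha+1/2)\log\log N+O(1)$. Second, and more importantly, you ignored that the paper chose $m_h=N^{-1}(\log N)^{1/4}$ in \eqref{parameters} rather than $m=N^{-1}$. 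Since $k=-\frac{1}{2\pi}\log m_h+O(1)=\frac{1}{2\pi}\log N-\frac{1}{8\pi}\log\log N+O(1)$ (and, by \eqref{danslaboit}, the same holds for the actual variance $G^{*,m}(x,x)$), dividing by $2k$ picks up an extra $+\frac{1}{2}\log\log N$, and the correct expansion is $u^2/(2k)=2\log N-(\tfrac32+\alpha)\log\log N+O(1)$. Hence $e^{-u^2/(2k)}\asymp N^{-2}(\log N)^{3/2+\alpha}$, and with the prefactor $\asymp(\log N)^{-1/2}$ the endpoint probability is $\asymp N^{-2}(\log N)^{1+\alpha}$ — this is precisely the intermediate estimate in the paper's proof, and it differs from yours by a full power of $\log N$. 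Multiplying by the barrier cost $\gtrsim 1/\log N$ then gives exactly $cN^{-2}(\log N)^\alpha$.

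Once the endpoint is computed correctly there is nothing delicate left in the barrier step: for $x\in\gL'_N$ the bound \eqref{bvar} (a consequence of \eqref{smass}, far stronger than the crude $\pm C$ in \eqref{eq:varj}) gives $V_i=i+o(1)$, so the conditional mean $(V_i/V)t$ for $t\in[u-2,u+2]$ differs from $\frac{ui}{k}$ by a bounded amount, and the barrier $\frac{ui}{k}+10$ sits a distance $\Theta(1)$ (not $\Theta(\log\log N)$) above it. The lower bound $1-e^{-x^2/k}\gtrsim 1/k$ from Lemma~\ref{lem:bridge}, applied with a constant $x$, is all that is needed. Your speculations about ``sharp forms'' of the ballot estimate, choosing the slack to be a small power of $\log N$, or a two-scale argument separating the last $\Theta(\log\log N)$ steps are all unnecessary — the obstruction you were trying to engineer around does not exist.
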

\medskip

\begin{lemma}\label{covariancee}
 We have for all $x, y\in \gL'_N$ and $\hat\phi\in \hat \cA_N$,
 \begin{equation}\label{secondmoment}
 \bE[\delta'_{x}\delta'_{y}]\le \frac{CN^{-4} (\log N)^{2\alpha+3}(\log \log N)^8}{(j(x,y)+1)^{3/2}(k-j(x,y)+1)^3}e^{\frac{j(x,y) u^2}{2k}}.
 \end{equation}
 where 
 \begin{equation}
  j(x,y):= \left\lceil  k-\frac{1}{2\pi}\log|x-y|_+ \right\rceil.
\end{equation}
\end{lemma}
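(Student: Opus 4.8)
The plan is to run a conditioned second-moment argument, using the scale decomposition $\phi=\phi_k=\sum_{l=1}^k\xi_l$ of Section~\ref{galefash} to decouple the trajectory over $x$ from that over $y$ at the branching scale $j:=j(x,y)$. Since for $j$ bounded the claimed inequality is trivial (a large enough $C$ absorbs $e^{j(x,y)u^2/(2k^2)}$ and the remaining factors), assume $j$ large; fix $x,y\in\gL'_N$ and $\hat\phi\in\hat\cA_N$, so $|H(x)|,|H(y)|\le 1$ and $j(x),j(y)\le C$. First I would record the decoupling. For $l\le j$: from $|x-y|\le Ce^{2\pi(k-j)}\le Ce^{2\pi(k-l)}$ and \eqref{gradientas} integrated over $[t_l,t_{l-1}]$ one gets $\bE[(\xi_l(x)-\xi_l(y))^2]\le C|x-y|^2e^{-4\pi(k-l)}$, whence $\bE[(\phi_i(x)-\phi_i(y))^2]\le C$ for $i\le j$ and, by a Schur-complement bound, the conditional law of $\phi_j(y)$ given $(\phi_i(x))_{i\le j}$ is Gaussian with mean $\phi_j(x)+O(1)$ and variance $O(1)$. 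For $l>j$: $t_{l-1}\le t_j\le C|x-y|^2$, so by \eqref{croco} the mutual covariance of the tails $\psi^x:=(\phi_i(x)-\phi_j(x))_{j<i\le k}$ and $\psi^y:=(\phi_i(y)-\phi_j(y))_{j<i\le k}$ is $O(1)$, and a Gaussian comparison bounds the joint density of $(\psi^x,\psi^y)$ by a constant times the product of the marginals on the relevant range. Conditioning on $(\phi_i(x))_{i\le j}$ and $\phi_j(y)$, the pre-$j$ barrier constraints become measurable and the post-$j$ events decouple, which reduces the problem to
\[
\bE[\delta'_x\delta'_y]\ \le\ C\int_{\bbR}\rho_j(s)\,q_x(s)\,q_y(s)\,\dd s,
\]
where $\rho_j(s)$ is the sub-density of $\{\phi_j(x)=s\}\cap\{\phi_i(x)\le ui/k+11,\ i\le j\}$ and $q_x(s)$ (resp.\ $q_y$) is the conditional probability of $\{\phi(\cdot)\in[u-1-H(\cdot),u+1-H(\cdot)]\}\cap\{\phi_i(\cdot)\le ui/k+10,\ j<i\le k\}$ given $\phi_j(\cdot)=s$.

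The second step produces the two ballot-type ingredients. For $\rho_j$, the density of $\phi_j(x)$ is $\le Cj^{-1/2}e^{-s^2/(2(j-C))}$ by \eqref{eq:varj}, and after the tilt $\phi_i\mapsto\phi_i-ui/k$ the barrier becomes constant, so the product-form ballot estimate (the two-sided refinement of Lemma~\ref{lem:bridge}, whose symmetric case is the one stated there), with endpoint gaps $O(1)$ at $i=0$ and $g:=(uj/k+11-s)_+$ at $i=j$, gives $\rho_j(s)\le C(\log j)^2(g+C)\,j^{-3/2}e^{-s^2/(2(j-C))}$. For $q_x$: the tail walk has variance $k-j+O(1)$ (again \eqref{eq:varj}), the endpoint gap at $i=k$ is $10+H(x)=O(1)$, and the gap at $i=j$ is $g$, so the same estimate gives $q_x(s)\le C(\log(k-j))^2(g+C)(k-j)^{-3/2}e^{-(u-H(x)-s)^2/(2(k-j-C))}$, and likewise for $q_y$.

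The crux is the scalar Gaussian integral that results: $\bE[\delta'_x\delta'_y]$ is bounded by $C(\log\log N)^{8}\,j^{-3/2}(k-j)^{-3}$ times
\[
I:=\int_{s\le uj/k+11}(g(s)+C)^3\exp\!\Big(-\tfrac{s^2}{2(j-C)}-\tfrac{(u-H(x)-s)^2+(u-H(y)-s)^2}{2(k-j-C)}\Big)\dd s.
\]
Completing the square, the exponent is $-\lambda(s-s^\ast)^2-\tfrac{u^2}{2(a+b)}+O(u/(k-j))$ with $a\asymp j$, $b\asymp(k-j)/2$, $a+b\asymp(k+j)/2$, and $s^\ast=\tfrac{a(u-H)}{a+b}\asymp\tfrac{2ju}{k+j}$. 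The key geometric fact is that $s^\ast>uj/k+11$ for $j$ large and $<k$, so the admissible half-line $\{s\le uj/k+11\}$ lies entirely to the left of the mode; on it the integrand is increasing in $s$, because the logarithmic derivative of the Gaussian factor, $2\lambda(s^\ast-s)\ge 2\lambda(s^\ast-uj/k)\asymp u/k=\gamma+o(1)$, dominates that of $(g+C)^3$, namely $3/(g+C)$. Hence $I$ is comparable to its boundary contribution: the Gaussian tail has width $\asymp(2\lambda(s^\ast-uj/k))^{-1}\asymp k/u=O(1)$, there $g=O(1)$, and using $u\approx\gamma k$ one computes $-\tfrac{u^2}{2(a+b)}-\lambda(s^\ast-uj/k)^2=-\tfrac{u^2}{k}+\tfrac{u^2 j}{2k^2}+O(1)$, so $I\le C\,e^{-u^2/k}\,e^{u^2 j/(2k^2)}$. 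Since the one-point analysis (proof of Lemma~\ref{lesperance}) identifies $e^{-u^2/(2k)}$ with $N^{-2}(\log N)^{\alpha+3/2}$ up to constants, i.e.\ $e^{-u^2/k}\asymp N^{-4}(\log N)^{2\alpha+3}$, this yields the asserted bound, the excess factor $e^{j(x,y)u^2/(2k^2)}$ over the square of the one-point estimate coming exactly from the branch value being forced down to the linear barrier.

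The main obstacle is the decoupling of the first paragraph: making rigorous that the two post-$j$ tails may be treated as independent despite an $O(1)$ residual covariance, and that the pre-$j$ parts can be handled by conditioning on the branching data, all without losing more than the bounded power of $\log\log N$ the statement allows — and, relatedly, casting the constrained-walk probabilities in the precise product form to which Lemma~\ref{lem:bridge} (or its two-sided refinement) applies, since the $\xi_l(\cdot)$ do not have exactly unit variances and the barrier is only approximately linear in the $\phi_i$'s, so these discrepancies must be absorbed into additive $O(1)$ shifts of the barrier. The Gaussian integral of the third step is, by contrast, routine once one recognizes that the unconstrained density has its mode outside the admissible half-line, so that $I$ is boundary-dominated; this single feature produces simultaneously the gain $(k-j+1)^{-3}$ and the loss $e^{j(x,y)u^2/(2k^2)}$.
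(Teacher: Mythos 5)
Your high-level strategy is the paper's: condition at the branching scale $j(x,y)$, bound the barrier costs before and after the branch by the ballot estimate of Lemma~\ref{lem:bridge}, and finish with a scalar Gaussian integral over the branch value $s$, observing that the unconstrained mode $s^*\approx 2ju/(k+j)$ lies strictly above the admissible half-line $\{s\le uj/k+O(1)\}$, so the integral is boundary-dominated; completing the square does give the exponent $-(2k-j)u^2/(2k^2)=-u^2/k+ju^2/(2k^2)$, which is precisely what produces the gain $(k-j+1)^{-3}$ together with the loss $e^{ju^2/(2k^2)}$. That final computation, and its interpretation, are correct and match the paper.

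The gap is exactly where you flag it. The assertion that ``a Gaussian comparison bounds the joint density of $(\psi^x,\psi^y)$ by a constant times the product of the marginals on the relevant range'' is not a consequence of the total cross-covariance being $O(1)$, and is in fact delicate: in the joint Gaussian density the cross term $-(\psi^x)^\top M\,\psi^y$ in the exponent is not controlled by $\|M\|_1=O(1)$ alone when $\psi^x,\psi^y$ range over the linearly growing barrier region, so the conditional density of $\psi^y$ given $\psi^x$ can exceed the marginal density of $\psi^y$ by an unbounded factor for large, aligned configurations. Likewise, after conditioning on $(\phi_i(x))_{i\le j}$ and $\phi_j(y)$, the pre-$j$ barrier events for $\phi_i(y)$ are \emph{not} measurable, so they must be discarded (which is fine for an upper bound) rather than ``become measurable''. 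The paper's way around both issues is more concrete: it passes to $Z_i=(\phi_i(x)+\phi_i(y))/2$ for the pre-$j$ ballot (relaxing the two constraints to one), computes via a Schur-complement bound that the conditional variance of $\phi(x)-\phi(y)$ given $(Z_j,Z_k)$ is $\gtrsim k-j$ (yielding the $(k-j)^{-1/2}$ factor), and then, for the post-$j$ barrier on $\phi_i(y)$, conditions on the \emph{entire} trajectory $(\phi_r(x))_{r\le k}$, bounds the conditional-mean shift $\sum_{r>j}T_r\nabla\phi_r(x)$ explicitly (using $\sum T_r=O(1)$), and restricts to the typical event $\cH$ where $|\nabla\phi_i(x)|\le\log(k-j)$. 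That machinery is what makes the decoupling rigorous while only losing poly-$\log\log N$ factors, and it is the substantive content of the proof that the proposal leaves unresolved. A smaller point: you invoke a ``two-sided (product-form) refinement'' of Lemma~\ref{lem:bridge} which the paper neither states nor proves; the paper only uses the stated $(x+\log k)^2/k$ form, which suffices (one simply takes $g^2+(\log k)^2$ rather than $(g+C)$ per ballot), so this refinement is an unnecessary additional step that would require its own proof.
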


\medskip

\noindent The quantity $j(x,y)$ can be interpreted as the step around which the increments of $(\phi_i(x))_{i=1}^k$ and $(\phi_j(x))_{i=1}^k$ decorrelate.

\medskip

Before giving the details of these lemmas, let us prove \eqref{lnprime}.
The bound on the expectation follows immediately from \eqref{grominet}.
Concerning the bound on the variance, as for a fixed $l=1$, we have 
 \begin{equation}
  \#\{ \ (x,y)\in (\gL'_N)^2 \ : \ j(x,y)=l \} \le  C  N^2e^{4\pi(k-l)}=C N^4 (\log N)^{-1/2} e^{-4\pi l},
 \end{equation}
and a trivial bound of $N^4$ for the case $l=0$. Hence we have 
 \begin{multline}
 \bE[(L'_N)^2]= \sum_{x,y\in \gL'_N}  \bE[\delta'_{x}\delta'_{y}]\\ 
 \le C (\log N)^{2\alpha+3}(\log \log N)^8 \left[  (\log N)^{-3}+
  (\log N)^{-1/2}  \sum_{l=1}^k \frac{e^{-l \left(4\pi-\frac{u^2}{2k}\right)}}{(l+1)^{3/2}(k-l+1)^3} \right].
 \end{multline}
We must then control the above sum.
Note that from \eqref{parameters} and \eqref{madga} we deduce that
\begin{equation}
 \frac{u^2}{2k}-4\pi=2\pi\left(1+\alpha\right)\frac{\log \log N}{\log N}+O((\log N)^{-1}),
\end{equation}
and hence that
\begin{equation}
 \sum_{j=1}^k \frac{e^{-j \left(4\pi-\frac{u^2}{2k}\right)}}{(j+1)^{3/2}(k-j+1)^3}\le C  ( \log N)^{-\min(3, \frac{5}{2}+\alpha)}. 
\end{equation}
This implies \eqref{lnprime}. 

\subsection{Proof of Lemma \ref{lesperance}}

If $(u-H(x)-1)\ge 0$ (which is satisfied if $h$ is small enough because as $\hat \phi\in \mathcal A_N$ we have $|H(x)|\le 1$), we obtain from the expression of 
the Gaussian density
\begin{equation}
 \bP\big[\phi(x)\in [-1,1]+u-H(x) \big]\ge \frac{2}{\sqrt{2\pi G^{*,m}(x,x)}}\exp\left( -\frac{(u-H(x)+1)^2}{2G^{*,m}(x,x)} \right).
\end{equation}
Using again that $H(x)\in[-1,1]$, we obtain, using \eqref{danslaboit}
\begin{equation}
  \bP\left[\phi(x)\in [-1,1]+u-H(x)\right]\ge cN^{-2}(\log N)^{1+\alpha}.
\end{equation}
Now we can conclude provided we show that for all $t$ in the interval $[u-1-H(x),u+1-H(x)]$, we have
\begin{equation}\label{condprob}
 \bP\left[  \forall i\in \lint 1,k\rint, \,  \phi_i(x)\le  \frac{u i}{k}+10 \ | \ \phi(x)=t \right]\ge \frac{c}{\log N}. 
\end{equation}
Let us recall the notation of Section \ref{typic}: $V_i=V_i(x)$ denotes the variance of $\phi_i(x)$.
For $i\le k-1$, we have
\begin{equation}
 V_i(x)=\int_{t_i}^{\infty} e^{-m^2t}P_t(x,x)\dd t= i- \int_{t_i}^{\infty} e^{-m^2t}\left[ P_t(x,x)-P^*_t(x,x)\right]\dd t.
\end{equation}
Hence from \eqref{smass} we have 
\begin{equation}\label{bvar}
\forall x \in \gL'_N, \forall i \in \lint 1,k \rint, \quad V_i(x)\in[i-1,i+1].
\end{equation}
We can check that \eqref{bvar} and $t\in[u-2,u+2]$ implies
\begin{equation}
\frac{u i}{k}+10-(V_i/V) t\ge 1,
\end{equation}
To prove \eqref{condprob},
we use simply Lemma \ref{lem:bridge} $(ii)$ for the re-centered process
$\phi_i(x)-(V_i/V)t$. We have
\begin{multline}
 \bP\left[  \forall i\in \lint 1,k\rint, \,  \phi_i(x)\le \frac{u i}{k}+10  \ | \ \phi(x)=t \right]\\
 \ge  \bP\left[  \forall i\in \lint 1,k\rint, \,  \phi_i(x)\le 1  \ | \ \phi(x)=t \right] \ge \frac{C}{k}. 
\end{multline}
\qed

\subsection{A simplified version of Lemma \ref{covariancee}}

We replace $(\phi_i(x))_{i=1}^k$ and $(\phi_i(y))_{i=1}^k$ and their intricate correlation structure by a simplified picture. 
Let $(X^{(1)}_i)_{i=1}^k $,  $(X^{(2)}_i)_{i=1}^k$ be two walks, with IID standard Gaussian increments which are totally correlated 
until step $j$ and independent afterwards.
More formally the covariance structure is given by
\begin{equation}\label{struct}
\begin{split}
 &\bE[X^{(1)}_{i_1} X^{(2)}_{i_2}]:= \min(i_1,i_2,j),\\
 \bE[X^{(1)}_{i_1} & X^{(1)}_{i_2}]=\bE[X^{(2)}_{i_1} X^{(2)}_{i_2}]:= \min(i_1,i_2).
 \end{split}
\end{equation}
For $i\le j$ we set $X_i=X^{(1)}_{i}=X^{(2)}_{i}$.
The simplified version of \eqref{secondmoment} we are going to prove is the following
\begin{multline}
\bP\left[ \forall l\in \{1,2\}, \forall i \in \lint1 ,,k\rint, \ X^{(l)}_i \le \left( \frac{iu}{k}+10\right), \  X^{(l)}_k \in [u-2,u+2] \right]
\\ \le \frac{C(\log \log N)^8}{(j+1)^{3/2}(k-j+1)^3}\exp\left( -\frac{(k+j)u^2}{2k^2} \right).
\end{multline}
Note that we replaced  the interval $[u-H(x)-1,u-H(x)+1]$ and $[u-H(y)-1,u-H(y)+1]$ by $[u-2,u+2]$, and we also do so in the true proof of Lemma \ref{covariancee}.
This is ok since we are looking for an upper bound 
and as $\hat \phi \in \hat \cA_N$, the latter inverval includes the other two.

\medskip

\noindent The strategy is to first evaluate the probability
 $$ \bP\left[      X_j \in \dd t   \ ; \    \forall l\in \{1,2\},\  X^{(l)}_k \in [u-2,u+2] \right],$$
and then compute the cost of the constraint $X^{(l)}_i \le \frac{iu}{k}+10$ using Lemma \ref{lem:bridge} 
and the fact that conditioned to $X_j$, $X^{(1)}_k$ and $X^{(2)}_k$, the processes
$(X_i)_{i=1}^j$, $(X^{(1)}_i)_{i=j}^k$ and $(X^{(2)}_i)_{i=j}^k$ are three independent Brownian bridges.
For the first step, notice that we have
\begin{multline}
 \bP\left[      X_j \in \dd t,\  X^{(1)}_k \in \dd s_1,\  X^{(2)}_k \in \dd s_2   \right] \\
 = \frac{1}{(2\pi)^{3/2}(k-j)\sqrt{j}}
 \exp\left(-\frac{t^2}{2j}-\frac{(s_1-t)^2+(s_2-t)^2}{2(k-j)}\right) \dd t \dd s_1 \dd s_2.
\end{multline}
With the constraint $s_1,s_2\in  [u-2,u+2]$ and $t\le \left( \frac{ju}{k}+10\right)$, at the cost of loosing a constant factor we can replace $s_1$ and $s_2$ by $u-2$.
We obtain, after integrating over $s_1$ and $s_2$,
\begin{multline}\label{struct1}
 \bP\left[      X_j \in \dd t,\ X^{(1)}_k, X^{(2)}_k \in [u-2,u+2]  \right]\le \frac{C}{(k-j)\sqrt{j}}\exp\left(-\frac{t^2}{2j}-\frac{(u-2-t)^2}{k-j}\right) \dd t \\
 \le   \frac{C}{(k-j)\sqrt{j}} \exp\left( -\frac{(2k-j)u^2}{2k^2}-\left(\frac{u}{k}-\frac{2}{k-j}\right) \left(\frac{uj}{k}-t\right) \right) \dd t.
\end{multline}
Note that due to our choice for $u$ \eqref{parameters} and value of $k$ we have $\left(\frac{u}{k}-\frac{2}{k-j}\right)\in [\gamma/2,\gamma]$ 
provided that $h$ is sufficiently small
and $k-j$ is sufficiently large (and hence the term can be replaced by $\gamma/2$ at the cost of changing the value of $C$).

\medskip

Now using Lemma \ref{lem:bridge} (after re-centering the process), we obtain that
\begin{multline}\label{bridjun}
 \bP\left[ X_i \le \left(\frac{ui}{k}+10\right), \  \forall i\in\lint0,j\rint \ | \ X_j=t    \right]\\
 = \bP\left[ X_i \le \left(\frac{ui}{k}+10\right)- \frac{i t}{j},\  \forall i\in\lint 0,j\rint \  | \ X_j=0    \right]
\\ \le C j^{-1}\left( \left(\frac{uj}{k}-t\right)^2+ (\log j)^2 \right),
\end{multline}
where we have used that for  $t\le \left(\frac{uj}{k}+10\right)$ and $i\le j$ 
$$\left(\frac{ui}{k}+10\right)- \frac{i t}{j}= \frac{i}{j}\left(\frac{uj}{k}-t \right)+10\le  \left(\frac{uj}{k}-t \right)+20.$$
In the same manner we obtain that for $l\in \{1,2 \}$
\begin{multline}\label{bridj2}
 \bP\left[ X^{(l)}_i \le \frac{ui}{k}+10, \  \forall i\in\lint j,k \rint  \ | \ X_j=t, X^{(l)}_k\in[u-2,u+2] \right]\\
 \le C(k-j)^{-1}\left( \left(\frac{uj}{k}-t\right)^2+ (\log (k-j))^2\right).
\end{multline}
Hence using \eqref{struct1}-\eqref{bridjun}-\eqref{bridj2} and conditional independence we obtain that 
\begin{multline}
\bP\left[ \forall l\in \{1,2\}, \  \forall i \in \lint 1,k\rint, \ X^{(l)}_i \le \frac{iu}{k}+10\ ; \ X^{(l)}_k \in [u-2,u+2]\ ; \ X_j\in \dd t \right] \\
 \le   C (k-j)^{-3} j^{-3/2} (\log k)^6 \exp\left( -\frac{(2k-j)u^2}{2k^2}- (\gamma/2)\left(\frac{uj}{k}-t\right) \right) \dd t,
\end{multline}
which after integration over $t\le (\frac{uj}{k}+10)$ gives
\begin{multline}
 \bP\left[ \forall l\in \{1,2\} \forall i \in \lint 1,k\rint, \ X^{(l)}_i \le \frac{iu}{k}+10, X^{(l)}_k \in [u-2,u+2] \right]\\ \le 
  C (k-j)^{-3} j^{-3/2} (\log k)^6\exp\left( -\frac{ (2k-j)u^2}{2k^2} \right).
\end{multline}

\subsection{Proof of Lemma \ref{covariancee}}

Now, we are ready to handle the case were $X^{(1)}_i$ and $X^{(2)}_i$ are replaced by $\phi_i(x)$ and $\phi_i(y)$.
Some adaptation are needed since the increments of $\phi_i(x)$ and $\phi_i(y)$ have a less simple correlation structure
but the  method presented above is hopefully robust enough to endure such mild modifications.
Given $x$ and $y$ set 
\begin{equation}
 Z_i(x,y)=Z_i:=\frac{\phi_i(x)+\phi_i(y)}{2} \quad \text{and} \quad U_i:= \bbE \left[ Z_i \right]^2.
\end{equation}
Let us prove that there exists a constant $C$ such that
\begin{equation}\label{tborne}\begin{cases}
 \left|U_i-i \right|\le C, \quad \forall i \in \lint 0,j\rint,\\
 \left|U_i-\frac{i+j}{2} \right|\le C, \quad \forall i\in  \lint j,k\rint.
\end{cases}\end{equation}
To see this it is sufficient to remark that 
\begin{multline}
 U_i:= \frac{1}{4}\int_{t_i}^{\infty} e^{-m^2t}\left[P^*_t(x,x)+P^*_t(y,y)+2P^*_t(x,y)\right] \dd t\\
 =
 \frac{1}{2}\int_{t_i}^{\infty} e^{-m^2t} \left[ P_t(x,x)+P_t(x,y)\right] \dd t- r_i(x,y).
\end{multline}
where 
\begin{equation}
 r_i(x,y):= \frac{1}{4}\int_{t_i}^{\infty} e^{-m^2t}\left[(P_t-P^*_t)(x,x)+(P_t-P^*_t)(y,y)+2(P_t-P^*_t)(x,y)\right] \dd t.
\end{equation}
Using \eqref{smass}, we see that $P^*_t$ can be replaced by $P_t$ at the cost of a small correction i.e. \ that $r_i$ is small.
Using the definition of $t_i$ \eqref{defti} We have for $i\in  \lint 0,j\rint$
\begin{equation}\label{truit}
 \frac{1}{2}\int_{t_i}^{\infty} e^{-m^2t}\left[ P_t(x,x)+P_t(x,y) \right] \dd t= 
                                                    i- \int_{t_i}^{\infty} e^{-m^2t}\left[ P_t(x,x)-P_t(x,y)\right] \dd t,
 \end{equation}                             
    while for  $i\in \lint j,k\rint$   we have          
    \begin{multline}\label{struit}
             \frac{1}{2}\int_{t_i}^{\infty} e^{-m^2t} \left[P_t(x,x)+P_t(x,y)\right] \dd t\\
             = \frac{i+j}{2}- \int_{t_j}^{\infty} 
                                                   e^{-m^2t} \left[ P_t(x,x)-P_t(x,y) \right] \dd t+ \int^{t_j}_{t_i} e^{-m^2t} P_t(x,y) \dd t.
                                                    \end{multline}
The kernel estimates \eqref{croco} and \eqref{gradientas} then allow to conclude that the integrals in the r.h.s of \eqref{truit} and \eqref{struit} are 
bounded by a constant and thus that \eqref{tborne} hold.
Similarly to \eqref{struct1}, we are first going to show that we have, for all $t\le (\frac{uj}{k}+10)$,
\begin{multline}\label{crig}
 \bP\big[      Z_j \in \dd t \; \  \phi(x), \phi(y) \in [u-2,u+2]  \big]\\
 \le 
 \frac{C}{(k-j)\sqrt{j}} \exp\left( -\frac{(2k-j)u^2}{2k^2}- (\gamma/2) \left(\frac{uj}{k}-t\right) \right) \dd t.
\end{multline}
Using the independence of  $Z_j$ and $Z_k-Z_j$ and the fact that, up to correction of a constant order  their respective variance are respectively equal to 
$j$ and $(k-j)/2$ (cf \eqref{truit}-\eqref{struit}), we can obtain (provided that $(k-j)$ is large enough), similary to \eqref{struct1} that 
\begin{multline}\label{creg}
 \bP\left[      Z_j \in \dd t\ ;\ Z_k \in [u-2,u+2]  \right]\\
 \le 
 \frac{C}{\sqrt{j(k-j)}} \exp\left( -\frac{(2k-j)u^2}{2k^2}- (\gamma/2) \left(\frac{uj}{k}-t\right) \right) \dd t.
\end{multline}
Now, on top of that, we want to show that 
\begin{equation}\label{crog}
  \bP\big[ \   (\phi(x)-\phi(y))\in[-4,4]  \ | \  Z_j \in \dd t, Z_k \in [u-2,u+2] \  \big]\le C(k-j)^{-1/2}.
\end{equation}
As $(\phi(x)-\phi(y))$ is a Gaussian we can prove \eqref{crog} by showing that 
\begin{equation}\label{varparlba}
 \Var_{\bE[\cdot \ | \  Z_j, Z_k ]} \left[ \phi(x)-\phi(y) \right]\ge c(k-j),
\end{equation}
at least when $(k-j)$ is large: it implies that conditional density is bounded by $(2\pi c(k-j))^{-1/2}$ and thus that \eqref{crog} holds.
In fact we prove this bound for the variance conditioned to $\phi_j(x), \phi_j(y)$ and $Z_k$ (which is smaller as the conditioning is stronger) as it is easier to compute.

\medskip

If one sets $$Z'_i=\phi_i(x)-\phi_i(y),$$ one can remark, first using the fact that the increments of $(Z,Z')$ are independent and then using the usual 
formula for the conditional variance of Gaussian variable, that
\begin{equation}\label{fromul}
  \Var_{\bE[\cdot \ | \ \phi_j(x), \phi_j(y), Z_k ]} = \bE[ (Z'_k-Z'_j)^2]-\frac{\left(\bE\left[(Z'_k-Z'_j)(Z_k-Z_j)\right]\right)^2}{\bE[(Z_k-Z_j)^2]}.
\end{equation}
Using \eqref{smass} (to replace $P^*_t$  by $P_t$) and \eqref{croco} (to control the term $P^*_t(x,y)$) we have
\begin{multline}
  \bE[ (Z'_k-Z'_j)^2]= \int^{t_j}_0 e^{-m^2t} \left[ P^*_{t}(x,x)+P^*_t(y,y)-2P^*_t(x,y)\right]\dd t \\ 
 \ge  \int^{t_j}_0 e^{-m^2t} \left[P_{t}(x,x)+P_t(y,y)\right]\dd t - C\ge 2(k-j)-C.
\end{multline}
Obviously $\bE\left[(Z_k-Z_j)^2\right]$ is of the same order, and from \eqref{smass} again.
\begin{equation}
 |\bE\left[(Z'_k-Z'_j)(Z_k-Z_j)\right]|= \left| \frac{1}{2}\int^{t_j}_0 e^{-m^2t} (P^*_{t}(x,x)-P^*_t(y,y)) \dd t \right|\le 1.
\end{equation}
Hence combining these inequalities in \eqref{fromul} we obtain that \eqref{varparlba} holds.
To conclude the proof we need to show that 
\begin{equation}\label{pstnfut}
\bP\left[ \forall i\in [0,j], Z_i\le  \frac{uj}{k}+10 \ | \ Z_{j}=t  \right]  \le C \left[ \left(\frac{uj}{k}-t\right)^2+(\log j)^2 \right] j^{-1}  
\end{equation}
and
\begin{multline}\label{pstnfut2}
 \bP\left[ \ \forall i\in [j,k], \phi_i(x), \phi_i(y)\le  \frac{uj}{k}+10 \ | \ Z_{j}=t, \  \phi(x),\phi(y)\in[u-2,u+2] \ \right] \\
 \le \left[ \left(\frac{uj}{k}-t\right)^2+(\log j)^2 \right]^2 (k-j)^{-2}.
\end{multline}
Indeed using conditional independence we can multiply the inequalities \eqref{pstnfut} and \eqref{pstnfut2} with \eqref{crig} to obtain
\begin{multline}
  \bP[ \delta'_x\delta'_y, \  Z_{j}\in \dd t] \\ 
  \le \frac{C\left[ \left(\frac{uj}{k}-t\right)^2+(\log k)^2 \right]^3}{(k-j)^{3}j^{3/2}} 
  \exp\left( -\frac{(2k-j)u^2}{2k^2}- (\gamma/2) \left(\frac{uj}{k}-t\right) \right) \dd t,
\end{multline}
and conclude by integrating over $t$.
The proof of \eqref{pstnfut} is quite similar to that of \eqref{bridjun}.
\begin{multline}
 \bP\left[ \forall i\le j,\  Z_i\le  \frac{u i}{k}+10 \ | \  
 Z_j=t \right]\\ =\bP\left[ \forall i\le j,\  Z_i\le  \frac{u i}{k}+10-  (U_i/U_j)t  \ | \  
 Z_j=0 \right].
\end{multline}
We use \eqref{truit} to obtain for all $i\in \lint 0, j \rint$,
\begin{equation}
 \frac{u i}{k} - \frac{ U_i t}{U_j}\le \frac{U_i}{U_j}\left(\frac{u j}{k}-t  \right)+C\le
 \left(t- \frac{u i}{k}  \right)+C' 
 \end{equation}
and apply Lemma \ref{lem:bridge}, we obtain 
\begin{equation}\label{croco2}
  \bP\left[ \forall i\le j,\  Z_i\le  \frac{u i}{k}+10 \ | \  
 Z_j=t \right]\le  C j^{-1}\left( \left(\frac{uj}{k}-t\right)^2+ (\log j)^2 \right).
\end{equation}
To prove \eqref{pstnfut2} we have to be more careful as the increments of $\phi(x)$ and $\phi(y)$ are correlated.
It is more practical in the computation to condition to the constraint $(\phi_j(x),\phi_j(y))=(t_1,t_2)$ than to $Z_j=t$.
To obtain a bound we then take the maximum over the constraint $(t_1+t_2)=2t$. We consider only the case
$\phi(x)=\phi(y)=u-2$ in the conditioning as the others can be deduced by monotonicity (which follows from positive correlations in the Gaussian processes that are considered).

We can consider without loss of generality that 
\begin{equation}\label{restrict}
 u-C(k-j)\le  t_1 , t_2 \le \frac{ju}{k}+10,
\end{equation}
the upper bound is due to the conditioning, and if the lower-bound is violated, $t$ is so small that the r.h.s. of \eqref{pstnfut2} is larger than one.
Similarly to \eqref{croco2}, using \eqref{bvar} to control the value of $V_i$ we can prove
\begin{multline}\label{rick}
 \bP\left[ \forall i\in[j,k],\  \phi_i(x)\le  \frac{u i}{k}+10 \ | \  \phi_j(x)=t_1,\ ; \ \phi(x)\in[u-2,u+2] \right] \\ \le 
 C (k-j)^{-1}\left( \left(\frac{uj}{k}-t_1\right)^2+ (\log (k- j))^2 \right).
\end{multline}
Now the challenge lies in estimating the cost of the constraint $\phi_i(y)\le  (\frac{u i}{k}+10)$, on the segment $\lint j,k \rint$, knowing $\phi(y)$, $\phi_j(y)$ 
and  $\phi_i(x)$, $i\in\lint1,k \rint$. After conditioning to $\phi_j(y)$  and $(\phi_i(x))_{i\in \lint 1,k\rint}$, note that $\left(\phi_i(y)\right)_{i\in \lint j,k\rint}$ is still
a process with independent increments. Hence we can apply Lemma \ref{lem:bridge} provided we get to know the expectation and variance of these increments.
Let $V_i$ denote the conditional variance of $\phi_i(y)$ knowing $(\phi_r(x))_{r\in\lint0,k \rint}$. 
For a sequence $f_i$ (random or deterministic) indexed by the integers, we set 
\begin{equation}\label{defnabla}
\nabla f_i: = f_i-f_{i-1}.
\end{equation}
Let $T_i$ measure the correlation between $\nabla \phi_i(x)$ and 
$\nabla\phi_i(y)$.
We have 
\begin{equation}\begin{split}
\nabla V_i&= \bE[ (\nabla \phi_i(y))^2]- \bE[\nabla \phi_i(x)\nabla\phi_i(y)],\\
T_i&:= \frac{\bE[\nabla \phi_i(x)\nabla\phi_i(y)]}{\bE[ (\nabla \phi_i(y))^2]}.
\end{split}\end{equation}
Note that from \eqref{smass} we have $\bE[ (\nabla \phi_i(y))^2]\ge 1/2$,
and thus we deduce from \eqref{croco} that
\begin{equation}
 \sum_{i=j+1}^k T_i \le 2 \int^{t_j}_{t_i} P_t(x,y)\dd t\le C.
\end{equation}
Also using \eqref{croco} we obtain that for all $i\in\lint j,k \rint$
\begin{equation}\label{approx}
 \left| V_i- V_j -(i-j) \right|\le C.
\end{equation}
The conditional expectation of $\phi_i(y)$, $i\ge j$ given $\phi_j(y)$ and $(\phi_r(x))_{r\in\lint0,k\rint}$ is given by
\begin{equation}
 \bE\big[ \phi_i(y)-\phi_j(y) \ | \ (\phi_r(x))_{r\in\lint 0,k\rint}\big]= \sum_{r=j+1}^k T_r \nabla \phi_r(x).
\end{equation}
In particular this is smaller (in absolute value) than $C\log (k-j)$ on the event
\begin{equation}
 \cH(j,N,x)=\cH:=\left\{ |\nabla \phi_i(x)|\le \log (k-j), \ \forall i\in\lint j+1,k \rint \ \right\}.
\end{equation}
Note that $\cH$ is a very likely event.
We have, uniformly in $t_1$ satisfying \eqref{restrict}
\begin{equation}
 \bP\left[  \cA^{\cc} \ | \ \phi_j(x)=t_1\ ; \ \phi(x)=u-2 \right] \le \exp\left(-c (\log (k-j))^2 \right).
\end{equation}
Indeed, after conditioning, the increments $\nabla \phi_i(x)$ are Gaussian variables of variance smaller than $1$ (or $2$ for $i=k$)
and their mean, equal to $(V_i-V_j)(u-2-t_1)/V_i$,
is bounded by a uniform constant, due to the restriction \eqref{restrict}.
If one add the conditioning to $\phi(y)$ and $\phi_j(y)$ one obtains, for all $(\phi_i(x))_{i\in \lint0,k \rint} \in \cH$
\begin{multline}
\bE\left[ \phi_i(y) \ | \ (\phi_r(x))_{r\in\lint 0,k\rint}, \phi_j(y)=t_2 \ ; \  \phi(y)=u-2 \right] \\ \ge   
t_2+ \left(\frac{V_i-V_j}{V_k-V_j}\right)(u-2-t_2) +  \sum_{r=j+1}^k T_r \nabla \phi_r(x)
 \\ \ge t_2 + \left(\frac{V_i-V_j}{V_k-V_j}\right) \frac{(k-j)u}{k}- C ( \log(k-j)+1)   
\end{multline}
where to obtain the last inequality we used \eqref{restrict} and \eqref{approx} and the definition of $\cH$.
We have 
\begin{multline}
 \frac{iu}{k}- t_2 - \left(\frac{V_i-V_j}{V_k-V_j}\right) \frac{(k-j)u}{k}\\
 = \left(\frac{ju}{k}- t_2 \right)+ \frac{u}{k} \left(  \frac{(V_i-V_j(k-j)}{V_k-V_j}- (i-j) \right)
 \ge \left(\frac{ju}{k}- t_2 \right)- C.
\end{multline}
Hence, using Lemma \ref{lem:bridge}, after the necessary re-centering for the bridge conditioned to $(\phi_r(x))_{r\in\lint 0,k\rint}$
we obtain that if  $(\phi_r(x))_{r\in\lint 0,k\rint}\in \cH$ and \eqref{restrict} is satisfied we have 
\begin{multline}\label{rock}
 \bE[ \forall i \in \lint j,k \rint, \ \phi_i(y)\le u  \ | \ (\phi_r(x))_{r\in\lint 0,k\rint} \ ; \ \phi_j(y)=t_2 \ ; \ \phi(y)=u-2 ] \\
 \le C (k-j)^{-1}\left[ \left(\frac{ju}{k}- t_2 \right)^2+ C  \log(k-j) \right]^2.
\end{multline}
Using \eqref{rick} and \eqref{rock}, we obtain that 
\begin{multline}
 \bP\Big[ \forall i\in \lint j,k \rint, \phi_i(x), \phi_i(y)\le  \frac{uj}{k}+10  \\
  \big| \ \phi_j(x)=t_1,\ ; \  \phi_j(y)=t_2 \ ; \  \phi(x),\phi(y)\in[u-2,u+2]\Big]\\
\\ \le C (k-j)^{-2}\left[ \left(\frac{ju}{k}- t_1 \right)^2+ C  \log(k-j) \right]\left[ \left(\frac{ju}{k}- t_2 \right)^2+ C  \log(k-j)^2 \right]\\ 
+ \bP\left[  \cH^{\cc} \ | \ \phi_j(x)=t_1, \phi(x)=u-2 \right].
 \end{multline}
The last term is negligible when compared to the first and taking the maximum over $t_1+t_2=2t$ satisfying \eqref{restrict},
this concludes the proof of \eqref{pstnfut2}.
\qed

\medskip

\noindent {\bf Acknowldedgements:}  
The author would like to express his gratitude to Jian Ding, Giambattista Giacomin  and Thomas Madaule for various enlightening discussions.

\appendix

\section{Estimates on heat-kernels and random walks} \label{appendix}

\subsection{Proof of Lemma \ref{Greenesteem}}

To estimate the Green Function of the massive field we use a bit of potential theory.
We let $a$ denote the potential Kernel of $\gD$ in $\bbZ^2$ i.e.
\begin{equation}
 a(x):=\lim_{T\to \infty} \int_{0}^T \left( P_t(0,0)-P_t(x,0)\right)\dd t. 
\end{equation}
From \cite[Theorem 4.4.4]{cf:LL} we have 
\begin{equation}\label{potentas}
 a(x):=\frac{1}{2\pi}\log |x| + O(1).
\end{equation}
Set $a(x,y):=a(x-y)$.
Now recall that  $X$ is a continuous time random-walk on $\bbZ^2$ with generator $\gD$ and
that $P^x$ denote is law when the initial condition is $x\in \bbZ^2$, 
and $\tau_{A}$ denote the hitting time of $A$. Let $T_m$ be a Poisson variable of mean $m^{-2}$ which is independent of $X$.

\medskip

\noindent By adapting the proof of \cite[Proposition 4.6.2(b)]{cf:LL} we obtain that 
\begin{equation}\label{sdasdad}\begin{split}
 G^{m,*}(x,y)=E^x\left[ a\left(X_{\tau_{\partial\gL_N}\wedge T_m},y\right)\right]-a(x,y),\\ 
 G^{m}(x,y)=E^x\left[ a\left(X_{T_m},y\right)\right]-a(x,y). 
 \end{split}\end{equation}
Considering the case $y=x$ and when there is no boundary, it is not difficult to see that 
\begin{equation}\label{stima}
  G^{m}(x,x)=E^x\left[ a\left(X_{T_m},x\right)\right]:= -\frac{1}{2\pi}\log m+O(1).
\end{equation}
In the case $x=y$ with boundary, this is more delicate.
On one side it is easy to deduce from \eqref{sdasdad} that for some appropriate $C>0$,
\begin{equation}
 \frac{1}{2\pi}\log \left( \min\left( d(x,\partial \gL_N),m^{-1}\right)\right)-C  \le G^{m,*}(x,x)\le  -\frac{1}{2\pi}\log m+C. 
\end{equation}
What remain to prove is that  $\frac{1}{2\pi}\log d(x,\partial \gL_N)$ is an upper-bound (which is a concern only if $d(x,\gL_N)\le m^{-1}$).

\medskip

Note that the Green Function with Dirichlet boundary condition is an increasing function of the domain and a decreasing function of $m$.
Hence to obtain an upper-bound on $G^{m,*}$, we can compare it with the the variance of the massless free field 
in the half plane $\bbZ_+\times \bbZ$ at the point 
\begin{equation}\label{ddxx}
 x_d:= (d(x,\partial \gL_N),0)
\end{equation}
 that is given by
\begin{equation}
  E^{x_d}\left[ a\left(X_{\tau_{\{0\}\times \bbZ}},x_d\right)\right]\ge G^{m}(x,x) .
  \end{equation}
Now note that $\tau_{\{0\}\times \bbZ}$ is simply the hitting time of zero by one dimensional simple random walk starting from $d(x,\partial \gL_N)$.
Hence
$$P^{x_d}[\tau_{\{0\}\times \bbZ}\ge t]\le C  d(x,\partial \gL_N) t^{-1/2}.$$
As the second coordinate of $X_{\tau_{\{0\}\times \bbZ}}$ is simply the value of an independent random walk evaluated at $\tau$ we get 
that for some constant $C'$ all $u>0$
\begin{equation}
 P^{x_d}[ |X_{\tau_{\{0\}\times \bbZ}}-x_d|\ge u ]\le C (u/d). 
\end{equation}
This tail estimate, together with \eqref{potentas} is sufficient to conclude that 
\begin{equation}
 E^{x_d}\left[ a\left(X_{\tau_{\{0\}\times \bbZ}},x_d\right)\right]\le \frac{1}{2\pi}\log d(x,\partial \gL_N)+C.
\end{equation}

 \qed

\subsection{Proof of Lemma \ref{lem:kerestimate}}

 \begin{proof}
 
 Let us start with $(i)$
 The first inequality in \eqref{gradientas} 
 can be deduced from \cite[Theorem 2.3.6]{cf:LL} which is a fine estimate for $P_t(x,x)- P_t(x,y)$ in discrete time.
 
 \medskip
 
 For the second one, we notice that we can reduce the problem to proving that for any $u,v\in[0,N]$
 \begin{equation}
  \left( p^*_t(u,u)+p^*_t(v,v)-2p^*_t(u,v)\right)\le \frac{ C |u-v|^2 }{t^{3/2}},
 \end{equation}
where $p^*_t$ is the heat-kernel associated with the simple random-walk on $\lint 0,N \rint$ with Dirichlet boundary condition.
Indeed if $x$ and $y$ differ by only one coordinate, say $x_1=y_1$ we can factorize the l.h.s\ of \eqref{gradientas} by the common coordinate and obtain 
\begin{multline}
  p^*_t(x_1,x_1)\left[p^*_t(x_2,x_2)+p^*_t(y_2,y_2)-2p^*_t(x_2,y_2)\right ] \\
  \le \frac{C}{\sqrt{t}}\left[p^*_t(x_2,x_2)+p^*_t(y_2,y_2)-2p^*_t(x_2,y_2)\right ].
\end{multline}
If the two coordinates of $x$ and $y$ differ, then if we let $\varphi$ be a field with covariance function $P^*_t$,  the l.h.s\ of \eqref{gradientas} can be rewritten
as 
\begin{equation}
 \bbE[\left(\varphi_x-\varphi_y\right)^2]\le  2\left(\bbE[\left(\varphi_x-\varphi_z\right)^2]+\bbE[\left(\varphi_y-\varphi_z\right)^2]\right)
\end{equation}
and we reduce to the first case by choosing $z=(x_1,y_2)$.

\medskip

Now, by Fourier decomposition of the kernel, we have 
 \begin{equation}
  \left( p^*_t(u,u)+p^*_t(v,v)-2p^*_t(u,v)\right)=\frac{2}{N}
  \sum_{i=1}^{N-1} e^{-\gl_i t}\left [\sin\left( \frac{i\pi u}{N}\right)- \sin\left( \frac{i\pi v}{N}\right) \right]^2.
\end{equation}
where $\gl_i:= 2\left(1-\cos\left(\frac{i\pi}{N}\right)\right)$.
The sum can obviously be bounded by
\begin{equation}
 \frac{C|v-u|^2}{N^3} \sum_{i=1}^{N-1} e^{-\gl_i t} i^2,
 \end{equation}
It is a simple exercise to show that this sum is of order $k^2 t^{-3/2}$.

\medskip

For $(ii)$ we can just use large deviations estimates for $|x-y|\ge C\sqrt{t \log t}$ with $C$ chosen sufficiently large, and use the 
local central limit Theorem \cite[Theorem 2.1.1]{cf:LL} to cover the case  $|x-y|\le C\sqrt{t \log t}$.
For $(iii)$ we can compare to the half-plane case where $x=x_d$ (recall that from the argument presented before \eqref{ddxx} this gives an upper bound).
In that case we have 

\begin{equation}
P^*_t(x,x)=P[ X_t=0; \forall s\in[0,t],\ X_s\le d].
\end{equation}
where $X$ is the simple random walk on $\bbZ^2$ starting from zero. By a reflexion argument we have
\begin{equation}
 P[ X_t=0; \forall s\in[0,t],\ X_s< d]= P[ X_t=0]-P[X_t=2d]=P_t(0,0)-P_t(0,2d\be_1).
\end{equation}
The later quantity can be estimated with \cite[Theorem 2.3.6]{cf:LL}, and shown to be smaller than $2d^2/t$.
For $(iv)$ we have
 
 \begin{multline}
  P_t(x,x)- P^*_t(x,x)= P[ X_t=0; \exists s \in[0,t],\ X_s+x \in \partial\gL_N ]\\
  \le P\left[ X_t=0; \max_{s \in[0,t]}|X_s|\ge d \right].
 \end{multline}
The right-hand side is smaller than 
\begin{equation}
 4 P\left[ X_t=0; \max X^{(1)}_s \ge d \right]\le 4 P_t(2d\be_1)
\end{equation}
The later quantity can be estimated with the LCLT for large $t$  \cite[Theorem 2.3.6]{cf:LL}, or with large deviation estimates for small $t$.

 \end{proof}

 \subsection{Proof of Lemma \ref{lem:bridge}}

Let $V_i$ denote the variance of $X_i$ (without conditioning), $\nabla V_i=(V_i-V_{i-1})$ and set $V:=V_k$ ($V\in[k/2.k]$).
After conditioning to $X_k:=0$, the process $(X_i)^k_{i=1}$ remains Gaussian and centered but the covariance structure is given by  
\begin{equation}
  \bP\left[ X_i X_j  \ | \ X_k=0 \right]= \frac{V_i(V-V_j)}{V} \quad 
  \quad 0\le i\le j\le k.
\end{equation}
We denote by $\tilde \bP$ the law of the conditioned process.
We can couple this process with a Brownian Motion conditioned to $B_V=0$: a centered Brownian bridge $(B_t)_{t\in[0,V]}$, by setting $X_i:=B_{V_i}$.
Note that we have (by applying standard reflexion argument at the first hitting time of $x$)
\begin{equation}\label{reflex}
 \tilde \bP\left[ \max_{t\in [0,V]} B_t \ge x \right]= 1 - e^{-\frac{x^2}{2V}}.
\end{equation}
As the max of $B$ is larger than that of $X$ this gives the lower bound. 
To prove $(i)$, by monotonicity, we can restrict the proof to the case $x\ge (\log k)$. Two estimate the difference between \eqref{reflex} and the probability we have to estimate, we let
let $B^i$ denote the brownian bridges formed by $B$ between the $X_i$, 
$$(B^i_s)_{s\in [V_{i-1},V_{i}]}:= B_s- \frac{(s-V_{i-1})B_{V_{i-1}}+(V_{i}-s)B_{V_{i}}}{V}.$$
We have
\begin{multline}
  \tilde\bP\left[\max_{i\in\lint1,k-1\rint} X_i\le x\right]\le \tilde\bP\left[\max_{t\in[0,V]} B_t\le 2x\right]+
\sum_{i=1}^k \tilde\bP\left[\min_{s\in [V_{i},V_{i+1}]}B^i_s\le -x\right]\\
=\left( 1-e^{\frac{-2x^2}{V}}\right)+\sum_{i=1}^k \exp\left(-\frac{x^2}{2 \nabla V_i} \right)
\end{multline}
where in the last line we used \eqref{reflex} for $B$ and $B^{i}$.
This is smaller than $C x^2/k$ for some well chosen $C$.

\qed

 \section{Proof of Proposition \ref{propure}}\label{secpropure}

We use Proposition \ref{massivecompa}  to prove the lower-bound in the  asymptotic, and then briefly explain how to obtain a matching upper-bound.
First note that using \eqref{massivecompa} for $\gb=0$ and $u=0$ we obtain

\begin{equation}\label{JKJK}
 \tf(h)\ge \lim_{N\to \infty} \frac{1}{N^2} \log \bE^{m}_N \left[ e^{\sum_{x\in \tilde \gL_N} h\delta_x}\right] -f(m).
\end{equation}
Now, using Jensen's inequality we have 
\begin{equation}
 \frac{1}{N^2} \log \bE^{m}_N \left[ e^{\sum_{x\in \tilde \gL_N} h\delta_x}\right]\ge \frac{h}{N^2} \bE^{m}_N \left[ \sum_{x\in \tilde \gL_N} \delta_x \right]
 \ge h P[  \cN(\sigma_m) \in  [-1,1] ]
\end{equation}
where $\sigma_m:= \sqrt{G^m(x,x)}$ denote the standard deviation of the infinite volume massive free field and $\cN(\sigma_m)$ 
is a centered normal variable with standard deviation $\sigma_m$.
As the variance grows when $m$ tends to zero we obtain that for arbitrary $\gep>0$ for $m\le m_{\gep}$ we have 
\begin{equation}
  \tf(h)\ge (1-\gep)\frac{h}{\sqrt{2\pi}\sigma_m}-f(m).
\end{equation}
Using the above inequality for  $m=\frac{\sqrt{h}}{|\log h|}$, using \eqref{eq:variance} to estimate $\sigma_m$ and  \eqref{asymf} for $f(m)$ 
we obtain that for any $\gep$, for $h\le h_{\gep}$ sufficiently small we have 
\begin{equation}
 \tf(h)\ge h P[ \sigma_m \cN \in  [-1,1] ] - f(m)\ge \frac{h}{\sqrt{(1/2)\log h}}(1-\gep).
\end{equation}

\medskip

Concerning the upper-bound, we can show as in \cite[Equation (2.20)]{cf:GL} that $\sup_{\hat \phi} Z_{N,h}^{\hat \phi}$ is a sub-multiplicative 
function and thus that we have for every $N\ge 1$ we have
\begin{equation}
\tf(h)\le \sup_{\hat \phi} \frac{1}{N^2} \log Z^{\hat \phi}_{N,h}.
\end{equation}
We use this inequality for $$N=h^{-1/2} |\log h|^{-1}.$$
In that case, the Taylor expansion of the exponential in the partition function gives
\begin{equation}
Z^{\hat \phi}_{N,h}\le 1+ e^{(\log h)^{-2}} \bE^{\hat \phi}_N\left[ \sum_{x\in \tilde \gL_N}  \delta_x\right]\le  e^{(\log h)^{-2}} \bE_N\left[ 
\sum_{x\in \tilde \gL_N}  \delta_x\right],
\end{equation}
where in the last inequality we used that the probability for a Gaussian of a given variance  to be in $[-1,1]$ is maximized if its mean is equal to zero.
Using \eqref{eq:stimagreen} then to estimate the probability, it is a simple exercise to check that for any $\gep>0$ and $N$ large enough, we have 
\begin{equation}
\bE_N[ \sum_{x\in \tilde \gL_N}  \delta_x]\le \frac{(1+\gep) 2N^2}{ \sqrt{\log N}}.
\end{equation}
Combining all these inequality, we obtain that for $h$ sufficiently small
\begin{equation}
\tf(h) \le \frac{(1+2\gep) \sqrt{2} h}{ \sqrt{\log h}}.
\end{equation}
\qed

\end{document}